\tikzset{
  ->-/.style={decoration={markings, mark=at position 0.5 with {\arrow{to}}},
              postaction={decorate}},
}
\tikzset{-<-/.style={decoration={markings,mark=at position 0.5 with %
    {\arrow[scale=1.5,>=stealth]{<}}},postaction={decorate}}}
\numberwithin{equation}{section}
\newcommand{\mc}{\mathcal}
\newcommand{\mf}{\mathfrak}
\newcommand{\zbar}{\br{z}}
\newcommand{\dpa}[1]{\frac{\partial}{\partial #1}}
\newcommand{\eps}{\epsilon}
\newcommand{\g}{\mathfrak{g}}
\newcommand{\xto}{\xrightarrow}
\newcommand{\what}{\widehat}
\newcommand{\til}{\widetilde}
\newcommand{\br}{\overline}
\newcommand{\iso}{\cong}
\newcommand{\C}{\mathbb C}
\newcommand{\norm}[2]{d( #1 , #2)}
\newcommand{\op}{\operatorname}
\newcommand{\mbf}{\mathbf}
\newcommand{\mbb}{\mathbb}
\newcommand{\ip}[1]{\left\langle #1 \right\rangle}
\newcommand{\abs}[1]{\left| #1 \right|}
\newcommand{\R}{\mbb R}
\renewcommand{\d}{\mathrm{d}}
\DeclareMathOperator{\Sym}{Sym}
\def\i{\mathsf{i}}
\def\ii{\mathsf{i}} 
\def\vepsilon{\varepsilon}
\def\Tr{{\mathrm{Tr}}}
\def\veps{\varepsilon}
\def\O{{\mathcal O}}
\def\sh{{\sf h}}
\def\ad{{\mathrm{ad}}}
\def\M{{\mathcal M}}
\def\W{{\mathcal W}}
\def\c{{\sf c}}
\def\V{{\mathcal V}}
\def\L{{\mathcal L}}
\def\CP{{\mathbb{CP}}}
\def\i{{\mathrm i}}
\def\l{{\mathfrak l}}
\def\Z{\mathbb{Z}}
\newcommand{\vin}{\rotatebox[origin=c]{90}{$\in$}}
\def\eqn#1{eqn.\ #1}
\def\fig#1{Fig.\ #1}
\newtheorem{theorem}{Theorem}[section]
\newtheorem{proposition}[theorem]{Proposition}
\newtheorem{lemma}[theorem]{Lemma}
\newtheorem{corollary}[theorem]{Corollary}
\def\be{\begin{equation}}
\def\ee{\end{equation}}
\begin{document}

\pagenumbering{Alph} 
\begin{titlepage}
\begin{flushright}

\end{flushright}
\vskip 1.5in
\begin{center}
{\bf\Large{Gauge Theory And Integrability, I}}
\vskip 0.5cm 
{Kevin Costello$^1$, Edward Witten$^2$ and Masahito Yamazaki$^3$} \vskip 0.05in 
{\small{ 
\textit{$^1$Perimeter Institute for Theoretical Physics, }\vskip -.4cm
\textit{Waterloo, ON N2L 2Y5, Canada}
\vskip 0 cm 
\textit{$^2$School of Natural Sciences, Institute for Advanced Study,}\vskip -.4cm
\textit{Einstein Drive, Princeton, NJ 08540 USA}
\vskip 0 cm 
\textit{$^3$Kavli Institute for the Physics and Mathematics of the Universe (WPI), }\vskip -.4cm
\textit{University of Tokyo, Kashiwa, Chiba 277-8583, Japan}
}}

\end{center}

\vskip 0.5in
\baselineskip 16pt
\begin{abstract}
Several years ago, it was proposed that the usual solutions of the Yang-Baxter
equation associated to Lie groups can be deduced in a systematic way from four-dimensional gauge theory.
In the present paper, we extend this picture, fill in many details, and present the arguments in a concrete and down-to-earth way.
Many interesting effects, including the leading nontrivial contributions to the $R$-matrix, the operator product expansion of line operators,
the framing anomaly, and the quantum deformation that leads from $\g[[z]]$ to the Yangian, are computed explicitly via Feynman diagrams.
We explain  how rational, trigonometric, and elliptic solutions of the Yang-Baxter equation arise in this framework, along with a generalization
that is known as the dynamical Yang-Baxter equation.
\end{abstract}
\date{September, 2017}
\end{titlepage}
\pagenumbering{arabic} 

\newpage

\tableofcontents

\newpage

\section{Introduction}

Integrable systems of $1+1$-dimensional many-body physics and two-dimensional statistical mechanics first emerged in Bethe's discovery of the Bethe
Ansatz \cite{Bethe} and Onsager's solution of the two-dimensional Ising model \cite{Onsager}, respectively.  Subsequent study led to remarkable generalizations
and new discoveries, continuing to the present day.

Much of the wisdom about integrable models can be distilled into the Yang-Baxter equation  \cite{McGuire:1964zt,Yang:1967bm,Baxter:1971cr,Zamolodchikov*2} and its interpretation via quantum groups \cite{Drinfeld_ICM,Chari-Pressley}.
Many of the classic papers on this subject are reprinted in \cite{Jimbo}.  See also, for example, \cite{PY} for an introduction.

In the present paper, we will aim to explain, simplify, and further develop a new approach to the Yang-Baxter equation and related integrable systems
of statistical mechanics that was proposed several years ago by one of us
\cite{CostelloA,Costello:2013zra}.  In this approach, the solutions of the Yang-Baxter equation and their properties are deduced from a four-dimensional
gauge theory that can be regarded as a $T$-dual version of three-dimensional Chern-Simons gauge theory.   An informal introduction to this approach can
be found in     \cite{Witten:2016spx}.  In spirit, this approach is in keeping with a vision that was proposed long ago by Atiyah \cite{Atiyah}: the Yang-Baxter
equation in two dimensions is deduced by starting with a theory in higher dimensions.  

The purpose of the present paper is to further develop this approach, filling in many details and hopefully presenting the arguments
in a concrete and down-to-earth way.  In section 2, we review the basic facts about integrability and the Yang-Baxter equation that will be needed.  In section 3, we introduce
the relevant four-dimensional gauge theory.  We explain why it leads automatically to solutions of the Yang-Baxter equation and more
specifically why possible choices of compactification to two dimensions
  lead to   rational, trigonometric, and elliptic solutions of that  equation.  We explain why in this theory, one can at the classical level introduce Wilson line operators
associated not just to representations of the gauge group $G$ but to representations of an infinite-dimensional algebra $\g[[z]]$. This generalization turns
out to be crucial in understanding the theory.   Quantum mechanically, as we learn
later, $\g[[z]]$ will
be promoted to the Yangian deformation of $\g[[z]]$ (or its trigonometric or elliptic generalization).   In section 3, we also review some simple examples of solutions of the
Yang-Baxter equation and see how one can deduce from these simple examples that there must be a framing anomaly for Wilson line operators.  

In sections \ref{lowestorder} - \ref{section_2loop}, we study some increasingly subtle quantum effects in this theory.  In section \ref{lowestorder}, we compute
directly the first nontrivial term in the quantum $R$-matrix. General theorems \cite{Drinfeld_ICM,Chari-Pressley} actually determine the whole structure in terms of this
lowest order term together with formal properties of the theory.   However, our goal in the present paper is to see everything as explicitly as possible rather than
relying on abstract arguments.  

In section \ref{parallel}, we study the first nontrivial quantum correction to the operator product expansion (OPE) of Wilson line operators.  We show that to get a closed
OPE, one has to consider line operators associated to representations of $\g[[z]]$, not just representations of the underlying finite-dimensional gauge group $G$.
We also explain that this first quantum correction to the classical OPE implies that in yet higher orders, there will have to be further deformations, which will deform $\g[[z]]$
to the Yangian (or one of its generalizations).  

In section \ref{framinganomaly}, we compute the framing anomaly for Wilson line operators in this theory, recovering from a Feynman diagram calculation the result
that was predicted on more abstract grounds in section 3.  The framing anomaly found here is somewhat analogous to the framing anomaly
for Wilson operators in three-dimensional Chern-Simons theory, but its consequences are more far-reaching.  

In section \ref{networks}, we generalize the analysis from Wilson line operators to networks of Wilson lines -- graphs that can be drawn in the plane in which the
line segments are Wilson line operators and the ``vertices'' are invariant couplings that describe (for example) the ``fusion'' of two Wilson line operators to make a single one.
In this context, there is a quantum anomaly that generalizes and can largely be deduced from the framing anomaly.

In section \ref{section_2loop}, we reconsider, following the elementary considerations in section \ref{parallel}, 
the deformation from $\g[[z]]$ to the Yangian.
At the two-loop level, that is, in order\footnote{There does not seem to be a standard terminology for counting loops in Feynman diagrams that contain Wilson
operators.  We refer to a contribution that is of order $\hbar^n$ relative to a leading order contribution as an $n$-loop effect.} $\hbar^2$, there is a potential anomaly in the coupling of two gauge bosons to a Wilson line operator.  To avoid or cancel the
anomaly, Wilson lines must be associated (in the rational case) to representations of a quantum deformation of $\g[[z]]$ known as the Yangian.   A surprising consequence
of this is that an ordinary  Wilson operator associated to a finite-dimensional representation of $G$ may be anomalous and hence absent in the quantum theory.
For example, for $G=SO(N)$ (or any simple Lie group other than $SU(N)$), there is no Wilson line operator associated to the adjoint representation.

In sections \ref{trigonometric} and \ref{elliptic}, we analyze the variants of the construction  that lead to trigonometric and elliptic solutions
of the Yang-Baxter equation, respectively.  In section \ref{dybe}, we explain how a generalization of the Yang-Baxter equation known as the ``dynamical Yang-Baxter
equation''  \cite{GN,Felder,FelderTwo,Etinghof} fits in this framework.  In brief, one finds an ordinary Yang-Baxter equation when one expands around a classical
gauge theory solution that has no moduli; moduli lead to a dynamical Yang-Baxter equation.  

All of our explicit computations in the present paper are in the lowest nontrivial order in $\hbar$ in which some quantum effect occurs.   In a companion paper \cite{Part2},
we will explain how to construct the Yangian algebra, and its trigonometric and elliptic generalizations, ``exactly'' in the present framework, not just in lowest
order of perturbation theory.   We have put the word ``exactly'' in quotes because the theory, in the form in which it has been developed so far, is a perturbative theory,
so ``exactly'' really means ``to all orders in perturbation theory.''  It is anticipated that the D4-NS5 system of string theory would provide the framework for a nonperturbative description, along the lines of the study of the D3-NS5 system in \cite{Witten:2011zz}, but this has not yet been developed.  

Finally, we recall the existence of another and superficially quite different relationship between integrable spin systems and four-dimensional gauge theory
\cite{NS,NS2}.  A connection between the two approaches is not yet known.

\section{Review of Integrability}

In this section, we review some standard  facts about integrable models, aiming
just to explain what is needed for the purposes of this paper.  
The goal of the rest of the paper will be to explain these facts from the standpoint of four-dimensional gauge theory.

\subsection{The Yang-Baxter Equation}\label{YBeq}

We consider a system of particles whose internal quantum numbers take values in some vector space $V$.
It is often convenient to pick a basis  $\{e_i \}_{i=1}^{\textrm{dim} V}$ of $V$.  A particle is also characterized
by a complex parameter $z$ that in the context of integrable systems is known as the spectral parameter.  It will play a crucial role in
what follows.  The particles live in a two-dimensional spacetime and travel on (possibly curved) one-dimensional worldlines in this spacetime.  
When two worldlines cross (\fig \ref{Rmatrix}), their spectral
parameters are assumed to be unchanged, but their internal state is transformed by a matrix that in general depends on the spectral
parameters.

\begin{figure}[htbp]
\centering{\includegraphics[scale=.3]{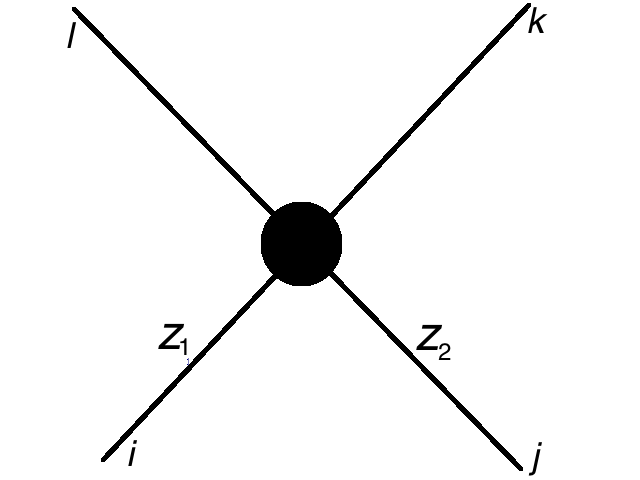}}
\caption{\small{Crossing of two worldlines in a two-dimensional spacetime.  The ``blob'' indicates a scattering process the amplitude for which 
will (in the context of the present paper) ultimately be computed via gauge theory.  When two particles cross, their spectral parameters $z_1$
and $z_2$ are unchanged but their ``internal'' state is transformed.}}
\label{Rmatrix}
\end{figure}

We write this matrix as $R(z_1,z_2):V\otimes V\to V\otimes V$, or in more detail, in the chosen basis, as $R_{ij}^{kl}(z_1,z_2)$.  However,
although there are interesting $R$-matrices that lack this property,\footnote{The basic example  is the chiral Potts model \cite{AuYang:1987zc,Baxter:1987eq}, in which
the spectral parameter takes values in a curve of genus greater than 1. Interestingly this model arises as a root-of-unity degeneration of the model of \cite{Bazhanov:2010kz}, which in turn arises from 
supersymmetric indices of four-dimensional $\mathcal{N}=1$ quiver gauge theories \cite{Spiridonov:2010em,Yamazaki:2012cp,Yamazaki:2013nra}.} we will be concerned in this paper with the case that the $R$-matrix
depends only on the difference $z=z_1-z_2$ of the two spectral parameters.  (In some applications of the Yang-Baxter equation, the spectral parameter
is interpreted as a particle momentum or rapidity, and the fact that the $R$-matrix depends only on the difference of spectral parameters is interpreted as a consequence of Galilean invariance or Lorentz invariance.)  

\begin{figure}[htbp]
\centering{\includegraphics[scale=.4]{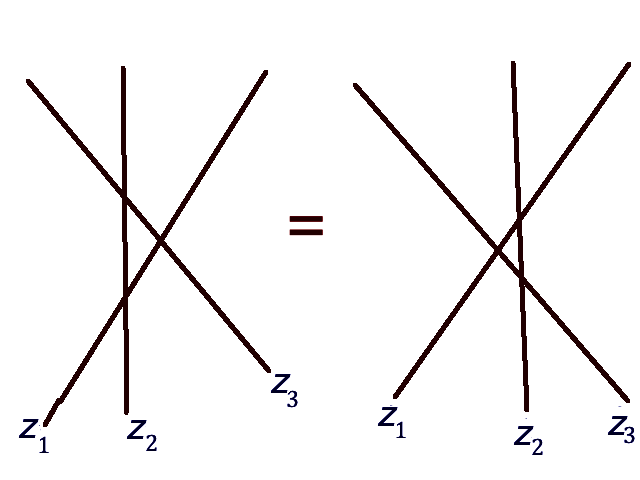}}
\caption{\small{The Yang-Baxter equation asserts the equivalence between these two pictures.}}
\label{YB2}
\end{figure}

The Yang-Baxter equation says that when three worldlines cross in a pairwise fashion, the arrangement in which they cross does not matter 
(\fig \ref{YB2}).
We denote the three particles as $a,b,c$, and write, for example, $V_a$ for the vector space of internal states of particle $a$, $z_a$
for its spectral parameter,  and $R_{ab}(z_a-z_b):V_a\otimes V_b\to V_a\otimes V_b$ for the corresponding $R$-matrix.\footnote{We also denote $R_{ab}\otimes 1:V_a\otimes V_b\otimes V_c\to V_a\otimes V_b\otimes V_c$ simply as $R_{ab}$.}  Then the Yang-Baxter
equation reads\footnote{The general form of this equation without assuming that the spectral parameter depends only on the difference of
rapidities is simply
\begin{align}\notag
R_{12}(z_1, z_2) R_{13}(z_1, z_3) R_{23}(z_2, z_3) = 
R_{23}(z_2, z_3) R_{13}(z_1, z_3) R_{12}(z_1, z_2)\;.
\end{align}}
\be\label{YBE}
R_{12}(z_1-z_2) R_{13}(z_1- z_3) R_{23}(z_2- z_3) = 
R_{23}(z_2- z_3) R_{13}(z_1- z_3) R_{12}(z_1- z_2) \;.
\ee
In terms of the basis $\{e_i\}$ of $V$, the equation takes the imposing form 
\begin{align}\label{longforum} 
\begin{split}
&
\sum_{o,p,q} R_{12}(z_1-z_2)_{qo}^{nm} R_{13}(z_1-z_3)_{ip}^{ql} R_{23}(z_2-z_3)_{jk}^{op}\\ &= 
\sum_{r,s,t} R_{23}(z_2-z_3)_{rt}^{ml} R_{13}(z_1-z_3)_{sk}^{nt} R_{12}(z_1-z_2)_{ij}^{sr} \;,
\end{split}
\end{align}
where the meaning of the indices is more clear in a picture (\fig \ref{FigureYBE}).

\begin{figure}[htbp]
\centering{\includegraphics[scale=0.95]{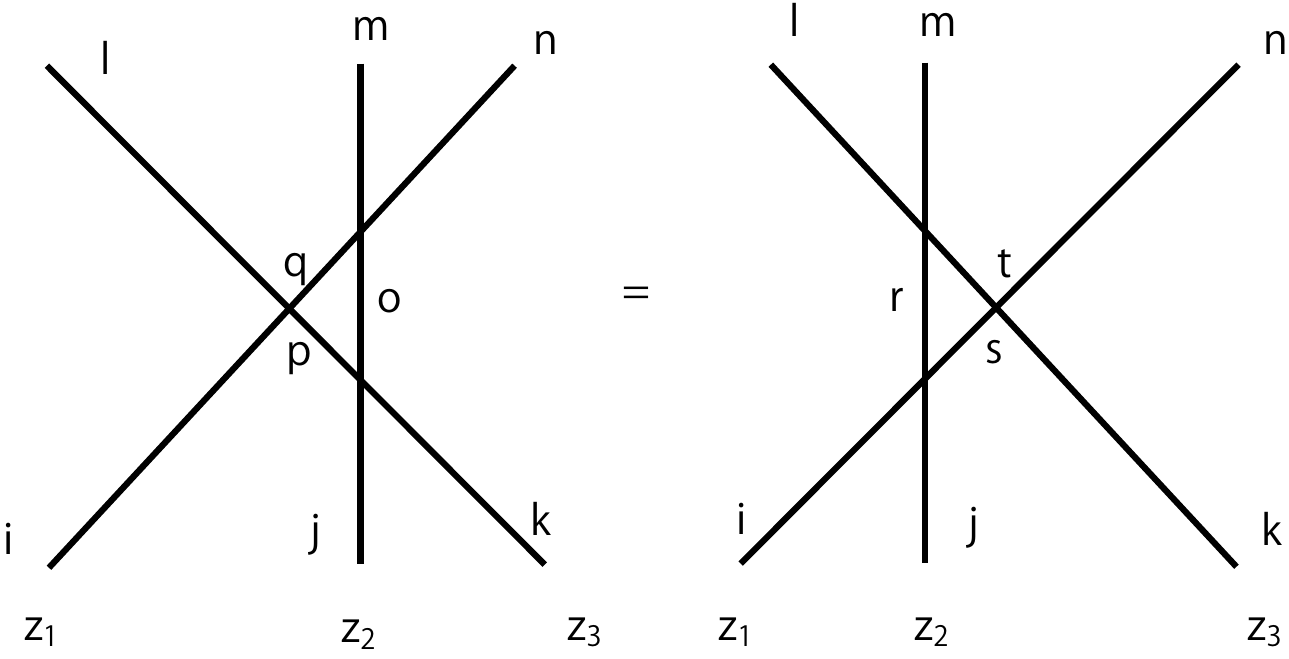}}
\caption{\small{A graphical representation of the Yang-Baxter equation.  On the left, one sums over labels $p,q,o$, and on the
right one sums over $r,s,t$.  An appropriate $R$-matrix element is attached to each vertex.}}
\label{FigureYBE}
\end{figure}

\begin{figure}[htbp]
\centering{\includegraphics[scale=.35]{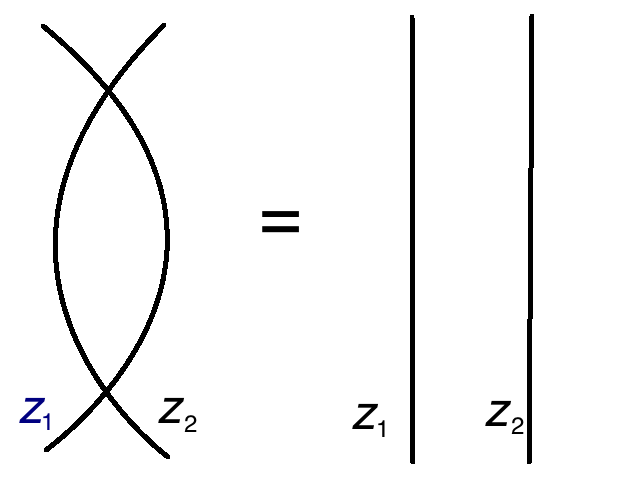}}
\caption{\small{In the context of the Yang-Baxter equation, ``unitarity'' is the equivalence of these two pictures.}}
\label{unitarity}
\end{figure}
The Yang-Baxter equation can be naturally supplemented with a condition sometimes called ``unitarity,''
which asserts that a picture in which two worldlines cross and then cross back is equivalent to one in which they
do not cross at all (\fig \ref{unitarity}).  In formulas, the relation is
\be\label{zunitarity}
R_{21}(z_2-z_1)R_{12}(z_1-z_2)=1 \;.
\ee
All solutions of the Yang-Baxter equation studied in this paper satisfy such a unitarity condition.  There is also a crossing
relation, which we will come to in sections  \ref{elementary} and \ref{firstlook}.

The Yang-Baxter equation is highly over-constrained, especially if the dimension of $V$ is large:
the $R$-matrix  has $\mathcal{O}((\dim V)^4)$ coefficients, while the 
Yang-Baxter equation has   $\mathcal{O}((\dim V)^6)$ components.
Moreover, the presence of the spectral parameter further constrains the possible solutions to the Yang-Baxter equation.
Nevertheless, the Yang-Baxter equations does have solutions and these lead to a remarkably rich theory.

\subsection{\texorpdfstring{Quasi-Classical $R$-matrix}{Quasi-Classical R-matrix}}\label{sec.quasi_classical}

While no complete classification is known of the general solution of the Yang-Baxter equation,
there are more complete results for the case of a so-called quasi-classical $R$-matrix,\footnote{
There are known solutions of Yang-Baxter equations which are not quasi-classical. 
The chiral Potts model \cite{AuYang:1987zc,Baxter:1987eq} is again a basic example.
}
a concept that we now explain.

A quasi-classical $R$-matrix is a solution $R_\hbar(z)$ of the Yang-Baxter
equation that depends on another continuous parameter $\hbar$ as well as on the spectral
parameter $z$, and that is holomorphic near $\hbar=0$ with $R_0(z)=1$. 
Thus $R_\hbar(z)$ has an expansion near $\hbar=0$ that begins
\begin{align}
R_{\hbar}(z)= I + \hbar \, r(z) + \mathcal{O}(\hbar^2) \;.
\label{eq.R_expand}
\end{align}
Here $r(z)$ is called the classical $r$-matrix.

By considering  the $\mathcal{O}(\hbar^2)$ term of the Yang-Baxter equation \eqref{YBE}, one learns
that the classical $r$-matrix obeys an equation that is known as the classical Yang-Baxter equation:
\begin{align}
[r_{12}(z_1-z_2), r_{13}(z_1-z_3) +r_{23}(z_2-z_3)]+[r_{13}(z_1-z_3), r_{23}(z_2-z_3)]=0 \;.
\label{eq.classical_YBE}
\end{align}
Note that this equation is  quadratic in the classical $r$-matrix $r$, whereas
the original Yang-Baxter equation was cubic in the $R$-matrix $R$.

Now, Belavin and Drinfeld \cite{Belavin-Drinfeld} classified solutions of the classical Yang-Baxter equation \eqref{eq.classical_YBE}, modulo  trivial equivalences,\footnote{The
classical Yang-Baxter equation is invariant under conjugation and under adding to $r$ a multiple of the identity.  The latter possibility reflects the fact that the
Yang-Baxter equation is invariant under multiplying $R(z)$ by a function of $z$.} under certain assumptions.  The assumptions were motivated by the examples
which were known at that time.   The solution is assumed to be associated to the Lie algebra $\mathfrak{g}$ of a semi-simple\footnote{In gauge theory,
it is natural to consider the somewhat larger class of Lie groups consisting of those whose Lie algebra
admits an invariant, non degenerate bilinear form.   An important example is a  reductive group, which is locally the product of a semi-simple group and a torus (an abelian group).
We will find at least two reasons to consider reductive groups in this paper.  One reason is that the simplest example for some purposes is actually the case
$G=GL_N(\C)$, which is reductive but not semi-simple.  Another reason is that in the framework we will follow in this paper, trigonometric solutions of the Yang-Baxter equation are most easily understood starting with a gauge group that is reductive but not semi-simple.}
 Lie group $G$.
No reality condition will be important in the present paper, so we consider $G$ to be a complex semi-simple Lie group with complex Lie algebra $\mathfrak{g}$.
The classical $r$-matrix 
$r$ is assumed to be an element of $\mathfrak{g}\otimes \mathfrak{g}$:
\be
r(z)=\sum_{a,b} r_{ab}(z) (t^a \otimes t^b) \;,
\label{BD_Ansatz}
\ee
where $t^a$ are a basis of $\mf{g}$.
The classical $r$-matrix is assumed to be non-degenerate, namely $\det_{a,b}(r_{ab}(z))\not \equiv 0$. 

Then the result shows that the poles of $r(z)$ in the complex plane spans a lattice,
which is either of rank $0$, $1$ or $2$. 
Solutions of the classical Yang-Baxter equation of any of the three types are almost uniquely  determined\footnote{In the elliptic case, there is a discrete choice to be made.  For $\mathfrak{g}=\mathfrak{sl}_N$, one is free to pick a generator of the finite group $\mathbb{Z}_n$.  There also are some subtleties in the trigonometric case, involving the possibility of an ``external field.'' We explain in sections \ref{trigonometric} and \ref{elliptic} what these issues mean from the point of view of four-dimensional gauge theory.}
by the choice of $\mathfrak{g}$ and the rank of the lattice.  The solutions for rank 0, 1, and 2
can be written explicitly in terms of rational, trigonometric, and elliptic functions and are known as rational, trigonometric, and elliptic solutions of the Yang-Baxter equation.

In the language of representation theory, rational, trigonometric and elliptic solutions have their algebraic counterparts, namely the Yangian $Y_{\hbar}(\mathfrak{g})$, 
the quantum affine algebra $U_{q, \hbar}(\mathfrak{g})$
and the elliptic algebra $E_{q, \tau, \hbar}(\mathfrak{g})$.   Solutions of the full (rather than classical) Yang-Baxter equation depend on the 
choice of a representation of one of these algebras and 
the $R$-matrix $R_{\hbar}(z)\in \textrm{End}(V \otimes V)$ is then an intertwiner for the 
tensor products of these representations.   In the algebraic approach, the spectral parameter enters as part of the data needed to specify a representation.

Elliptic solutions of the classical Yang-Baxter equation exist only for  $\mathfrak{g}=\mathfrak{sl}_N$,
whereas trigonometric and rational solutions exist for 
any semisimple $\mathfrak{g}$.   Rational solutions of the classical Yang-Baxter equation have $G$ as a group of symmetries,
while trigonometric solutions admit only the maximal torus of $G$ as a symmetry group and elliptic solutions for $\mathfrak{sl}_N$ have
only a finite group $\Z_N\times \Z_N$ of automorphisms.

In this paper, we will see how quasi-classical $R$-matrices with these properties can emerge from four-dimensional gauge theory.

\section{Four-Dimensional Gauge Theory}

\subsection{The Starting Point}\label{starting}
The  four-dimensional gauge theory that is relevant to our subject \cite{CostelloA,Costello:2013zra} may be described as follows.

The theory in question is only defined on 
a four-manifold with some additional structure. We start with the basic case, which is 
  a product 4-manifold $\R^2 \times \C$ with real coordinates\footnote{When convenient, we will denote $x$ and $y$ as $x^1$ and $x^2$.}
   $x,y$ on $\R^2$ and a holomorphic  coordinate $z$  on $\C$.

\def\g{\mf{g}}
The fundamental field of our theory is a $3$-component partial connection 
\be 
A = A_x \d x + A_y \d y + A_{\zbar} \d \zbar \;,
\ee 
where we did not include the component $A_z \d z$ that is of type $(1,0)$ along $\C$.
The fields $A_x$, $A_y$, and $A_{\bar z}$ all depend nontrivially on $z$ and $\bar z$ as well as $x$ or $y$;
that is, they are not constrained to vary holomorphically or antiholomorphically on $\C$.   Since $A_z$ is missing,
it would not be possible to place a reasonable reality condition on this space of fields.  Instead we take the gauge group
to be a complex Lie group $G$ with complex
 Lie algebra
$\mf{g}$, and view $A_x$, $A_y$, and $A_{\bar z}$ as independent complex fields. The construction will make
use of an invariant and nondegenerate bilinear form on $\g$, which we will denote as $\Tr$. 
The notation is motivated by the fact that if $\mathfrak g$ is semisimple or more generally if it is reductive (the direct sum of
a semi simple Lie algebra with an abelian one), then an invariant quadratic form can be defined as the trace in a suitable
representation.  However, our discussion in this paper applies whether $\Tr$ actually has this interpretation or not.  
For a simple summand of $\mathfrak{g}$, the Killing form of $\mathfrak{g}$ gives an invariant nondegenerate quadratic
form.   We choose an orthonormal basis $t_a$ of $\mathfrak{g}$ with respect to this Killing form and normalize $\Tr$ by
\begin{align}
\label{Killing_normalization}
\textrm{Tr}(t_a t_b)= \delta_{ab} \;.
\end{align}

The action of our theory is given by 
\begin{align}
S
=\frac{1}{2\pi } \int_{\R^2 \times \C} \d z \wedge  \textrm{CS}(A)
\;,
\label{eq.action}
\end{align} 
where $\textrm{CS}(A)$ is the Chern-Simons three-form
\begin{align}
\textrm{CS}(A):=\textrm{Tr}\left(A \wedge \d A +\frac{2}{3} A\wedge A\wedge A\right)=
\vepsilon^{ijk} \textrm{Tr}\left(A_i \partial_j A_k +\frac{2}{3} A_i A_j A_k\right) \;.
\end{align} 
Here and afterwards the indices $i, j, \ldots$ run over $x, y$ and $\bar{z}$
($\vepsilon$ is a totally antisymmetric tensor with $\vepsilon^{x y \bar{z}}=1$).
The VEV (vacuum expectation value) of an observable $\mathcal{O}$ is given by the path-integral
\begin{align}
\label{path-integral}
\langle \mathcal{O} \rangle= \frac{ \displaystyle \int \mathcal{D} A \, \mathcal{O} \exp\left(\frac{\i S}{\hbar}\right)}{ \displaystyle \int \mathcal{D} A \, \exp\left(\frac{\i S}{\hbar}\right)} \;.
\end{align}

The action $S$ is obviously not invariant under four-dimensional diffeomorphisms, because the use of the 1-form $\d z$ spoils the four-dimensional symmetry.
Nor does it have the three-dimensional diffeomorphism symmetry of three-dimensional Chern-Simons theory; this is the symmetry that enables one to define quantum invariants
of knots.  But we still have two-dimensional diffeomorphism symmetry -- invariance under orientation-preserving diffeomorphisms of $\R^2$ (or of its generalization $\Sigma$ that
will be introduced later).  This will ultimately lead to the Yang-Baxter equation and the unitarity relation.

We understand the action $S$ as  a holomorphic function of complex variables $A_x,$ $A_y$, $A_{\bar z}$, and this implies
that  the construction that we will be describing is somewhat formal.  There is no difficulty
in formally carrying out perturbation theory in such a holomorphic theory.  That approach was taken in \cite{CostelloA,Costello:2013zra} and it is the approach that we will follow here.  (We expect that a nonperturbative definition of the theory can be given by considering the D4-NS5 system of string theory, along the lines of the study of the D3-NS5 system in \cite{Witten:2011zz},  but we will not pursue this in the present paper.)   The parameter $\hbar$ that appears in the action is, at the quantum level,
 the loop-counting parameter.  In the semi-classical limit $\hbar\to 0$, this parameter will be identified with the parameter
of the same name that appears in the quasi-classical $R$-matrix
 \eqref{eq.R_expand}.   The parameter $\hbar$ has dimensions of length, in the sense that for  $C=\C$, the theory is invariant under a common
 rescaling of $z$ and $\hbar$.  
The factor of  $1/(2\pi)$ in the action is included here to match with the literature on integrable models. 

A reflection of the fact that the construction is formal and leads (in the form we present here) only to a perturbative theory is the following.
There is no quantization condition for $\hbar$ that will ensure that the action is gauge-invariant mod $2\pi \Z$.  This contrasts
with three-dimensional Chern-Simons theory, which is defined with such a condition. 

The action is invariant, modulo surface terms that are irrelevant in perturbation theory, under gauge transformations acting in the usual way.
\be 
A_{i} \mapsto g^{-1} A_i g +  g^{-1} \partial_i g \;,\quad (i=x,y, \bar{z}) \;.
\label{eq.gauge_transf}
\ee 
This is true because the Chern-Simons three-form is gauge-invariant modulo an exact form.  Alternatively, we can integrate by parts
to put the action in a manifestly gauge-invariant form,  after discarding surface terms
that are irrelevant in perturbation theory: 
\be \label{varthaction}
S=-\frac{1}{2\pi} \int_{\R^2 \times \C} z\,\Tr\, F\wedge F  
\;.
\ee 
This is the standard topological term of the Yang-Mills theory, where the $\theta$-angle now depends linearly on $z$.   

Some readers might be more comfortable starting with a standard $4$-component connection
\be 
A = A_x \d x + A_y \d y + A_{\zbar} \d \zbar + A_z \d z \;,
\label{4_component}
\ee 
with gauge transformations acting in the usual way on all four components, and again with the action (\ref{eq.action}).
In this case, one finds that due to the presence of the differential form $\d z$ in the action \eqref{eq.action},
the $A_z$ component drops out from the action, and hence we have an extra gauge symmetry
\be 
A \mapsto A + \chi \, \d z \;.
\label{eq.newgauge}
\ee 
We can then fix this extra gauge symmetry by choosing a gauge 
$A_z=0$. The $4$-component gauge transformation for the $4$-component gauge field \eqref{eq.newgauge}
is not consistent with this gauge since it 
will in general generate a non-trivial $A_z$ component. However,
a combination of the $4$-component gauge transformation, with
the extra gauge symmetry \eqref{eq.newgauge}
with $\chi=-A_z$, remains as a residual gauge symmetry. This is the $3$-component 
gauge transformation \eqref{eq.gauge_transf}.

In the following, we will always choose $A_z=0$, so that $A$ is the $3$-component connection
and the only remaining 
gauge symmetry is the   conventional
gauge transformation \eqref{eq.gauge_transf}.

A possibly more familiar theory that is defined in a similar way with a partial connection is holomorphic
Chern-Simons theory.  This theory is defined on a Calabi-Yau threefold $X$ with holomorphic 3-form $\Omega$.
The dynamical variable is a $(0,1)$ connection $A=\sum_{i=1}^3 A_{i} \d \zbar^{i}$ and the action
is the integral of the Chern-Simons $(0,3)$-form, wedged with $\Omega$:
\be \label{elbo}
S= \int_{X} \Omega \wedge  \textrm{CS}(A^{(0,1)}) \;.
\ee 
The definition of this action  depends only on the complex structure and holomorphic volume form $\Omega$ of the 3-fold $X$.

In this light, the four-dimensional theory of \eqref{eq.action} is intermediate between  ordinary Chern-Simons theory in three dimensions
and holomorphic Chern-Simons theory on a Calabi-Yau threefold. These theories arise as effective theories of branes in the topological
A-model and B-model respectively \cite{Witten:1992fb} and are related by mirror symmetry.  The four-dimensional theory
that we will be studying here is intermediate between the two cases and on an appropriate four-manifold
can be related by $T$-duality -- mirror symmetry in some but not all dimensions of spacetime  --
to either one of them.

The classical equations of  motion of the theory  read
\be \label{zelbo}
F_{xy}=0 \; , \quad
F_{x \bar{z}} = F_{y \bar{z}} =0 \;.
\ee 
This means that the gauge field defines a flat bundle on $\mathbb{R}^2$, which then varies holomorphically as we 
move along $\mathbb{C}$.

The equations (\ref{zelbo}) imply that all local gauge-invariant quantities  that can be constructed from the field $A$ actually vanish.
This is the reason that the theory works at the quantum level.  Because the loop-counting parameter $\hbar$ has dimensions
of length or inverse mass, the theory is unrenormalizable by power-counting.  But this does not cause difficulty because all conceivable counterterms
actually vanish by the equations of motion.  The theory thus can be quantized in perturbation theory \cite{CostelloA,Costello:2013zra}.  However, it
is affected by framing anomalies somewhat similar to those of three-dimensional Chern-Simons theory, but more subtle.  

The fact that the theory is unrenormalizable by power counting actually leads to a very important simplification.   After gauge-fixing, when
one concretely constructs the theory in perturbation theory, it is infrared-free.  The fact that the theory is infrared-free makes it straightforward,
once one introduces Wilson line operators, to deduce a local procedure to compute their expectation values.  From this local procedure,
one then can immediately recover the Yang-Baxter equation of an integrable system. (This will be explained in detail in section \ref{ybeu}.)  By contrast, three-dimensional Chern-Simons theory is
renormalizable by power counting and does not lead as directly to a local picture.  

\subsection{Generalization}
We  comment next on replacing $\mathbb{R}^2\times \mathbb{C}$ by a more general $4$-manifold.

In this paper, we will exclusively study the special case that the $4$-manifold is a product of two Riemann surfaces, 
\be 
M = \Sigma \times C \;,
\label{eq.product_mfd}
\ee 
where $\Sigma$ is a smooth oriented 2-manifold,  and  $C$ is a complex manifold endowed with a holomorphic (or sometimes meromorphic,
as discussed shortly) 1-form $\omega$,
which plays a role similar to $\Omega$ in the holomorphic Chern-Simons theory of \eqn (\ref{elbo}).
We will  sometimes refer to $\Sigma$ as the ``topological plane'' and $C$ as the ``holomorphic plane,''
though in general neither one of them is really a plane.  

We can then define a natural generalization of the action \eqref{eq.action} by 
\be
I=\frac{1}{2\pi} \int_M \omega \wedge \textrm{CS}(A)
 \;.
\label{eq.action_2}
\ee
As long as $\omega$ is closed, this action is gauge-invariant modulo total derivatives that do not affect perturbation theory.

Now we should discuss the possible role of zeroes and poles of $\omega$.   Naively, since the action involves  only the ratio $\omega/\hbar$,
a zero of $\omega$ corresponds to a point at which $\hbar\to \infty$.  Thus, in a theory that one only knows how to define perturbatively,
it should not be straightforward to make sense of the behavior near a zero of $\omega$.   We expect that essentially new ingredients are needed
to make sense of that behavior.  We will not explore this issue in the present paper.

Conversely, near a pole of $\omega$, $\hbar$ is effectively going to zero and perturbation theory should be within reach.  However, poles
of $\omega$ are still subtle for the following reason.  If $\omega$ has a pole at a point $p\in C$, then $\d\omega$ does not vanish near $p$ but is a distribution
supported at $p$.  Accordingly, gauge-invariance will fail unless we place some suitable conditions  near $p$ on the gauge field $A$ and the gauge parameter
$g$.  If $\omega$ has a double pole at $p$, one can restore gauge invariance by asking that $A=0$ at $p$ and $g=1$ at $p$.  What one
has do if $\omega$ has a simple pole at $p$ is more subtle and will be described in section \ref{trigonometric}.

Only simple and double poles are relevant, as one sees if one considers
the possibilities for a complex Riemann surface $C$ with a holomorphic one-form $\omega$ that is allowed to have poles but not  zeroes.
By the Riemann-Roch theorem, the number of zeroes of any meromorphic differential $\omega$ minus the number of its poles is $2g-2$, where $g$ is the genus of $C$.
Thus if $\omega$ has no zeroes, $C$ must have genus 0 or 1.
Moreover,  for $g=0$
we have either (1) a single pole with multiplicity $2$, which corresponds to $C=\mathbb{C}$ with differential $\d z$ (which has a double
pole at $\infty$)
or (2) two simple poles,  in which case we can take  $C=\mathbb{C}^{\times}=\mathbb{C}/\mathbb{Z}$ with differential $\omega=\frac{\d z}{z}$, which has
simple poles at 0 and $\infty$.
For $g=1$, there are no poles at all; $C$ is a complex torus or  elliptic curve $\mathbb{C}/(\mathbb{Z}+\tau \mathbb{Z})$ (with modulus $\tau$) with the holomorphic
differential $\d z$.  In each case, the choice of $\omega$ is unique up to a normalization constant that can be absorbed in rescaling $\hbar$.

Summarizing, we have  the following three possibilities for $C$:
\begin{align}
\begin{aligned}
\label{mirf}
 &C=\mathbb{C} \;,&& \omega=\d z \;,&& \textrm{double pole at }\{ \infty\}\;, && \textrm{(rational)}  \;, \\
  &C=\mathbb{C}^{\times} \;,&& \omega=\frac{\d z}{z} \;,&& \textrm{poles at }\{ 0, \infty\}  \;,&& \textrm{(trigonometric)} \;, \\
   &C=E=\mathbb{C}/(\mathbb{Z}+\tau \mathbb{Z}) \;,&&\omega=\d z \;,&&\textrm{no poles}\;,&&\textrm{(elliptic)} \;.
\end{aligned}
\end{align}
As indicated, the three choices of $C$ match the three broad classes  of quasi-classical R-matrices that were summarized in section \ref{sec.quasi_classical}, 
if we assume that $C$ parametrizes the spectral
parameter of the classical $r$-matrix.  Developing this relationship is the purpose of the present paper.

A notable fact is that the three examples are all abelian groups.  This is no coincidence, of course.  Since the holomorphic differential $\omega$ on $C$
has no zeroes, its inverse is a holomorphic vector field $\zeta=\omega^{-1}$ that generates an abelian group symmetry.    For the three cases,
in the coordinates used in \eqn (\ref{mirf}), the group action is $z\to z+a$ in the case that $C$ is the complex plane or an elliptic curve, or $z\to \lambda z$
in the case of $\C^\times$.  It is because of this group action, which is a symmetry of the action (\ref{eq.action}) and the theory constructed from it,
that the $R$-matrix $R(z_1,z_2)$ that we eventually construct is a function only of the difference $z_1-z_2$ or the ratio  $z_1/z_2$, as the case may be.  

Though this will not be developed in the rest of the paper, we will  briefly describe a more general possible choice of 
$4$-manifold. 
Suppose that the $4$-manifold $M$
admits a complex-valued closed $1$-form $\omega$.
We require that $\op{Re} \omega$ and $\op{Im} \omega$ are everywhere linearly independent.
This means that locally $\omega=\d f+\i \, \d g$, where $f$ and $g$ are real-valued functions and $\d f$ and $\d g$ are linear independent.  $M$
can then locally be foliated by the smooth two-manifolds that are defined by setting $f$ and $g$ to constants.  Thus $M$ has a two-dimensional
integrable foliation.   The gauge field $A$ is a 3-component partial $\g$-valued connection, or alternatively it
is an ordinary connection with the 
extra gauge symmetry
\be 
A \mapsto A + \chi \,\omega \;.
\label{eq.newgauge2}
\ee 
The action is still given by \eqref{eq.action_2}.

The considerations to this point have been purely classical, but there are important quantum corrections.
As we discuss briefly in section \ref{firstlook} (see \cite{CostelloA,Costello:2013zra} for a detailed account), at the quantum level 
there is a framing anomaly which means that we can only define the theory on some, but not all, $4$-manifolds $M$
of the type mentioned above. For the product manifold of \eqref{eq.product_mfd}, the framing anomaly implies that
$\Sigma$ must be equipped with a framing.  In particular, if compact, $\Sigma$ must be a two-torus.
For integrable lattice models associated to solutions of the Yang-Baxter equation, 
the most important examples are that $\Sigma$ is $\R^2$ or a two-torus.

\subsection{Wilson Lines}\label{subsec.Wilson}

\def\h{\widehat}

Now let us consider the gauge-invariant operators of the theory.  There are no local ones because they all vanish by the equations of motion.
The simplest gauge-invariant operators -- and the only ones that we will study in the present paper -- are Wilson line operators.  

In ordinary gauge theory, a natural gauge-invariant quantity is the trace, in some representation $\rho$ of the gauge group,
of the holonomy of the connection around a closed loop $K$.  Quantum field theorists usually write this quantity as
\be 
 W_{\rho}(K) =
 \Tr_{\rho} P  \exp\left( \oint_K A_i(x^1,x^2,z,\bar z)\d x^i\right),
\label{eq.W_rho}
\ee 
where $P$ denotes  path-ordering along the loop $K$, and $\Tr_{\rho}$ the trace
in the representation $\rho$.
In the present context, $K$ cannot be an arbitrary loop in the four-manifold $\Sigma\times C$.  On the contrary,
because we only have a partial connection with no $\d z$ term, there is no notion of parallel transport in the $C$ direction.\footnote{Either there is no $A_z$ and no way to
define parallel transport along a path on which $z$ is not constant, or there is an $A_z$ but also an extended gauge invariance (\ref{eq.newgauge2}), and parallel
transport in the $z$ direction is not gauge-invariant.}  Accordingly,
we are restricted to the case that $K$ is a loop in the topological plane $\Sigma$, at a specified point\footnote{This classical statement 
will later be subject to some revision because of the framing anomaly.}
 $z=z_0$ in $C$.  This already makes contact in a preliminary way
 with some aspects of the standard Yang-Baxter picture that we reviewed in section \ref{YBeq}.  A Wilson operator is supported on a 1-manifold $K$ in the two-manifold $\Sigma$
 (which one can think of as the worldline of a particle in a two-dimensional spacetime), and it is a labeled by a spectral parameter, that is, by a point in $C$, and by a choice of a representation $\rho$ of $G$.  Here $\rho$ will play the role of the vector space $V$ of internal states
 of a particle, introduced at the beginning of section \ref{YBeq}.  

\def\t{\text{\sf t}}
We can also introduce more general Wilson operators, which do not have analogs in standard gauge theories.  The existence of these operators
is related to the fact that the loop $K$ is highly restricted, as described in the last paragraph.
At the classical level, these Wilson loops are labeled by a representation $\what{\rho}$ of the infinite-dimensional Lie algebra $\g[[z]] =\prod_{n\ge 0}  (\g \otimes z^n)$ of series in $z$ whose coefficients are in the finite-dimensional Lie algebra $\g$. 
(The same algebra $\g[[z]]$ will appear regardless of the choice of $C$ because the considerations will be local along $C$.)   To be more exact, Wilson
operators supported at $z=0$ will be associated to representations of $\g[[z]]$.  Wilson operators supported at $z=z_0$ are similarly associated
to representations of $\g[[z-z_0]]$ (which is obtained from $\g[[z]]$ by $z\to z-z_0$).

Since the relevant concepts may be unfamiliar,
we pause for an explanation.   Roughly speaking, an element of $\g[[z]]$ is
a $\g$-valued function of $z$.   If $\g$ has a basis $t_a$, $a=1,\dots,\mathrm{dim}\,\g$, then a basis, in the relevant sense, of the space of $\g$-valued functions of $z$
is provided by
\be\label{melz}
\t_{a,n}(z)=t_az^n,~~~ n\geq 0 \;.
\ee
So to define a representation of $\g[[z]]$, we need to give, for every $a$ and $n$, a matrix (or operator) $t_{a,n}$ that represents the action of $\t_{a,n}(z)$.
Concretely, if $\g$ has a basis $t_a$ with $[t_a,t_b]=f_{ab}{}^c t_c$, then the natural commutation relation for $\g$-valued functions of $z$ is
$[t_a z^n, t_b z^m]=f_{ab}{}^c t_c z^{n+m}$.  Therefore, the corresponding representation matrices should obey
\be\label{welz} 
[t_{a,n},t_{b,m}]=f_{ab}{}^ct_{c,n+m} \;. 
\ee
It is important that there is no central extension here and that this algebra has finite-dimensional representations.  
We will be primarily interested in finite-dimensional representations, and more specifically representations with the property
that   there is some $n_0$ such that $t_{a,n}=0$ for $n\geq n_0$.   To orient the reader,
we consider the  first nontrivial
example, which arises for $n_0=2$.   The nonzero generators are just $t_{a,0}$, which generates the finite-dimensional algebra $\g$, and $t_{a,1}$, which commutes
with itself and transforms in the adjoint representation of $\g$. To construct a representation $\h\rho$ of this algebra, we can take a direct sum $\rho\oplus \rho$ of two copies
of any representation $\rho$ of $\g$.  If $t_a$ are the representation matrices of $\rho$, then a representation of $\g[[z]]$ is given by
\be\label{zelx}
t_{a,0}=\begin{pmatrix}t_a& 0 \cr 0& t_a\end{pmatrix} \;, \quad
 t_{a,1}=\begin{pmatrix}0&t_a\cr 0&0 \end{pmatrix} \;. 
 \ee  
 This representation is indecomposable
as a representation of $\g[[z]]$, though it is decomposable as a representation of $\g$.
There are many elaborations on this theme with $n_0\geq 2$.  

Because the line $K$ in (\ref{eq.W_rho}) is supported
at a point in $C$, its  definition depends only on the components $A_i(x^1,x^2,z,\bar z)$ of the gauge field, with $i=1,2$. 
We will preserve this fact in defining (at the classical level) more general Wilson operators.
 For fixed $x^1$ and $x^2$,
$A_i(x^1,x^2,z,\bar z)$ is a $\g$-valued function of $z$ and $\bar z$.   Formally setting $\bar z=0$, we get (for each point in the topological
plane and each $i$) a $\mathfrak{g}$-valued function of $z$,
namely $\h A_i(x^1,x^2,z)=A_i(x^1,x^2,z,0)$.   A precise definition of what we mean by setting $\bar z$ to 0 with $z$ fixed is that we define
\be 
\what{A}_i (x^1,x^2,z):=\sum_{k\ge 0} \frac{z^k}{k!}\left.\frac{\partial^k}{\partial z^k}A_i(x^1,x^2,z,\bar z)\right|_{z=\bar z=0}. 
\ee 
We do not need to worry about convergence of this series, because we consider representations that are annihilated by a sufficiently high power of $z$.  Thus for any given
representation, we can terminate the series after finitely many terms and consider $\h A_i$ to have a polynomial dependence on $z$.

Next we consider gauge transformations.   The generator of such a gauge transformation is a $\g$-valued function $u(x^1,x^2,z,\bar z)$.  We can restrict such a function to $\bar z=0$,
in the same sense described in the last paragraph, and extract a $\g$-valued function $\h u(x^1,x^2,z)=u(x^1,x^2,z,0)$, which we can interpret as a $\g[[z]]$-valued function of $x^1,x^2$.
The theory we are studying is invariant under a gauge transformation of $A_i$ generated by $u(x^1,x^2,z,\bar z)$.  When we restrict to $\bar z=0$, the action of $u(x^1,x^2,z,\bar z)$ on $A_i$
becomes an action of $\h u(x^1,x^2,z)$ on $\h A_i(x^1,x^2,z)$.  Here we can view $\h u(x^1,x^2,z)$ as a $\g[[z]]$-valued function on $\Sigma$, and its action of $\h A_i$ is the natural
action of such a function on $\h A_i$, viewed as a $\g[[z]]$-valued gauge field on $\Sigma$.   

So finally if $\h\rho$ is a representation of $\g[[z]]$ of the allowed class, we can define a corresponding Wilson operator 
by modifying eqn.\ (\ref{eq.W_rho}) in an almost trivial way:
\be
 W_{\h\rho}(K) =
 \op{Tr}_{\h\rho} P  \exp\left( \oint_K A_i(x^1,x^2,z,0)\d x^i\right) .
\label{eqW}
\ee
Here $A_i(x^1,x^2,z,0)$ is expanded around $z=0$ with fixed $\bar z=0$.   Wilson lines supported at some other point $z=z_0$ in $C$ are
similarly defined by expanding around $z=z_0$ with fixed $\bar z=\bar z_0$.   (In this case, one considers representations of $\g[[z-z_0]]$.)

However, here we should point out a crucial subtlety that is important for applications of  the extension to $\g[[z]]$.  It is very undesirable to take a trace in
eqn.\ (\ref{eqW}) because this causes much of the interesting structure to disappear.  We can see that by going back to eqn.\ (\ref{zelx}).  This representation is indecomposable
as a representation of $\g[[z]]$, and it does not just come from a representation of the finite-dimensional algebra $\g$.  The holonomy operator along a given path in this representation is different from
what it would be if one sets $t_{a,1}=0$, which would give a decomposable representation of $\g[[z]]$.  But if we take the {\it trace} of the holonomy, then $t_{a,1}$ will play no role
because it is strictly upper triangular, and we cannot distinguish the given representation from its decomposable cousin.

Because the theory is infrared-free, as explained at the end of section \ref{starting}, the holonomy itself rather than its trace is a meaningful observable.  To see this, we take $\Sigma=\R^2$
and we consider a 1-manifold $K\subset \Sigma$ (supported at a point in $C$) that is not compact but has its ends at infinity along $\Sigma$.  Because the theory is infrared-free,
the gauge field can be considered to vanish at infinity along $\Sigma$ and then the holonomy along $K$ is a gauge-invariant observable, with no need to take its trace.  It is really this that gives power to the fact that the theory has Wilson operators associated to a large class of
representations of $\g[[z]]$.   The importance of the extension to $\g[[z]]$ will not be fully clear until we analyze the operator product expansion of Wilson
operators in section \ref{parallel}.

\subsection{The Yang-Baxter Equation and Unitarity}\label{ybeu}
\begin{figure}[htbp]
\centering{\includegraphics[scale=.35]{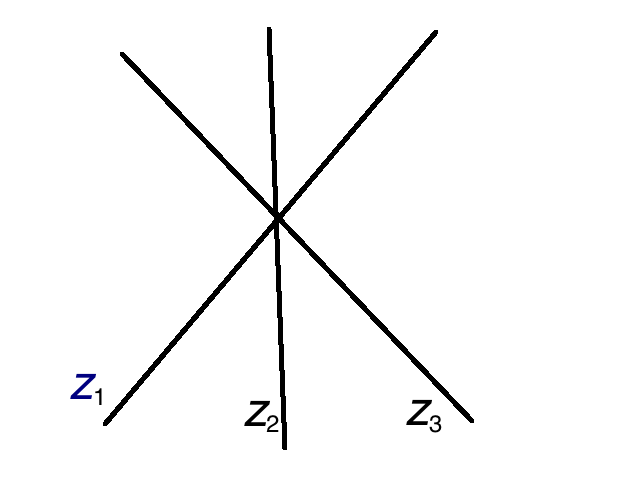}}
\caption{\small{Three lines meeting at a point in $\Sigma$, but with distinct values of the spectral parameters $z_i$.}}
\label{singular}
\end{figure}
Having defined Wilson operators, we can now return to Figs. \ref{YB2} and \ref{unitarity} in which the Yang-Baxter equation and the associated unitarity relation are illustrated.
We now interpret these figures as representing configurations of Wilson lines  $K_i\in \Sigma$ (supported at the indicated points  $z_i\in C$).  It is now not hard to argue for the equivalence of the left and right hand sides of these pictures.
  Two-dimensional diffeomorphism symmetry means that we are free to move the $K_i$ around as long as we do not change
the topology of the situation.  However, this alone is not quite enough to prove the equivalences suggested in the pictures.
  For example, in \fig \ref{YB2}, we are free to move the ``middle'' Wilson line to the left or right, as long as we do not try to pass through
a configuration  (\fig \ref{singular}) in which the three lines all meet at a point; such a configuration  is not equivalent by a diffeomorphism of $\Sigma$ to a configuration without
a triple intersection.  In trying to prove the  equivalences between the left and right of \fig \ref{YB2}
 by moving the middle Wilson line from left to right, we have to ask whether there is a discontinuity in the
path integral at the moment that a triple intersection occurs.  However, as long as the spectral parameters $z_1,z_2,z_3$ are all distinct, none of the lines are meeting in four
dimensions and it is manifest that the configuration that has a triple intersection when projected to $\Sigma$ is not associated to any singularity.\footnote{An interpretation of the
Yang-Baxter equation and the spectral parameter somewhat along these lines was conjectured by M.~F.~Atiyah in the 1980's \cite{Atiyah}.}  In
particular there is no discontinuity and the pictures on the left and right of
\fig \ref{YB2} are equivalent as long as the $z_i$ are distinct (in fact, it is enough that they are not all equal). Likewise the pictures on the left and right of \fig \ref{unitarity}
are equivalent as long as $z_1\not=z_2$.

It takes more than this to argue that the theory has an $R$-matrix that satisfies the Yang-Baxter equation and the unitarity relation.  The usual 
$R$-matrix formalism, as summarized in section \ref{YBeq}, involves a much more specific interpretation of the pictures.    Each line
in  \fig \ref{Rmatrix} is supposed have associated to it a space $V$ of ``internal states'' accessible to a particle.  In the present
framework, the meaning of this is clear: a line is a Wilson line associated to some representation $\rho$ of $\g$ (or more generally of
$\g[[z]]$), and $\rho$ corresponds to $V$.  But the usual $R$-matrix picture is much simpler than one would expect in 
quantum field theory in general.  
In the usual $R$-matrix picture, each line segment between two crossings is labeled
by a basis vector $e_i$ of $V$, and to a crossing one associates a local factor, the $R$-matrix element $R_{ij}^{kl}(z_1-z_2)$, which depends only on the data at a particular crossing and not on any other details in which the local picture is embedded.  Moreover, this
$R$-matrix element depends only on the difference $z_1-z_2$.   In quantum field theory in general, one would not expect a local
picture like this. 
Finally, though in standard presentations of $R$-matrix theory one might
take this for granted and skip it over, it is noteworthy that in $R$-matrix theory,
the two-dimensional regions bounded by the lines do not carry any labels.  
This is a nontrivial point and in fact there is a generalization of the Yang-Baxter equation (the dynamical Yang-Baxter equation \cite{GN,Felder,FelderTwo,Etinghof})
in which the bulk regions do carry labels, above and beyond the labels carried by the line segments.

In trying to explain these facts in the present context, the most basic question is why there is a local picture of any sort.  The reason for
this is that the theory is infrared-free, as was noted at the end of section \ref{starting}.  Concretely, in constructing perturbation theory,
as we will do starting in section \ref{lowestorder}, one picks a Riemannian metric on $\Sigma\times C$.  If one scales up the metric on $\Sigma$
by a large factor, so that different crossings are very far apart (compared to the distances between the points in $C$ at which a given
set of Wilson line operators are supported), 
then the infrared-free nature of the theory guarantees that some kind of local picture will be possible.

\def\H{{\mathcal H}}
To explain more, let us first ask what would happen in the absence of any line operators.  The theory under study is topological in the
$\Sigma$ direction, so (ignoring further subtleties that arise because we are dealing with a theory whose action is a holomorphic function
of complex variables) in general we would expect the theory to have 
 a space $\H$ of quantum states.  These would roughly correspond to vacua of a standard
quantum field theory.  In general, one would expect to label the regions between the lines -- that is, any region of $\Sigma$ that is
not near one of the Wilson operators -- by a basis vector of $\H$.   Accordingly, if $\H$ has dimension bigger than 1, we would get
something like the dynamical Yang-Baxter equation \cite{GN,Felder,FelderTwo,Etinghof}, with labels for regions as well as line segments,
rather than the standard Yang-Baxter equation in which regions between the lines are unlabeled.
We discuss this situation in section \ref{dybe}.  

To get something as simple as the standard Yang-Baxter equation, we want $\H$ to be one-dimensional, which will happen if the space
of classical solutions of the theory, modulo gauge transformations, is a point.   This is also the condition that eliminates the subtleties
associated with having a holomorphic action; perturbation theory is straightforward in principle if there is only one classical solution to
expand around, and it has only a finite group of automorphisms.\footnote{If there is a unique classical solution up to gauge transformation,
but it has a nontrivial automorphism group $H$, then in developing perturbation theory one wants to divide by the volume of $H$.
If $Hf$ is not a finite group, this volume might be hard to interpret.  However, this issue involves only an overall constant factor in
the path integral, independent of what collection of Wilson lines one considers.}
 The simplest case is that $C=\C$.  In quantizing the theory on $\Sigma\times \C$ for any $\Sigma$, we require
that the gauge field $A$ and the generator of a gauge transformation both vanish at infinity.  With this choice, the only (stable\footnote{There are many
classical solutions on $\Sigma\times \C$
 that correspond to bundles on $\C$ (trivialized at infinity) that are unstable in the sense of algebraic geometry.  This likely makes them unsuitable as a starting point for perturbation theory.
At any rate, the fact that is really important for us is that the trivial connection on $\Sigma\times \C$ is a classical solution that has no infinitesimal
deformations or gauge automorphisms.  In perturbation theory around this solution, we do not meet unstable bundles.}) classical solution,
up to a gauge transformation, is $A=0$.  So we are in the situation in which the bulk regions do not carry labels and perturbation theory
is straightforward in principle.   

It is likewise possible when $C$ is $\C^\times$ or an elliptic curve to ensure that the classical phase space is a point, leading
to straightforward perturbation theory and (as we argue shortly) a conventional Yang-Baxter equation.  The details are more involved and
we defer a discussion to sections \ref{trigonometric} and \ref{elliptic}.  

\begin{figure}[htbp]
\centering{\includegraphics[scale=.35]{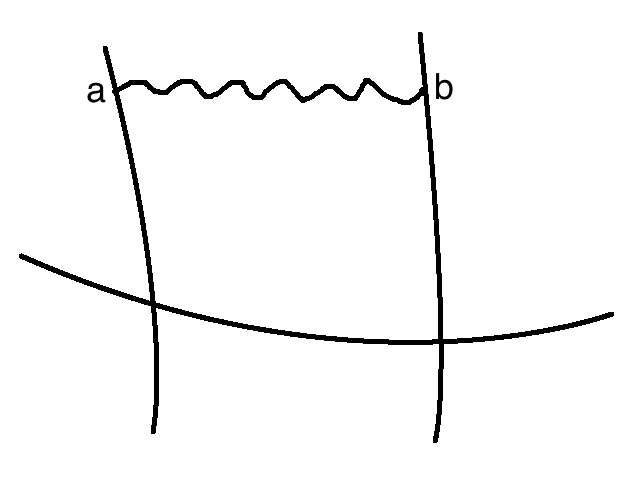}}
\caption{\small{The infrared-free nature of the theory under study means that when we scale up the metric of $\Sigma$ after gauge-fixing,
gluon exchange between Wilson lines that are not crossing becomes irrelevant.}}
\label{Irrelevant}
\end{figure}

Let us now imagine doing perturbation theory in an infrared-free theory in the presence of a configuration of Wilson lines.  What
sort of perturbative corrections are significant?   A typical example of an effect that is not significant
 is gauge boson exchange between two Wilson
lines that are not crossing (\fig \ref{Irrelevant}).  By scaling  up the metric of $\Sigma$, the points $a$ and $b$ in the figure
can be made arbitrarily far apart, regardless of where they lie on the Wilson lines in question, and the contribution of gluon exchange between them goes to zero.  So this can be ignored.

\begin{figure}[htbp]
\centering{\includegraphics[scale=.35]{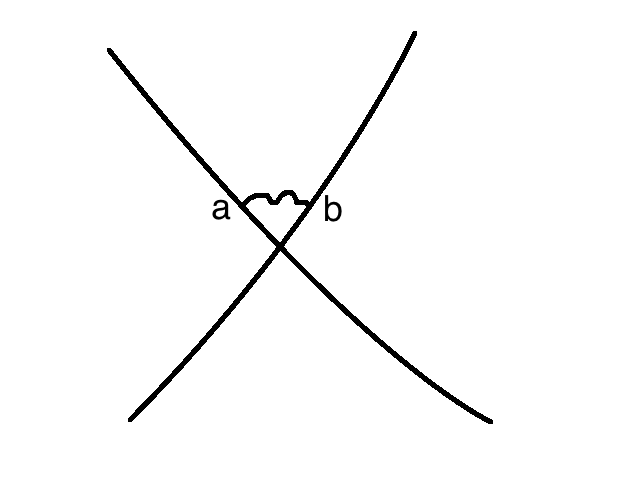}}
\caption{\small{The infrared-free nature of the theory under study means that when we scale up the metric of $\Sigma$ after gauge-fixing,
gluon exchange between Wilson lines that are not crossing becomes irrelevant.}}
\label{CrossingLines}
\end{figure}
A typical contribution that cannot be ignored is a gauge boson exchange between two lines that are crossing (\fig \ref{CrossingLines}).
In this case, the points $a$ and $b$ can be near the crossing point, so they cannot  be assumed to be far away in $\Sigma$.  
We can still scale up the metric in the picture in order to exploit
the infrared-free nature of the theory.  But all that happens when we do this is that the lines that are crossing turn into straight lines
near the point in $\Sigma$ where they cross, and they become very widely separated from any other crossing points.  So the diagram
of \fig \ref{CrossingLines} may be nontrivial -- and it is nontrivial, as we will calculate in section \ref{lowestorder} -- but it  will be local:
it will not depend on the details of a larger picture in which the crossing of \fig \ref{CrossingLines} might be embedded.  

Now we can put the pieces together and explain why something along the lines of standard Yang-Baxter theory will emerge.
We assume a situation with a unique classical solution\footnote{In the trigonometric and elliptic cases, this argument needs to
be stated a little more carefully.  There is a unique classical solution, and although it is not gauge-equivalent globally to $A=0$, this is true
locally.  Given this, the argument proceeds as in the text.} $A=0$.  Moreover, we know that quantum effects are negligible except near crossings.
Thus away from crossings, we can assume that $A=0$ everywhere.  This means that a line segment between crossings just describes
a free particle in the relevant representation of $\mathfrak{g}$ (or of $\g[[z]]$) and can be labeled by a basis vector in that representation.  
And crucially, the amplitude associated to a given crossing can only depend on the local data at that crossing -- the representations
and labels of the lines that are crossing. 
Thus the equivalence of the two pictures of \fig \ref{YB2}, which follows from rather general arguments that were given above,
turns into the more precise numerical equivalence of \fig \ref{FigureYBE}, with a local $R$-matrix at each crossing.  
The  same reasoning applies to the unitarity relation of \fig \ref{unitarity}.  Because of the infrared-free nature of the theory,
it turns into the concrete unitarity relation
$R_{21}R_{12}=1$ of Yang-Baxter theory.

For the case that $C$ is $\C$ or an elliptic curve, the
 local $R$-matrix $R_{ij}^{kl}(z_1,z_2)$ is actually a function only of the difference $z_1-z_2$, because the classical action
(\ref{eq.action}) is invariant under shifting $z$ by a constant.  For the case of $C=\C^\times $ with differential $\d z/z$, the equivalent
statement is that the $R$-matrix (written in these multiplicative coordinates) is a function of the ratio $z_1/z_2$.

For the case that $C=\C$, related to the Yangian,
 a few further nice things happen which make this case particularly simple and elegant.  First of all,
the action is invariant under a common rescaling of $z$ and $\hbar$, so actually the $R$-matrix is a function only of 
a single variable $(z_1-z_2)/\hbar$.  Second, in quantizing the theory with $C=\C$, we divide only by gauge transformations that
are 1 at infinity along $C$.  But we are left with gauge transformations that are constant at infinity along $C$, and these behave
as global symmetries.  Thus the $R$-matrix for $C=\C$ has $G$ as an automorphism group. (This is not true
for the other choices of $C$, as we will see in sections \ref{trigonometric} and \ref{elliptic}.)  The properties stated in this paragraph
make it straightforward to understand some simple examples.  We present some of these elementary examples in the next section.
We present them because they are fun -- though probably well-known to many readers -- and also because they enable one to
see in a completely direct and elementary way why the theory must have a framing anomaly.

\subsection{Elementary Examples}\label{elementary}

For some elementary examples, we take $G=GL_N$ (or $SL_N$, which would be equivalent for the purposes of this analysis),
and we will consider the case that $\rho$ is the fundamental representation of $G$ or its dual.   We  denote these
representations as $V$ and  $V^*$, respectively.
In all cases, we will take $C=\C$, so that the $R$-matrix has $G$ symmetry.

\begin{figure}[htbp]
\centering{\includegraphics[scale=.9]{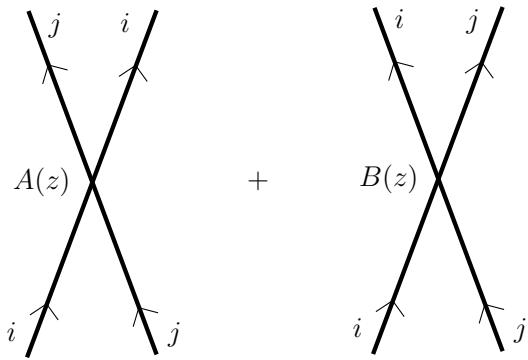}}
\caption{\small{The $R$-matrix for crossing of two Wilson lines in the fundamental representation $V$.  
The non-zero matrix elements of $R$ correspond to processes in which the final states equal the initial
states, possibly modulo a permutation.}}
\label{FigA}
\end{figure}

First we consider the $R$-matrix for crossing of two copies of $V$.  It will be a $G$-invariant linear map $R(z_1-z_2):V\otimes V\to
V\otimes V$.  Such an operator is a linear combination of the identity and the operator $P:V\otimes V\to V\otimes V$
that exchanges the two factors:
$R(z)=A(z)+PB(z)$, with $z=z_1-z_2$.   In this particular case,  $R(z)$  can also be written fairly conveniently  as a matrix with all its indices:
\be
\label{zelf}
R_{ij}^{i'j'}(z)= \delta_i^{i'}\delta_j^{j'} A(z)+\delta_i^{j'}\delta_j^{i'}B(z) \;. 
\ee
Here  $i$ and $j$ refer to ``incoming'' lines and $i'$ and $j'$ to outgoing ones.
The $A(z)$ term describes two lines crossing without ``charge exchange,'' while $B(z)$ describes
crossing with charge exchange.   The non-zero matrix elements of $R$ are depicted in Fig. \ref{FigA}.

\begin{figure}[htbp]
\centering{\includegraphics[scale=.7]{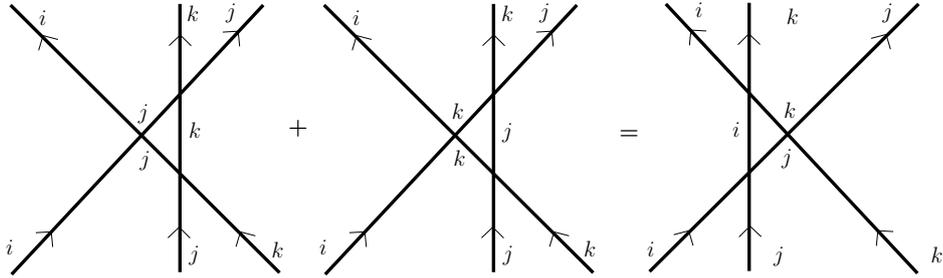}}
\caption{\small{The Yang-Baxter equation for crossing of three Wilson lines in the fundamental representation $V$.}}
\label{FigB}
\end{figure}
The Yang-Baxter equation is in general invariant under multiplying the $R$-matrix by a scalar function -- a $z$-dependent multiple of the identity.
So it is only sensitive to the ratio $U(z)=B(z)/A(z)$.  It is not difficult to work out the Yang-Baxter equation in this case (\fig \ref{FigB}) and to learn
that it is equivalent to 
\be\label{welf}U(z_1-z_3)U(z_2-z_3)+U(z_1-z_2)U(z_1-z_3) = U(z_1-z_2)U(z_2-z_3)  \;.   \ee
After dividing by the product $U(z_1-z_2)U(z_1-z_3)U(z_2-z_3)$, we learn that $1/U(z)$ is a multiple of $z$, so (remembering
that the $R$-matrix for $C=\C$ is a function of $\hbar/z$) $U$ must be a constant multiple of $\hbar/z$.  Determining the constant from 
the ${\mathcal O}(\hbar)$ contribution to the $R$-matrix (see section \ref{lowestorder}), we find
\be\label{merf}U(z)=\frac{\hbar}{z}.\ee

\begin{figure}[htbp]
\centering{\includegraphics[scale=.9]{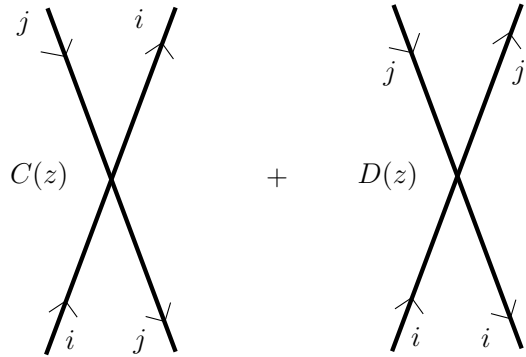}}
\caption{\small{The $R$-matrix for crossing of two Wilson lines in the dual representations $V$ and $V^*$.}}
\label{FigC}
\end{figure}
Now we consider the $R$-matrix for crossing of a copy of $V$ with a copy of the dual representation $V^*$.  The $R$-matrix is now
a linear map $R(z):V\otimes V^*\to V\otimes V^*$.  Again $R(z)$ is determined by the $G$ symmetry in terms of two functions:
$R(z)=C(z)+Q D(z)$.  Here $Q$ is the $G$-invariant projection operator from $V\otimes V^*$ to its $G$-invariant subspace.
A picture is rather clear (\fig \ref{FigC}), but now a formula analogous to eqn.\ (\ref{zelf}) is less transparent:
\be\label{relf} 
R_i^{j}{}^{i'}_{j'}(z)=\delta_i^{i'}\delta^j_{j'} C(z) +\delta_i^j \delta_{j'}^i  D(z) \;. 
\ee
(As before, lower indices refer to incoming lines and upper indices to outgoing ones.)  The Yang-Baxter equation will involve only the ratio $W(z)=D(z)/C(z)$.

\begin{figure}[htbp]
\centering{\includegraphics[scale=.78]{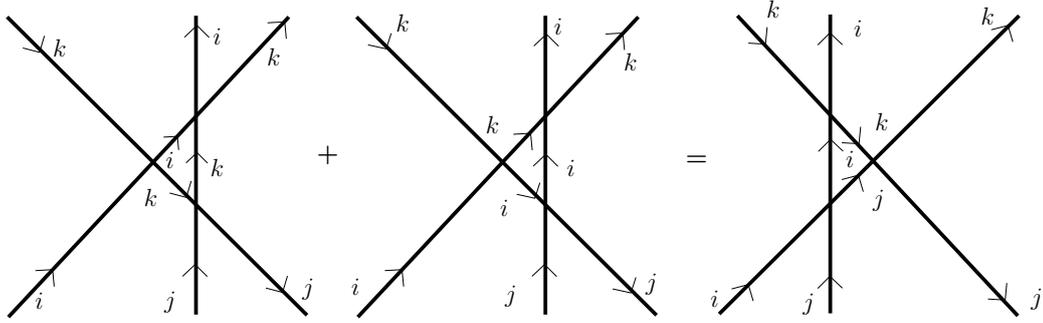}}
\caption{\small{The Yang-Baxter equation for three lines labeled by representations $V$, $V$, and $V^*$.}}
\label{FigD}
\end{figure}
It is again not difficult to write down the Yang-Baxter equation.  We learn (\fig \ref{FigD}) that
\be\label{nelf} 
U(z_1-z_2)W(z_2-z_3)+W(z_1-z_3)W(z_2-z_3)=U(z_1-z_2)W(z_1-z_3)\;. 
\ee
This is equivalent to $1/W(z_2-z_3)-1/W(z_1-z_3)=1/U(z_1-z_2)$, with the general solution 
\be\label{zerf} 
W(z)=-\frac{\hbar }{z-\hbar b}\;, 
\ee
with a constant $b$.
\begin{figure}[htbp]
\centering{\includegraphics[scale=.9]{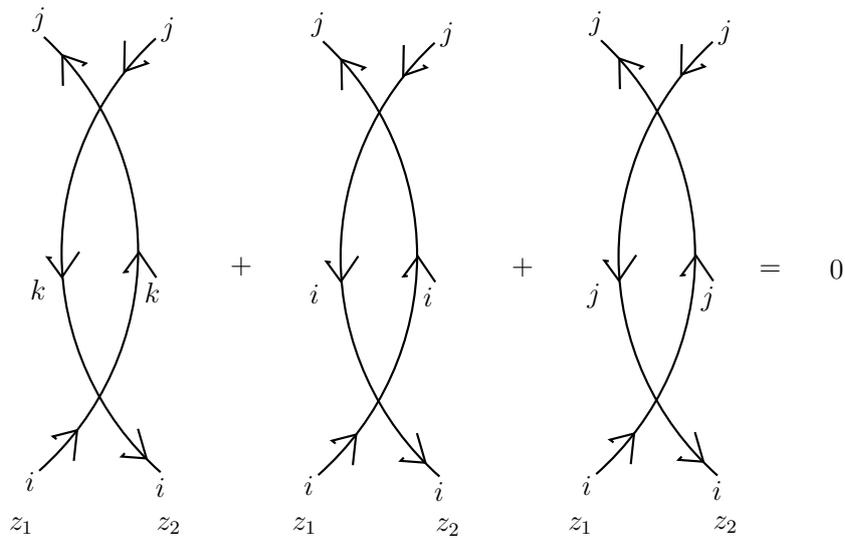}}
\caption{\small{The ``unitarity'' relation for lines labeled by $V$ and $V^*$ implies that the given sum vanishes for $i\not=j$.}}
\label{FigE}
\end{figure}
This constant $b$  can actually be determined by the unitarity relation $R_{21}(-z)R_{12}(z)=1$.  A specific matrix element of
this relation, for the case of crossing of $V$ and $V^*$ (see \fig \ref{FigE}) gives
\be\label{lefot} NW(z)W(-z)+W(z)+W(-z)=0\;,\ee
leading to
\be\label{perf} W(z)=-\frac{\hbar }{z+\frac{\hbar N}{2}} \;. \ee

We need not separately consider the $R$-matrix for crossing of two copies of $V^*$, because it simply equals the $R$-matrix for
crossing of two copies of $V$.   The reason is that the outer automorphism of $GL_N$ or $SL_N$ that exchanges $V$ and $V^*$
is a symmetry of the action (\ref{eq.action}), and of the theory derived from it.

Not determined by these arguments
are overall scalar functions in the $R$-matrices for $V\otimes V\to V\otimes V$ and $V\otimes V^*\to V\otimes V^*$.
These can be partly but not entirely determined by the unitarity relation; to some extent these overall scalar functions depend on arbitrary
choices in quantizing the theory.

\begin{figure}[htbp]
\centering{\includegraphics[scale=.35]{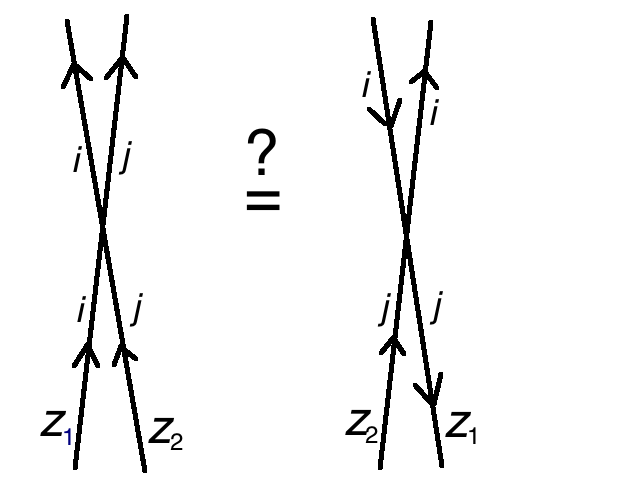}}
\caption{\small{On the left are two Wilson lines in the fundamental representation of $GL_N$ or $SL_N$; they differ from the vertical
by small angles $\pm \alpha$.  The one on the
left has spectral parameter $z_1$ and the one on the right has spectral parameter $z_2$.  To get the picture on the right, the 
$z_2$ Wilson line has been rotated clockwise by the small angle $2\alpha$, and the $z_1$ Wilson line has been rotated clockwise
by  the much larger angle $\pi-2\alpha$.   Naively the two pictures should be equivalent, implying that $U(z_1-z_2)=W(z_2-z_1)$, a claim
that turns out to be false.}}
\label{Anomaly}
\end{figure}

In the theory as we have developed it so far, the angles at which two Wilson lines cross are of no consequence.  We can use the above
formulas to test this expectation in an interesting way.  We consider (see the left of 
\fig \ref{Anomaly}) two Wilson lines in the representation $V$,
with spectral parameters $z_1$ and $z_2$, and differing from the vertical by small angles $\pm \alpha$.  Rotating one line clockwise
by a small angle $2\alpha$ and the other one clockwise by a larger angle $\pi-2\alpha$, we arrive at the right hand side of the figure,
which depicts the crossing of a pair of nearly vertical Wilson lines in the representations $V$ and $V^*$.  The rotation converts
a charge exchange for $VV$ crossing to an ``annihilation'' process for $VV^*$.  Naively, the two parts of \fig \ref{Anomaly} should
be equivalent.  This would imply
\be\label{orfom} U(z)\overset{?}{=}W(-z)\;.\ee
A look back at our previous formulas shows, however, that this is false.  What is true instead is that 
\be\label{worfom} U\left(z-\frac{\hbar N}{2}\right)=W(-z)\;. \ee

\subsection{First Look at the Framing Anomaly}\label{firstlook}

What accounts for this discrepancy?  The answer is that the theory has a framing anomaly for Wilson operators, which will be explored
from another point of view in section \ref{framinganomaly}.  The framing anomaly is analogous to the perhaps
familiar framing anomaly for Wilson operators in Chern-Simons theory, but more subtle.  It can be formulated in different but topologically
equivalent ways.  However, in an approach natural in perturbation theory, the framing anomaly can be formulated as follows.  To begin
with we take the topological plane to really be a plane $\Sigma=\R^2$, and we quantize with a gauge choice that uses a flat metric
on $\R^2$.  (See section \ref{lowestorder}.) We consider a Wilson operator supported on a general curve $K\subset
\Sigma$, and we let $\varphi(p)$ be the angle between the tangent vector to $K$ at a given point $p\in K$ and some chosen direction in $\R^2$ (e.g.\ the vertical).
(We define this angle to increase if $K$ bends in a clockwise direction.)
Thus $\varphi$ is not quite well-defined as a function on $K$, but it is well-defined up to an additive constant, and its differential
$\d\varphi$ is well-defined.

The framing anomaly means that what is constant along $K$ is not the spectral parameter $z$, as one would expect from a classical
analysis, but $z-\hbar\, \sh^\vee \varphi/(2 \pi)$, where $\sh^\vee$ is the dual Coxeter number of the gauge group.  As a perhaps surprising
 example of the
implications of this statement, a Wilson operator whose support is a simple closed curve in $\Sigma$ is anomalous and does not
exist in the quantum theory, because in going around a simple closed loop, $\varphi$ increases by $2\pi$.  

\begin{figure}[htbp]
\centering{\includegraphics[scale=1]{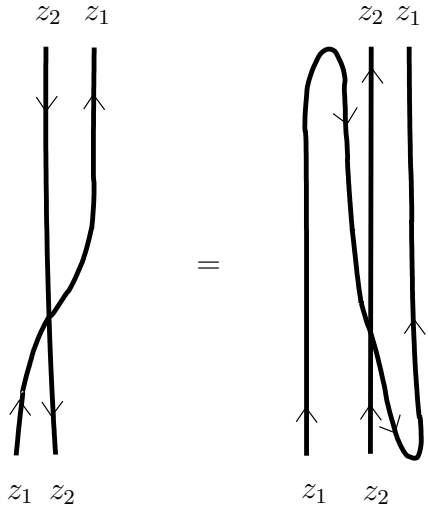}}
\caption{\small{An improved view of the unitarity relation that was explored in \fig \ref{Anomaly}, formulated to take the framing anomaly
into account.  Two lines both labeled by $V$ enter at the bottom and emerge at the top at nearly vertical angles.  If they
cross at nearly vertical angles, we get an $R$-matrix element for a $VV$ crossing, while if one of them turns by an angle very close
to $\pi$ before the crossing, then we get a $VV^*$ crossing, but with $z_1$ replaced by $z_1-N\hbar/2$ because of the framing anomaly.
Thus the naive relation $U(z)=W(-z)$ suggested by \fig \ref{Anomaly} is corrected to $U(z-N\hbar/2)=W(-z)$. }}
\label{FigF}
\end{figure}

Now in our problem, we can formulate the comparison between the two parts of \fig \ref{Anomaly} in a slightly different
way.  To go from the left to the right of the figure, we make the $z_1$ Wilson line bend in the plane by an angle $\Delta\varphi=\pi$ before
crossing the $z_2$ Wilson line, as in \fig \ref{FigF}.  But because of the framing anomaly, when we do this $z_1$ is shifted to 
$z_1-\hbar\, \sh^\vee\Delta \varphi/(2\pi) =z_1-\hbar N/2$, where $N$ is the dual Coxeter number of $SL_N$ (or $GL_N$).
Thus what should coincide with $W(-z)$ is not $U(z)$ but $U(z-\hbar N/2)$,
and this is precisely what we found in eqn.\ (\ref{worfom}).

In light of the framing anomaly, one might ask the following question.  In the usual formulation of the Yang-Baxter equation for
crossing of three lines, what are the angles at which the lines cross?  The answer is clear if one considers the case that the three
representations involved are all the same.  It is usually then assumed that the three $R$-matrices $R_{12}$, $R_{23}$, and $R_{13}$
are given by the same matrix-valued function of $z$.  For this to be true, the relative angles must be the same at all three crossings.
Since the actual relation is that the 13 crossing angle is the sum of the 12 and 23 crossing angles, the three angles are
equal only in the limit that they all  go to zero.  So the usual formulation
of the Yang-Baxter equation refers to the case of nearly parallel lines in the limit that the crossing angles vanish.  That is actually
why \fig \ref{Anomaly} has been drawn with lines at small angles $\pm\alpha$ to the vertical, and it was implicitly assumed in our discussion of \fig \ref{FigF}.

The reader might be slightly perplexed that we began our explanation of the framing anomaly
by restricting to the special case $\Sigma=\R^2$.  To understand this point
properly, one has to analyze the framing anomaly for four-manifolds as well as the framing anomaly for Wilson operators.  This
is somewhat beyond the scope of the present paper.  However, the upshot is that to avoid an anomaly, the two-manifold $\Sigma$
must be ``framed,'' meaning that its tangent bundle must be trivialized.  (This is actually analogous to the framing anomaly of
three-dimensional Chern-Simons theory, which is defined on a framed three-manifold.)  On a framed two-manifold $\Sigma$, one
can define along any embedded oriented one-manifold $K\subset \Sigma$ a function with the properties of the $\varphi$ used above.  For
this, recall that a ``framing'' is a pair of everywhere linearly independent vector fields $v_1$ and $v_2$ on $\Sigma$.  Given a framing,
one can pick a metric such that $v_1$ and $v_2$ are everywhere orthonormal, and then one can define $\varphi$ as the angle between
the tangent vector to $K$ and the $v_1$ direction.  Note that the condition that $\Sigma$ should be framed is very restrictive; for example,
a compact framed two-manifold  must have zero Euler characteristic and hence must have genus 1.  Thus -- as we also saw from the
anomaly for a closed loop in the plane -- the framing anomaly in the four-dimensional theory discussed here is much more restrictive
than its three-dimensional cousin.

If we think of the vertical direction in the above figures as the Euclidean ``time,''  then the reader
will note that we have interpreted a $V^*$-valued particle moving forward in time as a $V$-valued particle moving backward in time.
This is reminiscent of the relation between particles and antiparticles in relativistic quantum field theory, and indeed in the application
of $R$-matrix theory to integrable models of relativistic quantum field theory, this operation becomes crossing symmetry whereby
an $S$-matrix element with a particle in the initial state, after analytic continuation to negative energy, is interpreted as an $S$-matrix element
with an antiparticle in the final state.   That is why in $R$-matrix theory, the relation between $R$-matrix elements associated
to a pair of dual representations $\rho$ and $\rho^*$ is often called ``crossing.''

\section{\texorpdfstring{$R$-Matrix from Crossing Wilson Lines}{R-Matrix from Crossing Wilson Lines}}\label{lowestorder}

Here and in the next two sections, we will perform concrete Feynman diagram calculations to
compute (1) the $\O(\hbar)$ term in the $R$-matrix; (2) the $\O(\hbar)$ quantum correction to the operator
product expansion (OPE) for Wilson line operators; (3) the framing anomaly.  In fact, in the case of the framing anomaly,
the $\O(\hbar)$ term that we compute gives the complete answer; in the other cases, there are higher order contributions
to the effects that we calculate, although they can be determined by general principles (such as the Yang-Baxter equation and
associativity of the operator product expansion) once the lowest order terms are known.  We will perform independent Feynman diagram
calculations of the three effects, but actually the three effects can be deduced from each other to a large extent.  We already deduced
the framing anomaly from the $\O(\hbar)$ term in the $R$-matrix in section \ref{firstlook}, and we will explain in section \ref{parallel}
why the quantum correction to the OPE is inevitable given the quantum correction to the $R$-matrix.  

In all cases, we take $C=\C$, corresponding to a rational solution of the Yang-Baxter equation.  In the case of the framing anomaly,
the considerations are manifestly local, so the choice of $C$ does not matter.  For the $R$-matrix and the OPE, 
once the result is known for $\C$, it can be deduced from global considerations for the other choices of $C$. We leave this
for sections \ref{trigonometric} and \ref{elliptic}.

We will compute in a way that involves a choice of metric on $\Sigma\times C$.  As explained in section \ref{ybeu},
for the output of Feynman diagrams to have a straightforward interpretation in the usual language of $R$-matrix theory,
we have to scale up the metric on $\Sigma$ by a large factor.  In the limit, $\Sigma$ becomes $\R^2$ near the crossing
and the supports of the two Wilson lines that are crossing become straight lines in $\R^2$.  In $\O(\hbar)$, the angle
at which the lines cross does not matter (the reader can verify this by a slight generalization of the calculation that
we will describe), so we can take the two lines to be the $x$-axis and the $y$-axis in the $xy$ plane.  In higher orders,
the crossing angle would matter via the framing anomaly.

\begin{figure}[htbp]
\centering{\includegraphics[scale=0.9]{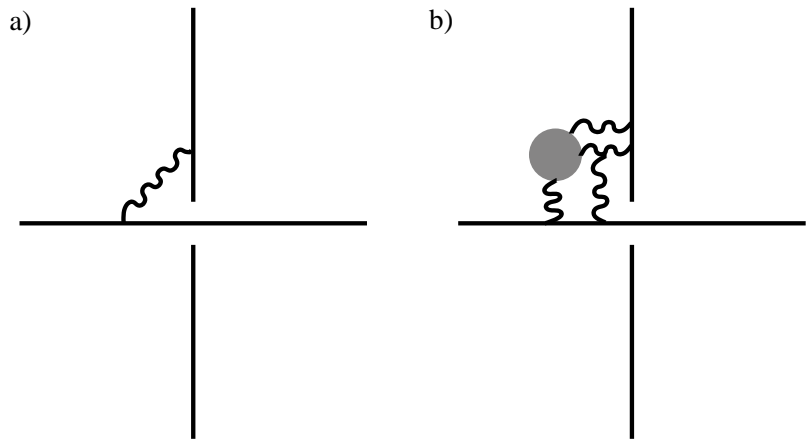}}
\caption{\small{(a) The leading order Feynman diagram for gluon exchange between crossing Wilson lines.
(b) A higher order diagram, with further complications hidden in the ``blob.''}}
\label{FigureGluon_Exchange}
\end{figure}

The $\O(\hbar)$ contribution to the $R$-matrix involves one gluon exchange between the two lines, as sketched in \fig \ref{FigureGluon_Exchange}.
We evaluate the contribution of this diagram for the case that the Wilson lines are associated to representations $\rho$ and $\rho'$
of $\mathfrak{g}$ supported respectively at $z=z_1$ and $z=z_2$. 

The metric that we will use on $\R^2\times C$ is $\d s^2=\d x^2+\d y^2+ \d z \d\bar{z}=g_{\mu \nu} \d x^{\mu} \d x^{\nu}$,
or in another language \begin{align}
g_{xx}=g_{yy}=1 \;, \quad g_{z\bar{z}}=g_{\bar{z}z}=\frac{1}{2}\; , \quad \textrm{other components zero} \;.
\end{align} 
The corresponding inverse metric is
\begin{align}
g^{xx}=g^{yy}=1\;, \quad g^{z\bar{z}}=g^{\bar{z}z}=2\;, \quad \textrm{other components zero} \;.
\end{align}
For a gauge-fixing condition, we pick
\be \label{gfix}
0=\dpa{x} A_x + \dpa{y} A_y + 4 \dpa{z} A_{\zbar} \;.
\ee 
This is the closest analog of the usual  Lorentz gauge for this theory with a partial gauge connection. The factor of four is explained by noting that if the gauge field $A$ satisfies this equation and also the linearized equations of motion $\d z \wedge \d A= 0$, then each component of $A$ is harmonic for the metric we have chosen.  

In this gauge, the four-dimensional propagator for the gauge field is then given by
\begin{align}
\begin{split}
\langle A^a_x(x,y,z,\zbar) A^b_y (x',y',z',\bar{z}') \rangle &= 
-\delta^{ab}\frac{4}{4 \pi}  \dpa{z}  \frac{1}{(x-x')^2 + (y-y')^2 + |z-z'|^2} \\  
&=\delta^{ab}\frac{1}{2 \pi }  \frac{2(\zbar-\zbar')}{\left( (x-x')^2 + (y-y')^2 + |z-z'|^2 \right)^2} \;,\\  
 \langle A^a_{\zbar}(x,y,z,\zbar) A^b_{x} (x',y',z',\bar{z}') \rangle &=
 -\delta^{ab}\frac{1}{4 \pi} \dpa{y}  \frac{1}{(x-x')^2 + (y-y')^2 + |z-z'|^2} \\  
 &=  \delta^{ab}\frac{1}{2 \pi}   \frac{y-y'}{\left( (x-x')^2 + (y-y')^2 + |z-z'|^2 \right)} \;,\\ 
 \langle A^a_y(x,y,z,\zbar) A^b_{\zbar} (x',y',z',\bar{z}') \rangle
 &=  - \delta^{ab}\frac{1}{4 \pi} \dpa{x}  \frac{1}{(x-x')^2 + (y-y')^2 + |z-z'|^2}  \\
 &=   \delta^{ab}\frac{1}{2 \pi }   \frac{x-x'}{\left( (x-x')^2 + (y-y')^2 + |z-z'|^2 \right)^2}  \;.
\end{split}
\end{align}

For later purposes, we can reinterpret the propagator as a two-form on two copies of $\R^2\times C$.  Setting
$x=x'-x''$, $y=y'-y''$, $z=z'-z''$, $\bar z=\bar z'-\bar z''$, we define  
\begin{align}\label{eq.propagator}
\begin{split}
P^{ab}(x,y, z, \bar{z})&:=
\frac{1}{2} \sum_{i,j=x,y, \bar{z}} \langle A^a_i (x', y', z', \bar{z}') A^b_j(x'', y'', z'', \bar{z}'')\rangle \d x^i \wedge \d x^j 
\\
&=-\frac{\delta^{ab}}{4 \pi}\left(\d y \wedge \d \zbar \frac{\partial }{\partial x} +\d\bar{z}\wedge \d x \frac{\partial }{\partial y} + 4 \d x \wedge \d y \frac{\partial}{\partial z}\right)
\frac{1}{(x^2+y^2+z \bar{z})} \\
&= \frac{\delta^{ab}}{2 \pi}\left( x \d y \wedge \d \zbar+y \d\bar{z}\wedge \d x + 2 \zbar \d x \wedge \d y\right)\frac{1}{(x^2+y^2+z \bar{z})^2} 
 \;.
\end{split}
\end{align}
In the following, we often use the propagator two-form with 
adjoint indices stripped off:
\begin{align}
\begin{split}
P^{ab}(x,y, z, \bar{z})&=\delta^{ab} P(x, y, z, \bar{z}) \;.
\end{split}
\end{align}

The defining equations of the $2$-form $P(x,y,z,\zbar)$ are
\begin{align} 
\frac{\ii}{2\pi} \d z \wedge \d  P (x,y,z,\zbar) &= \delta_{x,y,z,\zbar = 0} \label{propagator_greens} \;,\\
\left( \partial_x \iota_{\partial_x} + \partial_y \iota_{\partial_y} + 4 \partial_{z} \iota_{\partial_{\zbar}}\right) P(x,y,z,\zbar) &= 0 \;. \label{propagator_gauge} 
\end{align}
where $\delta_{x,y,z,\zbar = 0}$ is a delta-function distribution localized at $x=y=z=\zbar = 0$, and $\iota_V$ indicates contraction with a vector field $V$.  

To verify the normalizations, we can check these equations explicitly.  It is easy to verify that $\d z \wedge \d P = 0$ away from the origin $x = y = z = \zbar = 0$, and that \eqn \eqref{propagator_gauge} is satisfied.  

Note that, when restricted to the unit three-sphere,
\begin{equation}
 P(x,y,z,\zbar) =
   \frac{1}{2 \pi} \left(  x  \d y  \wedge \d \zbar -  y  \d x \wedge \d \zbar + 2 \zbar  \d x\wedge \d y\right)  
  \;.
\end{equation}
Since the forms on the two sides of this equation are the same on the unit three-sphere, Stokes' theorem tell us that applying the operator $\d z\wedge \d $ to both sides and integrating over the ball of radius $1$ will give the same answer. Therefore we have the identity (using the coordinates $u,v$ with $z = u + \ii v$) 
\begin{align}
\begin{split}
\frac{\ii}{2\pi}\int_{x^2 +y^2 + z \zbar \le 1} \d z \wedge \d P(x,y,z,\zbar) 
&= \frac{\ii}{2\pi} \frac{1}{2\pi} \int_{x^2 +y^2 +z \zbar \le 1} 4 \d x \d y \d z \d \zbar\\
&=   \frac{\ii}{2\pi} \frac{1}{2\pi} \int_{x^2 + y^2 + z \zbar \le 1} (-8 \ii) \d x \d y  \d u \d v \\
&= \frac{\ii}{2\pi} \frac{1}{2\pi} (-8 \ii) \frac{\pi^2}{2}=1 \;.
\end{split}
\end{align}
The four-dimensional bulk gauge field $A^a$ couples to a Wilson line in the representation $\rho$
by a factor $t_{a,\rho}$, which is the matrix by which the Lie algebra element $t_a$ acts in the representation $\rho$.
For the case of a single gluon coupling to a Wilson line, this factor does not depend on where on the line the gluon is inserted.
For gluon exchange between two Wilson lines, as in 
\fig \ref{FigureGluon_Exchange}, we simply get such a factor on each line.
The propagator between a Wilson line supported on the $x$ axis at $z=z_1$ and one supported on the $y$ axis at $z=z_2$ then
evaluates to 
\begin{align}
\begin{split}
I_1&=\hbar \left((t_{a,\rho} \otimes  t_{b,\rho'}\right) \int \d x \d y' 
\, P^{ab}(x-x', y-y', z_1-z_2, \bar{z}_1-\bar{z}_2) \\
&= \hbar \, c_{\rho, \rho'} \int \d x \d y' 
\, P(x-x', y-y', z_1-z_2, \bar{z}_1-\bar{z}_2)   \;,
\label{eq.Axy}
\end{split}
\end{align}
where the color factor reads
\be 
c_{\rho, \rho'}=\sum_a t_{a,\rho}\otimes t_{a,\rho'} \;.
\label{eq.color}
\ee 
Here  $c_{\rho,\rho'}$ can be viewed as  the image of an element $c=\sum_a t_a\otimes t_a$ of $\g\otimes \g$ in the representation
$\rho\otimes \rho'$ of $\g\otimes \g$.  
The factor of $\hbar$ is the loop counting parameter.

It is straightforward to evaluate the integral in  \eqref{eq.Axy}, with the result 
\begin{align}
I_1=\hbar  \, c_{\rho, \rho'} 
\frac{1}{2\pi}\int \d x \d y \frac{2(\zbar_1-\zbar_2)}{(x^2+y^2+|z_1-z_2|^2)^2} 
= \frac{\hbar\, c_{\rho, \rho'}}{z_1-z_2}  \;.
\end{align}
This reproduces the standard semi-classical expansion of the rational $R$-matrix
\begin{align}
R=I+\hbar \, r +\mathcal{O}(\hbar^2)= I + \frac{\hbar\, c_{\rho,\rho'} }{z_1-z_2}   +\mathcal{O}(\hbar^2) \;.
\end{align}

General theorems (see \cite[p.\ 814]{Drinfeld_ICM} or \cite[p.\ 418] {Chari-Pressley}) imply that the full rational $R$-matrix is determined up to a prefactor by the general conditions
that it obeys together with the leading order term that we have just computed.  Some interesting special cases of this statement are
rather easy, as we have reviewed in section \ref{elementary}. 

\section{OPE of Parallel Wilson Lines}\label{parallel}

\subsection{Overview}\label{noverview}

\begin{figure}[htbp]
\centering{\includegraphics[scale=0.37]{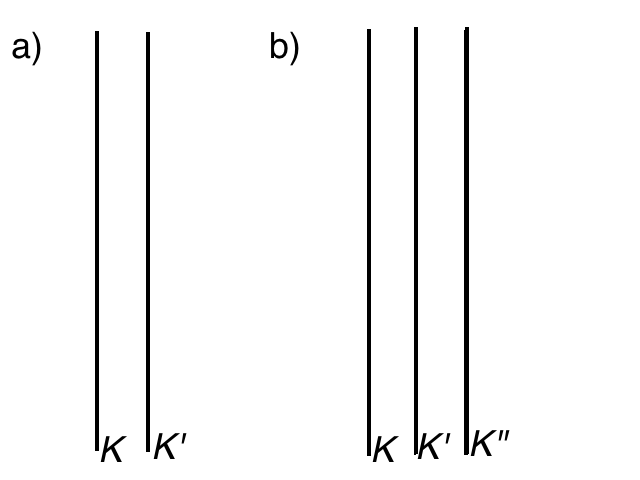}}
\caption{\small{ (a) In a diffeomorphism-invariant theory, two parallel line operators can be considered to be arbitrarily close, so they
behave as a single line operator.  This is the operator product expansion for line operators.  (b) In two dimensions, this operator product expansion is not necessarily
commutative, but it is always associative, because given three parallel line operators, there is no natural sense of one pair being ``closer'' than the other.  The triple product
$KK'K''$ can be identified as either $(KK')K''$ or $K(K'K'')$ by moving together one pair of adjacent line operators or the other.
}}
\label{OPE}
\end{figure}

The next topic that we will consider is the operator product expansion (OPE) of Wilson lines.   We begin with generalities about line operators in diffeomorphism-invariant theories. 

Consider parallel line operators $K$ and $K'$ in a theory with diffeomorphism invariance in any dimension $\geq 2$ (\fig \ref{OPE}(a)).
Diffeomorphism invariance means that there is no natural notion of whether $K$ and $K'$ are ``near'' or ``far'' and therefore that we can think of
them as being arbitrarily near.  This implies that it must be possible in any diffeomorphism invariant  theory to interpret a product $KK'$ of two
parallel line operators as a single line operator $K''$.  This is the operator product expansion for line operators.

Although there is in a diffeomorphism invariant theory no natural notion of $K$ and $K'$  being ``near'' or ``far,'' something special
happens in two dimensions:  there can be a natural notion
of whether $K$ is to the left or right of $K'$.  The product $KK'$ with $K$ to the left of $K'$ may be different from the product $K'K$ with $K$ to the right.
Thus, in two dimensions, the OPE of line operators is not necessarily commutative.\footnote{In three dimensions, there is a more subtle analog of this:
$KK'$ and $K'K$ are always isomorphic, but there can be different isomorphisms between them, depending on the direction in which $K$ and $K'$ are moved around each other, leading to a notion of braiding of line operators.}

Although not necessarily commutative in two dimensions, the OPE of line operators is always associative.  That is because (\fig \ref{OPE}(b)) given three
parallel line operators $K$, $K'$, and $K''$, there is no natural notion of $K$ and $K'$ being closer or farther than $K'$ and $K''$.  There is just
one product of line operators that depends only on how they are arranged from left to right.

Our problem has a few special features.   The line operators that we will be studying are indeed supported on a line $K$ in the smooth two-manifold $\Sigma$
which has diffeomorphism symmetry,\footnote{This diffeomorphism symmetry is mildly broken by a framing anomaly, but not in a way that affects the
present discussion.}
but they are also supported at a point in the complex Riemann surface $C$.  The OPE for line operators supported at distinct points in $C$ is trivial -- and
in particular commutative -- as they can pass through each other in $\Sigma$ without any singularity.  They interesting case is the OPE for line operators
that are supported at the same point in $C$.  We may as well take this point to be $z=0$.   

Although abstractly the product $KK'$ of line operators $K$ and $K'$ will always be another line operator, if we try to consider too small a class
of line operators, we might find that the product $KK'$ is not in the class that we started with.  That is actually what happens in the theory described
in the present paper if we consider only the most obvious class of line operators: Wilson line operators associated to representations of the finite-dimensional
Lie algebra $\mathfrak{g}$.  We will see in section \ref{lowest} that this class of line operators is not closed under operator products.  To get an OPE for Wilson
line operators, we have to consider the more general class of Wilson operators associated to representations of $\mathfrak{g}[[z]]$.  

Given two parallel Wilson line operators, on general grounds their product in a diffeomorphism-invariant theory is another line operator.
But it is nontrivial to exhibit this new line operator as another Wilson operator for some representation of $\mathfrak{g}[[z]]$.  To exhibit
this, we have to do a calculation, starting with two parallel line operators, taking the limit, in a concrete quantization scheme, as they approach each other,
and searching for a single Wilson operator that will reproduce the effects of the two Wilson operators that we started with.

At the classical level, the OPE for Wilson operators is trivial.  Given Wilson operators associated to representations $\rho$ and $\rho'$ of $\g[[z]]$,
their product is the Wilson line associated to the tensor product representation $\rho\otimes \rho'$.  There is a quantum correction to this and
we will compute it to lowest order in $\hbar$.

We will perform the computation assuming that $\rho$ and $\rho'$ are representations of the finite-dimensional algebra
$\g$, or equivalently that they are representations of $\g[[z]]$ in which the generators $t_{a,n}$ vanish for $n>0$.
What we will show is that in order $\hbar$,  the tensor product representation acquires a nonzero $t_{a,1}$ (there is no correction
to $t_{a,n}$ for any $n\not=1$).  It is fundamentally because of this fact that it is important to consider Wilson operators associated
to representations of $\g[[z]]$ that do not come from representations of $\g$. 

\subsection{Lowest Order Computation}\label{lowest}

We start with a Wilson line in the representation $\rho'$ of $\mf{g}$ at $y = 0$, and one in the representation $\rho''$ at $y = \eps$.  As just
explained, we assume that these are ``ordinary'' Wilson lines, associated to representations of $\g$.

The leading order Feynman diagram is given in \fig \ref{FigureOPE}. This diagram represents the coupling of an external gauge field to the two Wilson lines.
We will calculate what happens, to leading order in $\hbar$, when we put these lines beside each other and send $\eps \to 0$ from above. Modulo $\hbar$, the 
result will simply be the Wilson line associated to the tensor product representation $\rho=\rho' \otimes \rho''$ of $\mf{g}$.  There is an order $\hbar$ correction, in which the 
$z$-derivative of the gauge field $A$ is coupled to the Wilson line in a non-trivial way.   

\begin{figure}[htbp]
\centering{\includegraphics[scale=0.7]{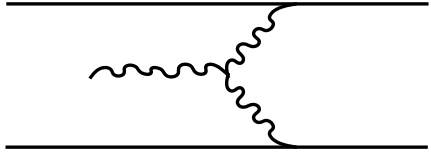}}
\caption{\small{The Feynman diagram representing the $\O(\hbar)$ correction to the OPE of Wilson lines.  The two lines are at $y=0$ and $y=\epsilon$,
respectively. We consider the limit $\epsilon\to 0$.}
} \label{FigureOPE}
\end{figure} 

For the evaluation of this diagram we need an extra Feynman rule not needed so far,
namely the bulk interaction vertex,  away from any Wilson lines.
This can be read off from the action \eqref{eq.action}, and is
\begin{align}
\frac{\ii}{2\pi} f^{abc} \d z \;.
\end{align}
Note that this vertex contains  a 1-form $\d z$, since this is present in the action.
To evaluate the diagram of \fig \ref{FigureOPE}, we have to integrate one interaction vertex over the line $y=0$, one over the
line $y=\epsilon$, and one over $\R^4=\R^2\times \C$.  Connecting the vertices by propagators, we have to integrate
\begin{align}
\begin{split}
I_2&=\frac{\ii}{2\pi} \hbar (t^a \otimes t^b f_{abc}) \\
&\qquad\times \int_{x_1,x_2 = -\infty}^{\infty} \int_{x,y,z, \bar{z}} P(x_1-x,y,z) \wedge\d z\wedge  A^c(x,y,z) \wedge P(x_2-x,y-\eps,z)
\;,
\end{split}
\end{align}
where $P$ is the four-dimensional propagator two-form introduced previously in \eqref{eq.propagator}.
Since we have two propagators and one bulk vertex, we  have a factor of 
$\hbar^{2-1}=\hbar$ in front, as in the discussion of the $R$-matrix.  (The meaning of this is that the diagram is of order $\hbar$
compared to a contribution in which the gauge boson couples directly to one of the Wilson operators, with the bulk interaction playing no role.) 

Integrating first over $x_1$ and $x_2$,  we obtain
\begin{align}\label{usefull}
I_2=\frac{\ii}{2\pi} \hbar  (t^a \otimes t^b f_{abc})\int_{x, y,z, \bar{z}}\d z\wedge A^c(x,y,z, \bar{z}) \wedge P'(y,z, \bar{z})  \wedge P'(y-\eps,z, \bar{z}) \;,
\end{align}
where we defined a  three-dimensional one-form propagator with color indices stripped off on the plane $\mathbb{R}^3$ parametrized by $(y,z, \bar{z})$:
\begin{align}
P'(y, z, \bar{z}):=\int \d x \,P(x,y, z, \bar{z})
= -\frac{1}{4} \left(- \d\bar{z} \frac{\partial }{\partial y} +4 \d y \frac{\partial}{\partial z}\right)
\frac{1}{(y^2+z \bar{z})^{\frac{1}{2}}}
 \;.
\label{3dP}
\end{align}
Note that we obtained a numerical factor of $\pi$ from the $x$ integral.

The most fundamental thing to explain about eqn.\ (\ref{usefull}) is why, for $\epsilon\to 0$, it  produces a local coupling of $A$
to a line operator at $y=z=0$.  The reason for this is simply that for $\epsilon\to 0$, the integrand in eqn.\ (\ref{usefull}) vanishes
as long as $y$ and $z$ are not both 0.  That is true simply because the integrand for $\epsilon=0$ is proportional to
$P'(y,z,\bar z)\wedge P'(y,z,\bar z)$, and this trivially vanishes because $P'$ is a 1-form.

Accordingly, the small $\epsilon$ limit of $\d z \wedge P'(y,z, \bar{z})  \wedge P'(y-\eps,z, \bar{z}) $ is a distribution supported at $y=z=0$.  
By dimensional analysis and rotation symmetry in the $z$ plane, this distribution must be a multiple of $\partial_z \delta^3(y,z,\bar z)$, where $\delta^3(y,z,\bar z)$ is a
three-form delta function satisfying $\int_{\R\times \C}\delta^3(y,z,\bar z)=1$.  We   will now show that
\begin{align}
 \lim_{\eps\to 0}
\d z \wedge P'(y,z, \bar{z})  \wedge P'(y-\eps,z, \bar{z}) =
\frac{\pi}{ \ii} 
\frac{\partial }{\partial z} \delta^3(y,z, \bar{z}) \;.
\label{PP}
\end{align}
This will imply that
\begin{align}
I_2 &= \frac{\ii}{2\pi} \hbar (t^a\otimes t^b f_{abc})\int_{x, y, \bar{z}}  A^c(x,y,z, \bar{z}) 
\left(  \frac{\pi}{\ii} \frac{\partial }{\partial z} \delta^3(y,z, \bar{z}) \right)\nonumber\\
&= - \hbar \frac{1}{2} (t^a \otimes t^b f_{abc}) \int \d x  \frac{\partial }{\partial z} A^c(x,y,z, \bar{z})  
 \;.
\label{OPE_result}
\end{align}
The coupling to $\partial_z A$ shows that the composite Wilson line operator obtained by bringing two such operators together has $t_{a,1}\not=0$, that
is it is associated to a representation of $\g[[z]]$ but not to a representation of $\g$.  Moreover, it is easy to see from this formula that although we will take $\epsilon>0$
in our calculation, the result is actually proportional to the sign of $\epsilon$.  Changing the sign of $\epsilon$ would lead, after replacing $y$ with $y+\epsilon$,
 to the same calculation but with the roles
of the two line operators exchanged, replacing $t^a\otimes t^b$ in
eqn.\ (\ref{OPE_result})  with  $t^b\otimes t^a$.  Since this changes the sign of $f_{abc} t^a\otimes t^b$, eqn.\ (\ref{OPE_result}) implies that the quantum OPE for
Wilson operators is noncommutative: it gives a result that depends on which of the two Wilson operators is on the ``left'' of the other.  

To show \eqref{PP}, let us express the three-dimensional one-form propagator \eqref{3dP}
as 
\begin{align}
P'(y, z, \bar{z})
= - \d\bar{z} \, \partial_y Q(y,z, \bar{z}) + 4 \d y\, \partial_{z} Q(y,z, \bar{z})%
\end{align}
with
\begin{align}
	Q(y,z, \bar{z}):=
	-\frac{1}{4}\frac{1}{(y^2+z\bar{z})^{\frac{1}{2}}} \;.
\end{align}
Then the left hand side of \eqref{PP} reads
(we here drop $z, \bar{z}$ from the arguments of $Q$, to simplify the expressions)
\begin{align}
\label{PP_comp}
&- 4 \d y \wedge \d z \wedge \d\bar{z} \, \left( \partial_y Q(y)\, \partial_z Q(y-\epsilon)- \partial_z Q(y)\, \partial_y Q(y-\epsilon)\right)  
=   \d y \wedge\d  z \wedge\d \bar{z} \left[
\frac{\partial }{\partial z}(a)
+\frac{\partial }{\partial y}(b)
\right]
 ,
\end{align}
with 
\begin{align}
\label{longform}
	(a)&:=-2 \partial_y Q(y)\,  Q(y-\epsilon) +2Q(y) \, \partial_y Q(y-\epsilon) 
	=
	\frac{2}{ 4^2}
\frac{\epsilon \left(y (\epsilon -y)+|z|^2\right)}{\left(y^2+|z|^2\right)^{\frac{3}{2}}
   \left((y-\epsilon )^2+|z|^2\right)^{\frac{3}{2}}}    \;,
\\\notag
(b)&:= - 2Q(y)\, \partial_z Q(y-\epsilon)+2\partial_z Q(y)\, Q(y-\epsilon) 
	=	\frac{1}{ 4^2}
\frac{ - \epsilon \bar z(\epsilon-2y) }{\left(y^2+|z|^2\right)^{\frac{3}{2}}
   \left((y-\epsilon )^2+|z|^2\right)^{\frac{3}{2}}}   \;.
\end{align}

These are all explicitly proportional to $\epsilon$, and thus, as claimed earlier, everything vanishes
for $\epsilon\to 0$ as long as $y,z$ are not both zero.   We expect to extract from (\ref{PP_comp}) in the limit $\epsilon\to 0$
a multiple of
$\partial_z\delta^3(y,z)$.  Since the $(a)$ term appears in eqn.\ (\ref{PP_comp}) in 
 the form $\partial_z(a)$, we expect that $(a)$ by itself without this derivative
would simply produce a multiple of $\delta^3(y,z)$.

This means that the contribution of the $(a)$ term to the coefficient of $\partial_z\delta^3(y,z)$ can be obtained
by evaluating the integral of $(a)$:
\begin{align}
\label{tmp4}
C:=\int \d y \d z \d \bar{z} \, (a) \;.
\end{align}
To demonstrate \eqn \eqref{PP}, it suffices to show that 
 $C = \frac{\pi}{ \ii}$.  
This integral is absolutely convergent.  A simple scaling argument shows that it is independent of $\epsilon$, so we may as well set $\epsilon=1$.

In polar coordinates ($\d z \d \bar{z}=- 2 \ii r \d r \d \theta$), after integration over $\theta$, we get
\begin{align}
	C=
	\frac{-8 \ii \pi }{ 4 ^2} 
	\int _{-\infty}^\infty \d y \int_0^{\infty} \d r \,  \frac{ r\left(y (1 -y)+r^2\right)}{\left(y^2+r^2\right)^{\frac{3}{2}}
   \left((y-1 )^2+r^2\right)^{\frac{3}{2}}} \;.
   \label{tmp2}
\end{align}
We can evaluate this integral with a formula familiar from
the evaluation of Feynman diagrams:
\begin{align}
\label{Feyn_param}
\frac{1}{A^{\alpha}  B^{\beta}}
= \frac{\Gamma(\alpha+\beta)}{\Gamma(\alpha) \Gamma(\beta)}
\int_0^{1} \d t  \frac{t^{\alpha-1} (1-t)^{\beta-1}}
{(tA+(1-t)B)^{\alpha+\beta}} \;.
\end{align}
Choosing $\alpha=\beta=3/2$ and remembering 
$\Gamma(3)/\Gamma(3/2)^2=8/\pi$, we have 
\begin{align}
C &= 
\frac{-64 \ii }{4^2 } 
\int_0^{1} \d t \, \sqrt{t(1-t)}  \int _{-\infty}^\infty\d y  \int_0^{\infty} \d r \frac{r \left(y (1 -y)+r^2\right) }{\left((1- t)y^2
 +t (y-1)^2+  r^2\right)^{3}} \;.
 \label{tmp1}
\end{align}
Using
\begin{align}
\int_0^{\infty} \d r  \frac{r (a+r^2)  }{(b+r^2)^3} = \frac{a+b}{4 b^2} \;,
\end{align}
we obtain
\begin{align}
	C &= \frac{-16 \ii }{4^2} 
	\int_0^{1} \d t \, \sqrt{t(1-t)}  \int \d y \,
\frac{y(1-y)+(1-t)y^2+t(y-1)^2}{((1-t)y^2+t(y-1)^2)^2} \nonumber \\
	&=
	\frac{-16 \ii }{4^2}
	 \int_0^{1} \d t \, \sqrt{t(1-t)}  \frac{\pi}{\sqrt{t(1-t)}} 
	 =\frac{\pi}{\i } 
	 \;.
\end{align}

We can do a similar analysis for $(b)$, but $(b)$ takes the form
\begin{align}
\bar{z} F(y, |z|)
\end{align}
and hence vanishes when integrated over the angle $\theta$.
This means that there is no singular contribution from $(b)$. This concludes the verification of  \eqref{PP}.

\subsection{\texorpdfstring{Relation to the $R$-Matrix}{Relation to the R-Matrix}}\label{relr}

\begin{figure}[htbp]
\centering{\includegraphics[scale=0.5]{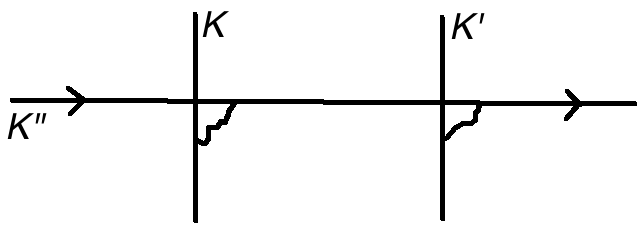}}
\caption{\small{A Wilson operator $K''$ crossing two parallel Wilson operators $K$ and $K'$.  $K$ and $K'$ are supported at $z=0$ and $K''$ at some other
point $z=z_0$.  Drawn is a Feynman diagram with a gluon exchanged between $K$ and $K''$ as well as one between $K'$ and $K''$.}
} \label{OPEtwo}
\end{figure} 
Though we have performed an independent computation, the result can actually be deduced from a knowledge of the $\O(\hbar)$ contribution to
the $R$-matrix, computed in section \ref{lowestorder}.  Consider (\fig \ref{OPEtwo}) two parallel Wilson lines $K$ and $K'$ both supported
at $z=0$ and a third Wilson line $K''$ at some other point $z=z_0$ that is crossing them.    Assuming that $K$ and $K'$ are far apart (compared to $|z_0|$) there is a unique lowest order Feynman diagram in which $K''$ interacts nontrivially with both $K$ and $K'$.  This is the diagram sketched in the figure
with a gluon exchanged from $K''$ to $K$ and another from $K''$ to $K'$.  The group theory factors associated to the two gluon exchanges
are respectively $t_a\otimes 1\otimes t_a$ and $1\otimes t_b\otimes t_b$ (where the three factors refer respectively to $K$, $K'$, and $K''$).  
The product of these is $t_a\otimes t_b\otimes t_bt_a$, where the factors acting on $K''$ are ordered as $t_bt_a$ because $K'$ is to the right of $K$
(so that path ordering along $K''$ puts $t_b $ to the left of $t_a$).  

Now suppose that $K$ and $K'$ are brought closer together.  Two-dimensional diffeomorphism invariance means that we have
 to get the same result after summing over all diagrams, but the contributions of
individual diagrams can depend on the distance from $K$ to $K'$.  In particular, the diagram that we considered before still contributes when
$K$ and $K'$ are close compared to $|z_0|$, but their contribution is not the same as before.  That is because (\fig \ref{OPEthree}(a)) even though
$K'$ is to the right of $K$, the gluon emitted from $K'$ might be absorbed on $K''$ to the left of the gluon emitted from $K$.  So the contribution
of the diagram that we considered previously is now modified by a term proportional to $t_a\otimes t_b\otimes [t_a,t_b]=f_{abc} t_a\otimes t_b\otimes t_c$.
So there must be another diagram that is significant when $K$ and $K'$ are nearby and that gives a group theory factor of this form.  That
diagram is shown in \fig \ref{OPEthree}(b).  

\begin{figure}[htbp]
\centering{\includegraphics[scale=0.55]{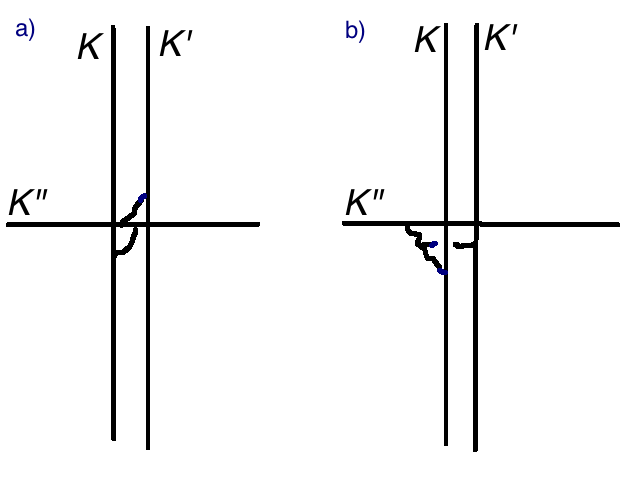}}
\caption{\small{When $K$ and $K'$ are brought nearby, the contribution of the diagram in (a) is modified relative to \fig \ref{OPEtwo}.  To maintain two-dimensional topological
invariance, this is balanced by a new contribution in (b), which is related to the existence of a quantum correction to the OPE.} }\label{OPEthree}
\end{figure} 

But this last diagram is just the one (\fig \ref{FigureOPE}) that we studied to find the quantum correction to the OPE, except that now instead
of considering an arbitrary external $A$ field, as in the previous discussion, we have provided a third Wilson operator that is the source of this $A$ field. The analysis of \fig \ref{OPEthree}(b) for $K\to K'$ is essentially the same as the analysis that we have already performed of \fig \ref{FigureOPE}.

The upshot is that the quantum correction to the OPE is an inevitable consequence of the quantum correction to the $R$-matrix, and vice-versa.

\subsection{Interpretation of the Result}\label{interpretation}

In the analysis of \fig \ref{FigureOPE}, we started with two
representations $\rho'$ and $\rho''$ of $\g$, that is representations of $\g[[z]]$  with generators $t_{a,n}'=t_{a,n}''=0$ for $n>0$.  Let us write just $t_a'$ and $t_a''$ for $t_{a,0}'$ and $t''_{a,0}$.  The above computation showed
that the  Wilson operator
obtained by fusing the two we started with couples to $\partial_z A$ and therefore has a nonzero $t_{a,1}$.  In fact $t_{a,1}$ of the fused Wilson line can be read off from eqn.\ (\ref{OPE_result}):
\be
\label{monof} t_{a,1}=
-\frac{\hbar}{2}\sum_{b,c}f_{abc}t_b'\otimes t_c'' \;.  
 \ee
On the other hand, the computation gave no contributions to generators $t_{a,n}$ of the composite Wilson operator for $n\not=1$.  So to this order, $t_{a,0}$ is given
by the classical formula
\be\label{onof}
t_{a,0}=t_a'\otimes 1+1\otimes t_a'' \;,
\ee
and the higher generators vanish, $t_{a,n}=0$, $n>1$.  Here $t_{a,0}$ are the generators of $\g$ in the classical tensor product $\rho=\rho'\otimes \rho''$.

This result certainly shows that to get a closed OPE, we must consider Wilson operators derived from representations of  $\g[[z]]$, not just $\g$.
But its consequences  go far beyond that.  The formula (\ref{monof}) actually implies that quantum corrections to the theory actually deform
$\g[[z]]$ itself -- or to be more precise that at the quantum level, Wilson operators of the theory correspond to representations not of $\g[[z]]$ itself
but of a quantum deformation of this algebra (or more accurately, of its universal enveloping algebra, as we will see).

The basic reason for this is that the formula (\ref{monof}) is not consistent with the commutation relations of $\g[[z]]$, so it implies further deformations.
In $\g[[z]]$, one has the commutation relation
\be
[t_{a,1},t_{b,1}]=f_{abc}t_{c,2} \;. 
\ee
Recalling the Jacobi identity
\be\label{nof} 
f_{uva}f_{abc} +f_{vba}f_{auc}+f_{bua}f_{avc}=0 \;,
\ee
we deduce from this
that
\be\label{plof}
f_{uva}[t_{a,1},t_{b,1}]+(\dots)=0 \;,
\ee
where the omitted terms are obtained by cyclic permutations of the indices $uvb$.

This is an identity in $\g[[z]]$, but a short calculation will show that $t_{a,1}$ as defined in eqn.\ (\ref{monof}) does not obey this identity.
Instead it satisfies a deformed version of this identity that we will describe presently.  

Because $\g[[z]]$ is being deformed, what did we mean in claiming that  to get a closed OPE for line operators, we
should start with representations of $\g[[z]]$?  The precise statement is not that line operators in the quantum theory correspond to representations
of $\g[[z]]$, but that had  we started at the classical level with arbitrary 
representations of the $\g[[z]]$, then consideration of products of line operators would not have forced us to consider new objects.  By contrast,
if we start at the classical level with representations of $\g$ only, then getting a closed OPE does require introducing many more line operators.

Now let us go back to the question of how to interpret the failure of the quantum-induced $t_{a,1}$ to obey the commutation relations of $\g[[z]]$.
Since $t_{a,1}$ in eqn.\ (\ref{monof}) is bilinear in the generators $t_{a}'$ and $t_{a}''$, the left hand side of eqn.\ (\ref{nof}), after evaluating
the commutator, is cubic in these generators; more specifically it is a sum of terms of bidegree $(2,1)$ and $(1,2)$ in $t_a'$ and $t_a''$.

The upshot is that to account for the failure of eqn.\ (\ref{plof}), we have to deform this commutation relation by adding, in order $\hbar^2$,
a certain cubic polynomial in $t_{a,0}$.  The generators $t_{a,0}$ and $t_{a,1}$ of the deformed algebra satisfy
\be\label{noof} 
f_{uva}[t_{a,1},t_{b,1}]+(\dots)=\hbar^2 Q_{uvb}(t_{a,0})\;,
\ee
where $Q_{uvb}(t_{a,0})$, which is completely antisymmetric in its indices $uvb$, is for each $uvb$ a homogeneous symmetric cubic polynomial in the
$t_{a,0}$.  

What polynomial is needed can be deduced by trying to make sure that eqn.\ (\ref{noof}) is consistent with eqns. (\ref{monof}) and (\ref{onof}).
However, when one tries to do this, one runs into a snag.  No matter what $Q_{uvb}$ may be, if $t_{a,0}=t_a'\otimes 1+1\otimes t_a''$,
then $Q_{uvb}(t_{a,0})$ is a sum of terms of bidegree $(3,0)$, $(2,1)$, $(1,2)$, and $(0,3)$ in $t_a'$ and $t_a''$.
We can pick $Q_{uvb}$ so that the terms in eqn.\ (\ref{noof}) of bidegree (2,1) and (1,2) work out correctly.  But there is nothing we can do about
the terms of bidegree $(3,0)$ and $(0,3)$, unless they vanish by themselves.

The interpretation of this is as follows.  The quantum deformed algebra satisfies (\ref{noof}) (with analogous deformations of the commutation relations
involving other generators).    In contrast to $\g[[z]]$, a representation of $\g$ does not automatically lift or extend to a representation of the deformed
algebra.   If we start with matrices $t_a$ that represent $\g$, we can always get a representation of $\g[[z]]$ by setting $t_{a,0}=t_a$ and $t_{a,n}=0$ for
$n>0$.  But this only works in the deformed algebra if $Q_{uvb}(t_a)=0$.  Otherwise, the deformed commutation relations force $t_{a,1}$ to be nonzero.
So quantum mechanically, the Wilson operators whose product we are trying to determine are anomalous, and need to be modified\footnote{If they cannot
be so modified, they are simply anomalous and do not have counterparts in the quantum theory.}  with a contribution to $t_{a,1}$ of $\O(\hbar)$,
unless $Q_{uvb}(t_a')=Q_{uvb}(t_a'')=0$.  Once we impose this restriction, we do not need to worry about the terms in eqn.\ (\ref{noof}) of bidegree $(3,0)$
or $(0,3)$.  We need consider only the $(2,1)$ and $(1,2)$ terms in that equation.

The explicit polynomial $Q_{uvb}(t_{a,0})$, though not very illuminating, can be worked out by analyzing those terms.\footnote{We will also derive it by a direct Feynman diagram analysis in section \ref{section_2loop}. Section \ref{matching} contains a fairly thorough analysis of this polynomial. See also  \cite{Chari-Pressley}, p.\ 376.}  A notable fact is that because this polynomial is of degree greater than 1, the deformation from $\g[[z]]$ by including
$\hbar^2 Q_{uvb}$ on the right hand side of eqn.\ (\ref{noof}) cannot be understood as a deformation of the Lie algebra $\g[[z]]$.  It can instead be
understood as part of an associative algebra deformation of the universal enveloping algebra of $\g[[z]]$, denoted $U(\g[[z]])$.  This makes sense because a polynomial
in elements  of $U(\g[[z]])$ is itself an element of $U(\g[[z]])$, so $Q_{uvb}(t_{a,0})$ is an element of $U(\g[[z]])$.  Eqn.\ (\ref{noof}) (and its analogs for other
components that we have not calculated) can be understood as giving an associative algebra deformation (which in particular entails a Lie algebra deformation)
of $U(\g[[z]])$.

\subsection{The Yangian}

The deformation of $U(\g[[z]])$ that was uncovered in section \ref{interpretation} is known as the Yangian.  
In fact, eqn.\ (\ref{monof}) is a standard formula describing the difference in lowest order between the tensor product of representations of the Yangian and the
tensor product of representations of $\g[[z]]$.

How can we know that the relevant deformation of $U(\g[[z]])$ is the Yangian, given that we have only performed some simple computations in lowest nontrivial order?
The answer to this question is that according to general theorems, the Yangian is the only deformation of the OPE that agrees with the lowest order deformation that
we have found in eqn.\ (\ref{monof}) and possesses certain general properties.
Thus (as already stated in \cite{CostelloA,Costello:2013zra}),  the associativity of the OPE, together with the fact that the OPE 
 in $\O(\hbar)$ receives a non-trivial correction, determines the OPE uniquely to all orders in perturbation theory, up to changes of variables. 
 This is guaranteed by a theorem of Drinfeld, 
which also says that the resulting 1-parameter deformation of the $U(\g[[z]])$
is the Yangian $Y_{\hbar}(\g)$, the algebra underlying the rational solution of the Yang-Baxter equation associated to $\g$.

To make these statements  more precise and closer in spirit to those in \cite{CostelloA,Costello:2013zra}, we can use the language of  category theory to say that the ``category of Wilson lines''  $\mathcal{C}_{\hbar}$ is given by the category of representations of $Y_{\hbar}(\g)$.

This category $\mathcal{C}_{\hbar}$ has the following four properties.

First, $\mathcal{C}_{\hbar}$ is a non-trivial one-parameter deformation of the 
category of representations\footnote{At the classical level, representations of $\g[[z]]$ are the same as representations of $U(\g[[z]])$, so Wilson operators can
be identified with representations of either of these algebras.  But the statement in the
text must be formulated for $U(\g[[z]])$, because the Yangian is a deformation of $U(\g[[z]])$, not of $\g[[z]]$.} of $U(\g[[z]])$.

Second, the OPE of the Wilson lines defines a functor, 
by fusing two objects to produce a third object
\be 
\mc{C}_\hbar \times \mc{C}_\hbar \to \mc{C}_\hbar \;.
\ee 
Moreover, the OPE is associative, though not commutative.
These two conditions define a \emph{monoidal category}, and hence
$\mc{C}_{\hbar}$ is monoidal.\footnote{By the same argument, the category of line operators in a two-dimensional topological quantum field theory (TQFT) is always a monoidal category, as is the category of boundary line operators in a $3$-dimensional TQFT with a topological boundary condition.
}

Third, the category $\mc{C}_\hbar$ allows for a kind of braiding\footnote{This is different from the braiding familiar from line operators in three-dimensional topological field theories.}, because it has certain additional structure.
Here we again consider two Wilson lines parallel in the topological plane,
but now at different points $z$ and $z+\lambda$ of the complex plane.
Since $z\ne z+\lambda$, the Wilson lines never coincide 
with each other, even when they  coincide in the topological plane.
Moreover, since we have two directions at our disposal in the holomorphic plane $C$,
we can move the two Wilson lines around in $C$, to replace $z$ with $z+\lambda$,
without encountering any singularities.
In the language of \cite{CostelloA,Costello:2013zra}, the translation $z \mapsto z + \lambda$ of the Wilson line
by a parameter $\lambda$ defines a functor $T_\lambda : \mc{C}_\hbar \to \mc{C}_\hbar$.
If $W,V$ are two Wilson lines supported at $0$, then 
the explanations above means that we have a natural isomorphism
\be 
R_{V,W} : T_\lambda W \otimes V \iso V \otimes T_\lambda W \;,
\ee 
where $\otimes$ indicates the monoidal structure on $\mc{C}_\hbar$ coming from the OPE of parallel Wilson lines.

A theorem of Drinfeld \cite{Drinfeld_ICM} now implies that $\mc{C}_\hbar$ is uniquely fixed by these properties to be  the category of representations of the Yangian $Y_\hbar(\g)$. 
The monoidal structure on $\mc{C}_\hbar$ comes from the coproduct on the Yangian, and the map $R_{V,W}$ comes from the $R$-matrix of the Yangian $Y_{\hbar}(\mathfrak{g})$. 

These arguments are admittedly somewhat abstract.  
In a companion paper \cite{Part2}, we will explain, in a concrete and down to earth way,  
how to extract a representation of the Yangian -- exactly, not just to order $\hbar$ -- from any Wilson
line operator of this theory.
The ability to do this is somewhat analogous to the ability in section \ref{elementary} to determine in an elementary way
some solutions of the Yang-Baxter
equation for $GL_N$.  Hopefully this will help convince the reader that the conclusions from the abstract arguments really apply to the quantum field theory.

\section{The Framing Anomaly and Curved Wilson Lines}\label{framinganomaly}

In section \ref{firstlook}, we deduced from some explicit elementary examples that there must be a framing anomaly for Wilson lines.  The framing anomaly says that what
is constant along a Wilson line is not $z$ but $z-\hbar\, \sh^\vee \varphi/\pi$, where $\sh^\vee$ is the dual Coxeter number of the gauge group, and $\varphi$ is the angle
between the tangent vector to the Wilson line and some chosen direction in the topological plane.   The purpose of the present section is to perform a direct Feynman diagram
computation exhibiting this effect.

In contrast to the calculations that we performed for the $R$-matrix and for the OPE of Wilson operators, for the framing anomaly, a lowest order computation gives the
exact result.  For the case of the Yangian -- that is for $C=\C$ -- this is clear from the fact that this theory is invariant under adding a constant to $z$ and
under a common rescaling of $z$ and $\hbar$.
The assertion that $z-\hbar\, \sh^\vee\varphi/\pi$ is constant along a Wilson operator is consistent with those symmetries, but a modified statement with additional terms
of higher order in $\hbar$ would not be.  The fact that the lowest order framing anomaly gives the full answer is also evident in the examples considered in section \ref{firstlook}.

For other choices of $C$ related to trigonometric or elliptic solutions of the Yang-Baxter equation, there is no scaling symmetry.  However, because the analysis of the framing
anomaly is local along $C$, the form of the anomaly is the same for the other cases as long as one picks variables so that locally along $C$ the action takes the same form
as for $C=\C$.  (In the coordinates of eqn.\ (\ref{mirf}), this is automatically true if $C$ is an elliptic curve, and it is true for $C=\C^\times$ after replacing $z$ with $\log z$.)

\begin{figure}[htbp]
\centering{\includegraphics[scale=.4]{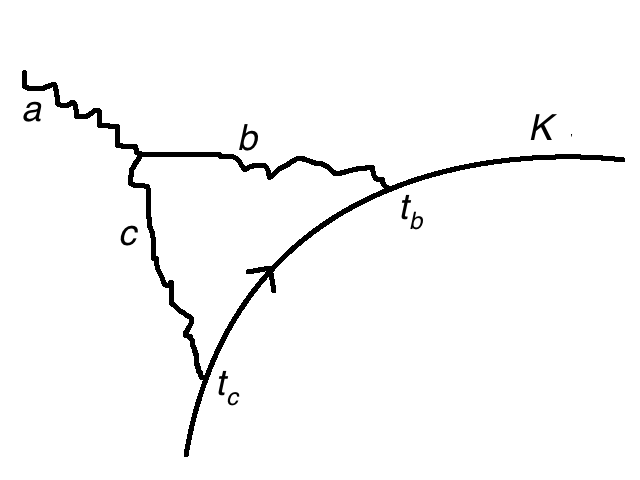}}
\caption{\small{The one-loop diagram that leads to a framing anomaly for Wilson operators.}}
\label{Diagram1}
\end{figure}

The relevant diagram that will give the anomaly is shown in \fig \ref{Diagram1}.  This diagram makes a convergent and well-defined contribution   to the effective
action of the theory.  However, this contribution is not gauge-invariant.  A failure of gauge-invariance will arise from the behavior when all three interaction vertices coincide at a common point, which will inevitably be along the support $K$ of the Wilson operator.

\def\g{{\mathfrak g}}
\def\R{\mathbb{R}}
\def\C{\mathbb{C}}
\def\cl{{\mathrm{cl}}}

Before describing what form the anomaly will take, we first do a small exercise in group theory.  We write the gauge field $A$ explicitly
as $A=\sum_a A_a t_a$, and recall that the trilinear vertex involving
a coupling of three gauge fields is  proportional to $f_{abc}$.  A gauge boson of type $b$ or $c$  couples to the Wilson line via a factor
of $t_b$ or $t_c$ (understood here as a matrix acting in the appropriate representation).  So as far as group theory is concerned, the indicated
diagram generates a coupling of an external gauge field $A_a$ to the Wilson operator via a matrix 
\be\label{yft} 
\sum_{a,b,c} A_a f_{abc}t_bt_c\;.
\ee
Here because of antisymmetry of $f_{abc}$, we can replace $t_bt_c$ by $\frac{1}{2}[t_b,t_c]=\frac{1}{2}f_{bcd}t_d$.  Then 
as $\sum_{b,c} f_{abc}f_{bcd}=2\sh^\vee \delta _{ad}$, where $\sh^\vee$ is known as the dual Coxeter number\footnote{Recall our normalization of the Killing form in \eqn \eqref{Killing_normalization}. The factor of $2$ here is inserted to match with the standard definition of $\sh^\vee$ in the literature.} of $\g$.  
Thus just from the point of view of
group theory, we get a coupling  $\sh^\vee \sum_a A_a t_a$ of the external gauge field to the Wilson line.  Thus the dependence on $\g$ is only
an overall factor of $\sh^\vee$.  

To calculate the anomaly, it suffices to consider the case of a Wilson line  supported on a curve $K\subset \R^2$ that is nearly a straight line,
and at a point  $z=z_0$ in $\C$.  In fact, since a simple description in terms of Yangians and integrable systems only arises in a limit in which
the metric of $\R^2$ is scaled up, we are really supposed to consider only the case that $K$ has  a very large radius
of curvature and therefore can everywhere be locally well-approximated as a straight line.  
We can pick coordinates so that $K$, in the region of interest, is very close to the $x$-axis in $\R^2$, and is described by a curve
$y=y(x)$, with $y$ everywhere small.   We will find that the amplitude $I_1$ that comes from the diagram of \fig \ref{Diagram1} is not gauge-invariant.
Under a gauge transformation $A\to A+D\veps$, with $\veps=\sum_at_a\veps_a(x,y,z)$ a gauge parameter, the variation of $I_1$ will be
\be\label{firstvar}\delta I_1=-\frac{\hbar\, \sh^\vee}{2\pi} \int_K \d x \left( \frac{\d^2y}{\d x^2}\partial_z\veps(x,y(x),z_0)\right) .\ee
Here as above $\sh^\vee$ is the dual Coxeter number.
(A factor of $\hbar$ is present because this is the anomaly in a diagram with two propagators, each giving a factor of $\hbar$, and one bulk vertex, proportional to $1/\hbar$.)
This formula for $\delta I_1$ as an integral over $K$ is written in a way that is only valid in a portion of $K$ that is close to the $x$-axis.
  Since we will assume $y(x)$ small, we can replace $\veps(x,y(x),z_0)$ by $\veps(x,0,z_0)$,
which we will write simply as $\veps(x,z_0)$. 
Now let us discuss how to cancel this anomaly by a correction to the classical definition of the Wilson operator.  We recall that, in linear order,
an external gauge field $A$ couples to the Wilson operator via
\be\label{birstvar}
I_\cl =\int_K \d x A_x(x,z_0) \;.  
\ee
This is gauge-invariant, as long as $z_0$ is constant.  To cancel the anomaly $\delta I_1$, we must modify the classical coupling $I_\cl$ so 
that it is no longer gauge-invariant.  We do this by replacing $z_0$ by $z_0-\hbar\, \sh^\vee \frac{1}{2\pi} \d y/\d x$.  Thus we replace $I_\cl$ with
\be\label{formvar}
I_\cl'=\int_K\d x A_x\left(x,z_0- \frac{\hbar\, \sh^\vee}{2\pi} \frac{\d y}{\d x}\right). 
\ee  
Under a gauge transformation $\delta A_x(x,y,z)=\partial_x \veps(x,y,z)$,
the variation of $I_\cl'$ is\footnote{We use $\d \veps/\d x=\partial_x\veps +(\partial_x z) \partial_z\veps.$
We work to first order in $y$ and replace $\partial_z\veps(x,z_0-\hbar\, \sh^\vee \d y/\d x)$ with $\partial_z\veps(x,z_0)$.}
\be\label{ormvar}
\delta I_\cl' =-\frac{\hbar\,  \sh^\vee}{2\pi}  \int_K \d x \left(\frac{\d^2 y}{\d x^2} \partial_z \veps(x,z_0)\right). 
\ee
This cancels the anomaly $\delta I_1$.

We learn, then, that including the anomaly, we must set not $z=z_0$ but $z=z_0-\tfrac{1}{2\pi}\hbar\, \sh^\vee  \d y/\d x$.  
The formulas as written, however, are only valid under the assumption that $\d y/\d x$ is small.  To get a more general formula,
we should re-express $\d y/\d x$ in terms of the angle $\varphi$ between the tangent vector to $K$ and some chosen direction in the $xy$ plane, for
instance the direction of increasing $x$.  If we define the sign of $\varphi$ to increase when $K$ rotates in a clockwise direction, then $\d  y/\d x$ can be identified
(when it is small) with $-\varphi$.   So a more general form of the relation between $z$ and the slope of $K$ is $z=z_0+\hbar\, \sh^\vee \varphi/(2\pi)$.  A different way
to express this result is to say that along $K$, what is constant is not $z$ but $z-\hbar\, \sh^\vee\varphi/(2\pi)$.   This is the way the anomaly was stated in section \ref{firstlook}.

\begin{figure}[htbp]
\centering{\includegraphics[scale=.4]{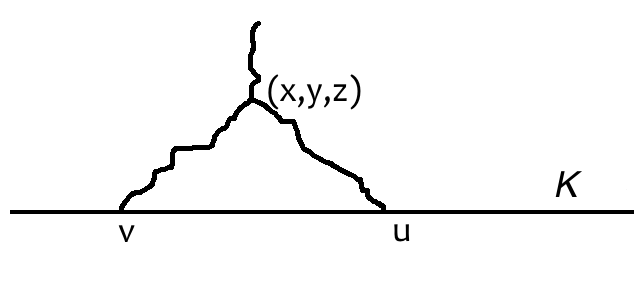}}
\caption{\small{The one-loop diagram with $K$ taken to be a straight line -- the $x$ axis.}}
\label{Diagram2}
\end{figure}
Before trying to explain how this anomaly comes from the diagram of \fig \ref{Diagram1}, let us first explain how 
the amplitude coming from that diagram is defined.
The only possible problem in defining the integral associated to this diagram comes from the region of integration in which all three interaction
vertices -- one in bulk and two along the curve $K$ -- coincide in $\R^4$.  In this region, $K$ can be approximated as a straight line.  Let
us take the line to be the $x$-axis, that is the curve $y=0$, and discuss the integral in this case.  For simplicity, we choose $z_0=0$. As in
the figure, we take the vertices along $K$ to be at $x=u,v$ with $v\leq u$, and we place the bulk vertex at a generic point $(x,y,z)$.
 Making use of the explicit form of the propagators,
the necessary integral turns out to be
(up to a constant factor, about which we will be more precise later)
\be\label{ziffo}
\int_{-\infty <v\leq u<\infty} {\d u\,\d v} \int_{\R^4} \d x \d y \d z \d \bar z  \,\, A_x(x,y,z) \frac{ 2y \bar z}{((x-v)^2+y^2+|z|^2)^2
((x-u)^2+y^2+|z|^2)^2} \;. 
\ee
Here $A=A_x \d x + A_y \d y +A_{\bar z}d\bar z$ is a background field. However, for the case of a straight
line in a flat $\R^2$ (and only for this case), only the $A_x$ component contributes, as is indicated in eqn.\ (\ref{ziffo}).  
This depends on the detailed form of the propagator.  

Any possible divergence in the integral will be a local effective action along $K$ that will be represented as an integral along $K$.
Thus in assessing the convergence of the integral, we should leave one variable unintegrated, say $x$.  (Alternatively, we could ensure
that there is no divergence in the integral over $x$ by taking $A$ to have compact support in the $x$ direction.)  So we set, for example,
$x=0$.  If $A_x(0,y,z)$ is independent of $y$ and $z$, then the integral (\ref{ziffo}) is linearly divergent by power counting.  (We have to integrate
over five variables $u,v,y,z,\bar z$, and the integrand scales as $(\mathrm{length})^{-4}$.)   However, the integrand is odd under $y\to -y$
and separately under $z\to -z$.  Accordingly, the integral is well-defined by a sort of principal value prescription.  We take $\R^4_\eta$
to be the subspace of $\R^4$ defined by $(y^2+|z|^2)^{1/2}\geq \eta$, and we replace the integral by 
\be\label{iffo}\lim_{\eta\to 0}
\int_{-\infty <v\leq u<\infty} {\d u\,\d v} \int_{\R^4_\eta} \d x \d y \d z \d \bar z  \,\, A_x(x,y,z) \frac{ 2y \bar z}{((x-v)^2+y^2+|z|^2)^2
((x-u)^2+y^2+|z|^2)^2}\;. 
\ee
This integral is well-defined. Concretely, if we expand $A_x(0,y,z)$ in powers of $y$, $z$ and $\bar z$ near $y=z=0$, the first term that can contribute
is proportional to $yz$ and leads to a convergent integral.  (Other nonvanishing contributions come from terms with additional factors of $y^2$
or $z\bar z$.)

We have described this for the case that $K$ is a straight line, but since any curve can be approximated locally as a straight line, the general
case is similar.  The one-loop integral can always be defined by a principal value procedure in which one constrains the bulk vertex to be a distance
at least $\eta$ from $K$ and then takes the limit $\eta\to 0$.   The only change from the discussion in the last paragraph is that instead of
breaking the symmetry under $y\to -y$ by picking out a term in $A(0,y,z)$ that is linear in $y$, we could make use of the curvature of $K$ to
break the symmetry.  This again leads to a convergent integral. 

Now that we have defined the 1-loop amplitude that we want to study, we can assess its gauge-invariance.  For this, it is convenient to perform
the $u$ and $v$ integrals in (\ref{iffo}), leading to an expression of the general form
\be\label{ffo}
\int_{\R^4} A\wedge \Theta_0 \;, 
\ee
where $\Theta_0$ is 3-form on $\R^4$.  To be more precise, $\Theta_0$ is a distributional 3-form on $\R^4$,
 that is, it is  a distribution on smooth 1-forms $A$.  We note that although we have
described a specific procedure to define $\Theta_0$,  there was nothing really distinguished about this procedure and we could
have used a different one.  The effect of using a different procedure would be to add  to $\Theta_0$  a distribution $\Theta'_0$
supported on $K$.  But
\be\label{ziffi} 
\int_{\R^4}A\wedge \Theta'_0=\int_K A \beta \;,
\ee
where  $\beta$ is  some  0-form on $K$, which depends on $\Theta'_0$.  A shift of this type in the effective action can be interpreted
as a change in the classical line operator whose quantum properties  we are trying to study. We are only interested in anomalies modulo
those that can be removed by such a redefinition of the underlying classical line operator.

We want to assess gauge-invariance of eqn.\ (\ref{iffo}) under $A\to A+D\veps$, where $D=\d x \partial_x+\d y\partial_y+\d\bar z\partial_{\bar z}$.    It suffices to consider the case $\veps=z \veps'$ where $\veps'$ is regular at $z=0$.
In fact, any $\veps$ can be written $\veps=z\veps'(x,y,z,\bar z) +\veps''(x,y,\bar z)$, where $\veps''$ is holomorphic in $\bar z$.  In the
following computation, any contribution from $\veps''$ will disappear after integration over the phase or argument of $z$.  (It will also become clear
that we can assume $\veps'$ to be independent of $z$ and $\bar z$ since terms proportional to $z$ or $\bar z$ are not singular
enough to be relevant, and that  likewise only the restriction of $\veps'$ to $K$ matters.)   Since $D$ commutes
with $z$, it is equivalent to replace $\Theta_0$ with $\Theta=z \Theta_0$ and to assess the invariance of
\be\label{fffo}\int_{\R^4} A\wedge \Theta \ee
under $A\to A+D\veps'$.    After integration by parts, this means that we need to study the distributional four-form $D\Theta$.
This form vanishes away from $K$ because of the classical gauge-invariance of the theory.  So we expect $D\Theta$ to be a distribution
with support on $K$.
Finally, as $\Theta$ is proportional to an explicit factor of $\d z$ which comes from the bulk interaction vertex, we can replace $D\Theta$
by $\d\Theta$, 
where $\d$ is the ordinary exterior derivative $\d = D+\d z\partial_z$.

\def\hat{\widehat}
Thus, we are reduced to studying $\d\Theta$, which should equal a distribution supported along $K$.  In fact, we claim
\be\label{nurc} 
\d\Theta =\alpha\delta_K \;,
\ee
where $\delta_K$ is a three-form delta function that is Poincar\'e dual to $K$, and $\alpha$ is a one-form supported on $K$.  The anomaly
is then given by $\alpha$. 
({\it A priori}, instead of $\delta_K$, we might have gotten an expression involving normal derivatives of $\delta_K$.  Indeed, this would
have happened had we not included an explicit factor of $z$ in the definition of $\Theta$.  With that term extracted, we will see that the most
singular contribution is a delta function rather than a derivative of one.)  We will use the following simple procedure to study 
$\d\Theta$.  We assume that $K$ is close to the $x$-axis, so that it can be parametrized by $x$.  Then we expand the three-form $\Theta$
as the sum of two terms, one proportional to $\d x$ and one not:
\be\label{ffop} 
\Theta =\d x \Lambda +\Lambda' \;. 
\ee 
Here $\Lambda'$ cannot contribute to the anomaly.   The reason is that as $\Lambda'$, by definition, is a 3-form that vanishes if contracted
with $\partial_x$, and is smooth away from $K$, $\d\Lambda'$ cannot generate a delta function supported on $K$ unless $\Lambda'$ already
has a delta function supported on $K$.  But such a term is irrelevant; it could be eliminated by redefining $\Theta$ along the lines of eqn.
(\ref{ziffi}).  

\begin{figure}[htbp]
\centering{\includegraphics[scale=.4]{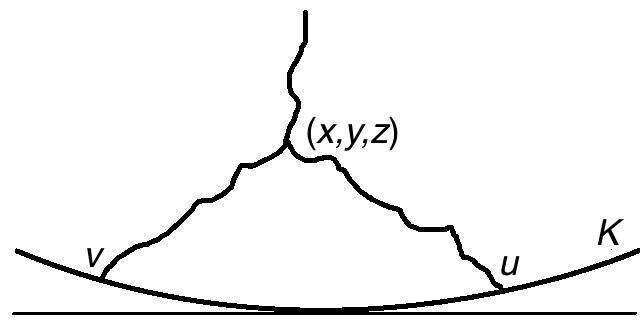}}
\caption{\small{Here we take $K$ to be tangent to the $x$-axis (the horizontal line in the figure) at, say, $x=0$.}}
\label{Diagram3}
\end{figure}

So in studying $\d\Theta$, we can replace $\Theta$ with $\d x \Lambda$, in other words with the part of $\Theta$ that is proportional to $\d x$.
We restrict $\Lambda$ to, say, $x=0$, and look for a delta function contribution in $\d\Lambda$ at $y=z=0$.  We can assume that $K$ is tangent
to the $x$-axis at $x=0$ and is described near $x=0$ by $y=\frac{1}{2}f x^2$, where $f=\d^2 y/\d x^2$. 
Note that $\Lambda=0$ for the case that $K$ is the $x$-axis.  (This is equivalent to the statement that the only component of $A$ that appears
in eqn.\ (\ref{ziffo}) above is $A_x$.)  So we expand $\Lambda$ in powers of $f$ near $f=0$.  We will see momentarily that the term in $\Lambda$ linear in $f$
does indeed lead to a delta function in $\d\Lambda$.  Higher order terms in $f$ are not singular enough to make such a contribution.

Explicit calculation (see Appendix \ref{apptwo}) shows that the contribution to $\Lambda$ that is linear in $f$ is at $x=0$,
\begin{align}\label{milp}
\Lambda= 
-\frac{3 f \ii }{32 \pi^2} \hbar\, \sh^\vee
\frac{ -y z \bar{z}  (\d z \wedge \d\bar{z} ) + 2z\bar{z}^2  (\d y \wedge \d z )}{\left(y^2+|z|^2 \right)^{\frac{5}{2}} } \;.
\end{align}
By Stokes's theorem, the coefficient of the delta function can be extracted as
\begin{align}
\int_{S_{\eta}} \Lambda \;,
\end{align}
where $S_{\eta}$ is the two-sphere $\left(y^2+|z|^2 \right)^{1/2}=\eta$
(the integral is independent of $\eta$, so we can choose $\eta=1$).
We then need to evaluate
\begin{align}
\int_{S_{\eta=1}}\left(
-y z \bar{z}  (\d z \wedge \d\bar{z} ) +2 z\bar{z}^2  (\d y \wedge \d z)\right) \;.
\end{align}
By choosing $y=\cos \theta, z=\sin \theta e^{\ii\phi}$, 
this becomes\footnote{We use
$\d z \wedge d\bar{z}=-2\ii \sin \theta \cos \theta d\theta \wedge d\phi$ ,
$dy \wedge \d z=-\ii \sin^2 \theta e^{- \ii\phi} d\theta \wedge d\phi$.
}
\begin{align}
2\pi  \int_0^{\pi} \d\theta
\left(-2\ii \sin^3 \theta \cos^2\theta
-2\ii \sin^5 \theta\right)
= (2\pi)(- 2\ii)  \left(\frac{4}{15}+\frac{16}{15}\right)
= -\frac{16\pi \ii}{3} \;.
\end{align}
Therefore the coefficient in front of the delta function is  $-f  \hbar\, \sh^\vee/(2\pi)$:
\be\label{omilp}
\d\Lambda'=-\frac{1}{2\pi} f  \hbar\, \sh^\vee \delta^3(y,z) \;,
\ee
where $\delta^3(y,z)$ is the Poincar\'e dual to the point $x=y=z=0$ in the hypersurface $x=0$.  

Since $f=\d^2y/\d x^2$, this result when inserted back in (\ref{fffo}) corresponds to an anomaly
\be
\label{dumbo}
- \frac{\hbar\, \sh^\vee}{2\pi} \int_K \d x \frac{\d^2 y}{\d x^2} \veps'. 
\ee
But $\veps'$ was defined by removing a factor of $z$ from $\veps$.  So an equivalent and more illuminating way to describe the result is that
the anomaly in the 1-loop amplitude under a gauge transformation generated by $\veps$ is
\be\label{dumbox}
-\frac{1}{2\pi}  \hbar\, \sh^\vee \int_K \d x \frac{\d^2 y}{\d x^2}\partial_z\veps \;. 
\ee
This is the form of the anomaly that was promised in eqn.\ (\ref{firstvar}).

\section{Networks of Wilson Lines}\label{networks}
\subsection{Overview}\label{netov}

\begin{figure}[htbp]
\centering{\includegraphics[scale=.4]{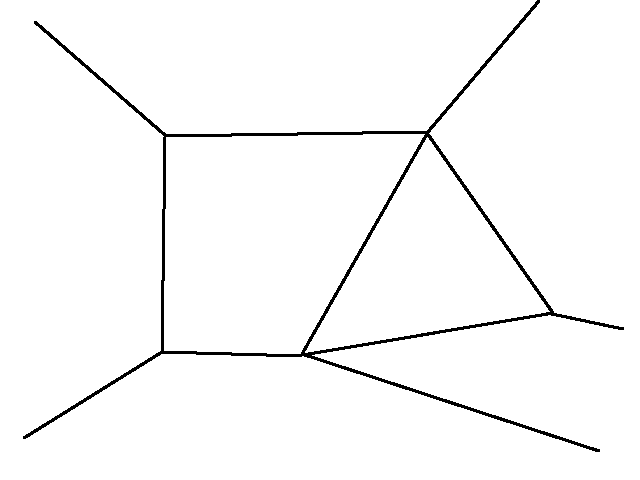}}
\caption{\small{A ``network'' of Wilson lines is a graph in which the line segments are Wilson lines, labeled by  representations of $G$, and a vertex represents a gauge-invariant coupling among the representations that label the lines that meet at that vertex. }}
\label{netw}
\end{figure}

By a ``network'' of Wilson lines, we mean simply a graph (\fig \ref{netw}) made of Wilson lines.  Each line segment in the graph is labeled by a representation of  the gauge
group $G$ (in general
a different representation for each segment), and a vertex in the graph represents, in physics language, a gauge-invariant coupling among the representations that
meet at that vertex.  At the classical level -- modulo the framing anomaly and its generalization for networks
-- the whole network will be at a fixed value\footnote{The analog of this in purely three-dimensional Chern-Simons theory is to consider
a not necessarily planar graph made from Wilson lines and embedded in spacetime in an arbitrary fashion.  For example, the quantum 6j symbol is
the expectation value of a tetrahedral graph  \cite{Wittenb}.} of the spectral parameter $z$.

Since we are already familiar with Wilson lines, the new ingredient in building such a network is the vertex.  So let us discuss this in more detail.
In general, if we are given a collection of representations $V_1,\dots,V_n$, with a $G$-invariant element 
\begin{equation}
 v \in  V_1 \otimes \dots \otimes V_n \;, 
\end{equation}  then 
we can at the classical level form a Wilson line vertex in which $n$ Wilson lines labeled by these $n$ representations meet,  as depicted in Fig. \ref{figure_network}.

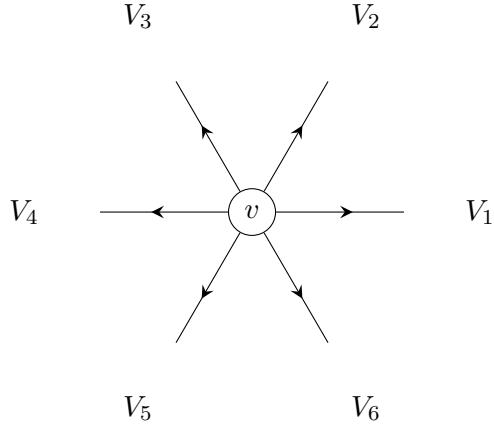
\begin{figure}
\begin{center}
\begin{tikzpicture}
\node(N1) at (0:3) {$V_1$};
\node(N2) at (60:3) {$V_2$};
\node(N3) at (120:3) {$V_3$};
\node(N4)at (180:3) {$V_4$};
\node(N5) at (240:3) {$V_5$};
\node(N6) at (300:3) {$V_6$};
\node[draw, circle] (central) at (0:0) {$v$}; 
\draw[-<-] (0:2) to (central);
\draw[-<-] (60:2) to  (central);
\draw[-<-] (120:2) to (central);
 \draw[-<-] (180:2) to (central);
 \draw[-<-] (240:2) to (central);
 \draw[-<-] (300:2) to (central); 
\end{tikzpicture}
\caption{\small{A vertex labeled by an invariant tensor $v \in  V_1 \otimes \dots V_6 $.  Here and later, lines are labeled by the corresponding representations.}}\label{figure_network}
\end{center}
\end{figure}  

This construction is possible, at least classically, in any gauge theory.  Classically, the vertex is defined as follows.  For each Wilson line, we define the holonomy operator as usual, starting at the vertex where all the Wilson lines meet.  The $G$-invariant tensor $v \in V_1 \otimes \dots \otimes V_n$ provides an initial state to which we apply the holonomy operator on each Wilson line. The resulting operator is invariant under gauge transformations which vanish at the other end of the Wilson line.  If each Wilson line 
has its second end either at another gauge-invariant vertex or at infinity (where we assume fields
and gauge parameters to be trivial), then we get a gauge-invariant network of Wilson lines, as in \fig \ref{netw}.   

In the theory under discussion in the present paper, not every classical Wilson line corresponds to a line operator of the quantum theory.  There is an obstruction that we first
encountered in section \ref{interpretation}.  Likewise, it turns out that not every classical vertex corresponds to a vertex in the quantum theory.

As we will see, the condition that a classical vertex should correspond to a vertex in the quantum theory is that a certain $\O(\hbar)$ anomaly should
vanish.  This anomaly comes from a lowest order Feynman diagram with one gluon exchange.

In this section, we will establish the result just stated; moreover, we will compute the $\O(\hbar)$ obstruction to ``quantizing'' a vertex
and find useful conditions under which it vanishes.

Finally, we will give concrete, interesting, and (as we will see in a companion paper) useful examples of quantum vertices.

Of course, to give examples of vertices, we first need examples of Wilson lines that we are allowed to work with. 
We will use the following sufficient condition that a Wilson line associated to a representation of $G$ (as opposed to a more general Wilson line associated
to a representation of $\mathfrak{g}[[z]]$) can be quantized.
Suppose the representation $V$ satisfies  the following algebraic condition:\footnote{An equivalent statement
is that the only irreducible representation of $G$ that appears both in $\wedge^2 \g$ and in $\op{End}(V)=V\otimes V^\ast$ is the adjoint representation.} 

\begin{minipage}{.9\linewidth}
$(\dagger)$ \ \ Every $G$-invariant map from $\wedge^2 \g$ to $\op{End}(V)$ factors through a copy of the adjoint representation.
\end{minipage} 

Then a Wilson line associated to the representation $V$ can be ``quantized,'' that is, it exists in the quantum theory.  Unfortunately, the proof of this
statement is rather technical, and we have relegated it to Appendix \ref{app.anomaly_Wilson}.   We should point out that the condition $(\dagger)$ is only a sufficient
criterion for a representation to be quantizable in this sense, but is far from being necessary.  For example, for $G=SL_n$, it is known from other arguments that
all representations of $G$ are quantizable, but in general the condition $(\dagger)$ is not satisfied.

We should perhaps remark that what we call vertices correspond, in the theory of integrable relativistic scattering, to couplings of external particles to 
bound states or poles of the $S$-matrix.
However, we will not try to make contact with the insight that comes from that point of view.

\subsection{Vanishing of Higher Order Anomalies}\label{vanishing}
Our first task is to show that an anomaly obstructing quantization of a vertex can arise only in lowest nontrivial order, from one-gluon exchange.

The anomaly that describes the failure of the configuration in \fig \ref{figure_network} to be gauge-invariant at the quantum level will be a local
operator of ghost number 1, made from the gauge field and the ghost field $\c$, supported at the vertex. It will be valued 
 in the vector space $\h V=\otimes V_i$ which lives at the vertex.
   As in our discussion of the framing anomaly, the anomaly is determined by local
 considerations, so the choice of the complex Riemann surface $C$ will not matter.  We may as well take $C=\C$ and take the vertex to be supported
 at $z=0$.   Likewise we can take the topological two-manifold $\Sigma$ to be the $xy$ plane with the vertex at the origin.

A key constraint is that the anomaly will be invariant under any symmetry of the theory that is also a symmetry of the classical vertex.  
A special case of diffeomorphism invariance in the $\Sigma$ directions is invariance under scaling of the $xy$ plane.  
 The configuration
in \fig \ref{figure_network} with $n$ Wilson lines emerging radially from a common vertex is invariant under this scaling, which therefore
must be a symmetry of the anomaly. 
This tells us that the anomaly cannot depend on the $x$ and $y$ components of the gauge field $A$, and it cannot involve any $x$ and $y$ derivatives. 

The classical theory is also invariant under the symmetry that simultaneously scales $z, \zbar$ and $\hbar$ by a real number, and under
the symmetry in which  $z$ is rotated through $\theta$, $\zbar$ through $-\theta$, and $\hbar$ through $\theta$. 
 It follows that the anomaly is also preserved by these symmetries.
From this it follows that the anomaly cannot have any $\zbar$ derivatives, nor can it depend on $A_{\zbar}$.  Thus the anomaly
must be constructed from the ghost field $\c$ only, and it is linear in $\c$ since it has ghost number 1.  
Moreover, in order $\hbar^k$, the anomaly must have $k$ $z$-derivatives, that is, it must be proportional to $\partial_z^k\c(0)$.

Finally, the anomaly must be invariant under constant gauge transformations.  Since $\partial_z^k\c$ transforms in the adjoint
representation, it must  be combined with a copy of the adjoint representation in $\h V=\otimes_i V_i$.  Thus, an order $\hbar^k$ contribution
to the anomaly must be an operator of the form 
\be\label{theno}\sum_a \partial_z^k \c^a (0) \alpha_a
\ee
 for some collection of elements $\alpha_a \in  \h V$ that transform in the adjoint representation of $\mf{g}$. 
 Here $\c^a$ are the components of the ghost field $\c$ relative to a basis $t_a$, $a=1,\dots, \mathrm{dim}\,\g$ of $\g$.

It remains to impose the condition that the anomaly must be BRST closed.   We will see that this condition is satisfied if and only if $k=1$.  
In implementing the condition of BRST invariance, we have to remember that the local operator that represents the anomaly lives at the endpoint
of $n$ Wilson lines.  In general, the BRST transformation of a Wilson operator $W(p,q)$ with ends at $p$ and $q$ is
\begin{equation}\label{polygo} 
\left\{Q, W(p,q) \right\}=\c(p)W-W \c(q)\;.
\end{equation}
Here $\c(p)$ and $\c(q)$ are operators acting in the representation carried by the Wilson line.
We are interested in the case that $q$ is the  location $x=y=z=0$ of the vertex under study, and we are only interested in the $\c(q)$ term in (\ref{polygo}) (the other term will
participate in a similar cancellation at the other end of the Wilson line in question).
For the $i^{th}$ Wilson line that ends at the vertex, we can write $\c(q)$ in more detail as 
\be\label{nolygo}
\c(q)=\sum_a \c^a(0) t_{a;i}\;,
\ee
where $t_{a;i}$ is the operator by which the Lie algebra generator $t_a$ acts in the representation $V_i$.

The quantity that must vanish for BRST invariance of the anomaly (\ref{theno}) is therefore
\be\label{toto}
-\sum_{i=1}^n \c^a(0)t_{a;i} \partial_z^k\c^b(0)\alpha_b +\{Q,\partial_z^k\c^b(0)\}\alpha_b\;.
\ee
The statement that the $\alpha_b$ transform in the adjoint representation means that
\be\label{oto}
\sum_{i=1}^n t_{a;i}\alpha_b=f_{ab}^c \alpha_c\;.
\ee
Using the standard commutator relation
\be\label{ofo}
\{Q,\c^a(0)\}=\frac{1}{2}f^a{}_{bc}\c^b(0)\c^c(0)\;,
\ee
from which it follows that
\begin{equation}
 \{Q,\partial_z^k\c^a(0)\}= \frac{1}{2} \sum_{r=0}^kf^{a}_{bc} \partial_z^r \c^b(0) \partial_z^{k-r} \c^c(0)\;,
\end{equation} 
we see that eqn.\ (\ref{toto}) is satisfied if and only if $k=1$.

We conclude that  anomalies can occur  only in order $\hbar$, that is in the lowest nontrivial order, due to one-gluon exchange.

\subsection{Calculating the  Anomaly}\label{calculating}
Let us now turn to calculating the anomaly. Any $\O(\hbar)$ anomaly must come from the Feynman diagrams  depicted in \fig \ref{figure_network_anomaly}.
\begin{figure}
\begin{center}
\begin{tikzpicture}
\node (N2) at (0:3) {$V_2$};
\node (N1) at (60:3) {$V_1$};
\node (N6) at (120:3) {$V_6$};
\node (N5) at (180:3) {$V_5$};
\node (N4) at (240:3) {$V_4$};
\node (N3) at (300:3) {$V_3$};

\draw[-<-] (N2) to (0:0);
\draw[-<-] (N1)  to  (0:0);
\draw[-<-] (N6)  to (0:0);
 \draw[-<-] (N5)  to (0:0);
 \draw[-<-] (N4) to (0:0);
 \draw[-<-] (N3) to (0:0);

\node(A) at (160:5) {$A$};
\draw[decorate,decoration=snake] (120:2.5) to (160:3);
\draw[decorate,decoration=snake] (160:3) to (A);
\draw [color=white, thick] (180:2.15) to (180:2.3);
 \draw[decorate,decoration=snake]  (240:2.5) to (160:3); 
\end{tikzpicture}
\caption{\small{One-loop anomaly to the Wilson network in \fig \ref{figure_network}.  Gluons are attached to two of the
outgoing Wilson lines, in this case the ones labeled by $V_4$ and $V_6$.  The full anomaly comes from a sum of such diagrams,
with gluons attached to any two distinct Wilson lines.}}
\label{figure_network_anomaly}
\end{center}
\end{figure}
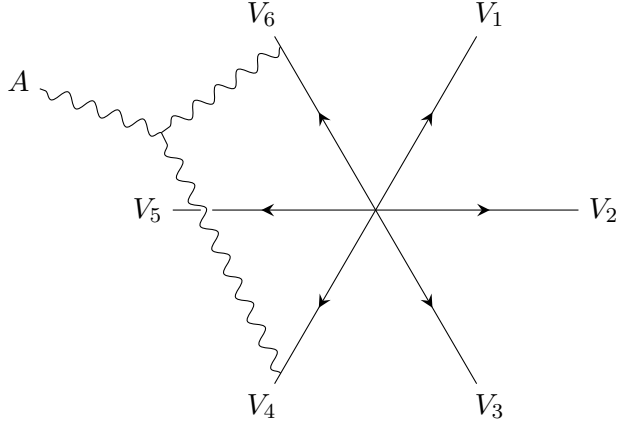

This diagram involves only two of the Wilson lines in the network, and the anomaly will receive a contribution from each pair of Wilson lines.  

For a fixed pair of Wilson lines, we have already done essentially this computation in section \ref{framinganomaly}: it is the same as the calculation leading to the framing anomaly.  There are a few small differences.  First, in the calculation we are now considering, both Wilson lines are labelled with arrows pointing away from the vertex.  At the classical level, this configuration is the same as one where one arrow is incoming, and the other is outgoing, but we use the representations $V_i$ and $V_j^\ast$ instead of $V_i$ and $V_j$.  

Also, in our calculation of the framing anomaly, we labelled both segments of the line by a single representation of the group, but now the two representations may be different.   This affects only the group theory factor, and even that factor can still be written similarly to what we had before.

Finally, in our previous analysis, the Wilson line was allowed to bend in an arbitrary fashion, but now the tangent direction is changing only by a delta function jump at the origin.

Taking these details into account, we can readily write down
 the anomaly associated to \fig \ref{figure_network_anomaly}, where the two gluons attach to Wilson lines $i$ and $j$. To get the full 
anomaly, we will have to sum over pairs $i,j$.  Let 
\begin{equation}
v \in V_1 \otimes \dots \otimes V_n
\end{equation}
denote the invariant tensor that we place at the vertex classically.  The group  theory factor associated to the diagram in which the gluons are attached to lines
$i$ and $j$ is \begin{equation}
\sum f_a{}^{bc} t_{b;i} t_{c;j} v \;.
\end{equation} 
(This reduces to eqn.\ (\ref{yft}) if $v$ is an invariant bilinear form that establishes an isomorphism between $V_i$ and $V_j^\ast$.) 
 
Let $\theta_{i}$ denote the angle between the $i^{th}$ Wilson line and the vertical, measured in the clockwise direction.  Let us assume that $\theta_i < \theta_j$ for $i < j$. This means that $V_1$ is the first line segment we encounter if we start at the vertical and move clockwise. Then, the anomaly associated to the $i$ and $j$ line segments with $i < j$ is proportional to $\theta_j - \theta_i - \pi$.  This follows from our calculation of the framing anomaly, which was the special case of this calculation when the representations $V_j$ was $V_i^\ast$. In that case, a straight line does not have an anomaly, and the anomaly is proportional to the deviation of the line from being straight. This deviation for the geometry
we are considering now is $\theta_j - \theta_i - \pi$. 

Thus, the total anomaly to the existence of the vertex at the quantum level, summing all pairs of Wilson lines and otherwise borrowing our previous result, is 
\begin{equation}
\frac{\hbar}{2\pi} \partial_z \c^a \left( \sum_{1 \le i < j \le n} (\theta_j - \theta_i - \pi) f_{a}{}^{bc} t_{b;i} t_{c;j} \right) v\;.  \label{equation_network_anomaly} 
\end{equation} 
This can be viewed as a $G$-invariant linear map from $\mf{g}$ to $V_1 \otimes \dots \otimes V_n$. 

A possibly  counter-intuitive feature of the anomaly is that it is discontinuous in the angles $\theta_i$. If $\theta_i$ increases so that it crosses $\theta_{i+1}$, there is a discontinuity in the anomaly of the form $- \hbar f_{a}{}^{bc} t_{b;i} t_{c;i+1} v$
(an exchange of $i$ and  $i+1$, .  This can be explained as follows. When $\theta_i = \theta_{i+1}$, what we are computing is the anomaly for a system with $n-1$ Wilson lines, where the lines $V_i$ and $V_{i+1}$ have been fused to a single Wilson line. 
 In section \ref{parallel}, we have calculated the $\O(\hbar)$  correction to the action $\g[[z]]$ on the fusion of two Wilson lines.  The level 1 generator acting on the fused Wilson line
 turned out to be 
\begin{equation}\label{morf}
 \mp \frac{1}{2}\hbar f_{a}{}^{bc} t_{b;i} t_{c;i+1} \;,
\end{equation} 
where the sign depends on whether $V_i$ is brought to $V_{i+1}$ from the left or the right.   Thus the configurations with $V_i$ approaching $V_{i+1}$ from left or right 
are different, and therefore they can have different anomalies.   

The above formula for the fusion of $V_i$ with $V_{i+1}$ shows that the anomaly for $n$ Wilson lines with $\theta_i$ tending to $\theta_{i+1}$ from below coincides with the anomaly for $n-1$ Wilson lines where $V_i$ and $V_{i+1}$ have  been fused, with $V_i$ merging with $V_{i+1}$ from the right.   (The fused Wilson line has a level one Yangian
generator.  A vertex in which it couples to ordinary Wilson lines that lack such a generator is not gauge-invariant at the classical level, and this contributes to the anomaly
for such a vertex.)

Similarly, the anomaly for $n$ Wilson lines where $\theta_i$ tends to $\theta_{i+1}$ from above coincides with the anomaly for $n-1$ Wilson lines where $V_i$ has been fused with $V_{i+1}$ from the left.  

Thus the fact that the anomaly has a discontinuity when $\theta_i$ crosses $\theta_{i+1}$ reflects the fact that the algebra of Wilson line operators is non-commutative: the result of fusing two parallel Wilson lines depends on whether they are brought together form the left or the right.

\subsection{Cancelling the Anomaly}\label{cancelling}
There are two ways to try to cancel this anomaly. We can shift the relative  positions of the Wilson lines in the $z$-plane by an amount of order $\hbar$.
Suppose we shift the $i^{th}$  Wilson line from $z_i=0$ to to $z_i = b_i \hbar$, for some constants $b_i$.   This introduces a new anomaly $\left( \sum b_i \hbar t_{a;i}\right) v$, and one can hope that with a judicious choice of $b_i$ this will cancel the anomaly.

The other thing we can try to do is to change the angles of the Wilson lines.   However, it turns out that moving the angle of the $i^{th}$ Wilson line has the same effect as shifting it in the $z$ plane.   This is clear from the framing anomaly, but we can derive it from \eqn \eqref{equation_network_anomaly}.  
Suppose that we change $\theta_i$ to $\theta_i + w_i$ for some fixed value of $i$, with $w_i$ a constant, leaving the other Wilson lines unchanged. 
Then, the anomaly shifts by
\begin{equation}
\frac{\hbar}{2\pi} \left( + \sum_{j < i} w_i  f_{a}{}^{bc} t_{b;j} t_{c;i} - \sum_{i < j} w_i  f_{a}{}^{bc} t_{b;i} t_{c;j} \right) v \;.  
\end{equation}  
Using the fact that $\sum_{j} t_{j;a} v = 0$, we can rewrite this as 
\begin{align}
\frac{\hbar}{2\pi} \left(  w_i  f_{a}{}^{bc} t_{b;i} t_{c;i} \right) v = \frac{\hbar\, \sh^\vee}{\pi}  w_i t_{a;i} v \;.    
\end{align}  
This is the same shift we find if we move the $i^{th}$ Wilson line by $w_i$ in the $z$ plane, up to a factor of the dual Coxeter 
number $\sh^\vee$ which arises from the framing anomaly.\footnote{If $\sh^\vee=0$, then shifting the angles has no effect and we have to try to cancel
the anomaly by shifting the values of $z$.  For simple Lie groups, $\sh^\vee$ is always positive, but there actually are interesting solutions of the Yang-Baxter equation for supergroups with $\sh^\vee=0$ \cite{Beisert}.}

Comparing the two formulas makes it clear that if it is possible to cancel the anomaly by adjusting the angles, then it is also possible to do
so by shifting the relative positions in the $z$-plane.  The converse is not quite true.  In our computation of the anomaly, we have made an assumption
about the cyclic order of the Wilson lines in the $xy$ plane.  This assumption only allows us to vary the angles while preserving certain inequalities.
In some cases, the anomaly can be canceled by shifting the $z_i$ but not by varying the angles, because the relevant inequalities would be violated.

\subsection{Indecomposable Vertices}\label{irred}

Before discussing specific examples, let us discuss a useful condition that can be satisfied by a classical vertex.

 For each $i$, the states $t_{a;i}v$ transform
in the adjoint representation of $G$, since $v$ is $G$-invariant.  This gives $n$ copies of the adjoint representation, but
$G$-invariance of $v$ is equivalent to one relation between them:
\be\label{morx}\sum_{i=1}^nt_{a;i}v=0 \;. 
\ee
Thus, these states form at most $n-1$ linearly independent copies of the adjoint representation.  We will call a vertex \emph{indecomposable} if
there are no further relations between the states $t_{a;i}v$, so that the number of linearly independent
copies of the adjoint representation made in this way is precisely $n-1$.
The motivation for the terminology is that if there is an additional relation, one can show by general considerations of group theory that we can decompose the set of representations
 into disjoint subsets $R$ and $S$  with the feature that the operators $\sum_{r\in R} t_{a; r}$ and $\sum_{s\in S} t_{a;s}$ each annihilate the element  $v$. This means that  $v$ is the tensor product (or a sum of tensor products) of invariant
 elements in $\otimes_{r\in R} V_{r}$ and in $\otimes_{s\in S} V_{s}$. Thus, our vertex decomposes as a tensor product of two vertices, which can be shifted
 to different values of $z$ and analyzed independently.
 
 Accordingly, it is reasonable to restrict our attention to indecomposable vertices.
 An indecomposable vertex has at least $n-1$ possible anomalies (since the $t_{a;i}v$ provide $n-1$ copies of the adjoint representation in
 $\h V=V_1\otimes V_2\otimes \cdots \otimes V_n$).   If there are only these $n-1$ linearly independent copies of the adjoint in $\h V$, then by shifting the $n-1$
 relative positions of the $n$ Wilson lines in the $z$-plane, there is a unique way to cancel the anomaly.   Indeed, anomalies of the form
 $t_{a;i}V$ are precisely the ones that can be eliminated by shifting the values of $z$.  
If certain inequalities are obeyed, the anomalies can also be cancelled by shifting the relative angles rather than the values of $z$.   
 
If the number of copies of the adjoint representation in $\h V$ is actually greater than $n-1$, one would expect that generically anomaly cancellation is 
 not possible. 

\subsection{Vertices Constrained by Symmetries}\label{consym}
We will describe various concrete examples of anomaly-free quantum vertices.\footnote{These detailed examples will not be needed in the rest of the present paper,
though some of them play an important role in the companion paper \cite{Part2}.} 
  The simplest examples, which do not require any computations, 
  arise when there are enough symmetry constraints to determine the angles between the Wilson lines and ensure that the anomaly vanishes.

For a simple and also useful case, 
suppose that  $V_1,\dots,V_n$ are all the same representation $V$, and consider a vertex associated to a $G$-invariant vector 
$v \in V^{\otimes n}$ that is either cyclically 
invariant or cyclically anti-invariant (in other words, assume that it either is invariant or changes sign under a cyclic permutation of the $n$ copies of $V$).
Such a  vertex is anomaly-free, assuming that the angles between successive Wilson lines are equal, so as to respect the symmetry. (Fig. \ref{figure_network} has been
drawn with equal angles, so it is invariant under a $2\pi/n$ rotation and  potentially represents a vertex with cyclic symmetry or antisymmetry.) The proof is simple.   Since the anomaly depends linearly on $v$, it is  cyclically invariant or anti-invariant if $v$ is. By assumption, there are no cyclically invariant or anti-invariant copies of the adjoint representation in $\h V$, so the anomaly is zero.

In most cases, a vertex with cyclic symmetry or antisymmetry has a further symmetry.  This happens because a cyclically
symmetric configuration of $n$ lines meeting at equal angles in the plane,
as
 in Fig. \ref{figure_network}, is actually invariant under suitable reflections of the plane.  In the gauge theory under study in the present paper, 
a reflection of the $xy$ plane is a symmetry if accompanied by $z\to -z$.  The latter is also a symmetry of ordinary Wilson lines (associated to representations of $\g$,
not $\g[[z]]$) that are supported at $z=0$.   So as long as the vector $v\in V^{\otimes n}$ used in constructing the vertex
 is either even or odd under the reflection symmetry (all examples we will consider will have this property), the corresponding cyclically symmetric classical vertex actually has dihedral symmetry, generated by $2\pi/n$
rotations and also certain reflections.   If the vertex is anomaly-free because of the conditions stated in the last paragraph, it will automatically possess the dihedral symmetry.

We will now describe explicit examples of anomaly-free vertices that are cyclically invariant or anti-invariant (and thus also dihedrally invariant).  
In each case, we have to first make sure that
the representation $V$ that we want to use is itself anomaly-free, in other words that there is a quantum Wilson line in this representation.
For this, we will use the criterion $(\dagger)$ that was stated at the end of section \ref{netov}: any $G$-invariant map from $\wedge^2\g$ to
$\mathrm{End}(V)=V\otimes V^\ast$ factors through the adjoint representation, or equivalently the only irreducible representation of $\g$ that
appears both in $\wedge^2\g$ and in $V\otimes V^\ast$ is the adjoint representation.

Here are some examples:
\begin{enumerate}
\item  $V$ is the fundamental representation of $SL_n$, which satisfies condition $(\dagger)$, and $v \in V^{\otimes n}$ is the  essentially unique\footnote{In such a statement, we always mean unique up to a constant multiple.} invariant tensor. There are $n-1$ copies of the adjoint in $V^{\otimes n}$, so this vertex can be quantized. Since $v$ is cyclically invariant or anti-invariant (depending on whether $n$ is odd or even), the vertex can be quantized so the Wilson lines all have the same value of $z$ and the angle between them is $2 \pi / n$.
\item Take $V$ to be the $\mathbf{7}$ of $G_2$, which satisfies condition $(\dagger)$, and take 
$$
v \in \wedge^3 \mathbf{7} \subset  \mathbf{7}^{\otimes 3}
$$
to be the essentially unique invariant tensor.  Using the tables on p.\ 298 of \cite{Ramond}, one finds that every map from $\wedge^2 \mathbf{14}$ to $\mathbf{7} \otimes \mathbf{7}$ factors through $\mathbf{14}$.  This implies that the Wilson line associated to $\mathbf{7}$ can be quantized. Further, the adjoint appears precisely twice in $\mathbf{7}^{\otimes 3}$.  Therefore, the vertex associated to the tensor $v$ can be quantized with angles $2\pi/3$ between the Wilson lines.
\item Consider the three $8$ dimensional representations $\mathbf{8}_v$, $\mathbf{8}_c$, $\mathbf{8}_s$ of $\op{Spin}(8)$, which are permuted by triality.  For each representation, criterion $(\dagger)$ holds, so that there are no anomalies to quantizing the corresponding Wilson lines.  Let 
$$
v \in \mathbf{8}_v \otimes \mathbf{8}_c \otimes \mathbf{8}_s
$$
be the essentially unique invariant tensor. Note that $v$ is also invariant under triality, together with a cyclic permutation of the representations.  There are only two copies of the adjoint in $\mathbf{8}_v \otimes \mathbf{8}_c \otimes \mathbf{8}_s$, so this vertex can be quantized.  
If the Wilson lines have relative angles $2\pi/3$, then a rotation through $2 \pi / 3$ together with an application of the triality symmetry of $\op{Spin}(8)$ is a symmetry of the configuration.  This tells us that the only consistent quantization is the one where the Wilson lines have relative angles $2 \pi/3$. 
\item  Take $V$ to be the $\mbf{26}$ of $F_4$.  Using table 45 of \cite{Slansky}, we find that this representation satisfies condition $(\dagger)$ and so the corresponding Wilson line can be quantized.  We consider the vertex associated to the essentially unique invariant tensor $v \in \Sym^3 \mbf{26}$.  Again using table 45 of \cite{Slansky}, one can check that there are only two copies of the adjoint in $\mbf{26}^{\otimes 3}$.  Therefore this vertex quantizes, with angles $2 \pi /3$ between adjacent lines.
\item Take $V$ to be the $\mbf{27}$ of $E_6$. Table 48 of \cite{Slansky} implies that this representation satisfies condition $(\dagger$). (See section \ref{vesix}.)
 We take $v \in \Sym^3 \mbf{27}$ to be the unique invariant element.  Again using table 48 of \cite{Slansky}, we find that there are only two copies of the adjoint in $\mbf{27}^{\otimes 3}$.  Therefore this vertex quantizes, with angles $2 \pi /3$ between adjacent lines.
\end{enumerate}

\subsection{A General Formula for the Angles at a Trivalent Vertex}\label{trivgen}

Now we turn our attention to general trivalent vertices.  Assuming that only two copies of the adjoint occur in $\h V=V_1\otimes V_2\otimes V_3$,
a classical trivalent vertex can be quantized.  Moreover, it is possible to find a simple general formula for the relative angles that are required.  
(We will express our results in terms of angles rather than in terms of shifts in $z$ because that corresponds to a simpler classical picture, but
when some relative angles come out to be negative and thus inconsistent with an assumed cyclic ordering of the vertices, one can
use the alternative approach in terms of shifting the relative values of the $z_i$.)

Suppose that $V_{\rho_1}$, $V_{\rho_2}$, $V_{\rho_3}$ are  three irreducible highest weight representations of a group $G$ of highest weights $\rho_1,\rho_2,\rho_3$, and suppose that the corresponding Wilson lines can be quantized.
 Consider a classical vertex associated to an invariant tensor $v \in V_{\rho_1} \otimes V_{\rho_2} \otimes V_{\rho_3}$.
As we have seen, if there are exactly two copies of the adjoint representation in $V_{\rho_1} \otimes V_{\rho_2} \otimes V_{\rho_3}$, then this vertex can
be quantized.

Suppose the Wilson lines are arranged in the plane with cyclic order $V_{\rho_1}, V_{\rho_2}, V_{\rho_3}$.  Let $\theta_{12}$, $\theta_{23}$, $\theta_{31}$ be the angles between the Wilson lines.  We will derive a formula for the angles $\theta_{ij}$. 

Let $c(\rho_i)$ denote the action of the quadratic Casimir of $\mf{g}$ on $V_{\rho_i}$. We define
\begin{align} 
\begin{split}
\label{beta_def}
\beta_{1} &=  c(\rho_1) - c(\rho_2) - c(\rho_3)\;, \\ 
\beta_{2} &= c(\rho_2) - c(\rho_3) - c(\rho_1)\;,\\
\beta_{3} &= c(\rho_3) - c(\rho_1) - c(\rho_2)\;.
\end{split}
\end{align}
We will show that the angles between the three Wilson lines are given by the formula
\begin{align}
\begin{split}
\theta_{12} &=  \pi - \pi \frac{ \beta_1 \beta_2  } {\beta_{1} \beta_{2} + \beta_{1} \beta_{3} + \beta_{2} \beta_{3}}\;, \\
\theta_{23} &=  \pi - \pi \frac{ \beta_2 \beta_3  } {\beta_{1} \beta_{2} + \beta_{1} \beta_{3} + \beta_{2} \beta_{3}}\;, \\ 
\theta_{31} &= \pi  - \pi \frac{ \beta_{1} \beta_{3} } {\beta_{1} \beta_{2} + \beta_{1} \beta_{3} + \beta_{2} \beta_{3}}\;. 
\end{split}
\label{theta_ans}
\end{align}

The derivation of this formula is as follows.  The anomaly to quantizing the vertex vanishes if
\begin{equation}
( (\theta_{12} - \pi) f_{a}{}^{bc} t_{b;1} t_{c;2} + (\pi-\theta_{31}) f_{a}{}^{bc} t_{b;1} t_{c;3} + (\theta_{23} - \pi) f_{a}{}^{bc} t_{b;2} t_{c;3}  ) v = 0 \;.\label{equation_trivalent}  
\end{equation}
This equation should hold for every value of $a$. 

Let us choose our basis $t_a$ to be orthonormal with respect to the chosen invariant pairing on the Lie algebra $\mf{g}$.   Applying the operator $t_{a;1}$ to \eqn (\ref{equation_trivalent}) and summing over $a$  we find
\begin{equation}
 \left(( \theta_{12} - \pi) f^{a}_{bc} t_{a;1} t_{b;1} t_{c;2} + (\pi-\theta_{31}) f^{a}_{bc}t_{a;1} t_{b;1} t_{c;3} + (\theta_{23} - \pi)f^{a}_{bc}t_{a;1} t_{b;2} t_{c;3}  \right) v = 0 \;. 
\end{equation}
Since
\begin{equation}
\sum_a f^a_{bc}t_{a;1} t_{b;1} = 2\sh^\vee t_{c;1} \;,
\end{equation}
we can rewrite this equation as 
\begin{equation}
\left(( \theta_{12} - \pi)\sh^\vee t_{c;1} t_{c;2} + (\pi-\theta_{31}) \sh^\vee t_{c;1} t_{c;3} + (\theta_{23} - \pi) f^{a}_{bc}t_{a;1} t_{b;2} t_{c;3}  \right) v = 0 \;. 
\end{equation}
Next, since 
\begin{equation}
 t_{c;3} v = - t_{c;1} v - t_{c;2} v    \;,
\end{equation}
we have
\begin{align} \begin{split}
f_{abc} t_{a;1} t_{b;2} t_{c;3} v &= - f_{abc} t_{a;1} t_{b;2} (t_{c;1} + t_{c;2})\\ 
        &= f_{acb} t_{a;1} t_{c;1} t_{b;2} - f_{bca} t_{a;1} t_{b;2} t_{c;2} \\ 
        &= t_{b;1} t_{b;2}\sh^\vee  - t_{a;1} t_{a;2} \sh^\vee\\ 
        &= 0 \;. 
\end{split}
\end{align}
Thus, our equation becomes
\begin{equation}
\left(( \theta_{12} - \pi)\sh^\vee t_{c;1} t_{c;2} + (\pi-\theta_{31}) \sh^\vee t_{c;1} t_{c;3} \right)v = 0 \;.
\end{equation}
Now,
\begin{equation}
\sum_c t_{c;1} t_{c;2} = \sum_c \tfrac{1}{2} (t_{c;1} + t_{c;2})^2 - \tfrac{1}{2} t_{c;1}^2 - \tfrac{1}{2} t_{c;2}^2 \;.
\end{equation}
Let $c(\rho_i)$ denote the eigenvalue of the quadratic Casimir $\sum t_a^2$ on the representation $V_{\rho_i}$. The operator $\sum_a t_{a;i}^2$ acts on $V_{\rho_1} \otimes V_{\rho_2} \otimes V_{\rho_3}$ by $c(\rho_i)$.

Acting on an element $v$, we have
\begin{align}
\begin{split}
\sum_c t_{c;1} t_{c;2} v &= \sum_c \tfrac{1}{2} (t_{c;1} + t_{c;2})^2 v - \tfrac{1}{2} t_{c;1}^2 v  - \tfrac{1}{2} t_{c;2}^2 v \\
        &= \tfrac{1}{2}t_{c;3}^2 v  - \tfrac{1}{2} t_{c;1}^2 v  - \tfrac{1}{2} t_{c;2}^2v  \\
        &= \tfrac{1}{2}c(\rho_3) v - \tfrac{1}{2} c(\rho_1)v - \tfrac{1}{2} c(\rho_i) v \;. 
\end{split}        
\end{align}
Thus (after dividing by $\sh^\vee/2$),    our equation becomes
\begin{equation}
(\theta_{12} - \pi)(c(\rho_3) - c(\rho_1) - c(\rho_2) )    = (\theta_{31}- \pi)(c(\rho_2) - c(\rho_1) - c(\rho_3)) \;.
\end{equation}
Similar arguments  give us two additional equations, which are the cyclic permutations of the equation we have just derived:
\begin{align} 
\begin{split}
 (\theta_{12} - \pi)(c(\rho_3) - c(\rho_1) - c(\rho_2) )&=  (\theta_{23} - \pi)(c(\rho_1) - c(\rho_2) - c(\rho_3))\; , \\ 
(\theta_{12} - \pi)(c(\rho_3) - c(\rho_1) - c(\rho_2) )   &= (\theta_{31}- \pi)(c(\rho_2) - c(\rho_1) - c(\rho_3)) \;.  
\end{split}
\end{align}
Evidently, one of these equations is redundant.

In terms of $\beta_i$ defined in \eqn \eqref{beta_def},
our equations become
\begin{equation} 
(\theta_{12} - \pi) \beta_{3} = (\theta_{23} - \pi)\beta_{1} 
\end{equation}
plus its cyclic permutations. 
Their unique solution is given by \eqref{theta_ans}.

As a consistency check, note that
$$
\theta_{12} + \theta_{23} + \theta_{31} = 2 \pi \;. 
$$
Note also that if the three representations are the same, then the angle between any two Wilson lines is $2 \pi /3$, as expected for symmetry reasons.

Let us specialize the formula to the case that the two representations $V_{\rho_1}$, $V_{\rho_2}$ are the same (or at least have the same eigenvalue of the quadratic Casimir).  In that case, letting $c(\rho) = c(\rho_1) = c(\rho_2)$, we have
\begin{align}
\begin{split}
\beta_{1} &= \beta_2 = - c(\rho_3) \;,\\
\beta_{3} &= c(\rho_3) - 2 c(\rho) \;. 
\end{split}
\end{align}
The angles between the Wilson lines become
\begin{align}
\begin{split}
\theta_{12} &= \pi - \pi \frac{c(\rho_3)}{4 c(\rho) -  c(\rho_3) }\;,\\
\theta_{31} &= \pi \frac{2 c(\rho) }{   4 c(\rho) - c(\rho_3)}\;, \\  
\theta_{23} &= \pi \frac{2 c(\rho) }{   4 c(\rho) - c(\rho_3)} \;. \label{formula_weights_angles_symmetric} 
\end{split}
\end{align}

As an example, let analyze the vertex connecting two copies of the fundamental representation of $\mf{sl}_n$ with the dual of the exterior square of the fundamental representation.  
 Let us normalize the quadratic Casimir so that its value on the fundamental representation $V$ is $1$ (the normalization plays no role in our formula).  Then, its value on $\wedge^2 V^\ast$ is $\tfrac{2(n-2)}{n-1}$.  If we take $V_{\rho_1}, V_{\rho_2}$ to be the two copies of the fundamental representation in the above calculation, then we have 
\begin{align}\begin{split}
\theta_{12} &= \pi - \pi \frac{\frac{2(n-2)}{n-1}}{4 - \frac{2(n-2)}{n-1} }
=  \pi \frac{2} {n}\;, \\ 
\theta_{31} &= \pi \frac{2}{4 - \frac{2 (n-2)}{n-1}} 
= \pi \frac{n-1}{n}\;, \\ 
\theta_{23} &= \pi \frac{n-1}{n}\;.  
\end{split}
\end{align}
This is a special case of a more general formula that we compute next. 

\subsection{\texorpdfstring{Trivalent Vertices Linking Fundamental Representations of $\mf{sl}_n$}{Trivalent Vertices Linking Fundamental Representations of sl(n)}}
We will describe trivalent vertices involving three of the fundamental representations of $\mf{sl}_n$, which are the  $k^{th}$ rank antisymmetric tensors
$\wedge^kV$, where $V$ is the fundamental representation and
$1\leq k\leq n-1$.   These representations are all quantizable (since in fact all representations of $\mf{sl}_n$ are quantizable), though criterion 
$(^\dagger)$ generally does not hold.

In what follows, it is useful to recall that, in an appropriate normalization, the eigenvalue $c(V)$ of the quadratic Casimir on a representation $V = V_{\rho}$ satisfies
\begin{equation}
c(V) = \frac{l(v)}{\op{dim}(V)} \; ,
\end{equation} 
where $l(v)$ is the Dynkin index of the representation.  Dynkin indices of various representations can be found in tables such as those in \cite{Slansky}. 

Let $V$ be the vector representation of $\mf{sl}_n$ so that $\wedge^k V$, $k = 1,\dots,n-1$ are the fundamental representations.  The Dynkin index of $\wedge^k V$ is $\binom{n-2}{k-1}$, so that
\begin{equation}
c(\wedge^k V) = \frac{\binom{n-2}{k-1}}{\binom{n}{k}} = \frac{k (n-k) }{n(n-1)} \;.
\end{equation} 
Consider three fundamental representations $\wedge^{k_i} V$ where $k_1 + k_2 + k_3 = n$. There is an invariant element of the tensor product of these representations coming from the map
\begin{equation}
\wedge^{k_1} V  \otimes \wedge^{k_2} V \otimes \wedge^{k_3} V \to \wedge^{n} V = \C \;. 
\end{equation}
The adjoint representation only appears twice in the tensor product of the three representations $\wedge^{k_i} V$. It follows that the vertex corresponding to the invariant tensor can be quantized. From formulae (\ref{theta_ans}), we can determine the angles.  

We set
\begin{equation} 
\beta_1 = c(\wedge^{k_1} V) - c(\wedge^{k_2} V) - c(\wedge^{k_2}V)
\end{equation}
plus cyclic permutation.  Then
\begin{align} \notag
 \beta_1 &=  \frac{1}{n(n-1)} \left( k_1 (n-k_1) - k_2(n-k_2) - k_3(n-k_3) \right) 
= \frac{-2 k_2 k_3}{n(n-1)}
\end{align}
plus cyclic permutations. Bearing in mind that the a scaling of all the $\beta_i$ will not affect the angles, we find
\begin{align}
\begin{split}
\theta_{12} &=  \pi \frac{k_1 + k_2 }{n}\;,  \\
 \theta_{23} &= \pi \frac{k_2 + k_3 }{n} \;, \\ 
\theta_{31} &=  \pi \frac{k_1  + k_3 }{n}  \;.
\end{split}
\end{align} 

\subsection{\texorpdfstring{Vertices Related to $\Gamma$-Matrices}{Vertices Related to Gamma-Matrices}}\label{gamma}
Consider the group $\op{Spin}(n)$ where $n$ is even. Let $V$ denote the vector representation, and $S_+$, $S_-$ the two irreducible (complex) spin representations of opposite chirality. These are both of dimension $2^{\tfrac{n}{2} - 1}$.   We will use criterion $(^\dagger)$ to show that these representations
can be quantized.

If $n = 2 \mod  4$, then the vector representation appears in $S_+ \otimes S_+$ and in $S_- \otimes S_-$.  If $n= 0 \mod 4$, then the vector representation appears in $S_+ \otimes S_-$.  In each case, we can try to quantize the vertex linking the vector representation to two spin representations. 
 
In the calculation that follows, we will assume $n \ge 8$ to avoid any low-dimensional coincidences. 

Note that the endomorphisms of the direct sum $S_+ \oplus S_-$ make up the Clifford algebra $\op{Cl}_n$ built from the vector representation $V$.  As a representation of $\mf{so}(n)$, the Clifford algebra is isomorphic to direct sum of the exterior powers of $V$. 

To use condition $(\dagger)$ to show that the spinor representations can be quantized,  we need to classify maps
\begin{equation}
\wedge^2  \mf{so}(n)  \to \wedge^k V
\end{equation} 
for all values of $k$.   An exercise in classical invariant theory tells us that the only such maps that can possibly exist are when $k = 2,4,n-4,n-2$.  If $k = 2,n-2$ then $\wedge^k V$ is the adjoint representation, so such maps will not contribute anomalies. If $k = 4,n-4$ then the only map $\mf{so}(n)^{\otimes 2} \to \wedge^k V$ comes from the wedge product map
\begin{equation}
\mf{so}(n) \otimes \mf{so}(n) = \wedge^2 V \otimes \wedge^2 V \to \wedge^4 V
\end{equation} 
(noting that $\wedge^4 V = \wedge^{n-4} V$).  This map is symmetric, not antisymmetric, and so is not relevant to condition $(\dagger)$.
We conclude that  in all cases, the Wilson lines attached to the spin representations exist in the quantum theory. 

Next, let us analyze whether the vertex linking the vector representation $V$ with two copies of a spin representation can be quantized.  We will start with the case when $n = 2 \mod 4$, in which case the vector representation appears in $S_+ \otimes S_+$.  We need to show that there are exactly two copies of the adjoint representation in $V \otimes S_+ \otimes S_+$. 

To do this, we need to recall how to describe $S_+ \otimes S_+$ in terms of exterior powers of $V$ (this computation is familiar from the study of central extensions of supersymmetry algebras).  Since $n = 2 \mod 4$, $S_+$ and $S_-$ are dual representations.  The endomorphisms of $S_+ \oplus S_-$ which reverse the parity of a spinor are the odd elements of the Clifford algebra built from $V$.  This space of endomorphisms is $S_+ \otimes S_+ \oplus S_- \otimes S_-$. Thus, 
\begin{equation}
S_+ \otimes S_+ \oplus S_- \otimes S_- = \oplus_{k \text{ odd} } \wedge^k V \;.
\end{equation}
We can further decompose the right hand side of this equation to find
\begin{equation}
S_+ \otimes S_+ = V \oplus \wedge^3 V \oplus \dots \oplus \wedge^{n/2}_+ V \;,
\end{equation}
where $\wedge^{n/2}_+ V $ indicates those elements which are self-dual under the Hodge star operator. 

In $V\otimes S_+\otimes S_+=V\otimes (V \oplus \wedge^3 V \oplus \dots \oplus \wedge^{n/2}_+ V)$, the only copies of the adjoint representation 
are those  in $V \otimes V$ and in $V \otimes \wedge^3 V$, each of which contains one copy of the adjoint.  So overall there are precisely two copies, and the vertex connecting $V$, $S_+$ and $S_+$ can be quantized.

Next let us check whether the vertex can be quantized in the case that $n = 0 \mod 4$.  In this case $V$ appears in $S_+ \otimes S_-$ and the representations $S_+$, $S_-$ are self-dual.  We again find that $S_+ \otimes S_- \oplus S_- \otimes S_+$ is the space of odd elements in the Clifford algebra, so that
\begin{equation}
S_+ \otimes S_- = V \oplus \wedge^3 V \oplus \dots \oplus \wedge^{n/2-1} V\;. \label{equation_spm} 
\end{equation}   
The only copies of the adjoint that appear in $V \otimes S_+ \otimes S_-$ are those in $V \otimes V$ and $V \otimes \wedge^3 V$, so again the vertex can be quantized.

Now we know that the spinor representations and the vertices connecting spinor and vector representations can be quantized.  The final step is to calculate the angles between the lines at the vertex. To do this, we need to know the Dynkin indices of the representations $V, S_{\pm}$. The Dynkin index of $V$ is $2$.   We can calculate the Dynkin indices of $S_{\pm}$ as follows.

Recall that the Dynkin index is additive under direct sums of representations, and under tensor product there is the following formula:
\begin{equation}
l (R_1 \otimes R_2) = l(R_1) \op{dim} (R_2) + \op{dim} (R_1) l(R_2)\;,
\end{equation} 
where $R_1$, $R_2$ are representations, $l(R_i)$ is the Dynkin index and $\op{dim}(R_i)$ is the dimension.  Further, if $R$ is any representation, then
\begin{equation}
l(\wedge^k R) = \binom{\op{dim} (R) - 2}{k-1} l(R)\;.
\end{equation} 
The representations $S_+$ and $S_-$ are related by a diagram automorphism of the Dynkin diagram $D_{n/2}$ of the Lie algebra $\mf{so}(n)$.  Therefore they have the same Dynkin index.  Using the fact that $(S_+ \oplus S_-)^{\otimes 2}$ is the sum of all the exterior powers of $V$, we find 
\begin{align}
2^{n/2+2} l (S_{\pm}) &= 2\sum_{k = 1}^{n-1} \binom{n-2}{k-1} 
= 2^{n-1}\;.  
\end{align}
Thus, 
\begin{equation}
l(S_{\pm}) = 2^{\frac{n}{2}-3}\;. 
\end{equation}

Now let us compute the angles between the three Wilson lines. We label the lines where $V_1 = S_+$, $V_2 = S_-$ and $V_3 = V$ is the vector representation.   The quadratic Casimirs in each representation are the ratios of the Dynkin index to the dimension. They are
\begin{align}\begin{split}
c(V) &= \frac{2}{n} \;, \\
c(S_{\pm}) &= \frac{2^{\frac{n}{2} - 3}}{2^{\frac{n}{2} - 1}} = 2^{-2}\;. 
\end{split}
\end{align} 
According to formula (\ref{formula_weights_angles_symmetric}), we have
\begin{align}
\begin{split}
\theta_{12} &= \pi \frac{n-4}{n -  2   } \;,\\  
\theta_{31} &= \pi \frac{n }{ 2n - 4}\;,\\  
\theta_{23} &= \pi \frac{n }{   2n - 4}\;. 
\end{split}
\end{align}
Note that when $n  = 8$, the angle between any two lines is $2 \pi / 3$, which is consistent with what we determined earlier using the triality symmetry of the vertex in this case. When $n = 6$, the vertex we are considering is that relating two copies of the vector representation of $\mf{sl}(4)$ with the exterior square of the fundamental.  The formula for the angle for a vertex connecting three fundamental representations of $\mf{sl}_n$ agrees, in this case, with the formula given here. 

\subsection{\texorpdfstring{A Vertex Connecting Representations of $E_6$}{A Vertex Connecting Representations of E(6)}} \label{vesix}
Just for fun, let us use our formula to calculate the angles in a vertex associated to representations of the exceptional group $E_6$.  The fundamental representation of $E_6$ will be denoted by $\mbf{27}$, and its dual by $\br{\mbf{27}}$.  There are four representations of dimension $351$, which come in dual pairs. We will use the conventions of \cite{Slansky} and denote them by $\mbf{351}$, $\br{\mbf{351}}$, $\mbf{351}'$, $\br{\mbf{351}}'$.    The $\mbf{27}^3$ vertex was already considered in section 
\ref{consym}, so here we primarily consider a more elaborate example. (We also will complete the discussion of the $\mbf{27}^3$ by showing that the $\mbf{27}$ can
be quantized.)

According to table 48 of \cite{Slansky}, $\br{\mbf{351}}'$ appears once in $\mbf{27} \otimes \mbf{27}$.  Thus there is an invariant tensor in $\mbf{27} \otimes \mbf{27} \otimes \mbf{351}'$. 

We would like to quantize this to a vertex connecting three Wilson lines. To do this, we first need to show that the Wilson lines themselves quantize.  It is sufficient, according to condition
$(\dagger)$,  to show that any map from the exterior square of the adjoint representation to the endomorphisms of the $\mbf{27}$ or $\mbf{351}'$ factors through the adjoint representation.

According to table 48 of \cite{Slansky}, the exterior square of the adjoint representation decomposes as 
\begin{equation}
\wedge^2 \mbf{78} = \mbf{78} \oplus \mbf{2925} \;.
\end{equation}
To show that the $\mbf{27}$ and $\mbf{351}'$ quantize, we need to show that $\mbf{2925}$ does not appear in $\br{\mbf{27}} \otimes \mbf{27}$ or in $\br{\mbf{351}}'\otimes \mbf{351}'$.  Table 48 of \cite{Slansky} shows that it does not, so these representations quantize.

Next, to show that the vertex in $\mbf{27} \otimes \mbf{27} \otimes \mbf{351}'$ quantizes, we need to show that the adjoint representation appears precisely twice in this tensor product. Table 48 of \cite{Slansky} tells us that
\begin{equation}
\mbf{27}\otimes \mbf{27} = \br{\mbf{27}} \oplus \br{\mbf{351}} \oplus \br{\mbf{351}}' \;.
\end{equation}
If we tensor this with $\mbf{351}'$, table 48 of \cite{Slansky} tells us that the adjoint appears once in $\br{\mbf{351}}' \otimes \mbf{351}'$, once in $\br{\mbf{351}} \otimes \mbf{351}'$, and not at all in $\br{\mbf{27}} \otimes \mbf{351}$.  Therefore the vertex quantizes.

Next, let us compute the angles. Let us label the representations as $V_1 = \mbf{27}$, $V_2 = \mbf{27}$, $V_3 = \mbf{351}$.  Table 47 of \cite{Slansky} tells us that the Dynkin index of $\mbf{27}$ is $6$ and that of $\mbf{351}'$ is $6 \times 28$. The values of the quadratic Casimirs are $6/27$ and $6 \times 28/351$.  We can change the normalization so that the values of the quadratic Casimirs are $1$ and $28/13$.  The angles are 
\begin{align} 
\begin{split}
&\theta_{12} = \pi - \pi \frac{\frac{28}{13}}{4 - \frac{28}{13}} 
=  - \pi \frac{1}{6} \;, \\
&\theta_{23} = \theta_{31}  = \pi \frac{2 }{4 - \frac{28}{13}}
= \pi \tfrac{13}{12}\;.  
\end{split}
\end{align}
Since, in this example, $\theta_{12} < 0$, we have to shift the Wilson lines in the $z$-plane instead of just placing them at angles in the topological plane.

Many more examples can be analyzed in a similar way.

\subsection{\texorpdfstring{A $4$-Valent Vertex for the $\mathbf{56}$ of $E_7$}{A 4-Valent Vertex for the 56 of E(7)}}

The smallest representation of $E_7$ is the $\mbf{56}$.  It is a pseudoreal or symplectic representation, so there is an invariant antisymmetric form $\omega\in \wedge^2\mbf{56}$.
In addition, there is a completely symmetric quartic invariant $\psi\in \mathrm{Sym}^4\mbf{56}$.    It is natural to ask whether this vertex can be quantized, like the 
$\mbf{27}^3$ of $E_6$, which was one of our examples in section \ref{consym}.  The answer is that it can, though the proof is not as simple as for the $\mbf{27}^3$ vertex.
Both of these examples will be useful in \cite{Part2}.

A configuration of four Wilson lines cannot have full $S_4$ permutation symmetry.   The maximum possible symmetry is a dihedral subgroup $D_4$.
This is simply the symmetry group of four Wilson lines with equal relative angles $\pi/2$, say running along the $\pm x$ and $\pm y$ axes.  Dihedral symmetry was
discussed in section \ref{consym}.  In this case, the group $D_4$ is generated by rotations of the $xy$ plane by an angle $\pi/2$, along with a reflection that preserves
the given configuration of Wilson lines. Thus in all $D_4$ has eight elements.

We would like to understand the possible vertices that are $D_4$ invariant, and the possible anomalies that are compatible with the $D_4$ symmetry.    Let us first enumerate the 
$E_7$-invariant elements in $\mathbf{56}^{\otimes 4}$ that are also $D_4$-invariant.
Note that we can identify $\mathbf{56}$ with its dual, using the $E_7$-invariant symplectic form $\omega$.  We can therefore identify $E_7$-invariant elements of $\mathbf{56}^{\otimes 4}$ with maps of $E_7$ representations
\begin{equation} 
\mathbf{56}^{\otimes 2} \to \mathbf{56}^{\otimes 2} \;. 
\end{equation} 
According to the tables of \cite{Slansky} or \cite{McKay},  $\mathbf{56}^{\otimes 2}$ decomposes as a sum of $4$ distinct irreducible representations.  Therefore, there are $4$ $E_7$ invariant linear operators on $\mathbf{56}^{\otimes 2}$, given by the projectors onto these $4$ irreducible subrepresentations. Correspondingly, there are $4$ invariant tensors in $\mathbf{56}^{\otimes 4}$.

We can enumerate these  tensors as follows.  One of them is the completely symmetric invariant $\psi$ with which we began.  The other three are more elementary.
Let $e_i$ be a basis of $\mathbf{56}$;  in this basis, the antisymmetric form $\omega$ corresponds to a matrix $\omega^{ij}$.   The other three $E_7$ invariants in $\mbf{56}^{\otimes 4}$
 are given by the formulas
\begin{align} 
\begin{split}
& \omega^{ij} \omega^{kl} e_i \otimes e_j \otimes e_k \otimes e_l \;,\\
&\omega^{ik} \omega^{lj} e_i \otimes e_j \otimes e_k \otimes e_l \;, \\ 
&\omega^{il} \omega^{jk} e_i \otimes e_j \otimes e_k \otimes e_l  \;.
\end{split}
\end{align} 
Among these three tensors, there is a single linear combination which is invariant under $D_4$, namely
\begin{equation} \label{porm}
(\omega^{ij} \omega^{kl} + \omega^{jk} \omega^{li} ) e_i  \otimes e_j \otimes e_k \otimes e_l  \;.
\end{equation}
We conclude that there are a total of two dihedrally invariant tensors in $\mathbf{56}^{\otimes 4}$.   

Given the explicit form of the invariant (\ref{porm}), an elementary computation using the general formula (\ref{equation_network_anomaly}) for the anomaly
shows that a vertex constructed using this invariant has an anomaly.  This means that it will be possible to use this invariant as a counterterm to help
in canceling an anomaly.

We will prove that there is a unique linear combination of these two dihedrally invariant tensors which quantizes to a vertex linking the Wilson lines.   To show this,
we have to show that there is precisely one possible anomaly.

\subsubsection{Anomalies}
  Anomalies are proportional to $\partial_z \c$.    Since an element of $D_4$ that acts
as a reflection of the $xy$ plane also acts as $z\to -z$, changing the sign of $\partial_z\c$, the group theory invariant that multiplies $\partial_z \c$ in an anomaly
 is not $D_4$-invariant.  Rather, it is $D_4$ anti-invariant,
that is,  invariant under rotations in $D_4$ but odd under reflections.   We will show that the adjoint representation of $E_7$ (which is the $\mbf{133}$) occurs
precisely once in the $D_4$ anti-invariant part of $\mbf{56}^{\otimes 4}$.  Since the invariant (\ref{porm}) does have a nonzero anomaly, this implies that by adding a multiple
of (\ref{porm}) to the invariant $\psi$, one can construct an anomaly-free $\mbf{56}^{\otimes 4}$ vertex.  

As a first step, let us compute the part of $\mathbf{56}^{\otimes 4}$ that is anti-invariant under a dihedral subgroup $D_2 \subset D_4$.    There actually are two possible
embeddings of $D_2$ in $D_4$.  We make the following choice.  If the four Wilson lines run along the $\pm x$ and $\pm y$ axes, we consider a subgroup $D_2\cong
\Z_2\times \Z_2$ that is generated by a reflection that acts by $(x,y)\to (x,-y)$ and one that acts as $(x,y)\to (-x,y)$.  Thus if we label the four Wilson lines in
cyclic order as 1,2,3, and 4, one reflection acts by exchanging 1 and 3, keeping fixed 2 and 4, and the other exchanges 2 and 4, keeping fixed 1 and 3.
Thus, the part of $\mbf{56}^{\otimes 4}$ that is odd under each reflection is $\wedge^2\mbf{56}\otimes \wedge^2\mbf{56}$.  The two factors are associated to the pairs 13 and 24.

From the tables in \cite{Slansky} or \cite{McKay}, one has $\wedge^2\mbf{56}\cong \mbf{1}\oplus \mbf{1539}$.  So the $D_2$ anti-invariant part of 
$\mbf{56}^{\otimes 4}$ is $(\mbf{1}\oplus \mbf{1539})\otimes(\mbf{1}\oplus \mbf{1539})$.

Now we want to identify the part of this that is anti-invariant under $D_4$, not just under $D_2$.  So we have to consider the action of a $\pi/2$ rotation.   The $D_4$
anti-invariants are simply the $D_2$ anti-invariants that are invariant under a $\pi/2$ rotation.  However, there is a small surprise when we try to impose invariance
under a $\pi/2$ rotation on the above description of the $D_2$ anti-invariants.  

A $\pi/2$ rotation exchanges the two factors of $\wedge^2\mbf{56}$ that we used in the above analysis, but with an important minus sign.  This happens as follows.
We recall that the two factors of $\wedge^2\mbf{56}$ are associated respectively to the pair of Wilson lines 13 and 24.  A $\pi/2$ rotation maps 13 to 24, but it maps
24 to 31; replacing 31 with 13 acts as $-1$ on one of the two copies of $\wedge^2\mbf{56}$.   Thus the $D_4$ anti-invariant part of $\mbf{56}^{\otimes 4}$ is the antisymmetric part of
$(\mbf{1}\oplus \mbf{1539})\otimes(\mbf{1}\oplus \mbf{1539})$, or more explicitly it is\footnote{As a check on this, the dimension of 
 $\mbf{1539}\oplus \wedge^2\mbf{1539}$ is 1185030.  This is the right dimension for the $D_4$ anti-invariant part of $\mbf{56}^{\otimes 4}$.  For example, an exercise
 using the character table of $D_4$ (or based on elementary considerations)
 tells us that the $D_4$ anti-invariants in $(\C^d)^{\otimes 4}$ are of dimension $\frac{1}{8}\left(d^4-2d^3-d^2+2d\right)$.}
 $\mbf{1539}\oplus \wedge^2\mbf{1539}$.  
 
 From the tables of \cite{McKay}, one learns that the adjoint or $\mbf{133}$ of $E_7$ occurs precisely once in $\mbf{1539}^{\otimes 2}$.  This one occurrence actually is
 in $\wedge^2\mbf{1539}$, not in $\mathrm{Sym}^2\mbf{1539}$, because, as $\mbf{1539}$ is a real representation of $E_7$, the adjoint must occur at least once
 in $\wedge^2\mbf{1539}$.   So as claimed above, there is precisely one possible anomaly.

\section{Two-Loop Correction To Gauge Invariance}\label{section_2loop} 

\subsection{Preliminaries}\label{prelimones}

Consider a Wilson line in our four-dimensional theory in a general representation $V$.  If we choose a basis $t_a$ of the Lie algebra $\mf{g}$, the Wilson line is characterized classically by matrices 
\begin{equation}
t_{a,k} : V \to V \;,
\end{equation}
where, as before, the matrices $t_{a,k}$ tell us how $\partial_z^k A$ is coupled to the Wilson line.   If $V$ is a representation of $G$ and not of $\g[[z]]$, then $t_{a,k}=0$ for $k>0$.

At the classical level, gauge-invariance requires that these matrices must satisfy the commutation relations
\begin{equation}  \label{classones}
[ t_{a,k}, t_{b, l} ] = f_{ab}{}^c t_{c, k+l} \;.
\end{equation}
These relations receive quantum corrections; the condition for a Wilson operator to be anomaly-free at the quantum level is different
from eqn. (\ref{classones}).    We have seen in section \ref{interpretation} that a correction must occur at order $\hbar^2$. We gave one derivation of this statement, which relied on an analysis of the fusion of parallel Wilson lines. In a companion paper \cite{Part2}, we will give another derivation based on the 
RTT presentation of the Yangian algebra. 

Both derivations are a little indirect, and one might wish for a more direct one.  
In this section we will provide a direct derivation via Feynman diagrams.

\begin{figure}[htbp]
\centering{\includegraphics[scale=0.7]{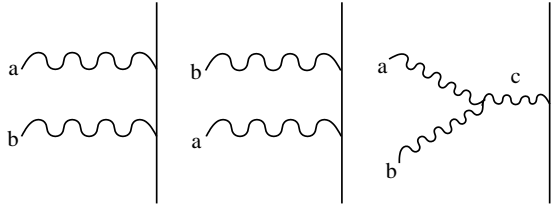}}
\caption{\small{The three tree-level diagrams describing coupling of a pair of gauge bosons to a charged particle or a Wilson line operator.
} }\label{tred}
\end{figure}

Some elementary remarks may help one anticipate what sort of diagrams will be relevant.  In an ordinary gauge theory with a finite dimensional Lie algebra $\mathfrak g$, how
does one usually see in perturbation theory that the matrices $t_a$ by which gauge fields $A^a$ couple to a charged particle or a Wilson line must satisfy the expected commutation relations?
At tree level, there are three diagrams (F   ig. \ref{tred}).  They are proportional respectively to $t_a t_b$, $t_b t_a$, and $f_{ab}^c t_c$, where $f_{abc}$ are the structure
constants that appear in the bulk $A^3$ vertex.  One tests these diagrams for gauge-invariance or BRST invariance
 by making a gauge transformation $A^a\to A^a+\d \c^a$ for one of the 
external gauge bosons, with $\c^a$ the ghost field.  A standard calculation shows that the sum of the three diagrams is gauge-invariant and BRST-invariant if and only if $[t_a,t_b]=f_{ab}^ct_c$.  The violation of BRST-invariance if this condition is not satisfied is bilinear in $A$ and $\c$, because the diagrams have two external bosons, one of
which is replaced by $\d\c^a$ when one tests for BRST-invariance.  Likewise,
in the analysis that follows, the anomalies come from diagrams with two external gauge bosons, and are bilinear in $A$ and $\c$.

After verifying gauge invariance at tree level, one should go on to determine whether quantum corrections to the coupling of two gauge bosons to a charged particle
or to a Wilson line preserve gauge invariance.    In conventional four-dimensional quantum field theory, the answer in general is that there is a problem: in certain theories,
one-loop triangle diagrams have an anomaly that irreparably spoils gauge invariance.  In the model that we will study, there is no problem in order $\hbar$, but we will
find an anomaly in order $\hbar^2$ -- roughly speaking, in two-loop order.  This anomaly, however, will not represent a complete breakdown of gauge invariance.  Rather,
it will represent a deformation of the gauge symmetry algebra -- a quantum correction to the classical commutation relations (\ref{classones}). 

Such a quantum deformation as the outcome of an analysis of anomalies is perhaps unfamiliar, for the following reason.  It 
does not occur in the case of a semi-simple gauge group, because the classification of
semi-simple groups is discrete.  The infinite-dimensional gauge algebra $\g[[z]]$ is, however, susceptible to continuous deformation.
 (Actually, because the anomalous
diagrams have more than one gauge boson attached to the Wilson line, what we will get is  a deformation of the universal enveloping algebra of $\g[[z]]$, not
a deformation of $\g[[z]]$ as a Lie algebra.)

In our theory, because $\g[[z]]$ will be deformed, the gauge invariance of a Wilson line depends on modifying the matrices $t_{a,n}$
 to provide a representation of the deformed algebra rather than
of $\g[[z]]$.  
If this is possible for a given Wilson line, we say that this Wilson line (or the associated representation of $\g[[z]]$) ``quantizes.''   A general Wilson line does not quantize
in this sense.  In section \ref{exam}, we describe an explicit counterexample.

\subsection{Anomalies on Wilson Lines}\label{anomwi}
We will first provide a general analysis of  what anomalies can occur in the coupling of gauge fields to a given Wilson line operator.   We will derive a cohomological interpretation of the anomaly. Combined with a cohomology calculation given in appendix \ref{app.anomaly_Wilson}, this  allows us to prove the following results. 
\begin{theorem}
Let  $V$ be a representation of a simple Lie algebra $\g$ which has no Abelian factors and which is not $\mf{sl}_2$.  
\begin{enumerate}
\item The first possible anomaly to quantizing the Wilson line arises at $2$ loops, that is in order $\hbar^2$.
\item Let $\wedge^2_0 \g$ be the kernel of the bracket map $\wedge^2 \g \to \g$.  If there are no non-trivial $G$-invariant maps  $\wedge^2_0 \g \to \op{End}(V)$, then there are no anomalies.
\item At two loops, the number of possible anomalies is exactly the dimension of the space of $G$-invariant maps  $\wedge^2_0 \g \to \op{End}(V)$.
\end{enumerate}

\end{theorem}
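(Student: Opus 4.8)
The plan is to phrase the anomaly as a Wess--Zumino consistent local cocycle, identify it with a class in a Lie algebra cohomology group, and then compute that group. First I would recall (exactly as in section \ref{vanishing}) that the obstruction to gauge-invariance of the Wilson line is a local operator supported on $K$, of ghost number one and bilinear in $A$ and $\c$, that it is annihilated by the BRST operator (consistency), and that a redefinition of the classical coupling shifts it by a BRST-exact term; hence the genuine anomalies are BRST-cohomology classes. Using the symmetries already exploited before — invariance under translations and rescalings of $z$ together with $\hbar$, and the rotation acting on $z,\bar z,\hbar$ — the anomaly cannot involve $A_x,A_y,A_{\bar z}$ or any $x,y,\bar z$ derivatives, and in order $\hbar^n$ it must carry $z$-weight $n$. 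Regarding the two external gluon insertions as the two arguments of a cochain, this identifies the order-$\hbar^n$ anomaly with the weight-$n$ part of the relative Lie-algebra cohomology $H^2(\g[z],\g;\op{End}V)$, where $\g[z]$ is the gauge algebra along the line, $\op{End}V$ carries the adjoint action of $\g=\g_0$ while $\g_{\ge1}$ acts trivially (since $V$ is a representation of $\g$); the degree two reflects the two external gluons, and the relative, $\g$-invariant structure reflects invariance under constant gauge transformations.

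Next I would establish part 1. A relative $2$-cochain of weight $w$ is a $\g$-invariant map $\wedge^2(\g z\oplus\g z^2\oplus\cdots)\to\op{End}V$ landing in weight $w$. For $w=1$ there is no such cochain, since reaching total $z$-weight one from two positive-weight legs is impossible, while $\g_0\wedge\g_1$ involves a $\g_0$ leg and is excluded by the relative condition. Hence the weight-one part of $H^2$ vanishes, the weight-zero part reproduces the classical relation (\ref{classones}), and the first possible anomaly occurs at weight two, i.e. at order $\hbar^2$ — two loops.

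For part 3 I would compute the weight-two part explicitly. A weight-two relative $2$-cochain is a $\g$-invariant map $\wedge^2\g_1\to\op{End}V$, i.e. an element of $\Hom_G(\wedge^2\g,\op{End}V)$, and every such cochain is automatically a cocycle because the weight-two relative $3$-cochains vanish (three positive-weight legs cannot sum to two). The weight-two relative $1$-cochains are $\g$-invariant maps $\g_2\to\op{End}V$, i.e. elements of $\Hom_G(\g,\op{End}V)$, and a short computation of the Chevalley--Eilenberg differential shows that such a cochain $\psi$ maps to $X\wedge Y\mapsto-\psi([X,Y])$, namely to maps factoring through the bracket $\beta\colon\wedge^2\g\to\g$. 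Therefore the weight-two part of $H^2$ is $\Hom_G(\wedge^2\g,\op{End}V)/\beta^*\Hom_G(\g,\op{End}V)$. Applying $\Hom_G(-,\op{End}V)$ to the short exact sequence $0\to\wedge^2_0\g\to\wedge^2\g\xrightarrow{\beta}\g\to0$ (with $\beta$ onto since $\g$ is perfect) and using $\op{Ext}^1_G=0$ by reductivity, I obtain precisely $\Hom_G(\wedge^2_0\g,\op{End}V)$, which proves part 3.

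Finally, the hard part is part 2: vanishing of the weight-two obstruction group must force vanishing of all higher obstructions as well. This I would handle via the cohomology calculation of appendix \ref{app.anomaly_Wilson}, computing $H^2(\g[z],\g;\op{End}V)$ in all weights and showing that, for a simple $\g$ with no abelian factors and $\g\ne\mf{sl}_2$, the full obstruction is controlled by its weight-two piece $\Hom_G(\wedge^2_0\g,\op{End}V)$. The excluded cases are exactly those in which the sequence $0\to\wedge^2_0\g\to\wedge^2\g\to\g\to0$ degenerates — for $\mf{sl}_2$ the bracket is an isomorphism so $\wedge^2_0\g=0$, and abelian factors make $\beta=0$ — so the clean relation between the obstruction group and $\Hom_G(\wedge^2_0\g,\op{End}V)$ breaks down and those cases must be treated separately. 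I expect this all-orders control of the higher obstructions, rather than the weight-two identification itself, to be the genuine technical heart of the proof.
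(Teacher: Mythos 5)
Your cohomological framing and your arguments for parts 1 and 3 are sound and essentially follow the paper's route: anomalies are degree-two Chevalley--Eilenberg classes valued in $\op{End}(V)$, graded by $z$-weight; weight one is empty in the relative complex, every weight-two relative cochain is closed because there are no weight-two $3$-cochains, and the coboundaries are exactly the maps factoring through the bracket. Your identification of the quotient with $\Hom_G(\wedge^2_0\g,\op{End}(V))$ via the sequence $0\to\wedge^2_0\g\to\wedge^2\g\xto{\beta}\g\to0$ and semisimplicity is a clean repackaging of the paper's version (which instead notes that any invariant map $\wedge^2\g\to\g$ is a multiple of the bracket, and that the coboundary of an invariant map $z^2\g\to\op{End}(V)$ is precomposition with the commutator); note only that your silent replacement of absolute by relative cohomology is itself justified by Whitehead's lemmas $H^1(\g)=H^2(\g)=0$, which is where simplicity of $\g$ first enters. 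The genuine gap is part 2. You defer it to ``the cohomology calculation of the appendix'' without supplying the idea that makes it work, and it is emphatically not more of the weight counting used for parts 1 and 3. Concretely: a $G$-invariant weight-$k$ cocycle factors through the decomposition $\g\otimes\g=\Sym^2\g\oplus\g\oplus\wedge^2_0\g$, so part 2 reduces to proving $H^2_{(k)}(\g[z],\Sym^2\g)=0$ and $H^2_{(k)}(\g[z],\g)=0$ in every weight $k$. These vanishing theorems are the hard content; the paper obtains them from the Fishel--Grojnowski--Teleman results (the strong Macdonald conjecture) on $H^\ast(\g[z],\,\cdot\,)$ with coefficients in $\g[z^{-1}]$ and $\Sym^2(\g[z^{-1}])$, pushed through long exact sequences attached to module maps such as $0\to\g\to\g[z^{-1}]\xto{z}\g[z^{-1}]\to0$, plus a separate outer-automorphism argument for $\mf{sl}_n$ to dispose of the cubic invariant in $\Sym^3\g$. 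None of this, nor any substitute for it, appears in your proposal.

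Your explanation of the excluded case $\mf{sl}_2$ is also backwards, and the error is diagnostic of what is missing. Since $\wedge^2_0\mf{sl}_2=0$, your reading would make part 2 vacuously true for $\mf{sl}_2$: no invariant maps, hence no anomalies. The truth is the opposite: one finds $H^2_{(3)}(\mf{sl}_2[z],\Sym^2\mf{sl}_2)\cong\C$, so there is a potential three-loop anomaly whenever $\op{End}(V)$ contains the five-dimensional irreducible representation, and it has nothing to do with $\wedge^2_0\g$. For $\g\neq\mf{sl}_2$ this weight-three class in the $\Sym^2\g$ sector is killed only by showing that a certain map $\C\to H^2_{(2)}(\g[z],\wedge^2\g)$, arising in the long exact sequence of an auxiliary module built from $\Sym^2(\g[z^{-1}])$, is injective; for $\mf{sl}_2$ the target degenerates to $H^2_{(2)}(\g[z],\g)=0$ and the class survives. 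This is precisely the ``all-orders control'' you correctly flagged as the technical heart, and it cannot be extracted from formal manipulation of the sequence $0\to\wedge^2_0\g\to\wedge^2\g\to\g\to0$; it requires the vanishing theorems above.
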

These results imply, for instance, that the vector representations of the classical groups $SL_n$, $SO_n$ and $Sp_{2n}$ all quantize.    

Let us analyze the possible forms of the anomalies to quantizing a Wilson line.   By the anomaly of a Wilson line, we mean its BRST variation.  This will always be
the integral over the Wilson line of a local operator of ghost number 1.  Thus, this operator will be linear in the ghost field $\c$ (and its derivatives) with {\it a priori}
an arbitrary polynomial dependence on the gauge field $A$ (and its derivatives).

As usual, we work on $\R^2 \times C$, with the usual real coordinates $x,y$ and complex coordinate $z$.  We take a Wilson line to be parametrized by $x$, at $y=z=0$.  Such a Wilson line is invariant under the classical symmetries that independently rescale $x$ and $y$.  The anomaly will have the same invariances.
This implies that the anomaly must be linear in $A_x$ and independent of $A_y$, and cannot have any $x$ or $y$ derivatives.

Because the classical symmetry which scales the $x$ and $y$ directions is unbroken by the coupling of a Wilson line, any anomaly\footnote{In general, given a quantum system defined up to order $k$ in $\hbar$, the anomaly to quantizing the system to order $k+1$ in $\hbar$  must respect any symmetries present in the system at order $k$. To see this, we note that if we change the order $k+1$ counter-terms, the anomaly changes by a BRST-exact term. Therefore the cohomology class of the anomaly only depends on the theory up to order $k$, and is preserved by any symmetries present in the system up to order $k$.}  must respect this symmetry.    Since the anomaly must be an integral over the Wilson line, it must involve only the $x$-component $A_x$ of the connection, and can not involve the $y$-component or any $x$ and $y$ derivatives.  Otherwise, it would not respect the scaling symmetries in the $xy$ plane.  For the same reason, the anomaly must be at most a linear function of $A_x$.  

This argument does not exclude the possibility that the anomaly can include some polynomial in $A_{\zbar}$, or its $\zbar$ derivatives.  Let us now see why these can not occur (in section \ref{vanishing} we have performed a similar analysis for the anomaly to the existence of a vertex).  The classical action functional of the theory is invariant under the symmetry which scales $z$ $\zbar$, and $\hbar$  by a real parameter $c$. It is also invariant under the symmetry by which $z$ and $\hbar$ are rotated through an angle $e^{2 \pi i\theta}$,  and where $\zbar$ is therefore rotated by $e^{-2\pi i \theta}$.  Any anomaly must be invariant under these symmetries (where we include the factor of $\hbar^k$ naturally present in a $k$-loop anomaly).  

The most general anomaly at $k$ loops will involve $n_1$ copies of $A_{\zbar}$, $n_2$ $\zbar$ derivatives, and $n_3$ $z$-derivatives.  A copy of $A_{\zbar}$ in the anomaly behaves the same with respect to these symmetries as a $\zbar$-derivative, as it involves contracting the connection $A$ with the vector field $\partial_{\zbar}$.  The symmetries under scaling and rotation in the $z$-plane tell us that
\begin{align}
\begin{split}
n_1 + n_2 + n_3 - k &= 0 \;, \\
-n_1 - n_2 + n_3 - k &= 0 \;.
\end{split}
\end{align}
From this we see that $n_1 = n_2 = 0$ and $n_3 = k$.  Therefore, the anomaly can not involve the $\zbar$-component of the connection or any $\zbar$ derivatives, and the number
of $z$ derivatives in a $k$-loop anomaly must be precisely $k$.

As a further constraint, note that if the anomaly does not depend at all on $A_x$, and so depends only on the ghost field $\c$, then it must be given by an integral of $\partial_x \c$ (in view
of the scaling symmetry in the $x$ direction). We can write  $\partial_x \c=D_x\c-[A_x,\c]$, where $D_x\c$ is the covariant derivative of $\c$. Since insertion of $D_x\c$ in a Wilson operator gives a total derivative that would not contribute, we can replace $\partial_x c$
with $-[A_x,\c]$, and thus it is not necessary to consider terms that are independent of $A_x$.

The constraints we have considered so far tell us that the most general $k$-loop anomaly is of the form
\begin{equation}
\sum_{k_1 + k_2 = k} \Theta_{a, k_1 , b, k_2} \int_{y = z = 0} \partial_z^{k_1} A^a_x  \partial_z^{k_2} \c^b  \d x \;, \label{equation_anomaly} 
\end{equation}
where for each value of $k_1,k_2$,  $\Theta^{a,k_1,b,k_2}$ is a $G$-invariant linear map
$$
\mf{g} \otimes \mf{g} \to \op{End}(V)\;,
$$ 
where $V$ is the representation of $G$ from which we build the Wilson line.

There is one more constraint. Since the anomaly is the BRST variation of the Wilson line, it is itself BRST-invariant.   Applying this constraint leads us to the equation
\begin{align}
\begin{split}
&\sum_{k_1 + k_2 = k} \Theta_{a, k_1 , b, k_2} \int_{y = z = 0} \partial_z^{k_1} (\d \c^a + f^a{}_{cd} \c^c A^d_x) \left( \partial_z^{k_2} \c^b \right) \d x \\ 
&+\sum_{k_1 + k_2 = k} \Theta_{d, k_1 , a, k_2} \int_{y = z = 0} \partial_z^{k_1} A_x^d\partial_z^{k_2}  \left( \frac{1}{2} f^{a}{}_{cb} \c^c \c^b \right) \d x  \\
&\qquad+ \sum_{k_1 + k_2 = k} [\rho_c, \Theta_{d, k_1 , b, k_2}]  \int_{y = z = 0} \c^c\left( \partial_z^{k_1}   A^d_x \right)\left( \partial_z^{k_2} \c^b \right) \d x = 0 \;
\label{equation_anomaly_gauge_invariant} 
\end{split}
\end{align}
In the third term in this equation, the operator $\rho_{a} : V \to V$ indicates the classical action of the operator $t_a$ on $V$.  The appearance of this term was be explained in detail in a similar context in section \ref{vanishing}.    

This equation can be separated into a term which is linear in $A$  and one which has no $A$-dependence.  Both of these terms must vanish.  Let us first analyze the term which is independent of $A$.  If we integrate by parts, and use the fact that the ghost field is an anti-commuting variable (i.e.\ a fermionic field of spin $0$), we find that this equation tells us
\begin{equation}
	\Theta_{a, k_1 , b, k_2}+ \Theta_{b,k_2,a,k_1} = 0 \;. \label{equation_anomaly_antisymmetry} 
\end{equation}

Next, let us analyze the term in eqn. (\ref{equation_anomaly_gauge_invariant}) which is linear in $A$.  We obtain equations by setting the coefficients of $\int \left( \partial_z^{k_0} \c^c\right)\left( \partial_z^{k_1} A^d\right)\left( \partial_z^{k_2} \c^b\right)$ to zero. These equations are 
\be
\begin{split}
\Theta_{a,k_0+k_1,b,k_2} f^a{}_{cd}  \binom{k_0+k_1}{k_0} - \Theta_{a,k_2 + k_1,c,k_0} f^a{}_{bd}\binom{k_1+k_2}{k_1} +\Theta_{d,k_1, a,k_0+k_2}f^{a}{}_{cb} \binom{k_0+k_2}{k_0} 
 \\
       + \delta_{k_0=0}[\rho_c,\Theta_{d,k_1,b,k_2}] - \delta_{k_2=0} [\rho_b,\Theta_{d,k_1,c,k_0}] = 0 \;. \label{equation_anomaly_cocycle} 
\end{split}
\ee
Note that this quantity is  totally antisymmetric under permutation of $(k_0,c),(k_1,d)$ and $(k_2,b)$, as follows from \eqn \eqref{equation_anomaly_antisymmetry}.
 
We can summarize this result as follows. Let us view the collection $\Theta_{a,k_1,b,k_2}$ as a linear map 
\begin{align}
\begin{split}
\Theta : \quad \wedge^2 \g[[z]] \quad &\to \op{End}(V) \;,\\
\vin \qquad \quad &  \quad\quad \vin \\
 (t_a z^{k_1}) \wedge(t_b z^{k_2}) &\mapsto k_1 ! k_2 !\Theta_{a,k_1,b,k_2} \;. \label{equation_cocycle_theta}
\end{split}
\end{align}
Let $C^\ast(\g[[z]],\op{End}(V))$ denote the Chevalley-Eilenberg cochain complex,
which in degree $n$ consists of linear maps $\wedge^n \g[[z]] \to \op{End}(V)$. 

Then,  \eqn \eqref{equation_anomaly_cocycle} is precisely the condition that $\Theta$ is a closed element of this Chevalley-Eilenberg  cochain complex. 

\subsubsection{Cancellation by Counter-Terms.} 
Let us now turn to analyzing when such an anomaly can be cancelled by a counter-term. Suppose that the first anomaly arises at $k$ loops and is given by the expression in \eqn \eqref{equation_anomaly} involving some $\Theta_{a,k_1,b,k_2}$ satisfying \ref{equation_anomaly_antisymmetry}.  

To try to cancel the anomaly, we change the coupling of the gauge field $A$ so that $\partial_z^k A^a$ is coupled by some operator $\rho_{a,k} \in \op{End}(V)$. We will assume that $\rho_{a,k}$ transforms under the adjoint representation of $G$ on $\op{End}(V)$.   Simple Lie algebras and their representations have a discrete classification and cannot
be deformed, so we will assume that $\rho_{a,k}$ vanishes for $k=0$.

The term $ \rho_{a,k}\int \partial_z^k A^a$ that we add to the action may not be gauge invariant at the classical level. The failure to be gauge invariant is given by the expression 
 \begin{equation}
\rho_{a,k} \int \partial_z^k \d \c^a + \rho_{a,k} f_{abc} \int \partial_z^k \left( \c^b A^c\right) +  [\rho_b, \rho_{a,k} ] \int \c^b \partial_z^k A^a \;. 
\end{equation}    
In order to cancel the anomaly at $k$ loops,  we need
\begin{equation}
\sum_{k_1 + k_2 = k} \Theta_{a, k_1 , b, k_2} \int_{y = z = 0} \partial_z^{k_1} A^a_x  \partial_z^{k_2} \c^b  \d x =   \rho_{a,k} f_{abc} \int \partial_z^k \left( \c^b A^c\right) +  [\rho_b, \rho_{a,k} ] \int \c^b \partial_z^k A^a \;. 
\end{equation} 
This can happen if and only if
\begin{equation}
\Theta_{a,k_1,b,k_2} = \rho_{c,k} f^{c}_{ba} \binom{k}{k_1} + [\rho_a, \rho_{b,k}] \;.  
\end{equation}
This is precisely the condition that the cocycle in $C^\ast(\g[[z]], \op{End}(V))$ associated to the anomaly (as in \eqn \eqref{equation_cocycle_theta}) is exact.

In sum, we have found that possible anomalies to coupling to a Wilson line, modulo counter-terms which can cancel these anomalies, are given by the Chevalley-Eilenberg cohomology group
\begin{equation}
H^2(C^\ast(\mf{g}[[z]], \op{End}(V)) \;.
\end{equation} 
 We will repeatedly use the fact that for $G$ a simple Lie group,  the Chevalley-Eilenberg cohomology (of any degree)
 with values in any representation can always be represented by a $G$-invariant cocycle, roughly by averaging
over a maximal compact subgroup of $G$.  (Of course, the expressions that we generate from Feynman diagrams in expanding around the trivial flat connection
will always be $G$-invariant.)  Note that Chevalley-Eilenberg cohomology is also called Lie algebra cohomology, and under that name often appears in analysis
of BRST cohomology in quantum field theory.

One nice consequence of the description of anomalies by Chevalley-Eilenberg cohomology
 is that we can understand what anomalies can exist by applying standard arguments from homological algebra. When translated into field-theory language,
the arguments would be quite complicated.  

For instance, the following is a relatively easy result: 
\begin{lemma} 
	For a Wilson line in a representation $V$ of a simple group $G$ (as opposed to a representation of $\g[[z]]$),  there are no one-loop anomalies which cannot be cancelled by the introduction of a counter-term. 

Further, all two-loop anomalies are equivalent to ones given by expressions of the form 
\begin{equation}
\sum \Theta_{a, 1 , b, 1} \int_{y = z = 0} \partial_z A^a_x  \partial_z \c^b  \d x \;,\label{equation_twoloop_form}
\end{equation}
where, as before, $\Theta_{a,1,b,1}$ is a linear operator on the vector space $V$ and is antisymmetric under the exchange of the $a,b$ indices.  Viewed as a linear map $\wedge^2 \mf{g} \to \op{End}(V)$, $\Theta$ can be assumed to be $G$-invariant.   

	Two-loop anomalies which factor through a map $\wedge^2 \mf{g} \to \mf{g}$ can be cancelled by a counter-term. 
\end{lemma}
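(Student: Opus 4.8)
The plan is to build everything on the identification, just established in this section, of anomalies modulo counterterms with the Chevalley--Eilenberg cohomology $H^2(C^*(\g[[z]], \op{End}(V)))$, where $\op{End}(V)$ is a module over $\g[[z]]$ through the evaluation map $\g[[z]]\to\g$ at $z=0$. The decisive structural feature is that $\g[[z]]=\bigoplus_{n\ge 0}\g z^n$ is graded by the power of $z$, the coefficient module $\op{End}(V)$ sits in weight $0$, and, as shown in the derivation leading to \eqref{equation_anomaly}, a $k$-loop anomaly is precisely a cochain of $z$-weight $k$. Hence the cohomology splits as a direct sum over loop order and I only have to compute $H^2$ in weights $1$ and $2$. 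I would carry out both computations with the Hochschild--Serre spectral sequence for the graded ideal $\mf{n}=z\g[[z]]$, whose quotient is $\g$, combined with the two Whitehead lemmas: for $\g$ simple (hence semisimple) and any finite-dimensional module $N$, one has $H^1(\g,N)=0$ and $H^2(\g,N)=0$. Since $\op{End}(V)$ is finite-dimensional and $G$ is simple, these vanishings apply, and (as noted in the text) every class has a $G$-invariant representative.

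For the one-loop statement I would truncate to $\g[z]/z^2=\g\ltimes\mf{a}_1$, where $\mf a_1=\g z$ is an abelian ideal isomorphic to the adjoint module; this truncation computes $H^*$ in weight $\le 1$. In $E_2^{p,q}=H^p(\g,\wedge^q\mf a_1^*\otimes\op{End}(V))$ the $\mf a_1^*$-legs carry all of the weight, so the weight-$1$ part of $H^2$ lies entirely in $E_2^{1,1}=H^1(\g,\mf a_1^*\otimes\op{End}(V))$, which vanishes by Whitehead's first lemma. Thus there is no one-loop anomaly, which is the first claim.

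For the two-loop statements I would truncate instead to $\g[z]/z^3=\g\ltimes\mf n$ with $\mf n=\mf a_1\oplus\mf a_2$, $\mf a_i=\g z^i$, the bracket $[\mf a_1,\mf a_1]=\mf a_2$ being the image of the Lie bracket $\beta:\wedge^2\g\to\g$, which is surjective because $\g$ is simple. A direct computation gives the weight-$2$ part of $H^2(\mf n,\op{End}(V))$ as $(\wedge^2\g^*/\beta^*\g^*)\otimes\op{End}(V)=(\wedge^2_0\g)^*\otimes\op{End}(V)$, where $\wedge^2_0\g=\ker\beta$. Feeding this in, the surviving contribution in weight $2$ and total degree $2$ is $E_2^{0,2}=H^0(\g,(\wedge^2_0\g)^*\otimes\op{End}(V))=\Hom_\g(\wedge^2_0\g,\op{End}(V))$; the $(1,1)$ term vanishes for weight reasons and the $(2,0)$ term by Whitehead's second lemma, while the only possibly nonzero differential $d_2:E_2^{0,2}\to E_2^{2,1}=H^2(\g,\mf a_1^*\otimes\op{End}(V))$ dies again by Whitehead. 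Because the surviving classes are represented by $\mf n$-cochains supported on $\mf a_1\wedge\mf a_1=L_1\wedge L_1$, every two-loop anomaly is cohomologous to one of the form \eqref{equation_twoloop_form} with $\Theta_{a,1,b,1}$ a $G$-invariant map antisymmetric in $a,b$, which is the second claim. The isomorphism sends such a $\Theta:\wedge^2\g\to\op{End}(V)$ to its restriction to $\ker\beta=\wedge^2_0\g$, so if $\Theta$ factors through a map $\wedge^2\g\to\g$ it restricts to zero on $\wedge^2_0\g$, represents the trivial class, and can be removed by a counterterm — the third claim. (For simple $\g$ the space $\Hom_G(\wedge^2\g,\g)$ is one-dimensional, spanned by $\beta$ itself, so ``factoring through $\wedge^2\g\to\g$'' is unambiguous.)

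I expect the main obstacle to be the weight bookkeeping in the spectral sequence: one must check that truncating $\g[[z]]$ to $\g[z]/z^{k+1}$ genuinely computes $H^2$ in weight $\le k$ (the kernel $z^{k+1}\g[[z]]$ contributes only in weight $\ge k+1$), and that the weight grading on $H^q(\mf n,\op{End}(V))$ is tracked so that exactly $E_2^{0,2}$ survives in weight $2$. The other delicate input, needed both for the identification $H^2_{\mathrm{wt}\,2}\cong\Hom_\g(\wedge^2_0\g,\op{End}(V))$ and for the surjectivity of $\beta$, is the simplicity of $\g$ (equivalently, the absence of abelian factors), without which Whitehead's lemmas and the one-dimensionality of $\Hom_G(\wedge^2\g,\g)$ fail. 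The remaining verifications — matching the abstract coboundary with the explicit counterterm formula following \eqref{equation_anomaly_cocycle} — are routine.
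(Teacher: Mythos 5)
Your proposal is correct, but it takes a genuinely different route from the paper's proof, so let me compare the two. The paper works with Lie algebra cochains \emph{relative} to the subalgebra $\g\subset\g[[z]]$: citing the splitting $H^k(\g[[z]],\op{End}(V))\iso\oplus_{i+j=k}H^i(\g)\otimes H^j((\g[[z]]\mid\g),\op{End}(V))$ and the vanishing of $H^1(\g)$ and $H^2(\g)$, it identifies absolute with relative cohomology in degrees $1$ and $2$, after which pure weight counting finishes the job: a relative $2$-cochain has weight at least $2$ (so no one-loop anomaly), a relative $2$-cocycle of weight exactly $2$ is precisely a $G$-invariant antisymmetric map $z\g\otimes z\g\to\op{End}(V)$ (the form \eqref{equation_twoloop_form}), and the third claim is an explicit one-line coboundary: the differential of a $G$-invariant $1$-cochain $z^2\g\to\op{End}(V)$ is exactly the composition of the commutator $\wedge^2(z\g)\to z^2\g$ with that map. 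You instead truncate to $\g[z]/z^{k+1}$ and run the Hochschild--Serre spectral sequence for the nilpotent ideal, invoking the Whitehead lemmas; the underlying input (simplicity of $\g$ plus the $z$-grading) is the same, since the relative/absolute comparison the paper quotes is essentially what the collapse of your spectral sequence establishes. What your route buys is strictly more: the edge-map isomorphism $H^2_{(2)}(\g[[z]],\op{End}(V))\iso\op{Hom}_G(\wedge^2_0\g,\op{End}(V))$, which in the paper is the content of the much harder Proposition \ref{proposition_cohomology}, proved in Appendix C using \cite{FGT} (though your argument covers only weight $2$, not all weights). What it costs is that your third claim rests on injectivity of the edge map rather than an explicit counterterm, and two steps need spelling out: (i) besides $d_2$ there is also $d_3\colon E_3^{0,2}\to E_3^{3,0}$, which is \emph{not} killed by Whitehead (simple $\g$ has $H^3(\g)\neq 0$) but is killed because its target sits in weight $0$ while your class sits in weight $2$; and (ii) to realize every class by a cochain supported on $\wedge^2(z\g)$ you must check that extending a $G$-invariant $\Theta\colon\wedge^2\g\to\op{End}(V)$ by zero yields a closed cochain of the full truncated algebra --- a short computation in which closedness turns out to be precisely the $G$-invariance of $\Theta$. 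Both are routine, consistent with the caveats you yourself flag, so the argument stands.
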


\begin{proof}
The proof of the lemma uses the notion of relative Lie algebra cochains of $\g[[z]]$ with respect to the subalgebra $\g$.  Relative Lie algebra cochains are $G$-invariant elements of $C^\ast(z\g[[z]], \op{End}(V))$, that is, $G$-invariant cochains which do not involve $\g$.  The relative cochains are equipped with the same Chevalley-Eilenberg differential as before.  We let $H^\ast((\g[[z]] \, | \, \g), \op{End}(V))$ denote the relative cohomology. General results on the cohomology of simple Lie algebras imply that there is an isomorphism
\begin{equation}
	\oplus_{i + j = k} H^i(\g) \otimes H^j((\g[[z]]\mid \g), \op{End}(V)) \iso H^{k}(\g[[z]], \op{End}(V)) \;.
\end{equation}
	Since $\g$ is simple, $H^\ast(\g)$ is zero in degrees $1$ and $2$, and $H^0(\g) = \C$. Therefore relative and absolute Lie algebra cohomology coincide in degrees $1$ and $2$.  

At weight $1$ under the $\C^\times$ action that scales $z$, the relative cohomology is zero in degrees $2$ and higher. This is simply because we cannot built a bilinear 
of weight $1$ from $z \g[[z]]$.  Therefore there are no anomalies of weight $1$. 

	The cohomology of weight $2$ under this $\C^\times$ action is the part that contributes to the two-loop anomaly. Relative $2$-cocycles of weight 2 must be given by a $G$-invariant bilinear and antisymmetric map $z \g \otimes z \g \to \op{End}(V)$.  Because each tensor factor involves one $z$, the corresponding two-loop anomaly  
is of the form given in \eqn \eqref{equation_twoloop_form} (with one $z$ derivative of $A$ and one of $\c$). So every two-loop anomaly is of this form.

Concerning the last statement in the lemma, we observe first that for simple $G$, any $G$-invariant map $\wedge^2\g \to \g$ is actually a multiple of the commutator map.
We do not know a unified proof of this statement, but for classical groups it follows from invariant theory,\footnote{That is, one explicitly considers an element of $\g$ as
a two-index tensor, and one considers the possible contractions of indices to make a $G$-invariant map from $\wedge^2\g$ to $\g$.} and for exceptional groups, it can
be verified by consulting tables (such as those in \cite{McKay}).
Given this, we need to show that if a two-cocycle
\begin{equation} 
\wedge^2 (z \g) \to \op{End}(V)
\end{equation}
factors through the commutator map $\wedge^2 \g \to \g$, then it is the coboundary of a $G$-invariant map $z^2 \g \to \op{End}(V)$.  But the coboundary of a $G$-invariant
map $z^2\g\to \op{End}(V)$ is given by the composition
	\begin{equation} 
		\wedge^2 (z \g) \to z^2 \g \to \op{End}(V) ,
	\end{equation}
	where the first map is the commutator.
	Therefore any two-cocycle that factors through the commutator is a coboundary.
\end{proof}

This lemma tells us that find that the first possible anomalies can arise at two loops, from diagrams with two external gauge fields.  

In appendix \ref{app.anomaly_Wilson}, we prove the following result (relying heavily on the results of \cite{FGT}). This result generalizes the previous lemma to all loops.
The proof is much more difficult.
\begin{proposition}\label{proposition_cohomology}
Let $\mf{g}$ be a simple Lie algebra which is not\footnote{The case of $\mf{sl}_2$ is slightly different. There we find that $H^2(\g[[z]], \g \otimes \g)$ is isomorphic to $H^2(\g[[z]], \Sym^2 \g)$, which is one-dimensional and in weight $3$ under the $\C^\times$ action scaling $z$. Since it is known by other methods that every representation of $\mf{sl}_2$ quantizes, we will avoid the special case of $\mf{sl}_2$ in what follows.} $\mf{sl}_2$. 
Let $\wedge^2_0 \mf{g}$ be the kernel of the Lie bracket map from $\wedge^2 \mf{g}$ to $\g$. (Note that there are no copies of the adjoint  in the representation $\wedge^2_0 \mf{g}$). Then the inclusion map
\begin{equation}
H^2(\g[[z]], \wedge^2_0 \g) \to H^2(\g[[z], \g \otimes \g)  
\end{equation}
is an isomorphism.
\end{proposition}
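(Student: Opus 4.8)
The plan is to reduce the claimed isomorphism to two concrete cohomology vanishing statements, and then to obtain those from the current-algebra cohomology computed in \cite{FGT}.

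\textbf{Step 1 (reduction by decomposing the coefficients).} First I would exploit the fact that $V$ is a representation of $G$ rather than of $\mf{g}[[z]]$, so that $z\mf{g}[[z]]$ acts trivially on every coefficient module in sight: each of $\wedge^2_0\mf{g}$, $\wedge^2\mf{g}$, $\mf{g}$, $\Sym^2\mf{g}$ and $\mf{g}\otimes\mf{g}$ is a $\mf{g}[[z]]$-module pulled back along evaluation $\mf{g}[[z]]\to\mf{g}$. Since $\mf{g}$ is simple, finite-dimensional modules are completely reducible and the bracket $\wedge^2\mf{g}\to\mf{g}$ is surjective (because $[\mf{g},\mf{g}]=\mf{g}$); hence as $\mf{g}$-modules, and therefore as $\mf{g}[[z]]$-modules, $\wedge^2\mf{g}\cong\wedge^2_0\mf{g}\oplus\mf{g}$ and $\mf{g}\otimes\mf{g}\cong\Sym^2\mf{g}\oplus\mf{g}\oplus\wedge^2_0\mf{g}$. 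As Chevalley-Eilenberg cohomology is additive in the coefficients, the inclusion $\wedge^2_0\mf{g}\hookrightarrow\mf{g}\otimes\mf{g}$ induces on $H^2$ the inclusion of a direct summand, so it is an isomorphism precisely when the complementary summands vanish, i.e. when
\[ H^2(\mf{g}[[z]],\Sym^2\mf{g})=0 \quad\text{and}\quad H^2(\mf{g}[[z]],\mf{g})=0. \]

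\textbf{Step 2 (relative cohomology).} As in the proof of the preceding Lemma, I would pass to relative cochains for the reductive subalgebra $\mf{g}\subset\mf{g}[[z]]$. The decomposition $\bigoplus_{i+j=2}H^i(\mf{g})\otimes H^j((\mf{g}[[z]]\mid\mf{g}),M)\cong H^2(\mf{g}[[z]],M)$, together with $H^0(\mf{g})=\C$ and $H^1(\mf{g})=H^2(\mf{g})=0$, gives $H^2(\mf{g}[[z]],M)\cong H^2((\mf{g}[[z]]\mid\mf{g}),M)$, which is represented by $G$-invariant $2$-cochains on $z\mf{g}[[z]]$. Grading by the $\C^\times$-action scaling $z$, the weight-$w$ part of such a cochain is built from $z^a\mf{g}\wedge z^b\mf{g}$ with $a,b\ge 1$, so only weights $w\ge 2$ contribute; the weight-$2$ piece is precisely the two-loop anomaly weight, and it is this piece that must be shown to vanish for $M=\mf{g}$ and $M=\Sym^2\mf{g}$.

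\textbf{Step 3 (the FGT input) and the main obstacle.} The remaining vanishing statements are the genuine content, and I would extract them from the computation of $H^\bullet(\mf{g}[[z]],M)$ for finite-dimensional $M$ in \cite{FGT}: one needs $H^2(\mf{g}[[z]],\mf{g})=0$ and $H^2(\mf{g}[[z]],\Sym^2\mf{g})=0$ for $\mf{g}$ simple and $\mf{g}\ne\mf{sl}_2$. The exclusion of $\mf{sl}_2$ is forced exactly here: for $\mf{sl}_2$ the bracket $\wedge^2\mf{g}\to\mf{g}$ is an isomorphism, so $\wedge^2_0\mf{g}=0$ and the left side vanishes, while $H^2(\mf{g}[[z]],\Sym^2\mf{g})$ is one-dimensional in weight $3$, so the induced map cannot be an isomorphism. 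The hard part is entirely the vanishing of $H^2(\mf{g}[[z]],\Sym^2\mf{g})$: this is not a formal consequence of reductivity but rests on the detailed current-algebra cohomology established via the strong Macdonald / Hodge-theoretic results of \cite{FGT}. The real labor lies in translating those results into the precise statement in cohomological degree $2$ and $z$-weight $2$ with symmetric-square coefficients, and in matching the module-theoretic conventions; the $\mf{sl}_2$ discrepancy is a warning that no cheaper, purely representation-theoretic argument can replace this input.
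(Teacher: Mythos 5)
Your skeleton---decompose $\g \otimes \g = \Sym^2 \g \oplus \g \oplus \wedge^2_0 \g$, reduce the isomorphism to the vanishing of $H^2(\g[[z]], \Sym^2 \g)$ and $H^2(\g[[z]], \g)$, pass to relative ($G$-invariant) cochains, and invoke \cite{FGT}---is the same as the paper's, and your reading of why $\mf{sl}_2$ must be excluded is correct. But there is a genuine gap in your Step 3: you treat $H^2(\g[[z]], \Sym^2\g)=0$ as a statement to be ``extracted'' from \cite{FGT} by translation of conventions, and it is not there. What FGT computes is the cohomology of $\g[z]$ with coefficients in the coadjoint module $\g[z^{-1}]$ and in $\Sym^2(\g[z^{-1}])$, not with finite-dimensional coefficients. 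Passing from $\Sym^2(\g[z^{-1}])$ to $\Sym^2\g$ is the actual content of the paper's appendix: one introduces the kernels $M = \ker(z)$ and $M_0 = \ker(z)\cap\ker(z^2)$ inside $\Sym^2(\g[z^{-1}])$, runs a chain of long exact sequences, and the outcome is \emph{not} vanishing but an injection of $H^2(\g[z],\Sym^2\g)$ into a one-dimensional space concentrated in weight $3$. Killing that class is equivalent to the non-vanishing of a connecting map $\C = H^2_{(3)}(\g[z],M_0) \to H^2_{(2)}(\g[z],\wedge^2\g)$, which the paper establishes only by using the explicit form of the FGT cocycles---and which genuinely fails for $\mf{sl}_2$, where the weight-$3$ class survives. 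So no formal citation can close this step; it is exactly where the hard work and the hypothesis $\g \neq \mf{sl}_2$ enter. A second omission: for $\g = \mf{sl}_n$, $n>2$, the invariant tensor in $\Sym^3\g$ makes $H^1(\g[z], \Sym^2(\g[z^{-1}])) \neq 0$, so the exact-sequence argument breaks down there too, and the paper needs a separate proposition exploiting the outer automorphism of $\mf{sl}_n$ (classes odd under it factor through the adjoint and die because $H^2(\g[z],\g)=0$).

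There is also a weight-bookkeeping error that matters. In Step 2 you assert that only the weight-$2$ (``two-loop'') piece of the relative cohomology must be shown to vanish; but the proposition concerns the full $H^2$ in all weights, and for $\Sym^2\g$ coefficients the dangerous weight is $3$, not $2$---as your own remark about $\mf{sl}_2$ should have signalled. (Two minor points: there is no representation $V$ in this proposition, so Step 1's appeal to $\op{End}(V)$ is a leftover; and even $H^2(\g[[z]],\g)=0$ is not a direct quotation of FGT---the paper derives it from the coadjoint computation via the short exact sequence $0 \to \g \to \g[z^{-1}] \xto{z} \g[z^{-1}] \to 0$, though that part, unlike the symmetric-square case, is a formal consequence.)
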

In other words, if we decompose $\g \otimes \g$ as $\Sym^2 \g \oplus \g \oplus \wedge^2_0 \g$, then all of the cohomology we are interested in comes from $\wedge^2_0 \g$.

We are interested in $H^2(\g[[z]], \op{End}(V))$ for a representation $V$.  In general, $\op{End}(V)$ is a complicated, highly
reducible representation of $G$. Chevalley-Eilenberg cohomology is defined for every representation,
and  if $\op{End}(V)$ is a direct sum of irreducibles, then the desired cohomology is the direct
sum of cohomology with values in these irreducibles.
  
However, the desired cohomology can be represented by $G$-invariant cocycles and such a cocycle, ignoring the $z$-dependence,
is a $G$-invariant map from $\g\otimes \g$ to $\op{End}(V)$.
Hence, any two-cocycle valued in $\op{End}(V)$ must come from a two-cocycle valued in $\mf{g} \otimes \mf{g}$ under some map of $G$-representations from $\mf{g} \otimes \g$ to $\op{End}(V)$.  The same is true for the corresponding second cohomology classes.  The proposition gives an important refinement of this statement:
any non-trivial second cohomology class must come from a map $\wedge^2_0 \g \to \op{End}(V)$.   This means that when we study Feynman diagrams, we can ignore
terms in which two gauge bosons $A^a$ and $A^b$ are coupled symmetrically in $a$ and $b$, and we can also ignore terms in which they are coupled via $f_{ab}^c$.

As a corollary, we find that if there are no maps of representations  $\wedge^2_0 \g \to \op{End}(V)$, then the Wilson line living in $V$ can not have an anomaly.  This is the criterion 
$(\dagger)$ that was introduced in section \ref{netov} and used in various applications.

\subsection{Enumerating Feynman diagrams}
So far, we have performed a cohomological analysis of possible anomalies that can appear when one tries to quantize a Wilson line.  We have seen that the first such anomalies can appear at two loops.  Next, we will enumerate all possible two-loop diagrams that may contribute an anomaly. We will find that, for various reasons, all but one of the diagrams we enumerate  can not produce an anomaly. 

Throughout our analysis, we will assume our Wilson line is placed at $z = 0$ and is invariant under the symmetry which rescales $z$ and $\hbar$.  This means that $t_{a, k}$ is accompanied by $\hbar^k$, and in particular at the classical level  $t_{a,k} = 0$ for $k > 0$.

Let us now analyze possible two-loop diagrams which can contribute anomalies of this form.  The only diagrams that are relevant are two-loop diagrams which may be attached to the Wilson line at an arbitrary number of points, but which have precisely two external lines to which the gauge field is coupled.

We must also bear in mind an additional subtlety.  In principle one can introduce one-loop counter-terms whereby $\partial_z A$ is coupled to the Wilson line in some way.   It could happen that these are forced on us to cancel any one-loop anomalies,  but even if this is not the case, such one-loop counter-terms do not violate any symmetries and so one is always free to introduce them.   Therefore there are two possible vertices at which a gluon can meet a Wilson line. There is the usual classical interaction, but also the interaction coming from a one-loop counter-term. We will depict these interactions as an ordinary vertex, and a circle labelled $1$, respectively.  Counting of the loop parameter tells us that one-loop diagrams, one of whose vertices on the Wilson line is the one-loop vertex, are counted as two-loop diagrams. 

One can cut down the number of diagrams we need to consider by observing that every diagram must have at least two vertices on the Wilson line. Diagrams with one vertex on the Wilson line can not contribute, because the corresponding anomaly will be given by a two-cocycle valued in some copy of $\mf{g} \subset \op{End}(V)$.  Our cohomological analysis in proposition \ref{proposition_cohomology} tells us that any anomaly of this form can be cancelled by a counter-term.  

In evaluating whether a diagram can contribute to the anomaly, note that the anomaly must have one $z$-derivative at each external line, and no other derivatives.  We can thus calculate whether or not a diagram contributes to the anomaly by taking the two external fields to be the gauge field $A = z \delta_{x = 0} t_a$, and the ghost field $\c = z t_b$.  This tells us, in particular, that it is not possible to have an anomaly from a diagram in which the external lines are connected directly to the Wilson line by a tree-level vertex.  This is because the Wilson line is at $z = 0$, and there are no $z$-derivatives in how the Wilson line is coupled classically.   

A further constraint on the diagrams that can appear is provided by the observation that if we evaluate the propagator with both ends on a straight Wilson line we find zero. Therefore in any diagram which can contribute to the anomaly this configuration can not occur. 

These observations cut down substantially the diagrams that can appear.  We will further exclude a few simple diagrams which are topologically trees, but whose coupling to the Wilson line involves one and two-loop counter-terms.  These diagrams are similar to those in \fig \ref{tred}.  In section \ref{anomaly_correction} we will analyze the effect of these diagrams, and see that they contribute to the two-loop correction to the algebra $\g[[z]]$.  

All remaining diagrams are depicted in \fig \ref{figure_anomaly}. Each figure admits a number of variants where the order of the vertices on the Wilson line is permuted.  These variants are not depicted. 

\begin{figure}
\begin{tikzpicture}
\begin{scope}[ shift={(-1.5,0)}, scale=(0.7)]
\draw[dashed](0,2.7) to (0,-2.7);
\draw (0,-2.2) arc (-90:90:2.2);
\draw(2.2,0) to (0,0); 
\draw (40:2.2) to (40:3.2);
\draw (-40:2.2) to (-40:3.2);
\node at (1,-3.2) {(A1)};
\end{scope}

\begin{scope}[shift={(2.5,0)}, scale=(0.7) ]
\draw[dashed](0,2.7) to (0,-2.7);
\draw (0,-2.2) arc (-90:90:2.2);
\draw(2.2,0) to (0,0); 
\draw (30:2.2) to (30:3.2);
\draw (60:2.2) to (60:3.2);
\node at (1,-3.2) {(A2)};
\end{scope}

\begin{scope}[shift={(6.5,0)}, scale=(0.7)]
\draw[dashed] (0,2.7) to (0,-2.7);
\draw (0,1) to [out=0,in=100](1,0.5) to [out=260,in=0](0,0);
\draw (1,0.5) to (1.5,0.5);
\draw (0,0.5) to [out=180,in=80](-1,0) to [out=260,in=360](0,-0.5);
\draw (-1,0) to (-1.5,0);
\node at (1,-3.2) {(B)};
\end{scope}

\begin{scope}[shift={(10.5,0)},scale=(0.7)]
\draw[dashed] (-2,2.7) to (-2,-2.7);
\draw (0,0) circle (1cm);
\draw (130:1) to [out=130,in=0] (-2,1.3);
\draw (230:1) to [out=230, in=0] (-2,-1.3);
\draw (30:1) to (30:2);
\draw (-30:1) to (-30:2);
\node at (-1,-3.2) {(C1)};
\end{scope}

\begin{scope}[shift={(0,-6)},scale=(0.7)]
\draw[dashed] (-2,2.7) to (-2,-2.7);
\draw (0,0) circle (1cm);
\draw (130:1) to [out=130,in=0] (-2,1.3);
\draw (230:1) to [out=230, in=0] (-2,-1.3);
\draw (-1.1, 1.15) to (-0.8, 2.0);
\draw (-1.6, 1.27) to (-1.55, 2.13);
\node at (-1,-3.2) {(C2)};
\end{scope}

\begin{scope}[shift={(3.5,-6)},scale=(0.7)]
\draw[dashed] (-2,2.7) to (-2,-2.7);
\draw (0,0) circle (1cm);
\draw (130:1) to [out=130,in=0] (-2,1.3);
\draw (230:1) to [out=230, in=0] (-2,-1.3);
\draw (0:1) to (0:2);
\draw (180:1) to (180:0);
\node at (-1,-3.2) {(C3)};
\end{scope}

\begin{scope}[shift={(7,-6)},scale=(0.7)]
\draw[dashed] (-2,2.7) to (-2,-2.7);
\draw (0,0) circle (1cm);
\draw (130:1) to [out=130,in=0] (-2,1.3);
\draw (230:1) to [out=230, in=0] (-2,-1.3);
\draw (-1.5, 1.25) to (-1.4, 2.1);
\draw (-1.5, -1.25) to (-1.4, -2.1);
\node at (-1,-3.2) {(C4)};
\end{scope}

\begin{scope}[shift={(10.5,-6)},scale=(0.7)]
\draw[dashed] (-2,2.7) to (-2,-2.7);
\draw (0,0) circle (1cm);
\draw (130:1) to [out=130,in=0] (-2,1.3);
\draw (230:1) to [out=230, in=0] (-2,-1.3);
\draw (-1.5, 1.25) to (-1.4, 2.1);
\draw (1,0) to (2,0); 
\node at (-1,-3.2) {(C5)};
\end{scope}

\begin{scope}[shift={(2,-12)}, scale=(0.7)]
\draw[dashed] (0,2.7) to (0,-2.7); 
\draw (0,0.8) to [out=0,in=100](1,0) to [out=260,in=0](0,-0.8);
\draw (1,0) to (1.5,0);
\draw (0,0) to (-1,0);
\draw[ fill  = white] (0,0) circle (0.3); 
\node at (0,0) {$1$};
\node at (1,-3.2) {(D1)};
\end{scope}

\begin{scope}[ shift={(6,-12)}, scale=(0.7)]
\draw[dashed](0,2.9) to (0,-2.9);
\draw (0,-2.2) arc (-90:90:2.2);
\draw (40:2.2) to (40:3.2);
\draw (-40:2.2) to (-40:3.2);
\draw[ fill  = white] (0,-2.2) circle (0.3); 
\node at (0,-2.2) {$1$};
\node at (1,-3.2) {(D2)};
\end{scope}

\end{tikzpicture}

\caption{\small {Two-loop Feynman diagrams with two external lines which can potentially contribute to the anomaly. The dashed vertical line represents a Wilson line, and the vertices on the Wilson line labelled $1$ indicate a coupling of the gauge field and Wilson line by a one-loop counter-term.  Each diagram admits a number of variants, not shown, in which the positions of the vertices on the Wilson line are permuted.}} \label{figure_anomaly}
\end{figure}
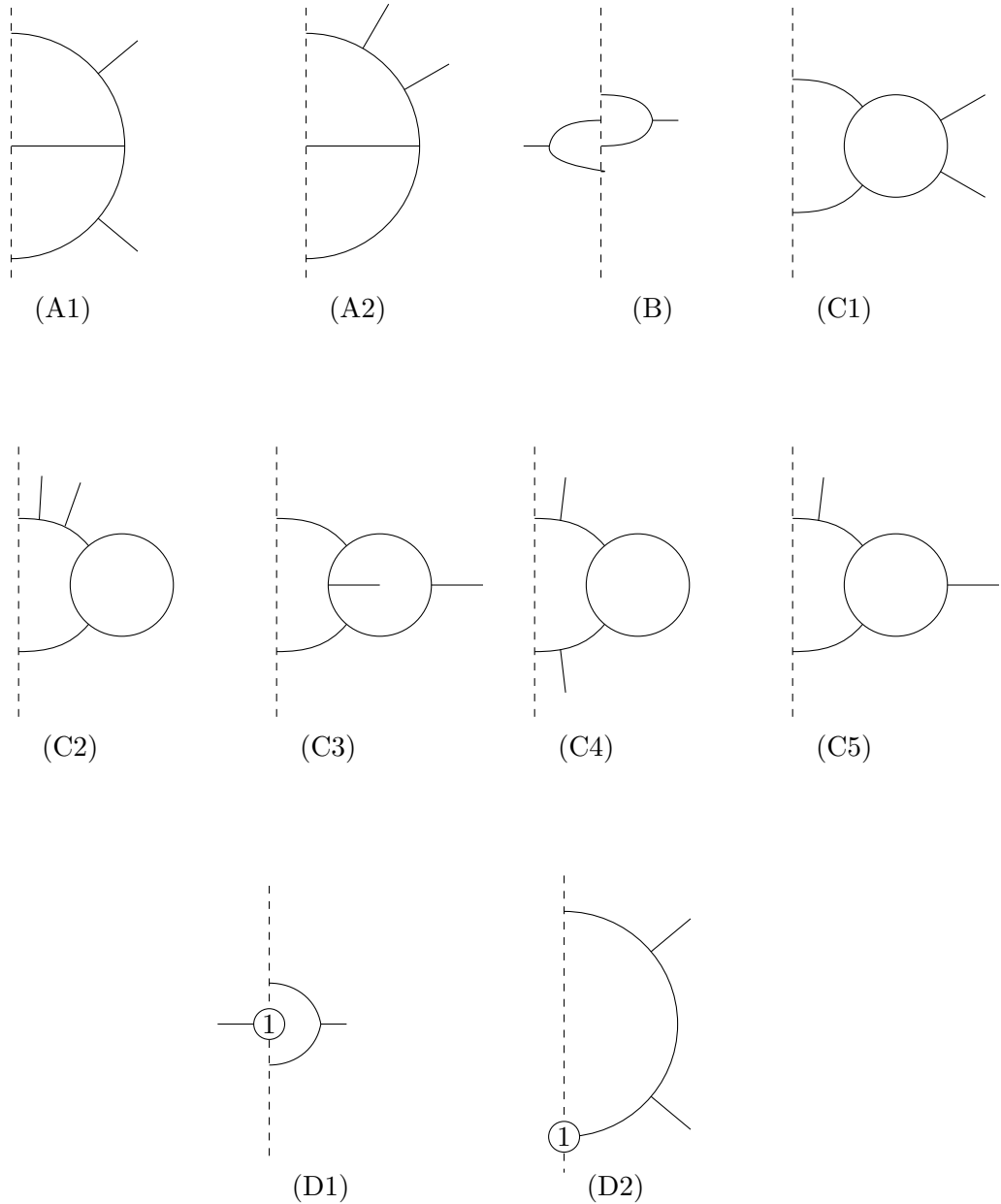

We will find that the only diagram that can contribute is diagram (A1).  All other diagrams will be excluded by a variety of arguments.


\subsubsection{Excluding Diagrams (A2), (C1), (C2) and (D2)}  \label{first case}
Consider any two-loop diagram with the feature that the two vertices attached to the external lines are connected by a propagator.  The only non-trivial anomalies are antisymmetric in the external lines.  After antisymmetrizing and using the Jacobi identity, the anomaly from such a diagram is of the form
\begin{equation}
\int f^{cab} \partial_z \c_a \partial_z A_b \mu_c
\end{equation}
for some matrices $\mu_c : V \to V$.  Any anomaly of this form can be cancelled, because it comes from a map $\wedge^2 \mf{g} \to \op{End}(V)$ which factors through the adjoint representation.   

This tells us that diagrams (A2), (C1), (C2) and (D2) can not contribute a non-trivial anomaly.

\subsubsection{Excluding Diagrams of Type (C3)}
Diagram of type (C3) can be excluded for Lie algebraic reasons.   Any anomaly that can appear from a diagram of type (C3) is of the form:
\begin{equation}
 \op{Tr}_{\mf{g}} \left(t_a t_c t_b t_d  \right) \int \partial_z \c_a  \partial_z A_b \rho(t_c) \rho(t_d) \;. 
\end{equation}
(Here the trace is taken in the adjoint representation, and the external lines have indices $a,b$). 

This anomaly can be cancelled  by a two-loop counter-term if the tensor $\op{Tr}_{\mf{g}} \left(t_a t_c t_b t_d  \right)$ is symmetric in $a$ and $b$, or if it can be written in terms of the commutator $f^e{}_{ab} t_e$ of $t_a$ and $t_b$.   Since the adjoint representation is equipped with a symmetric non-degenerate pairing, we have the identity
\begin{equation}
\op{Tr}_{\mf{g}} (t_a t_c t_b t_d) = \op{Tr}_{\mf{g}} ( t_d t_b t_c t_a ) \;.
\end{equation}   
Together with cyclic symmetry of the trace, this tells us that $\op{Tr}(t_a t_c t_b t_d)$ is symmetric in $a$ and $b$. This implies that any anomaly associated to the tensor $\op{Tr}_{\mf{g}} (t_a t_c t_b t_d)$ can be cancelled by a counter-term.

\subsubsection{Excluding Diagrams (C4) and (C5)}\label{excluding_C} 

Diagrams (C4) and (C5) contain an internal gauge particle loop.  In (C4), this loop has two external lines, both of which are internal lines in the full diagram,
while (C5) has an internal loop with three external lines, of which in the full diagram two are internal and one is external.

The group theory factor that comes from the internal loop in (C4) is an invariant in $\g\otimes \g$.  Any such invariant is a multiple of  $\delta_{cd}$, that is, 
the Killing form, and accordingly, the overall group theory factor of the diagram in (C4) is a multiple of what it would be if the internal loop were collapsed to a point.  Then in (C4) the
two external gauge fields would be connected by a propagator.  So this diagram can be excluded by the same argument as in section \ref{first case}.

In (C5), since the internal loop has three external lines, it produces an invariant in $\g\otimes\g\otimes \g$.  For any simple Lie algebra other than $SL_N$, $N>2$, such
an invariant actually lies in\footnote{Invariants in $\g\otimes\g\otimes\g$ correspond to $G$-invariant maps $\g\otimes \g\to \g$.  That there is only one such invariant
for most $G$ and two for $SL_N$, $N>2$, can be shown by invariant theory for classical $G$ and by consulting tables such as those in \cite{McKay} for exceptional $G$.
In $SL_N$, $N>2$, the two invariants are $\Tr\,A[B,C]$ and $\Tr\,A\{B,C\}$ and are respectively even and odd under the outer automorphism.  The fact that an invariant in
$\mathrm{Sym}^3\g$ exists only for $SL_N$, $N>2$ is important in particle physics in classifying possible anomalies in gauge theory.} $\wedge^3\g$ and is a multiple of the invariant associated to the structure constants $f_{abc}$.  When this is the case, the group theory factor of (C5)
is the same as if the internal loop were collapsed to a point.  When this is done, the two external gauge fields are connected by a propagator and this diagram can
be excluded as before.

For $SL_N$, $N>2$,  there actually is another invariant in $\g\otimes\g\otimes \g$; in fact, it lies in $\mathrm{Sym}^3\g$.  However, this invariant is odd under the outer automorphism
of $SL_N$, which is a symmetry of the theory under discussion, and therefore cannot appear.


\subsubsection{Excluding Diagrams (B) and (D1)}\label{excluding_amplitude}
Diagrams (B) and (D1) cannot be excluded by Lie algebraic arguments.  Instead we will show by an explicit calculation that the amplitudes for such diagrams are zero, so they can not contribute an anomaly.   The computation involves only the part of the diagram on the right in the figures; this is the part that (B) and (D1) have in common.

To evaluate the amplitudes for these diagrams, we will employ a point-splitting regularization in which any pair of vertices on the Wilson lines must be separated by a distance of at least $\eps$.  Once we employ such a point-splitting regularization, we can compute the regularized amplitude for each connected component of the  diagrams (B) and (D1) separately. The two connected components only interact with each other by how they affect the domain of integration of the vertices on the Wilson lines.  We will get a vanishing that does
not depend on this domain.

We therefore need to compute the amplitude of the diagram
\begin{center}
\begin{tikzpicture}
\draw[dashed] (0,2.7) to (0,0.3);
\draw (0,2) to [out=0,in=100](1,1.5) to [out=260,in=0](0,1);
\draw (1,1.5) to (1.5,1.5);
\node at (-0.7,2) {$v_1$};
\node at (-0.7,1) {$v_2$};
\node at (1.2,1.7) {$v_3$};
\end{tikzpicture}
\end{center}
 
The coordinates for the vertex $v_i$ will be $(x_i,y_i,z_i)$. For vertices $v_1,v_2$ we have $y_i = z_i = 0$.  

Recall  (see eqn.\ \eqref{eq.propagator}) that the propagator two-form is 
\begin{align}
\begin{split}
	P &= 
	\frac{1}{2 \pi} 
	\frac{x \d y\wedge \d \zbar -  y \d x \wedge\d \zbar + 2 \zbar \d x \wedge\d y }{   \left(x^2 + y^2 + \abs{z}^2\right)^{2} }\;.
\end{split}
\end{align}
The amplitude for the diagram is then
\begin{multline}
	c
	 \int_{x_1,x_2,x_3,y_3,z_3}
 A  \d z_3\wedge \frac{ (x_3 - x_1) \d y_3 \d \zbar_3 -  y_3 \d (x_3 - x_1) \d \zbar_3 + 2 \zbar_3 \d (x_3 - x_1) \d y_3   }{\norm{v_3}{v_1}^{4}}\\ 
\wedge \frac{ (x_3 - x_2) \d y_3 \d \zbar_3 -  y_3 \d (x_3 - x_2) \d \zbar_3 + 2 \zbar_3 \d (x_3 - x_2) \d y_3  }{ \norm{v_3}{v_2}^{4}}  \;,
\end{multline}
where $\norm{v_i }{v_j}$ indicates the distance between the vertices $v_i$ and $v_j$, and $A$ is the gauge field on the external line.  

Since we are integrating over $x_1,x_2$, we need only keep terms which involve $\d x_1$ and $\d x_2$.  If we retain only those terms, we find that the integrand is $\d x_1 \wedge d x_2$ times the square of the one-form
\begin{equation} 
y_3  \d \zbar_3 - 2 \zbar_3  \d y_3 
 \end{equation}
and is therefore zero. 

\subsection{The Anomaly Associated to Diagram (A1).}

Let us now come to the only remaining diagram, namely (A1) in \fig \ref{figure_anomaly}.   We need to note that we first need to include two other diagrams, as depicted in Fig.\ref{figure_anomaly2}, where we permute the three vertices along the Wilson lines.  

\begin{figure}[htbp]
\begin{center}
\begin{tikzpicture}
\begin{scope}[ shift={(0,0)}, scale=(0.7)]
\draw[dashed](0,2.7) to (0,-2.7);

\draw (0,-2.2) arc (-90:90:2.2);
\draw(2.2,0) to (0,0); 
\draw (40:2.2) to (40:3.2);
\draw (-40:2.2) to (-40:3.2);
\node at (0.3,0.3) {$d$};
\node at (0.3,2.5) {$c$};
\node at (0.3,-1.9) {$e$};
\node at (2.3,0.8) {$f$};
\node at (2.3,-0.8) {$g$};
\node at (2.5,2.3) {$a$};
\node at (2.5,-2.3) {$b$};

\end{scope}

\begin{scope}[ shift={(4,0)}, scale=(0.7)]
\draw[dashed](0,2.7) to (0,-2.7);

\draw (0,-2.2) arc (-90:90:2.2);
\draw(2.2,0) to (0,0); 
\draw (40:2.2) to (40:3.2);
\draw (1,0) to (1,1);

\end{scope}

\begin{scope}[ shift={(8,0)}, scale=(0.7)]
\draw[dashed](0,2.7) to (0,-2.7);
\draw (0,-2.2) arc (-90:90:2.2);
\draw(2.2,0) to (0,0); 

\draw (1,0) to (1,1);
\draw (-40:2.2) to (-40:3.2);

\end{scope}

\end{tikzpicture}

\caption{\small{Permutations of diagram (A1)}}
\label{figure_anomaly2}
\end{center}
\end{figure}
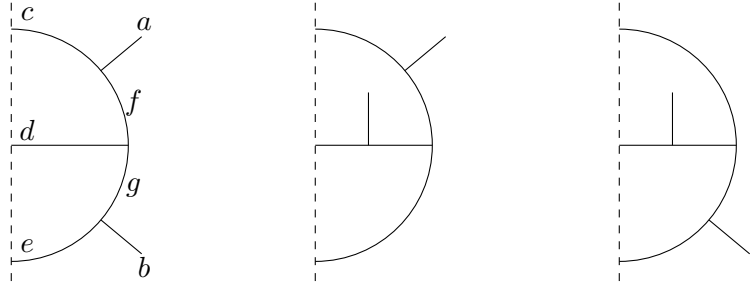

\subsubsection{Color Factor}

It turns out that all these possibilities have the same color factor,
up to those contributions which can be canceled by counterterms.

To see this, let us assign color indices $a, b$ to external lines (ghost field and the gauge field) and 
$c, d, e$ to points on Wilson lines. In the first diagram in Fig. \ref{figure_anomaly2} we have $c,d,e$ along the Wilson line from top to bottom, 
but in the second we have $c,e,d$ and third $d,c,e$.

Let us consider the difference of color factors between first and the second diagram.
Since this involves the change in the relative position of $d$ and $e$,
the difference gives the commutator $[\rho(t^d), \rho(t^e)]=f^{de}{}_{f} \rho(t^f)$.
Hence the difference in color factors can be represented graphically as the diagram 
on the left of Fig. \ref{figure_anomaly3}. Similarly, the difference of the color factor between
the second the third diagram gives the second diagram of Fig. \ref{figure_anomaly3}.

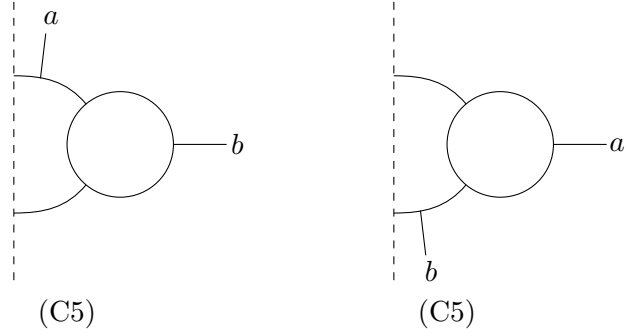
\begin{figure}[htbp]
\begin{center}
\begin{tikzpicture}

\begin{scope}[scale=0.7]
\draw[dashed] (-2,2.7) to (-2,-2.7);
\node at (2.2,0.0){$b$};
\node at (-1.3,2.4) {$a$};
\draw (0,0) circle (1cm);
\draw (130:1) to [out=130,in=0] (-2,1.3);
\draw (230:1) to [out=230, in=0] (-2,-1.3);
\draw (-1.5, 1.25) to (-1.4, 2.1);
\draw (1,0) to (2,0); 
\node at (-1,-3.2) {(C5)};
\end{scope}

\begin{scope}[shift={(5,0)},scale=(0.7)]
\draw[dashed] (-2,2.7) to (-2,-2.7);
\node at (2.2,0.0){$a$};
\node at (-1.3,-2.4) {$b$};
\draw (0,0) circle (1cm);
\draw (130:1) to [out=130,in=0] (-2,1.3);
\draw (230:1) to [out=230, in=0] (-2,-1.3);
\draw (-1.5, -1.25) to (-1.4, -2.1);
\draw (1,0) to (2,0); 
\node at (-1,-3.2) {(C5)};
\end{scope}

\end{tikzpicture}
	
\caption{\small{The differences of the color factors between first and second, and 
first and third diagrams in Fig. \ref{figure_anomaly2} have the color structures 
as shown in this figure.}}
\label{figure_anomaly3}
\end{center}
\end{figure}
We verified above (section \ref{excluding_C}) that the color factor for diagrams of this type is such that any anomaly can always be cancelled by a counter-term.

This tells us that for each of the three diagrams in Fig. \ref{figure_anomaly2}, the anomaly is symmetric under permutation of the order of the vertices connected to the Wilson line, up to anomalies which can be cancelled by a counter-term.  Since the three diagrams in Fig. \ref{figure_anomaly2} are exchanged by such a permutation, we conclude that they all have the same color factor (again, modulo terms which can be cancelled by a counter-term).  

The non-trivial  part of the color factor is (recall the normalization of the Killing form in \eqn \eqref{Killing_normalization})
\begin{align}
 f^{acf} f^{fgd} f^{geb}\{ \rho(t^c), \rho(t^d), \rho(t^e)\}
 =( [[t^a, t^c],t^d],[t^b,t^e] )\{\rho(t^c), \rho(t^d) ,\rho(t^e)\} \;,
 \label{color_factor}
\end{align}
where $a,b$ are indices for the external lines and
\begin{equation} 
\label{triple_def}
	  \{ \rho(t^{a_1}), \rho(t^{a_2}), \rho(t^{a_3})\} = \tfrac{1}{3!}\sum_{\sigma \in S_3} \rho(t^{a_{\sigma(1)}}) \rho(t^{a_{\sigma(2)}}) \rho(t^{a_{\sigma(3)}})\;,
 \end{equation}
 where the sum is over the permutations of the indices $1,2,3$. 

\subsection{Numerical Factor}

Let us next initiate the evaluation of the numerical factor.
We choose the Wilson line to be placed along a straight line 
at $y=z=0$.

We let $p_1,p_2,p_3$ denote the position of the vertices on the Wilson line, and $v_1,v_2,v_3 \in \R \times \C$ be the positions of the internal vertices. The coordinates for vertex $v_i$ are denoted $x_i,y_i,z_i$ for $i = 1,2,3$. The labeling on the vertices for the first diagram in Fig. \ref{figure_anomaly2} is given in  Fig. \ref{figure_diagram_labels}.

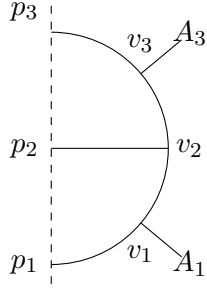
\begin{figure}
\begin{center}
\begin{tikzpicture}
\begin{scope}[ shift={(0,0)}, scale=(0.7)]
\draw[dashed](0,2.7) to (0,-2.7);

\draw (0,-2.2) arc (-90:90:2.2);
\draw(2.2,0) to (0,0); 

\draw (40:2.2) to (40:3.2);
\draw (-40:2.2) to (-40:3.2);
\node at (-0.5,-2.2) {$p_1$};
\node at (-0.5,0) {$p_2$};
\node at (-0.5,2.6) {$p_3$};
\node at (50:2.6) {$v_3$};
\node at (0:2.6) {$v_2$};
\node at (-50:2.6) {$v_1$};
\node at (40:3.4) {$A_3$};
\node at (-40:3.4) {$A_1$};
\end{scope}
\end{tikzpicture}
\caption{\small{The Feynman diagram with vertices labelled. The $A_i$ on the external lines indicate gauge fields.}}
\label{figure_diagram_labels}
\end{center}
\end{figure}

Let us write down the numerical factor of the amplitude, where we include all possible orderings of the vertices on the Wilson lines.  We let $P$ be the propagator viewed as a two-form on $\R^2 \times \C$ (with singularities at the origin).  The space over which we integrate consists of a copy of $\R^2 \times \C$ for each internal vertex, and $\R$ for each vertex on the Wilson line. Every vertex gives a map from this space to $\R^2 \times \C$, where we project on the relevant factor for an internal vertex and project to $\R$ and then include on the $x$-axis for the vertices on the Wilson lines.  For a pair of vertices $(v_i,v_j)$ or $(v_i,p_j)$ we define two-forms by 
\begin{align}
P(v_i,v_j) = (v_i - v_j)^\ast P \;, \quad
P(v_i,p_j) = (v_i - p_j)^\ast P\;,
\end{align}
where we pull-back the two-form $P$ on $\R^2 \times \C$ via the map $v_i - v_j$ or $v_i - p_j$.

The complete amplitude, including all the relevant diagrams and the color factor,  is then
\begin{multline}\label{complete_amplitude}
\tfrac{1}{2} \left(\tfrac{\ii}{2\pi}\right)^3 \int_{p_1 , p_2 , p_3 \in \R}\int_{v_1, v_2, v_3} 
( P(0, v_1)) \d z_1 A^a_1  P(v_1, v_2) \d z_2  P(v_2, p_2) P(v_2, v_3) \d z_3 A^b_3  P(v_3, p_3)  \\  \times f^{afc} f^{fgd} f^{gbe}\{ \rho(t^c), \rho(t^d)\;, \rho(t^e)\} \;,
\end{multline}
where $A_2,A_3$ indicate the external gauge fields.  The prefactor of $\tfrac{1}{2}$ comes from the diagram automorphism.

\subsubsection{A comment on signs}
In the numerical factor, we have implicitly oriented each edge of the graph. If we reverse the orientation of the graph, the propagator changes sign, because $P(v_i,v_j) = - P(v_j,v_i)$.  Similarly, in defining the integral, we have implicitly ordered the set of $6$ vertices (three on the Wilson line and three integrated over the four-dimensional space-time).  If we permute the order of these vertices we change the integral by a sign.  For the vertices on the Wilson line, the sign arises because a permutation can reverse the orientation. For the bulk vertices, since each is accompanied by the one-form $\d z$, permuting them will change the order in which we take the wedge product of these forms, and therefore introduce a sign. 

These signs are also implicit in the definition of the color factor, so there is no overall sign ambiguity.  One way to see this is to treat the elements of the Lie algebra as being anti-commuting variables.  This is a natural thing to do, because the form degree of the forms we are integrating differs by one from the ghost number of the corresponding fields of the physical theory.  For example, a one-form is of odd degree but is a field of ghost number zero.   We can force Lie algebra valued forms to have the correct parity if we declare that the Lie algebra factor is fermionic.

If we treat the Lie algebra elements as fermionic, then the color factor $\delta_{ab} t^a_{v_i} \otimes t^b_{v_j}$ associated to the propagator $P(v_i,v_j)$ changes sign if we reverse the order of an edge, just like the $P(v_i,v_j)$ does. The color factor associated to a vertex is a graded symmetric function of the three Lie algebra elements. Because it is a fermionic function -- since it depends on three fermionic Lie algebra elements -- the overall color factor depends via a sign on the ordering of the set of  vertices.  Similarly, the color factor associated to a vertex on the Wilson line depends on a single Lie algebra element, together with an endomorphism of the bosonic vector space of states on the Wilson line. Therefore the color factor depends on the ordering of the set of vertices on the Wilson line, just as the numerical factor does.

To get the correct signs in the total amplitude, the rule is to choose an orientation of each edge on the graph and an ordering on the set of vertices. Then, compute both the color factor and the numerical factor using this choice. Their product is the amplitude.  Because the color factor and the numerical factor change signs in the same way if we change the ordering of the vertices and the orientation of the edges, there is no ambiguity. The color factor in \eqn \eqref{complete_amplitude} was computed in this way.

\subsubsection{Regularizing the integral}

We will use a point-splitting regulator in which we restrict the domain of integration to the region where $p_3 - p_1 \ge \eps$, if we assume that the three points are ordered so that $p_1 < p_2 < p_3$.   To verify that this is a good regulator, we need to show that the integral converges absolutely in this domain.   We are only concerned with UV divergences, so we will further restrict the domain of integration to a region where all vertices are in some ball around the origin.

To verify convergence, we will use the following bound in absolute value of the propagator:
\begin{equation} 
\abs{P} \le (x^2 + y^2 + z \zbar)^{-3/2} \;. 
\end{equation}
This bound arises because the denominator in $P$ is $(x^2 + y^2 + \abs{z}^2)^{-2}$, while the numerator is a linear function of the variables $x,y,\zbar$.  Any linear function on $\R^4$ is bounded in absolute value by some multiple of $(x^2 + y^2 + \abs{z}^2)^{1/2}$. 
It suffices to verify that the integral converges when each propagator $P(v_i,v_j)$ or $P(p_i,v_j)$ is replaced by $d(v_i,v_j)^{-3}$ or $d(p_i,v_j)^{-3}$, where $d$ is the Euclidean distance.  

We can bound the external gauge fields by a constant, so we can drop them from the integral.  We are thus left with the integral
\begin{equation} 
	\int_{p_2,p_3 > \eps,v_1,v_2,v_3}\frac{1}{ d(p_3,v_3)^{3} d(v_2,v_3)^{3} d(v_2,p_2)^{3} d(v_2,v_1)^{3} d(v_1,p_1=0)^{3}}  \;,
\end{equation}
where by overall translation invariance we set $p_1 = 0$. In our domain of integration, the most divergent region is when $v_1,v_2,v_3,p_2$ are all near $0$ (or all near $p_3$).  Focusing on this region is equivalent to taking $p_3$ to be far away from the other points, which allows us to drop the $d(p_3,v_3)^{-3}$ term since it is non-singular.  We are then considering the integral
\begin{equation} 
	\int_{p_2,v_1,v_2,v_3} \frac{1}{d(0,v_1)^{3} d(v_1,v_2)^{3} d(v_2,p_1)^{3} d(v_2,v_3)^{3}}   \d^4 v_1 \d^4 v_2 \d^4 v_3 \d p_2 \;.  
\end{equation}
There are $13$ integration variables, and the integrand has weight $-12$ under scaling all the variables.  Therefore the integral converges absolutely on a domain where the integration variables $v_i, p_2$ are bounded from above. 

\subsubsection{The Failure of the Amplitude to be Gauge Invariant}
If we include all three diagrams that contribute to the anomaly, we find that the regularized amplitude is 
\begin{align}
\begin{split}
\tfrac{1}{2} \left(\tfrac{\ii}{2\pi}\right)^3
\int_{\substack{p_i,v_j \\ p_{\rm min} < p_{\rm max} - \eps}} P(0, v_1) \d z_1 A^a_1  P(v_1, v_2) \d z_2  P(v_2, p_2) P(v_2, v_3) \d z_3 A^b_3  P(v_3, p_3)  \\
\times f^{afc} f^{fgd} f^{gbe}\{ \rho(t^c), \rho(t^d), \rho(t^e)\} \;,
\end{split}
\end{align}
where $p_{\rm min}, p_{\rm max}$ are the coordinates on the Wilson line with the minimum and maximum value. The factor of $\tfrac{1}{2}$ appears because there are $6$ ways of ordering three points on the line, but only $3$ diagrams we need to consider.   

Let us now investigate the failure of the integral to be gauge invariant.  Let us perform a linearized gauge transformation $A \mapsto A + \d \c$ to the external gauge field $A$.  By integration by parts, the result can be written as a sum of terms where we apply the exterior derivative $\d$  to one of the propagators, or else we integrate over one of the boundary components of the domain of integration.  We will show that all such terms vanish except the term where we integrate over the boundary component in which $p_{\rm max} = p_{\rm min} + \eps$.

We use the identity \eqref{propagator_greens}.
Applying the exterior derivative to a propagator has the effect of yielding a contribution where we integrate over the region where the vertices $v_i,v_j$ (or $p_i, v_j$) at either end of the propagator are identified.

If we contract the propagator connecting $p_3$ and $v_3$, the result must vanish. This is because we can assume that the external gauge field $A$ and the ghost $\c$ are both divisible by $z$. This assumption is justified because the anomaly always involves a $z$-derivative on each external field.  If we integrate over the region where $v_3 = p_3$, then since $z_3 = 0$ on this domain, the integrand vanishes.

If we contract the propagator connecting $p_2$ and $v_2$, the result vanishes by the argument presented in section \ref{excluding_amplitude}.  

If we contract the propagator connecting $v_1$ and $v_2$, or $v_3$ or $v_1$, the results cancel by the Jacobi identity. 

Next, let us consider the boundary components where two adjacent points on the Wilson line can meet. They can meet from above or below, with different signs.  After
using the Jacobi identity, the color factor associated to a boundary component like this is that given by the diagram
\begin{center}
\begin{tikzpicture}
\begin{scope}[scale=0.7]
\draw[dashed] (-2,2.7) to (-2,-2.7);
\draw (0,0) circle (1cm);
\draw (130:1) to [out=130,in=0] (-2,1.3);
\draw (230:1) to [out=230, in=0] (-2,-1.3);
\draw (-1.5, 1.25) to (-1.4, 2.1);
\draw (1,0) to (2,0); 
\end{scope}
\end{tikzpicture}
\end{center}
We have seen in section \ref{excluding_C} that anomalies with a color factor of this form can be cancelled by a counter-term, and so are not relevant. 

The remaining boundary component is the one with $p_{\rm max} = p_{\rm min} + \eps$.  This is the one that will contribute to the anomaly.

The integral describing the anomaly is then
\begin{align}
\begin{split}
\left(\tfrac{\ii}{2\pi}\right)^3
 \int_{\substack{p_i,v_j \\ p_{\rm min}  = p_{\rm max} - \eps}} & P(p_1, v_1) \d z_1  \c^a_1  P(v_1, v_2) \d z_2  P(v_2, p_2) P(v_2, v_3) \d z_3 A^b_3  P(v_3, p_3)   \\
& \times f^{afc} f^{fgd} f^{gbe}\{ \rho(t^c), \rho(t^d), \rho(t^e)\}  \;,
\end{split}
\end{align}
where on vertex $v_1$ the external gauge field has been replaced by the ghost $\c$.  The prefactor of $\tfrac{1}{2}$ has been cancelled by the factor of $2$ coming from the two possible external lines to the ghost field.  The diagram automorphism interchanges these two, so we are left with the single integral above but without the prefactor of $\tfrac{1}{2}$. 

This integral can be written as a sum of six terms, according to the six possible orderings of the points on the Wilson line. A reflection in a plane orthogonal to the Wilson line shows that there are only three independent integrals, which we can take to be the three where the points $p_1,p_2,p_3$ are cyclically ordered.  We find that we need to compute
\begin{align}
\begin{split} 
&2  \left(\tfrac{\ii}{2\pi}\right)^3\int_{p_1 < p_2 < p_3= p_1 + \eps}\int_{v_1,v_2} P(p_1, v_1) \d z_1  \c^a_1  P(v_1, v_2) \d z_2  P(v_2, p_2) P(v_2, v_3) \d z_3 A^b_3  P(v_3, p_3) 
\\ &+  2\left(\tfrac{\ii}{2\pi}\right)^3 \int_{p_2 < p_3 < p_1= p_2 + \eps}\int_{v_1,v_2}  P(p_1, v_1) \d z_1  \c^a_1  P(v_1, v_2) \d z_2  P(v_2, p_2) P(v_2, v_3) \d z_3 A^b_3  P(v_3, p_3) \\
 &+ 2\left(\tfrac{\ii}{2\pi}\right)^3 \int_{p_3 < p_1 < p_2= p_3 + \eps}\int_{v_1,v_2}  P(p_1, v_1) \d z_1  \c^a_1  P(v_1, v_2) \d z_2  P(v_2, p_2) P(v_2, v_3) \d z_3 A^b_3  P(v_3, p_3)  \;. \label{equation_three_integrals} 
 \end{split}
 \end{align}
The factor of $2$ here is because each integral represents the contribution from one of two possible orderings on the points $p_i$.

\subsubsection{Calculating the Anomaly Integral}

Because we know that the anomaly must involve a $z$-derivative for each external line and no other derivatives, we can detect the anomaly by assuming that the external ghost field $\c$ is $z$, and the external gauge field $A$ is $z \delta_{x = 0}$.  

The choice of external gauge field fixes the location of the vertex $v_3$; by translation invariance we choose instead to fix the location of the vertex $p_1$ to be $p_1 = 0$. 

We will first evaluate the integral of the diagram in Fig. \ref{figure_diagram_labels}, where $p_1 < p_2 < p_3$, and we integrate over the region where $p_3 = p_1 + \eps$, and we set $p_1 = 0$.  We denote $p_2$ by $p$.  

The integral for the first diagram of \fig \ref{figure_anomaly2} is 
\begin{align}
\label{13form}
2\left(\tfrac{\ii}{2\pi}\right)^3 \int_{p=0}^{\epsilon} \int_{v_1, v_2, v_3} 
P(0, v_1)  \wedge  z_1 \d z_1  \wedge P(v_1, v_2) \wedge \d z_2 \wedge P(v_2, p)\wedge
P(v_2, v_3) \wedge z_3 \d z_3 \wedge P(v_3, \epsilon) \;.
\end{align}
Now that the integrand has 5 propagator 2-forms, and 1 vertex 1-forms, making a 13-form, which is consistent since we wish to integrate over $1+4\times 3=13$ variables.

We evaluate this integral in Appendix \ref{app.two-loop_computation}.
Reintroducing the color factor and the factor of two present in \eqn \eqref{equation_three_integrals},
we find that the anomaly is given by
\begin{equation} 
\label{13form_result}
	\frac{\hbar^2}{12}   f^{afc} f^{fgd} f^{gbe}\{ \rho(t^c), \rho(t^d), \rho(t^e)\}  \;.
\end{equation}

\subsection{The Anomaly as a Correction to the Algebra}
\label{anomaly_correction}
In this section we will show that for a Wilson line to be defined at the quantum level, there must be quantum corrections to the coupling of the gauge field which satisfy a certain algebraic relation.  This relation forces the Wilson line to be built from a representation of the Yangian, not of the Lie algebra $\g[[z]]$.

Suppose we have a Wilson line in a representation $V$ of our gauge group. Suppose that the level-one generators $t_{a,1}$ of the Yangian act by some operators $\rho(t_{a,1})$.  Consider the four two-loop diagrams in Figure \ref{fig.four_diagrams}.

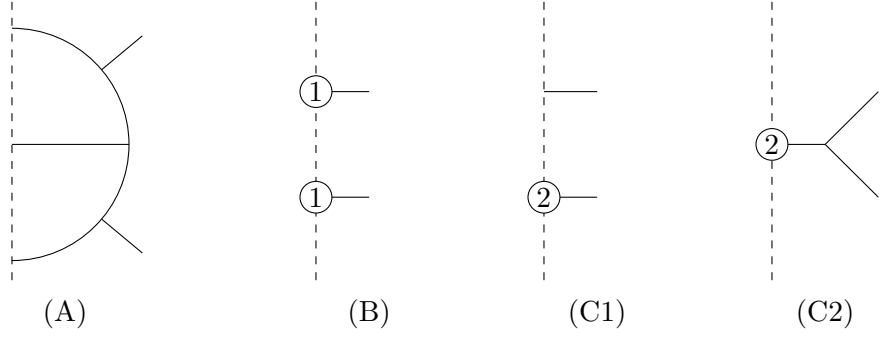
\begin{figure}[htbp]
\begin{center}
\begin{tikzpicture}

\begin{scope}[ shift={(-2,0)}, scale=(0.7)]
\draw[dashed](0,2.7) to (0,-2.7);
\draw (0,-2.2) arc (-90:90:2.2);
\draw(2.2,0) to (0,0); 
\draw (40:2.2) to (40:3.2);
\draw (-40:2.2) to (-40:3.2);
\node at (1,-3.2) {(A)};
\end{scope}
	
\begin{scope}[shift={(2,0)}, scale=(0.7)]
\draw[dashed] (0,2.7) to (0,-2.7); 
\draw (0,-1) to (1,-1);
\draw (0,1) to (1,1);
\draw[ fill  = white] (0,1) circle (0.3); 
\node at (0,1) {$1$};
\draw[ fill  = white] (0,-1) circle (0.3); 
\node at (0,-1) {$1$};
\node at (1,-3.2) {(B)};
\end{scope}

\begin{scope}[shift={(5,0)}, scale=(0.7)]
\draw[dashed] (0,2.7) to (0,-2.7); 
\draw (0,-1) to (1,-1);
\draw (0,1) to (1,1);
\draw[ fill  = white] (0,-1) circle (0.3); 
\node at (0,-1) {$2$};
\node at (1,-3.2) {(C1)};
\end{scope}

\begin{scope}[shift={(8,0)}, scale=(0.7)]
\draw[dashed] (0,2.7) to (0,-2.7); 
\draw (0,0) to (1,0) to (2,1);
\draw (1,0) to (2,-1);
\draw[ fill  = white] (0,0) circle (0.3); 
\node at (0,0) {$2$};
\node at (1,-3.2) {(C2)};
\end{scope}

\end{tikzpicture}
\label{figure_commutator}
\end{center}
\caption{\small{Four Feynman diagrams contributing to the anomaly.}}
\label{fig.four_diagrams}
\end{figure}

As before,  a vertex on the Wilson line labelled with the number $1$ or $2$ indicates a coupling of the first or second $z$-derivative of the gauge field.  (We tacitly include the variations on these diagrams where the ordering of the vertices on the Wilson lines has been permuted). 

The amplitude of each of these diagrams can fail to  be gauge invariant. For the first diagram (A), this involves the two-loop computation that we have been describing.  For the other diagrams, the failure to be gauge invariant is much more straightforward, as we will now see.

Suppose that at the quantum level, the first and second derivative of the gauge field are coupled to the Wilson line by operators $\rho_{a,1}:V\to V$ and $\tfrac{1}{2} \rho_{a,2}:V\to V$, where $a$ is an adjoint index. 

Then, the amplitude for diagram (B) is given by 
\begin{equation}
\int_{p_1< p_2 \in \R} \partial_z A^a(p_1)\partial_z A^b (p_2) \rho_{a,1}\circ\rho_{b,1} \;. 
\end{equation}
If we change the gauge field by a gauge transformation $A^a \mapsto A^a+ \d \c^a$ we find (by integration by parts and Stokes' theorem) that the integral becomes  
 \begin{equation}
\int_{p \in \R} \partial_z \c^a(p)\partial_z A^b (p) [\rho_{a,1}, \rho_{b,1}] \;.
\end{equation}
Similarly, for diagram (C1), the amplitude is 
\begin{equation}
	\tfrac{1}{2} \int_{p_1 < p_2\in \R}( \partial_z^2 A^a(p_1)) A^b(p_2) \rho_{a,2} \circ \rho_{b,0} + 
	\tfrac{1}{2} \int_{p_1 < p_2 \in \R} (A^a(p_1)) \partial_z^2 A^b(p_2) \rho_{a,0} \circ \rho_{b,2} \;. 
\end{equation}
The failure of this to be gauge invariant is given by 
\begin{equation} 
	\tfrac{1}{2} \int_{p \in \R} (\partial_z^2 \c^a(p)) A^b(p) [\rho_{a,2} , \rho_{b,0}] 
	+ \tfrac{1}{2} \int_{p \in \R}  \c^a(p) \partial_z^2 A^b(p) [\rho_{a,0} , \rho_{b,2}]  \;.
\end{equation}

Next, diagram (C2) has amplitude 
\begin{equation}
\tfrac{1}{4} \int_{p \in \R, v = (x,y,z)} \d z \partial_{z_p}^2  P(p,v) A^a(v) A^b(v) f_{ab}{}^c \rho_{c,2} \;. 
\end{equation}
Here $v$ indicates the position of the interior vertex, and $\partial_{z_p}$ indicates we apply a $z$-derivative to the $p$-coordinate of the propagator.  
The failure of this to be gauge invariant is, using the fact that $\d z_v \d P(p,v) = -\delta_{p = v}$ and imposing the equations of motion for $A$, 
\begin{equation}
-\tfrac{1}{2}\int_{p \in \R} \partial_z^2 \left( \c^a(p) A^b(b)   \right) f_{ab}{}^c \rho_{c,2} \;. 
\end{equation}

Finally, the anomaly for diagram (A) is, as we have seen, 
\begin{equation}
	\tfrac{\hbar^2}{12} \int_{p \in \R} \partial_z \c^a(p) \partial_z A^b(p) f^{afc} f^{fgd} f^{gbe}\{ \rho(t^c), \rho(t^d), \rho(t^e)\}  \;.
\end{equation}
The sum of the anomalies for the diagrams (A), (B) and (C) must vanish.  This gives rise to the equations
\begin{align} 
[\rho_{a,2}, \rho_{b,0}] &= f_{ab}{}^c \rho_{c,2} \;,\\
[\rho_{a,1}, \rho_{b,1}] &= f_{ab}{}^c \rho_{c,2} - \tfrac{\hbar^2}{12} Q_{ab}( \rho_{\bullet,0}) \;, \label{level1_relations}
 \end{align}
 where the first equation is from the cancellation of the coefficients of 
 $(\partial_z^2 \c^a(p)) A^b(p)$ and  $\c^a(p) \partial_z^2 A^b(p)$,
 and the second from those of  $(\partial_z \c^a(p)) \partial_z A^b(p)$.
In the equations above we used the short-hand notation $\rho_{a,n}=\rho(t_{a,n})$ and
we defined 
\begin{align}
Q_{ab}( \rho_{\bullet,0}) = f^{afc} f^{fgd} f^{gbe}\{ \rho(t^c), \rho(t^d), \rho(t^e)\} \;.
\end{align} 
The first equation is the commutation relation we found classically.  The second equation tells us that the operators $\rho_{a,1}$ do not commute to give $\rho_{c,2}$, as we would find classically, but a linear combination of $\rho_{c,2}$ and a certain cubic polynomial in the level $0$ generators $\rho_{d,0}$.

We can decompose the exterior square $\wedge^2 \g$ into $\g \oplus \wedge^2_0 \g$, where $\wedge^2_0 \g$ is the kernel of the Lie bracket map from $\wedge^2 \g \to \g$.  By using this decomposition, we can write the second relation as a sum of two independent relations:
\begin{align} 
f^{ab}{}_c [\rho_{a,1}, \rho_{b,1}] &=  \sh^\vee \rho_{c,2} -  \tfrac{\hbar^2}{12} f^{ab}{}_c Q_{ab}(\rho_{\bullet,0})\;, \\
\Lambda^{ab}[\rho_{a,1}, \rho_{b,1}] &= -\tfrac{\hbar^2}{12} \Lambda^{ab} Q_{ab}(\rho_{\bullet,0})\label{relation_new} \;,
 \end{align}
where $\Lambda^{ab} \in \wedge^2 \g$. 
The first relation can always be satisfied by redefining the operator $\rho_{c,2}$ to
\begin{equation} 
\rho'_{c,2} = \rho_{c,2} - \frac{\hbar^2}{12 \sh^\vee}  f^{ab}{}_c Q_{ab}(\rho_{\bullet,0}) \; . 
 \end{equation}
The second relation, however, can not satisfied in such a trivial way. 

What we have found from this analysis is that, in order for a classical Wilson line to quantize modulo $\hbar^3$, we must be able to define the operators $\rho_{a,1}$ so that they satisfy the relation \eqref{relation_new}.  This relation is one of the relations in the Yangian algebra which makes it into a non-trivial deformation of the universal enveloping algebra of $\g[[z]]$. 

The reader will note that the diagrams in Fig. \ref{fig.four_diagrams} are simply the classical diagrams of Fig. \ref{tred}, with quantum corrections to the vertices, plus
the anomaly diagram (A).  Note that we did not add to diagrams (C2) in the figure additional contributions involving a quantum correction to the bulk
vertex.  Such contributions can be omitted, using Proposition \ref{proposition_cohomology}, as they involve a cohomology class that factors through a map to $\g$.

\subsubsection{Matching with Standard Relations in the Yangian}\label{matching}
Let us explain how to match the relation we have found with one of the known descriptions of the Yangian.

For any simple Lie algebra $\mf{g}$, the sequence
\begin{equation} 
\wedge^3 \g \to \wedge^2 \g \to \g 
 \end{equation}
is exact in the middle, where each map in the sequence is obtained by applying the Lie bracket to two entries in the exterior power.  This sequence is part of the Chevalley-Eilenberg homology complex of $\g$.  For any simple Lie algebra, $H_2(\g,\C) = 0$, which is why this sequence is exact in the middle.

Therefore, the kernel of the map from $\wedge^2 \g \to \g$ is the image of the map from $\wedge^3 \g \to \wedge^2 \g$. The antisymmetric tensor $\Lambda^{ab} \in \wedge^2 \g$ appearing in \eqn \eqref{relation_new} is in the kernel of this map. Therefore, we lose no information by considering the relation in \eqn \eqref{relation_new} when $\Lambda^{ab}$ is taken to be in the image of the map from $\wedge^3 \g$, that is, when $\Lambda^{ab}$ is of the form
\begin{equation} 
\Lambda^{ab} =   \Gamma^{ade}f_{de}{}^b  
 \end{equation}
for some antisymmetric tensor $\Gamma^{ade} \in \wedge^3 \g$.

Relation \eqref{relation_new} takes the form, in this notation,
\begin{equation} 
\Gamma^{ade} f_{de}{}^b[\rho_{a,1}, \rho_{b,1}] = - \tfrac{\hbar^2}{12}   \Gamma^{ade} f_{de}{}^b Q_{ab}(\rho_{\bullet,0})  \;.
\end{equation}
For an element $t \in \g$, we let $\rho(t)$ denote the action of $t$ on our representation, and $\rho(J(t))$ denote the action of the corresponding level one generator on our representation.  Then our relation is 
\begin{equation} 
\label{Serre-tmp}
\Gamma^{abc} [  \rho (J( t_a) ), \rho ( J( [t_b,t_c]) ) ] =  - \tfrac{\hbar^2}{12} \Gamma^{abc} ([t_e, [t_a,t_d]], [t_f,[t_b,t_c]] )  \{\rho(t_d), \rho(t_e), \rho(t_f)\}  \;,
 \end{equation}
where as before $\{\rho(t_d), \rho(t_e), \rho(t_f)\}$ is  $1/6$  times the sum of the products of these operators in each of the six possible orders  (see \eqn \eqref{triple_def}).
The
Jacobi identity and the $\mathfrak{g}$-invariance of the Killing form allows us to rewrite
\begin{align*} 
 ([t_e, [t_a,t_d]], [t_f,[t_b,t_c]] )  
 &= ([t_e, [t_a,t_d]], [[t_f,t_b],t_c] )  +  ([t_e, [t_a,t_d]], [t_b,[t_f,t_c]] )   \\
 &= ([t_c, [t_e, [t_a,t_d]]], [t_f,t_b] ) -  ([t_b,[t_e, [t_a,t_d]]], [t_f,t_c] )  \;,
 \end{align*}
Thus anti-symmetrizing in  $a,b,c$ and again using the Jacobi identity, we obtain
\begin{align*} 
 & \Gamma^{abc} ([t_e, [t_a,t_d]], [t_f,[t_b,t_c]] )  \\
  &\qquad =  2 \Gamma^{abc} ([t_c, [t_e, [t_a,t_d]]], [t_f,t_b] )  \\
  &\qquad=  2 \Gamma^{abc} ( [ [t_c,t_e], [t_a,t_d]], [t_f,t_b] ) + 2 \Gamma^{abc} ( [t_e, [t_c, [t_a,t_d]], [t_f,t_b])  \\
  &\qquad = -2 \Gamma^{abc} ( [ [t_c,t_e], [t_a,t_d]], [t_b,t_f] )  +\Gamma^{abc} ( [t_e, [[t_c, t_a],t_d]], [t_f,t_b])  \;,
 \end{align*}
where on the last line we have again used anti-symmetry of $\Gamma^{abc}$ to conclude that we can replace $[t_c,[t_a,t_d]]$ by $\tfrac{1}{2}[[t_c,t_a],t_d]$. The second factor in the final line, after combining with the factor $\{\rho(t_d), \rho(t_e), \rho(t_f)\}$,
gives back the expression we started with, with an overall minus sign:
\begin{align*} 
 &\Gamma^{abc} ( [t_e, [[t_c, t_a],t_d]], [t_f,t_b])  \{\rho(t_d), \rho(t_e), \rho(t_f)\} \\
 &\qquad= -\Gamma^{abc} (   [t_e, [t_f,t_b]], [[t_c, t_a],t_d]])  \{\rho(t_d), \rho(t_e), \rho(t_f)\} \\
& \qquad= -  \Gamma^{abc} ([t_e, [t_a,t_d]], [t_f,[t_b,t_c]] )  \{\rho(t_d), \rho(t_e), \rho(t_f)\} \;,
 \end{align*}
 where we used $\mathfrak{g}$-invariance of the Killing form, anti-symmetry in $a,b,c$ and symmetry in $d,e,f$.
 This allows us to conclude that
\begin{align}
\begin{split} 
 &\Gamma^{abc} ([t_e, [t_a,t_d]], [t_f,[t_b,t_c]] )  \{\rho(t_d), \rho(t_e), \rho(t_f)\} \\
 & \qquad=  -  \Gamma^{abc} ( [ [t_c,t_e], [t_a,t_d]], [t_b,t_f] ) \{\rho(t_d), \rho(t_e), \rho(t_f)\}\;. 
\end{split}
 \end{align}
Thus the quantum-corrected commutation relation \eqref{Serre-tmp} becomes
\begin{equation}
 \Gamma^{abc} [  \rho (J( t_a) ), \rho ( J ([t_b,t_c]) ) ] = \frac{\hbar^2}{12} \Gamma^{abc} ( [ [t_c,t_e], [t_a,t_d]], [t_b,t_f] ) \{\rho(t_d), \rho(t_e), \rho(t_f)\}\;. 
\end{equation}
Note that by anti-symmetry of $\Gamma^{abc}$ and symmetry of $ \{\rho(t_d), \rho(t_e), \rho(t_f)\}$ we have
\begin{align}
\begin{split}
 \Gamma^{abc} &[  \rho (J( t_a) ), \rho ( J [t_b,t_c] ) ]\\
 & =
\tfrac{1}{3} \Gamma^{abc}  \left( [  \rho (J( t_a) ), \rho ( J ([t_b,t_c]) ) ] +   [  \rho ( J(t_b) ), \rho ( J ([t_c,t_a]) ) ]+  [  \rho ( J(t_c) ), \rho ( J([t_a,t_b]) ) ] \right)  \;.
\end{split}
\end{align}
Also, $ ( [ [t_c,t_e], [t_a,t_d]], [t_f,t_b] ) \{\rho(t_d), \rho(t_e), \rho(t_f)\} $
is antisymmetric in $a,b,c$.  Therefore our relation is equivalent to
\begin{align}
\begin{split} 
&   [  \rho (J( t_a) ), \rho ( J ([t_b,t_c]) ) ] +   [  \rho ( J(t_b) ), \rho ( J ([t_c,t_a]) ) ]+  [  \rho ( J(t_c) ), \rho ( J([t_a,t_b]) ) ]  \\
& \qquad \qquad =  \tfrac{\hbar^2}{4} ( [ [t_c,t_e], [t_a,t_d]], [t_b,t_f] ) \{\rho(t_d), \rho(t_e), \rho(t_f)\} \\
& \qquad \qquad =  \tfrac{\hbar^2}{4} ( [t_a,t_d], [[t_b,t_e]], [t_c,t_f] ]) \{\rho(t_d), \rho(t_e), \rho(t_f)\}\;. 
 \end{split}
 \end{align}
This is the same as the known relation for the Yangian,
see e.g.\ \eqn (4) in \cite[Theorem 12.1.1]{Chari-Pressley}. 

\subsection{An Example of an Anomalous Wilson Line}\label{exam}
It is known that the adjoint representation of any simple Lie algebra not of type $A$ does not lift to a representation of the Yangian.   This indicates that a Wilson line in the adjoint representation should be anomalous.  In this section we will see explicitly that our two-loop anomaly cannot be cancelled for the adjoint representation of $SO_N$. 

Let us start with some group-theory background.  If $M_1,\dots,M_k\in \mf{so}_N$ (for $N$ sufficiently large), the ring of invariant function of these $k$ elements of $\mf{so}_N$ is generated by the invariant functions given by traces, that is, the functions $\op{Tr}(M_{i_1} \dots M_{i_r})$ where $i_1,\dots,i_r$ are in the set $\{1,\dots,k\}$.  Repeated indices are allowed and the trace is taken in the fundamental representation.   

For finite $N$, there are trace relations among these functions. For $N$ large compared to the degree of the polynomial in the variables $M_i$, the only relation that survives is the relation that says
\begin{equation} 
	\op{Tr}(M_{i_1} \dots M_{i_r})= (-1)^{r}\op{Tr}(M_{i_r}\dots M_{i_1}) \;.  
\end{equation}

We will need to understand the decomposition of $\mf{so}_N \otimes \mf{so}_N$ into irreducible representations.  The number of irreducible representations is the same as the number of invariant tensors in $\mf{so}_N^{\otimes 4}$.  Let $M_1,M_2,N_1,N_2$ denote elements of $\mf{so}_N$. The analysis above allows us to enumerate the invariant functions of these matrices which are linear in each variable.  We find that there are $6$ invariant tensors, three with a single trace and three with two traces.  Of these, $4$ of them are symmetric when we exchange $M_1$ and $M_2$, or when we exchange $N_1$ and $N_2$.  These correspond to projectors onto irreducible representations inside $\Sym^2 \mf{so}_N$. Two of them are antisymmetric under the same permutations, and correspond to projectors onto irreducible representations in $\wedge^2 \mf{so}_N$.  The two antisymmetric tensors are
\begin{align} 
\begin{split}
& \op{Tr}([M_1,M_2] [N_1,N_2]) \;, \\  
&\op{Tr}(M_1 N_1) \op{Tr}(M_2 N_2) - \op{Tr}(M_1 N_2) \op{Tr}(M_2  N_1) \;.   
\end{split}
 \end{align}
The first antisymmetric tensor gives rise to the projection from $\wedge^2 \mf{so}_N$ onto the adjoint representation, and the second to the projection onto the irreducible representation $\wedge^2_0 \mf{so}_N$.

Note also that there is only one copy of the adjoint representation in $\mf{so}_N\otimes \mf{so}_N$. Any copy of the adjoint representation is associated to an invariant element in $\mf{so}_N^{\otimes 3}$,  and so an invariant function of $M_1,M_2,M_3$ which is linear in each $M_i$.  The discussion above tells us that the only such invariant function is $\op{Tr}(M_1 M_2 M_3)$.

If we have a Wilson line in the adjoint representation of $SO_N$, it follows from this that the level one generators must be coupled by a multiple of the level $0$ generators. Therefore, the commutator of the level $1$ generators must be proportional to the level $0$ generators.  Relation \ref{level1_relations} can never hold unless 
\begin{equation} 
	 Q_{ab}( \rho_{\bullet,0}) = f^{afc} f^{fgd} f^{gbe}\{ \rho(t^c), \rho(t^d), \rho(t^e)\} 
\end{equation}
is a multiple of some $\rho(t^c)$.  We will show the projection of the tensor $Q_{ab}$ onto $\wedge^2_0(\mf{so}_N)$ is non-zero, so the anomaly is non-zero. 

Let us introduce incoming and outgoing states $t_{\rm in}$, $t_{\rm out}$ on the adjoint Wilson line.  The three diagrams which give us an anomaly all, as we have seen, have the same color factor. Since $\rho$ is in the adjoint representation, we can compute the color factor using a diagram in which the Wilson line is placed on the same footing as the other lines in the Feynman diagram. If we do this, the diagram looks like
\begin{center}
\begin{tikzpicture}
\draw (0:0) circle (1);
\draw (0:1) to (180:1);
\node (N1) at (120:2) {$t_{\rm in}$};
\node (N2) at (60:2) {$t_a$};
\node (N3) at (240:2){$t_{\rm out}$};
\node (N4)at (300:2) {$t_b$};
\draw (120:1) to (N1);
\draw (60:1) to (N2);
\draw (240:1) to (N3);
\draw (300:1) to (N4);
\end{tikzpicture}		
\end{center}
where the segment connecting $t_{in}$, $t_{out}$ is what was the Wilson line.   Here we are implicitly anti-symmetrizing in $a$ and $b$, since the anomaly is always antisymmetric in the external lines. 

This makes it clear that the color factor of the anomaly in the adjoint representation is 
\begin{equation} 
	(t_{\rm out}, Q_{ab}( \rho_{\bullet,0})t_{\rm in}) = \op{Tr} (t_{\rm in} t_{a} t_f t_{b} t_{\rm out} t_f ) -  \op{Tr} (t_{\rm in} t_{b} t_f t_{a} t_{\rm out} t_f )\;,   \label{eqn_anomaly_adjoint}  
\end{equation}
where the trace is taken in the adjoint representation, and we sum over $f$.  

Let us view the anomaly \eqref{eqn_anomaly_adjoint} as a linear operator from $\mf{so}_N^{\otimes 2} \to \mf{so}_N^{\otimes 2}$, where the first two copies of $\mf{so}_N$ are given by $t_a,t_b$ and the second two by $t_{\rm in}, t_{\rm out}$.  Since this is an $SO(N)$ invariant operator, it is a linear combination of the projection onto the $6$ irreducible representations in $\mf{so}_N^{\otimes 2}$.  The anomaly is non-zero as long as the coefficient of the projection onto $\wedge^2_0 \mf{so}_N$ is non-zero.

Equivalently, we can expand the expression \eqref{eqn_anomaly_adjoint} as a linear combination of the $6$ invariant tensors in $\mf{so}_N^{\otimes 4}$. To show that the anomaly is non-zero it suffices to show that, in this expansion, the coefficient of 
\begin{equation} 
\op{Tr} (t_{\rm in} t_a) \op{Tr}(t_{\rm out} t_b) - \op{Tr}(t_{\rm in} t_b) \op{Tr}(t_{\rm out} t_a) 
 \end{equation}
is non-zero. This is because this invariant tensor corresponds to projection onto $\wedge^2_0 \mf{so}_N$.  Since the anomaly is antisymmetric in $a$ and $b$ we need to verify that the coefficient of $\op{Tr}(t_{\rm in} t_a) \op{Tr}(t_{\rm out} t_b)$ is non-zero.

We can evaluate the expression in equation \eqref{eqn_anomaly_adjoint} for $\mf{so}_N$ using the double-line notation familiar from evaluation of the color factors for $\mf{gl}_N$ gauge theories. Since we are using $\mf{so}_N$ instead of $\mf{gl}_N$, the double-line technique works a little differently.  There are two types of double-line edges, which we can think of as a flat ribbon and a ribbon with a half-twist. This is because the Casimir for $\mf{so}_N$ is represented as 
\begin{equation} 
	\sum E_{ij} \otimes E_{ji} - \sum E_{ij} \otimes E_{ij}  \;,
\end{equation}
where $E_{ij}$ is the elementary matrix.  Only the first term appears in the Casimir of $\mf{gl}_N$, which is why only the first type of double-line edge appears when we study $\mf{gl}_N$ color factors.

Using the double line notation, we  find that the color factor we are computing is given by a sum over connected unoriented surfaces of Euler characteristic $-1$, with $4$ marked points on the boundary. Since we are interested in the double-trace terms in the anomaly, we need to consider surfaces where two of the marked points (labelled by $t_{in}, t_a$) are on one boundary component, and the other two or on a different component.  

The order $N$ two-trace term in the anomaly is given by the planar diagram  
\begin{center}
\begin{tikzpicture}[scale=1.3]
\draw (0:0) circle (1);
\draw (-0.8,0.1) [out = 0, in = 180] to (0.8, 0.1);
\draw (-0.8, -0.1) to [out = 0, in = 180] (0.8,-0.1);
\draw (-0.8,0.1) to [out = 90, in = 180] (0,0.8) to [out=0, in = 90] (0.8, 0.1);
\draw (-0.8, -0.1) to [out = 270, in = 180] (0,-0.8) to [out=0, in = 270] (0.8,-0.1);
\node (N1) at (120:0.55) {$t_{\rm in}$};
\node (N2) at (60:0.55) {$t_a$};
\node (N3) at (240:0.55){$t_{\rm out}$};
\node (N4) at (300:0.55) {$t_b$};
\draw[ fill = black] (120:0.8) circle (0.05);
\draw[ fill = black] (60:0.8) circle (0.05);
\draw[ fill = black] (240:0.8) circle (0.05);
\draw[ fill = black] (300:0.8) circle (0.05); 
\end{tikzpicture}
\end{center}
which contributes 
\begin{equation} 
N \op{Tr} (t_{\rm in} t_a)  \op{Tr}( t_b t_{\rm out} ) \;.
 \end{equation}
 (We have not drawn the similar diagram which contributes the same expression with $a$ and $b$ exchanged and a different sign).

There is also an order $1$ term, given by a double-line diagram which has the topology of a M\"obius band with a circle removed. For $N$ sufficiently large ($N=10$ certainly suffices) the coefficient of this term is much smaller than that of the order $N$ term, so the anomaly is non-zero.

\section{Trigonometric Solutions of the Yang-Baxter Equations}\label{trigonometric}

\subsection{Preliminaries}\label{prelims}
Our goal in this section will be to understand from the present perspective the trigonometric solutions of the Yang-Baxter equation.
For this we have to consider the case that $C=\C^\times$, or equivalently $C=\CP^1$ with a differential $\omega = \d z/z$ that has two
simple poles at 0 and $\infty$.

The basic consequence of a pole in $\omega$ is the following.  One has
\be\label{zifo}
\bar\partial\frac {\d z}{z}=2\pi \i \delta^2(z)\;,
\ee
where $\delta^2(z)$ is a delta function normalized by $\int|\d^2z|\delta^2(z)=1$.
Accordingly, when we vary the action (\ref{eq.action}) to derive the equations of motion, we pick up a ``boundary  term'' supported at $z=0$:
\be\label{iffto}
\delta S=\dots -\frac{\i}{\hbar}\int_{\Sigma\times \{0\}}\Tr\, A\wedge \delta A\;, 
\ee
where bulk terms have been omitted.
The equations of motion of the classical theory tell us to set to zero  the bulk terms 
and also the boundary term $\Tr\,A\wedge \delta A|_{\Sigma\times \{0\}}$.  We will abbreviate the latter as $\Tr\,A\wedge \delta A|$ (and we use
a similar notation in general for restriction to a singularity or boundary).

Actually, the boundary terms in the equations of motion play a distinguished role.   Quantum field theory is constructed by integrating over fields
that do not necessarily satisfy the equations of motion in bulk.  The bulk equations of motion are only satisfied in the classical limit.  But boundary terms in the
equations of motion have to vanish exactly.   This is needed in order to get in bulk a symmetric
propagator, as expected for bosons (or an antisymmetric one in an analogous problem with fermions).
In general, in quantum field theory in the presence of boundaries or defects, one typically runs into ``boundary terms'' analogous to $\Tr\,A\wedge \delta A$.
As a starting point in quantization, one always needs to impose a minimal condition that sets the boundary terms to 0.   For example, if one quantizes on a manifold $X$ an ordinary scalar field with the usual
Lagrangian $\frac{1}{2} |\d\phi|^2$, the boundary term is $\int_{\partial X}\delta\phi \partial_n\phi$ (where $\partial_n$ is the normal derivative).
The simplest way to dispose of this boundary term is to set either $\phi|=0$ (Dirichlet boundary conditions) or $\partial_n\phi|=0$ (Neumann boundary conditions).
With either of these conditions, the theory can be quantized.  By contrast,  a stronger condition such as $\phi=\partial_n\phi=0$ is too strong and does not lead to
a quantum theory of a scalar field. 

What condition on $A$ will we use to set $\Tr\,A\wedge\delta A|=0$?  The most obvious condition might seem
to be  $A|=0$.  
However, this condition  is too strong, analogous to $\phi|=\partial_n\phi|=0$ for the scalar field.  If we impose it, a suitable propagator will not exist and
we will not be able to do perturbation theory.\footnote{Later, we will impose a weaker condition and find a unique solution for the $r$-matrix.
A similar analysis assuming that $A|=0$  will show that no possible $r$-matrix exists.    This is actually clear from the fact that
the unique $r$-matrix we get with a weaker condition does not obey $A|=0$.}

We can gain some intuition by considering the case of a double pole.  Of course, we have already studied double poles in the context
of rational solutions of the Yang-Baxter equation.  In that analysis, we considered the case that $\omega=\d z$ with a double pole
at infinity.  For our present purposes, it is more convenient to place the pole at a finite point, so we take $\omega=\d z/z^2$, with a double
pole at the origin.  Since 
\be\label{zifffo}\bar\partial\frac {\d z}{z^2}=-2\pi \i \partial_z\delta^2(z)\;,\ee
the vanishing of the surface term now requires
\be\label{miffo}0=\left.\left(\partial_z\Tr\, A\wedge \delta A\right)\right|\;. \ee
A completely natural way to satisfy this condition is to set $A|=0$.   Varying $A$ with the constraint $A|=0$, we have also $\delta A|=0$,
so that $\Tr\,A\wedge \delta A$ has a double zero at $z=0$ and eqn.\ (\ref{miffo}) is obeyed.  The condition $A|=0$ is gauge-invariant if
we likewise constrain the generator $\varepsilon$ of a gauge transformation to obey $\varepsilon|=0$.  

Not only is $A|=0$ a natural way to satisfy eqn. (\ref{miffo}), it is what we have actually done in studying rational solutions
of the Yang-Baxter equation.  Starting in section \ref{lowestorder}, we used a propagator which vanishes at infinity, which amounts
to taking $A$ to vanish at infinity (where we placed the double pole).   As explained in section \ref{ybeu}, this choice has the advantage
of ensuring that  up to gauge transformation, the only classical solution is the trivial one $A=0$, a condition that
is needed if we want to obtain something as simple as the usual Yang-Baxter equation.  As we will explain below, there are other reasonable ways 
to satisfy $(\partial_z\Tr\,A\wedge\delta A)|=0$, but they lead to something more complicated.   

If $A|=0$ is a satisfactory condition in the presence of a double pole, it can hardly be the right condition for a simple pole. If we start with a
differential $\omega=\d z/z^2$ with a double pole, and perturb it to, say, $\omega'=\d z/(z-z_1)(z-z_2)$, with small $z_1,z_2$, the condition
$A|=0$  at a double pole must somehow be split between the two simple poles at $z_1$ and $z_2$.

As remarked above,  when one studies field theory in the presence of a boundary or defect, the right  procedure is always to impose a minimally restrictive condition that sets this term to zero. 
But the  condition $A|=0$ is not a minimal condition to ensure that $\left.\Tr\,A\wedge\delta A\right|=0$.  For this, it suffices to pick a middle-dimensional
complex subspace $\l_0\subset \g$ that is ``isotropic'' (or ``Lagrangian'') for the quadratic form $\Tr$, in the sense that for $a,b\in \l_0$, $\Tr\,ab=0$.  (We take $\l_0$
to be middle-dimensional because this is the maximum possible dimension for an isotropic subspace, leading to the weakest possible
condition on $A$.)  Then we require that $A|$ is $\l_0$-valued,
that is, that it is an $\l_0$-valued 1-form along $\Sigma$.  Having imposed this condition on $A$, we impose it also on $\delta A$, and then we see
that with $A|$ and $\delta A|$ being $\l_0$-valued 1-forms, $\left.\Tr\,A\wedge \delta A\right|=0$.  

One may object that the condition for $A|$ to be $\l_0$-valued is not gauge-invariant.  However, what we actually need is not to maintain the full
gauge symmetry along the locus of the pole, but only to maintain enough gauge symmetry so that the usual ``longitudinal'' part of $A|$ can be
gauged away.   For this, we simply ask that $\l_0$ should be a subalgebra of $\g$ (and not just a subspace) and we ask that the generator
$\varepsilon$ of a gauge transformation should satisfy the condition that $\varepsilon|$ is $\l_0$-valued.  These conditions on $A$ and $\varepsilon$
are compatible in the sense that, with both $A|$ and $\varepsilon|$ constrained to be $\l_0$-valued, the usual gauge transformation law 
$\delta A=D\varepsilon$ makes sense, and moreover the action (\ref{eq.action}) is gauge-invariant.   In effect, what has happened is the following.
Along the locus of the pole, we have set to zero some components of $A$, and the other components have their usual gauge-invariance.

We are really interested in a situation in which $\omega$ has two simple poles, say $\omega=\d z/z$ with poles at 0 and at $\infty$. Treating
each simple pole as above, we pick two middle-dimensional isotropic subalgebras of $\mathfrak{g}$, say $\l_0$ and $\l_\infty$, in general with no
relation between them.  We require $A$ and $\varepsilon$ to be $\l_0$-valued when restricted to $\Sigma\times \{0\}$, and $\l_\infty$-valued
when restricted to $\Sigma\times \{\infty\}$.  

To get from this construction a solution of the usual Yang-Baxter equation (as opposed to the ``dynamical Yang-Baxter equation,'' which we study
in section \ref{dybe}), we need a further condition that is familiar from section \ref{ybeu}:  
the trivial solution $A=0$ should have no deformations and no continuous unbroken gauge
symmetries.  The two conditions are equivalent for the following reason.  Let $\g_{0,\infty}$ be the sheaf of holomorphic $\g$-valued
functions on $\CP^1$ that are $\l_0$-valued at 0 and $\l_\infty$-valued at $\infty$.   The Lie algebra of the group of gauge symmetries of the trivial
solution $A=0$ is $H^0(\CP^1,\g_{0,\infty})$, and the tangent space to $A=0$ in the moduli space of classical solutions of the theory
is $H^1(\CP^1,\g_{0,\infty})$.  Thus the condition that $A=0$ has no deformations is $H^1(\CP^1,\g_{0,\infty})=0$, and the
condition that it has no continuous gauge symmetries is $H^0(\CP^1,\g_{0,\infty})=0$.  These two conditions are equivalent because of the
Riemann-Roch theorem, which in the present situation implies that
\be\label{merz}
\mathrm{dim} \,H^0(\CP^1,\g_{0,\infty})-\mathrm{dim}\,H^1(\CP^1,\g_{0,\infty})=0\;.
\ee

Concretely, $H^0(\CP^1,\g_{0,\infty})$ is simply\footnote{A global holomorphic section of $\g_{0,\infty}$ is a $\g$-valued constant
(or it would have singularities somewhere) that must be valued in $\l_0\cap \l_\infty$ because of the conditions at 0 and at $\infty$.} 
$\l_0\cap \l_\infty$.  Thus, our two conditions are equivalent to
\be\label{erz}\l_0\cap \l_\infty =0\;. \ee
On dimensional grounds, this is equivalent to
\be\label{meorz} \l_0+\l_\infty=\g\;. \ee
We can understand directly the role of this last condition.  If $\g'=\l_0+\l_\infty$ and $\g=\g'\oplus \g''$, then the $\g''$-valued part of $A$ cannot
be gauged away\footnote{In trying to do so, one runs into the fact that $H^1(\CP^1,\O(-p-q))\not=0$, where $\O(-p-q)$ is the sheaf of holomorphic
functions that vanish at two points $p$ and $q$ (here those two points are $z=0$ and $\infty$).} and therefore the trivial solution $A=0$ has
deformations.   So absence of deformations means that $\g''=0$ and $\g=\l_0+\l_\infty$ (since $\l_0\cap \l_\infty=0$, this is equivalent to $\g=\l_0\oplus \l_\infty$).

Accordingly, to get a trigonometric solution of the Yang-Baxter equation, rather than its ``dynamical'' generalization, we need $\l_0+\l_\infty=\g$.
We have arrived precisely at the notion of a ``Manin triple'' (see for example \cite[p.\ 26]{Chari-Pressley}).   A Manin triple is a complex Lie algebra $\g$ with an invariant, nondegenerate
quadratic form $\Tr$, and a decomposition $\g=\l_0+\l_\infty$, where $\l_0$ and $\l_\infty$ are middle-dimensional isotropic subalgebras.   So
trigonometric solutions of the Yang-Baxter equation, or at least the ones that we will study, are associated to Manin triples.\footnote{\label{truform} The construction as we have described
it involves a Manin triple of $\g$, but one could actually in a somewhat similar way use a Manin triple of $\g[z,z^{-1}]$ (endowed with the nondegenerate
quadratic form $(a,b)=\oint \frac{\d z}{z} \Tr\, a(z)b(z)$).  This would involve a construction somewhat  like that above,
with a more complicated set of conditions on $A$ and $\varepsilon$.
The relation between Manin triples of $\g$ and of $\g[z,z^{-1}]$ is  that a Manin triple of $\g$ determines a Manin triple of $\g[z,z^{-1}]$, namely $\g[z,z^{-1}]=\h\l_0\oplus \h\l_\infty$,
with $\h\l_0=z \g[z]\oplus\l_0$, $\h\l_\infty=z^{-1}\g[z^{-1}]\oplus\l_\infty$.}

At this stage, it is perhaps also clear that in contrast to the rational solutions of Yang-Baxter, the trigonometric ones do not have $G$ as a group
of global symmetries.  In the rational case, given the differential $\omega =\d z/z^2$,  the condition on the generator of a gauge transformation
at $z=0$ was $\varepsilon|=0$.  This condition leaves constant gauge transformations at $z=0$ as global symmetries.  In the trigonometric case,
a constant unbroken gauge symmetry, to be compatible with the conditions at 0 and at $\infty$, has to be an element of $G$ that conjugates
$\l_0$ to itself, and also conjugates $\l_\infty$ to itself.  In practice, in the main example that will be introduced in section \ref{example}, this means that the group of global symmetries
is the maximal torus of $G$.   (This symmetry accounts for the block diagonal form of eqn.\ (\ref{goodform}).)

Before leaving this subject, let us note that we stated the condition $\l_0+\l_\infty=\g$ in a somewhat naive way.  In  gauge
theory, any comparison between $\l_0$ and $\l_\infty$ involves parallel transport from 0 to $\infty$.  What the condition $\l_0+\l_\infty=\g$
really  means is that, after conjugating $\l_0$ and $\l_\infty$ into general position, $\l_0+\l_\infty=\g$.  The case that would lead to a
solution of the dynamical Yang-Baxter equation rather than the ordinary one is that $\l_0$ and $\l_\infty$ are such that even after conjugating
them into general position, $\l_0+\l_\infty\not=\g$.  

\subsection{Example}\label{example}

Given $\g$, can we pick $\l_0$ and $\l_\infty$ to make a Manin triple?  In general, the answer is certainly ``no,'' since if $\g$ is of odd dimension -- for example $\mathfrak{sl}_2$ -- it has no middle-dimensional
isotropic subspaces.  However, the following is a useful construction of examples that are related to the usual trigonometric solutions of the Yang-Baxter equation.

For any simple Lie algebra $\mf{g}$, add to $\mf{g}$ another copy $\til{\mf{h}}$ of the Cartan 
subalgebra $\mf{h}$ of $\mf{g}$ to make a Lie algebra $\til {\g}=\g\oplus \til{\mathfrak h}$.  Equip $\til{\mf{h}}$ with a quadratic form given by the restriction of the Killing form on $\mf{g}$. 
In this way $\til\g$ acquires an invariant nondegenerate pairing.  We will construct a Manin triple for $\til\g$.

Choose a  decomposition $\mf{g} = \mf{n}_- \oplus \mf{h} \oplus \mf{n}_+$ into nilpotent and Cartan subalgebras.   Set
\begin{align}
\begin{split}
\mf{h}_+ &= \{(X,\i X ) \mid X \in \mf{h} \} \subset \mf{h} \oplus \til{\mf{h}} \;,  \\
\mf{h}_- &= \{(X,- \i X ) \mid X \in \mf{h} \} \subset \mf{h} \oplus \til{\mf{h}} \;. 
\end{split}
\end{align}
Then we can choose a pair of complementary Lagrangian subalgebras  of $\til{\g}$ by
\begin{align}
\begin{split}
\mf{l}_0 & = \mf{n}_+ \oplus \mf{h}_+\;, \\
\mf{l}_{\infty} &= \mf{n}_- \oplus \mf{h}_-\;.
\end{split}
\end{align}
This gives a Manin triple for $\til\g$. 

\subsection{\texorpdfstring{The $r$-Matrix}{The r-Matrix}}\label{rma}
Now we consider our four-dimensional theory on $\R^2 \times \mbb{P}^1$, where we use the $1$-form $\d z /z$ on $\mbb{P}^1$ and the Manin triple just described.  Let us place Wilson lines at $z_1,z_2 \in \C^\times$, and have them cross in the topological plane as in \fig \ref{FigureGluon_Exchange}.  Suppose the Wilson lines are in representations $V,W$ of $\mf{g}$.   We will show that the result of this crossing will be the trigonometric $R$-matrix.

Let us first calculate explicitly what happens to leading order in $\hbar$.  To leading order in $\hbar$, the result of the crossing Wilson lines is  described by $R=1+\hbar r$,
where $r$ is a  $\til{\g} \otimes \til{\g}$-valued meromorphic function.  
To obtain $r(z_1,z_2)$,  we can pick a gauge and then repeat the Feynman diagram computations of the previous section.
Instead let us here note that $r(z_1,z_2)$
must satisfy the  following properties:
\begin{enumerate} 
\item $r(z_1,z_2)$ has a first-order pole at $z_1 = z_2$, and is regular elsewhere.  The residue of this pole is $\hbar c$ where $c \in \til{\g} \otimes \til{\g}$ is the Casimir element, dual to the chosen invariant pairing. This is the same singular behavior as in the rational case, since 
OPE singularities are local and are not affected by global topology.
\item At $z_1 = 0$, $r(z_1,z_2)$ is in $\mf{l}_0 \otimes \mf{l}_\infty$, and at $z_1 = \infty$ it is in $\mf{l}_\infty \otimes \mf{l}_0$.   This just reflects the corresponding conditions on $A$.
\item $r(z_1,z_2)$ is sent to $-r(z_1,z_2)$ if we simultaneously swap $z_1$ and $z_2$ and exchange the tensor factors in $\til{\g} \otimes \til{\g}$.  (This reflects the fact that
the action of the theory -- and likewise the propagator and the Feynman diagram used to compute $r(z_1,z_2)$ -- are all odd under an orientation-reversing symmetry of the
topological plane that exchanges the two oriented Wilson lines that are crossing.  Note that the rational $r$-matrix $r= t_a\otimes t_a/(z_1-z_2)$ has the same property.)  
\item Finally, $r(z_1,z_2)$ is a function only of the ratio $z_1/z_2$, since  the differential $\omega=\d z/z$
the conditions we placed at $0$ and $\infty$ are all invariant under the $\C^\times $ action on $C=\C^\times$.  
\end{enumerate}
There is a unique function satisfying these properties.  To write it down, we observe that the Casimir  $c=\sum_a t_a\otimes t_a \in \til{\g} \otimes \til{\g}$ can be written in a unique way 
as a sum
\be 
c = c(\mf{l}_0, \mf{l}_\infty) + c(\mf{l}_\infty,\mf{l}_0) \;,
\ee 
where $c(\mf{l}_0,\mf{l}_\infty) \in \mf{l}_0 \otimes \mf{l}_\infty$, and similarly $c(\mf{l}_\infty,\mf{l}_0)\in \l_\infty\otimes \l_0$. 

Then the unique function $r(z_1,z_2)$ satisfying the desired conditions is
\be 
2\pi \i \, r(z_1,z_2) = \frac{c (\mf{l}_0, \mf{l}_\infty)}{1 - \frac{z_1}{z_2}} - \frac{ c(\mf{l}_\infty,\mf{l}_0) } {1 - \frac{z_2}{z_1}}.  
\ee 
Let us write this out more explicitly. Let us choose a basis $X_\alpha^{\pm}, H_i$ of $\mf{g}$, so that $X_\alpha^{\pm} \in \mf{n}_{\pm}$ and $H_i$ form a basis of $\mf{h}$.  We assume that in this basis the chosen pairing on $\g$ is such that $\ip{X_{\alpha^{+}}, X_{\beta^{-}}} = \delta_{\alpha \beta}$, and $\ip{H_i, H_j} = \delta_{ij}$.  Let $\til{H}_i$ be the corresponding basis of the other copy $\til{h}$ of the Cartan.  Then, in this basis, we have 
\begin{align*}  
2 \pi \i \, r(z_1,z_2) & = \frac{1}{1 - \frac{z_1}{z_2}} \sum_\alpha X_\alpha^+ \otimes X_\alpha^-  + \frac{1}{1 - \frac{z_1}{z_2}} \frac{1}{2} \sum (H_j + \i \til{H}_j) \otimes (H_j - \i \til{H}_j ) \\ 
& - \frac{1}{1 - \frac{z_2}{z_1}} \sum_\alpha X_\alpha^- \otimes X_\alpha^+  - \frac{1}{1 - \frac{z_2}{z_1}} \frac{1}{2} \sum (H_j - \i \til{H}_j) \otimes (H_j + \i \til{H}_j ) \\
&=  \frac{1}{1 - \frac{z_1}{z_2}} \sum_\alpha X_\alpha^+ \otimes X_\alpha^-   - \frac{1}{1 - \frac{z_2}{z_1}} \sum_\alpha X_\alpha^- \otimes X_\alpha^+ \\ 
&+ \tfrac{1}{2} \frac{z_2 + z_1}{z_2 - z_1} \left(
 \sum H_j  \otimes H_j + \sum \til{H}_j \otimes \til{H}_j  
\right)
 + \frac{i}{2}\sum \left( \til{H}_j \otimes H_j -H_j \otimes \til{H}_j\right)  .   
\end{align*}
To evaluate what happens when Wilson lines in particular representations $V,W$ of $\til{\g}$ cross, one applies the homomorphism 
\be 
\rho_V \otimes \rho_W: \g \otimes \g \to \op{End}(V) \otimes \op{End}(W) 
\ee 
to the function $r(z_1,z_2)$ (where $\rho_V,\rho_W$ indicate the maps coming from the $\g$-action on $V$ and $W$). 

\subsection{\texorpdfstring{Specializing to $\mathfrak{sl}_2$}{Specializing to sl(2)}} \label{specializing_sl2}
As an example, let us consider the case that $\mf{g} = \mf{sl}_2$. Let us use the standard basis $e,f,h$ of $\mf{sl}_2$ where $\ip{e,f} = 1$ and $\ip{h,h} =2$.   In this basis,  
\begin{align} 
\begin{split}
 2 \pi \i \, r(z_1,z_2)
= &  \frac{1}{1 - \frac{z_1}{z_2}} e \otimes f   - \frac{1}{1 -\frac{z_2}{z_1}} f \otimes e  \\ 
&+ \tfrac{1}{4} \frac{z_2 + z_1}{z_2 - z_1} \left(\sum h  \otimes h+\sum \til{h} \otimes \til{h}   \right)
 + \frac{\i}{4}\sum \left( \til{h} \otimes h -h \otimes \til{h}\right)  .   
 \end{split}
\end{align}
Let us see what this looks like if our representations $V,W$ are both the spin $1/2$ representation of $\mf{sl}_2$, and  the  basis element $\til{H}$ of the second copy of the Cartan of $\mf{sl}_2$ acts with constants $s_1,s_2$ in the two representations.  Let us choose a basis $e_+, e_-$ of the spin $1/2$ representation of $\mf{sl}_2$, and a corresponding basis
$e_+\otimes e_+$, etc., of the tensor product $V\otimes W$.  In this basis, the matrix $r(z_1,z_2)$ looks like     
\be \label{goodform}
\frac{1}{2\pi \i} \left( \begin{array}{ c c c c }
r^{++}_{++} & & & \\
& r^{+-}_{+-} & r^{-+}_{+-} & \\
& r^{+-}_{-+} & r^{-+}_{-+} & \\
& & &  r^{--}_{--}  
\end{array}\right)\;,
\ee 
where
\begin{align} 
\begin{split}
 r^{++}_{++} &= (1 + s_1 s_2) \frac{z_2 + z_1}{z_2 - z_1}+ \i (s_1 - s_2)\;,\\
 r^{+-}_{+-} &=   (-1+s_1 s_2 )\frac{z_2 + z_1}{z_2 - z_1} +\ i (-s_1 - s_2)\;,  \\
 r^{-+}_{+-} &=   \frac{4}{1-\frac{z_1}{z_2}}\;,\\
 r^{+-}_{-+} &=  \frac{-4}{1-\frac{z_2}{z_1}}\;,\\
 r^{-+}_{-+} &=   (-1+s_1 s_2  ) \frac{z_2 + z_1}{z_2 - z_1} + \i (s_1 + s_2) \;,\\
 r^{--}_{--} &=  (1+s_1 s_2  )\frac{z_2  + z_1}{z_2 - z_1} + \i (s_2 - s_1)\;.
\end{split}  
\end{align} 

If we set $s_1=s_2=0$, we get the usual trigonometric $r$-matrix for $\mathfrak{sl}_2$, associated
to the six-vertex model of statistical mechanics.
For comparison with the literature, one might want to symmetrize our $r$-matrix by conjugation, 
replacing the off-diagonal components by 
\begin{align} 
\begin{split}
 r^{-+}_{+-} =   \frac{4\left(\frac{z_1}{z_2}\right)^{\frac{1}{2}}} {1-\frac{z_1}{z_2}}=
 r^{+-}_{-+} \;.
\end{split}  
\end{align}
without changing the diagonal entries.

In the above basis, the $s$-dependent terms only contribute to the diagonal matrix elements of $r$.  They come
 in two groups. First, we have  $s_1s_2\frac{z_2  + z_1}{z_2 - z_1}$ times the identity operator. This  we can absorb into the definition of the 
overall constant normalization factor of the $R$-matrix. The remaining terms are nontrivial.  They  reproduce the known generalization of the 
six-vertex model  to include horizontal and vertical fields given by $\i s_1$ and $-\i s_2$ (see \eqn \eqref{6v_w_field} in Appendix). 

\section{Elliptic Solutions Of the Yang-Baxter Equation}\label{elliptic}

\subsection{Preliminaries}\label{Preliminaries}

In this section we discuss the elliptic case, where the holomorphic curve $C$
is an elliptic curve $E$. Since the differential $\d z$ does not have any poles we do not need to consider boundary terms like those encountered in section 
\ref{trigonometric}. However, there is an important topological subtlety to consider.

In formulating the theory with gauge group $G$, we begin by considering a topological $G$-bundle over $\Sigma\times E$.  Our considerations
leading to a solution of the Yang-Baxter equation are local along $\Sigma$, so we can take $\Sigma= \R^2$, in which case the choice of a topological
$G$-bundle  $\V\to \Sigma\times E$ amounts to the choice of topological $G$-bundle  $\V\to E$.   However, in general there is a choice to be made,
because, for a connected group $G$,  a $G$-bundle over  $E$ is classified topologically by an invariant\footnote{For example, for $G=SO(3)$, this invariant
is the second Stiefel-Whitney class $w_2(\V)$.  To define it in general, we observe that for $G$ connected and $p$ a point in $E$, the restriction of
$\V$ to $E\backslash p$ is trivial, so $\V$ can be constructed by gluing a trivial bundle over $E\backslash p$ to a trivial bundle over a small disc $D$ containing  $p$.
Since $D\backslash p$ is homotopic to a circle, the gluing function that is used here is classified up to homotopy by
 a class $\zeta\in \pi_1(G)$.} $\zeta\in \pi_1(G)$.  

Therefore, in setting up the theory, we have the freedom to make an arbitrary choice of the element $\zeta$.  Once we  make this choice, a
 classical solution will be defined by a $\bar\partial$ operator $\partial_{\bar z}+[A_{\bar z},\dots]$ on $\V$, modulo gauge transformations.  
 Such an operator gives a holomorphic structure
 to the bundle $\V$ and the gauge-invariant data is precisely the holomorphic equivalence class of this bundle.  
 
 The moduli space of classical solutions is therefore the moduli space of holomorphic $G$-bundles over $E$ with topological class $\zeta$.  The tangent space to the moduli
 space is $H^1(E,\mathrm{ad}(\V))$ (where $\ad(\V)$ is the adjoint bundle associated to $\V$).   Thus the condition of  the theory having a unique classical solution  -- and therefore leading to straightforward perturbation theory and a solution of
 the classical Yang-Baxter equation -- is equivalent to $H^1(E,\ad(\V))=0$.
 
 By Serre duality, $H^1(E,\ad(\V))$ is dual to $H^0(E,\ad(\V))$ for an elliptic curve $E$.  
 Here $H^0(E,\ad(\V))$ is the Lie algebra of the automorphism group of a holomorphic
 $G_\C$ bundle $\V$.  Thus if and only if $\V$ has no infinitesimal deformations, the Lie 
 algebra of its automorphism group will be trivial and $\V$ has only a finite group of
 automorphisms.
(Instead of Serre duality, we could have invoked here the Riemann-Roch theorem, as we did at a similar point in section \ref{prelims}.)

To find, therefore, an elliptic solution of the Yang-Baxter equation (as opposed to its ``dynamical'' generalization), we need to find a complex Lie group $G$
and a holomorphic $G$-bundle $\V\to E$ such that $H^0(E,\ad(\V))=H^1(E,\ad(\V))=0.$  However, the options for such a $G$ and $\V$ are very limited.
The only cases are $G=PGL_N=GL_N/GL_1$ with $N\geq 2$, with  $\V$ chosen so that $\zeta$ is a generator of the finite
group $\pi_1(G)\cong \Z_N$.  For such $G$ and $\zeta$, there is a unique stable holomorphic vector bundle 
$\V$, and it does obey $H^0(E,\ad(\V))=H^1(E,\ad(\V))=0.$

Thus elliptic solutions of the Yang-Baxter equation (or at least those that we can construct) are classified by a choice of $N\geq 2$ -- determining the group 
$PGL_N$ -- and a generator $\zeta$ of $\Z_N$, or equivalently a primitive $N^{th}$ root of 1.  The automorphism group of such a solution is the automorphism
group of $\V$, which in all cases is $\Z_N\times \Z_N$.   

\subsection{Rigid Holomorphic Bundles}\label{rigid}

We will now describe concretely the rigid holomorphic bundles $\V\to E$ associated a choice of $N$ and $\zeta$.

It is well-known that there exist pairs of $N\times N$ matrices $A,B$ obeying
\be\label{comrel}AB=BA \, e^{2\pi\i/N}\;. 
\ee

Moreover, $A$ and $B$ are unique up to conjugation and multiplication by scalars.  For example, we can pick $A=\mathrm{diag}(1,e^{2\pi \i/N},e^{4\pi \i/N},
\dots,e^{2\pi\i(N-1)/N})$, and $B$ a matrix that cyclically permutes the eigenspaces of $A$.  In the opposite direction, the equation shows that (after
possibly multiplying $A$ by a scalar) the eigenvalues of $A$ are the $N^{th}$ roots of 1, each with multiplicity 1.  So $A$ can be put in the claimed
form, and then it is not hard to see that (up to conjugation and multiplication by a scalar) $B$ must be as claimed.

Now given $\zeta\in \Z_N$, the matrices $A^\zeta$ and $B$ do not commute, but obey $A^\zeta B=BA^\zeta \exp(2\pi \i\zeta/N)$.  But this means that
$A^\zeta$ and $B$ commute if projected to $PGL_N$.  So we can define a flat $PGL_N$ bundle $\V\to E$ whose monodromies around a pair of generators
of $\pi_1(E)=\Z\oplus \Z$ are $A^\zeta$ and $B$.   Being flat, this bundle is automatically stable and holomorphic.  

The automorphism group of this bundle always includes a subgroup $\Z_N\times \Z_N$ generated by $A$ and $B$.  This is the full automorphism group
if $\zeta$ is a generator of $\Z_N$.  (One may prove this starting with the fact that if $\zeta$ is a generator of $\Z_N$,
then the eigenvalues of the monodromy $A^\zeta$ are nondegenerate.)  Accordingly, if $\zeta$ is a generator, then $H^0(E,\V)=H^1(E,\V)=0$, and we are in the
favorable situation that will lead to a solution of the Yang-Baxter equation.

If $\zeta$ is not a generator, then this automorphism group has a strictly positive
dimension and likewise the bundle $\V$ can be deformed (as a holomorphic bundle and even as a flat one).  

To conclude this section, we will describe $\ad(\V)$ as a holomorphic bundle over $E$.  We consider $A^\zeta$ and $B$ as matrices acting by conjugation
on $\g$, the Lie algebra of $PGL_N$.  The eigenvalues are pairs $\mu,\mu'$ of $N^{th}$ roots of 1, with each pair occurring exactly once except the trivial
pair $\mu=\mu'=1$.  Each joint eigenspace of $A^\zeta$ and $B$ corresponds to a holomorphic line bundle over $E$, and these lines bundles are all of order
$N$ because the eigenvalues of $A^\zeta$ and $B$ are of order $N$.  There are $N^2$ equivalence classes of holomorphic line bundle over $E$ of order $N$. 
The trivial line bundle is of order $N$, but there are  $N^2-1$ nontrivial ones, and   $\ad(\V)\to E$ is the direct sum of all $N^2-1$ nontrivial line bundles
of order $N$.  (Thus as a holomorphic bundle, $\ad(\V)$ does not depend on the generator $\zeta$ of $\Z_N$, though its Lie algebra structure does depend on $\zeta$.)

\subsection{\texorpdfstring{Specializing to $N=2$}{Specializing to N=2}}\label{specializing}

To achieve some minor simplifications, we will specialize to $N=2$.  There is now only one choice of $\zeta$.  Accordingly, $\V$ is unique up to isomorphism.
As a holomorphic vector bundle over $E$, $\ad(\V)$ is the direct sum of the three non-trivial line bundles over $E$.  However, we would like to describe
the Lie algebra structure on $\ad(V)$.  

We identify $E$ with the complex $z$-plane modulo $z\cong z+1$ and $z\cong z+\tau$, where the complex number $\tau$ is constrained to have $\mathrm{Im}\,
\tau>0$ and is the modulus of $E$.   A flat line bundle can be described by its monodromies $\alpha$ and $\beta$ under $z\to z+1$ and $z\to z+\tau$.   A flat
line bundle is of order 2 if $\alpha^2=\beta^2=1$.  To describe the 3 non-trivial line bundles of order 2, we take
\be\label{wetak}
(\alpha_i,\beta_i) =\begin{cases} (1,-1)& {\mathrm{for}}~\L_1\;,\cr
(-1,1)&{\mathrm{for}}~\L_2 \;,\cr 
(-1,-1)&{\mathrm{for}}~\L_3 \;.\end{cases}
\ee
Since $(\alpha_1\alpha_2,\beta_1\beta_2)=(\alpha_3,\beta_3)$, there is a natural isomorphism $\phi_{12}:\L_1\otimes \L_2\cong \L_3$.  The same holds
with any cyclic permutations of the labels $123$.   The Lie algebra structure on $\V$ can be defined as follows.  If $s_1$ and $s_2$ are local sections of $\L_1$ and $\L_2$,
then $[s_1,s_2]$ is the local section of $\L_3$ defined by
\be\label{etak}  
[s_1,s_2]=\phi_{12}(s_1\otimes s_2)\;.
\ee
This statement and its cyclic permutations define the Lie algebra structure.

However, it may be useful to describe this structure in a slightly more explicit way.  
The matrices $A$ and $B$ of section \ref{rigid} commute as matrices acting on the Lie
algebra $\mathfrak{sl}_2$.  Their joint eigenfunctions, up to a choice of basis, are the standard generators $t_1,t_2,t_3$ of $\mathfrak{sl}_2$, obeying $[t_1,t_2]=t_3$
and cyclic permutations.  Thus  one might prefer to think of a section of $\ad(\V)$ more explicitly as $\sum_{i=1}^3 s_i t_i$, where $s_i$ is a section of $\L_i$ and $t_i$ is an element of a standard basis of $\mathfrak{sl}(2)$.
In this description, the commutator of $s=\sum_{i=1}^3 s_i t_i$  with   $s'=\sum_{i=1}^3 s'_i t_i$ is $s''=\sum_{i=1}^3 s''_i t_i$ where
\be\label{wetakk} 
s_3''=\phi_{12}(s_1\otimes s_2'-s_2\otimes s_1')\;, \ee
and cyclic permutations.   Thus, one can think of eqn.\ (\ref{etak}) as a formula for the commutator of $s_1t_1$ with $s_2t_2$.

Now let us determine the lowest order nontrivial contribution to the $r$-matrix.   It will be a (meromorphic) section of\footnote{Here $\ad(\V)\boxtimes \ad(\V)$ is simply the tensor product of the $\ad(\V)$ bundle over the first copy of $\Sigma$ with the $\ad(\V)$ bundle
over the second copy; if $\pi_i:\Sigma\times\Sigma\to \Sigma$, $i=1,2$ are the two projections, then $\ad(\V)\boxtimes\ad(\V)=
\pi_1^*(\ad(\V))\otimes \pi_2^*(\ad(\V))$.}   $\ad(\V)\boxtimes \ad(\V)$ over $\Sigma\times \Sigma$.   This section can be determined
by reasoning similar to what we used in section \ref{rma}, though the details are simpler.  

{\it A priori}, a general  form of $r$ would be $r(z_1,z_2)=\sum_{i,j=1}^3 w_{i,j}(z_1,z_2) t_i\otimes t_j$, where
$w_{i,j}(z_1,z_2)$ is a section of $\L_i\boxtimes \L_j$.  However, the $\Z_2\times \Z_2$ automorphism group 
ensures that $w_{i,j}=0$ for $i\not=j$.  Thus we reduce to $r(z_1,z_2)=\sum_{i=1}^3 w_i(z_1,z_2) t_i\otimes t_i$.  Here the functions
$w_i(z_1,z_2)$ are determined by the following (somewhat redundant) set of conditions:
\begin{enumerate} 
\item $r(z_1,z_2)$ has a first-order pole at $z_1 = z_2$,  with residue $\hbar c=\hbar \sum_k t_k\otimes t_k$ 
as in the rational case, since OPE singularities
are local. 
\item $r(z_1,z_2)=-r(z_2,z_1)$ (we need not combine this with an exchange of the two tensor factors of $\mathfrak{sl}_2\otimes \mathfrak{sl}_2$,
since $r=\sum_i w_i t_i\otimes t_i$ is invariant under this switch).
\item Finally, $r(z_1,z_2)$ is a function only of $z_1-z_2$, because of the translation symmetry of
the elliptic curve $E$ and the differential $\omega=\d z$. \end{enumerate}

The functions $w_k(z_1,z_2)$ that satisfy these conditions actually have a simple interpretation.  Since the canonical bundle $K$ of $E$ is
trivial, its possible square roots $K^{1/2}$ are line bundles of order 2, and thus we can think of $\L_k$, for $k=1,2$ or 3, as
 one of the three even spin bundles
of $\Sigma$, that is, as representing a possible $K^{1/2}$.  Let $\psi_k$ be a (holomorphic) fermion field on $\Sigma$ valued in this spin
bundle, with action
\be\label{flaction} I=\frac{1}{2\pi}\int_E \psi_k\bar\partial \psi_k \;.\ee
Then the two-point function $\langle \psi_k(z_1)\psi_k(z_2)\rangle$ satisfies precisely the conditions of the functions $w_k(z_1,z_2)$.

With or without this interpretation, it is straightforward to write a formula for $w_k$:
\be\label{naction}
w_k(z_1,z_2)=\sum_{n,m\in\Z \setminus{\{(0,0)\}}}\frac{ \alpha_k^n\beta_k^m}{z_1-z_2-n-m\tau}\;. 
\ee
The factors of $\alpha_k^n\beta_k^m$ ensure that $w_k(z_1,z_2)$ is a section of $\L_k\boxtimes \L_k$, and the desired
properties are all obvious.
We have defined $w_k$ so that for $z_1\to z_2$ it behaves as $1/(z_1-z_2)$.  Thus, taking the normalization from the rational case, 
the $r$-matrix is 
\be\label{merifo}
r=\hbar \sum_{k=1}^3 w_k(z_1-z_2)t_k \;. 
\ee
This coincides with the known expression in the literature, see e.g.\ \cite[p.\ 539]{Faddeev-Takhtajan}.

\section{The Dynamical Yang-Baxter Equation}\label{dybe}

\subsection{Reduction to an Abelian Subgroup}\label{abred}

Our goal in this section is to take a first look at the case that $G$ and $C$ and the other relevant choices are such as to lead to a moduli space
of classical solutions, not just a single isolated classical solution.   This makes matters fundamentally more difficult and our analysis here will be preliminary.

A basic case to bear in mind is that $C$ is an elliptic curve $E$, $G$ is simple,  and the topological 
invariant $\zeta$ introduced in section  \ref{Preliminaries} 
vanishes,
so that  the $G$-bundle $\V\to E$ is topologically trivial.  Under this condition,
 the structure group of a stable holomorphic $G$-bundle $\V\to E$ reduces always to the (complex) maximal torus $T\subset G$.
The moduli space $\M$ of stable holomorphic $G$-bundles over $E$ is then
simply the corresponding moduli space $\M'$ for $T$, divided by the Weyl group  $\W$: $\M=\M'/\W$.
The automorphism group of a generic stable holomorphic $G$-bundle is simply $T$, though the symmetry is enhanced at points in $\M'$ at which the
Weyl action is not free.  For our purposes, we can just identify the moduli space of classical solutions with $\M'$ and ignore the Weyl 
group action and possible symmetry enhancement.
(That is because we will eventually impose a constraint that keeps us away from special points.)

For $C$ an elliptic curve, the story is similar\footnote{More exotic examples with moduli, such as 
the case $\l_0+\l_\infty\not=\g$ mentioned in section \ref{Preliminaries},
might be more complicated.} for any $\zeta$, with $T$ replaced by another torus $T_\zeta \subset T$, $\M'$ replaced with the moduli space $\M'_\zeta$
of $T_\zeta$-bundles over $E$, and the Weyl group replaced by a subgroup.  Rather than explain these details, we will simply continue with the case
that $\zeta=0$. 

A complex line bundle over $E$ can be represented by a very simple gauge field $A_{\bar z}=b$, with $b$ a complex constant.  Here $b$ is subject
to some equivalences, but they will not be important in what follows.  Similarly, for gauge group $G$, a classical solution whose structure group reduces to 
$T$ can be represented by $A_{\bar z}=b$, where now $b$ is a $\mathfrak t$-valued constant.  For example, if $G=SL_N$, this means that
$A_{\bar z}=\mathrm{diag}(b_1,b_2,\dots,b_N)$, with $\sum_{i=1}^N b_i=0$.   In particular, this classical solution is invariant under the translation
symmetries of $E$.

In general in nonabelian gauge theory with gauge group $G$, when we expand around a classical solution whose structure group reduces to a proper
subgroup $H$ of $G$, we should look for a low energy description in the form of an $H$ gauge theory.  In the present case, we have the further fact
that to study the Yang-Baxter equation or its analog, we can take the topological plane $\Sigma$ to be simply a plane $\R^2$, and thus we work
on $\R^2\times E$ where $E$ is compact and $\R^2$ is not.  In this situation, in quantum field theory in general, we can look for a low energy effective
field theory on $\R^2$.  Our present problem is actually diffeomorphism-invariant in the $\R^2$ direction, so the low energy description is really  a topological field theory.

Combining these facts, we can aim to find an effective description in the form of a two-dimensional topological gauge theory with structure group $T$.
To find this description at least at a formal level, we make use of the translation symmetry of the classical solutions and throw away ``massive modes''
that lack this symmetry. Likewise, we keep only the $\mathfrak{t}$-valued part of the four-dimensional gauge field $A$.  This means that $A_{\bar z}$
is reduced to the $\mathfrak{t}$-valued field $b$, which now depends on coordinates $x$ and $y$ of $\R^2$ but not on $z$ and $\bar z$,
and likewise the rest of the gauge field, namely $A_x \d x +A_y\d y$, becomes a purely two-dimensional gauge field that we will denote simply as $A$.

Specializing the underlying action (\ref{eq.action}) to this situation, we get an effective two-dimensional abelian action
\be\label{abac} S=- \kappa\int_\Sigma \Tr\, b F \;, \ee
where $\Tr $ is the restriction to $\mathfrak t$ of the quadratic form of the same name on $\mathfrak g$, $F=\d A$ is the two-dimensional gauge
curvature, and $\kappa=\i\mathrm{Im}\,\tau/\pi \hbar$.   This is the action that we will use
somewhat formally in analyzing the present problem.

The reader may object that in
quantum field theory in general, it is not correct to simply set the massive modes to zero, as we have done.  Instead, one has to integrate them out,
producing in general corrections to the effective action of the fields that are retained in the effective description.   Moreover, if we wish to include
Wilson loops and their crossings, we have to integrate out the massive fields in the presence of those operators.  All this is true, but the implications for 
our problem are limited because two-dimensional diffeomorphism symmetry severely constrains what couplings can be generated by integrating out massive modes.
If $\Sigma$ is not $\R^2$ but is a curved two-manifold with scalar curvature $R$, then it is possible to generate a coupling $\int_\Sigma\d^2x \sqrt g R f(b)$,
for some function $f(b)$.  One should expect such a term, but it will not be important for our purposes because in analyzing the crossing of line operators,
one can assume that $\Sigma$ is flat.  In the presence of line operators, integrating out the massive modes will generate the framing anomaly of $\mathfrak{g}$.
That is an important effect, but we are already familiar with it and will not discuss it further in the present section.  Finally, integrating out the massive
modes certainly affects the $R$-matrix that governs crossing of two line operators.  But here we will just discuss the formal properties of this
$R$-matrix, and will not try to calculate it.  So we will not have to explicitly discuss the contribution of the massive modes to the $R$-matrix.

If $A$ is regarded as a two-dimensional gauge field with structure group a {\it compact} torus $T$, and similarly $b$ is real-valued, taking
values in the corresponding Lie algebra $\mathfrak t_c$, then
eqn. (\ref{abac}) becomes the action of a simple but much-studied two-dimensional topological field theory, often called $BF$ 
theory.\footnote{In that context, $\kappa$ is real, assuming we want a unitary theory.
  In the context of reduction from the four-dimensional theory, $\kappa$ is a complex number, since $\hbar$ is complex.}  We are
not quite in that situation, since in \eqn (\ref{abac}), $b$, $A_x$, and $A_y$ are all complex-valued fields and the action is a holomorphic function of those
fields.  Moreover, we are in a situation in which there is a nontrivial moduli space of classical solutions (parameterized locally by $b\in \mathfrak t$), so if we want
to define correlation functions,
we cannot just do perturbation theory in a formal way; we need some recipe for what to do with the moduli.  One point of view that may seem somewhat formal
but that seems to provide a satisfactory framework for our considerations below (for instance, the derivation of the dynamical Yang-Baxter equation makes sense in this framework)
is to view the theory as a machine that generates a differential form on the moduli space of classical solutions, without worrying about how to integrate it.  Alternatively,
it might be possible to define an operator  that fixes the values of the moduli.  This idea is explained at the end of section \ref{critique}.   Finally, in principle one may use
the D4-NS5 system to define a nonperturbative integration cycle for the underlying four-dimensional theory (\ref{eq.action}).  Specializing this
to the present situation, one would then learn in principle what to do with the moduli in the effective theory (\ref{abac}).  

In practice, we will simply draw inspiration from conventional two-dimensional $BF$ theory.\footnote{The purely two-dimensional
analysis that follows is similar to section 2.5 of \cite{WittenTwoDim}.}  Based on this, we will
suggest a simple procedure  to make contact with the dynamical Yang-Baxter equation \cite{GN,Felder,FelderTwo,Etinghof}.    Given that we assume
 $\zeta=0$, the solution
of the dynamical Baxter equation that arises in this way will have $T$ as a group of automorphisms, since a generic classical solution with 
$\zeta=0$ has automorphism group $T$, and we will be avoiding the exceptional cases.

\subsection{\texorpdfstring{$BF$ Theory and the Dynamical Yang-Baxter Equation}{BF Theory and the Dynamical Yang-Baxter Equation}}\label{bf}

The classical equation of motion of the gauge field $A$ in $BF$ theory, in the absence of line operators, simply says that $\d b=0$.  Thus $b$ is simply constant.

What happens in the presence of a Wilson line operator?  To start with, we take the gauge group to be $G=U(1)$.  An irreducible representation
of $U(1)$ is determined by the choice of an integer $n$, the ``charge.'' The Wilson operator supported on a curve $K$ for the representation
of charge $n$ is $\exp\left(\i n\int_K A\right)$.  Including
this factor, the argument of the path integral becomes $\exp(\i S)\exp\left(\i n\int_K A\right)=\exp(\i S')$, with
\be\label{seld}S'=-\kappa\int_\Sigma \Tr\, b F+ n\int_K A\;.\ee
In other words, the Wilson operator for $U(1)$ effectively contributes an additional term to the classical action.  The classical equation of motion now
becomes
\be\label{eid}\kappa \d b+ n\delta_K=0\;,\ee where $\delta_K$ is a one-form delta function Poincar\'e dual to $K$.  
The import of this is that $b'=\kappa b$ jumps by $n$ in crossing $K$ from right to left (\fig \ref{bfcross}(a)).  

Now we can consider the case of two Wilson operators crossing.   But in doing so, we may as well generalize\footnote{The purpose
of discussing this generalization here is to give the simplest possible motivation for the picture of \fig \ref{bfcross}(b). A similar generalization is
 not
possible in the context of the four-dimensional theory, assuming that the lines that are crossing have different values of the spectral parameter. 
Such a generalization is possible in two-dimensional $BF$ theory with a nonabelian gauge group, though we will not need that case.}
 beyond a simple crossing of two
line operators of charges $n_1$ and $n_2$.  Since we are discussing a purely two-dimensional theory, any crossing really does involve a physical
intersection of the two line operators, and charge exchange is possible.  In general, we can consider a case with charges $n_1$ and $n_2$ coming
in and charges $n_3$ and $n_4$ going out, the only constraint being that $n_1+n_2=n_3+n_4$.  The behavior of $b'=\kappa b$ is then as sketched
in \fig \ref{bfcross}(b).

\begin{figure}[htbp]
\centering{\includegraphics[scale=1]{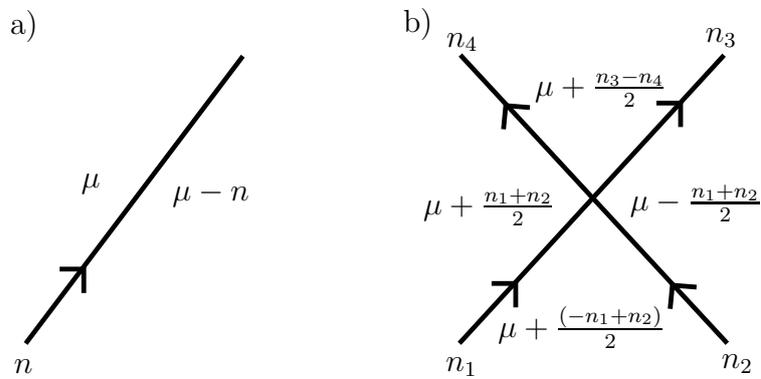}}
\caption{\small{(a) The field $b'=\kappa b$ is constant away from Wilson operators and jumps by $n$ in crossing a charge $n$ Wilson operator from
right  to left.
In the example shown, $b'$ jumps from $\mu-n$ to $\mu$.
(b) Two Wilson operators of charges $n_1$ and $n_2$ come in to this intersection and  two of charges $n_3$ and $n_4$ go out
(here $n_1+n_2=n_3+n_4$).  The field $b'$ jumps as shown.}}
\label{bfcross}
\end{figure}

\begin{landscape}
\begin{figure}[htbp]
\centering{\includegraphics[scale=1]{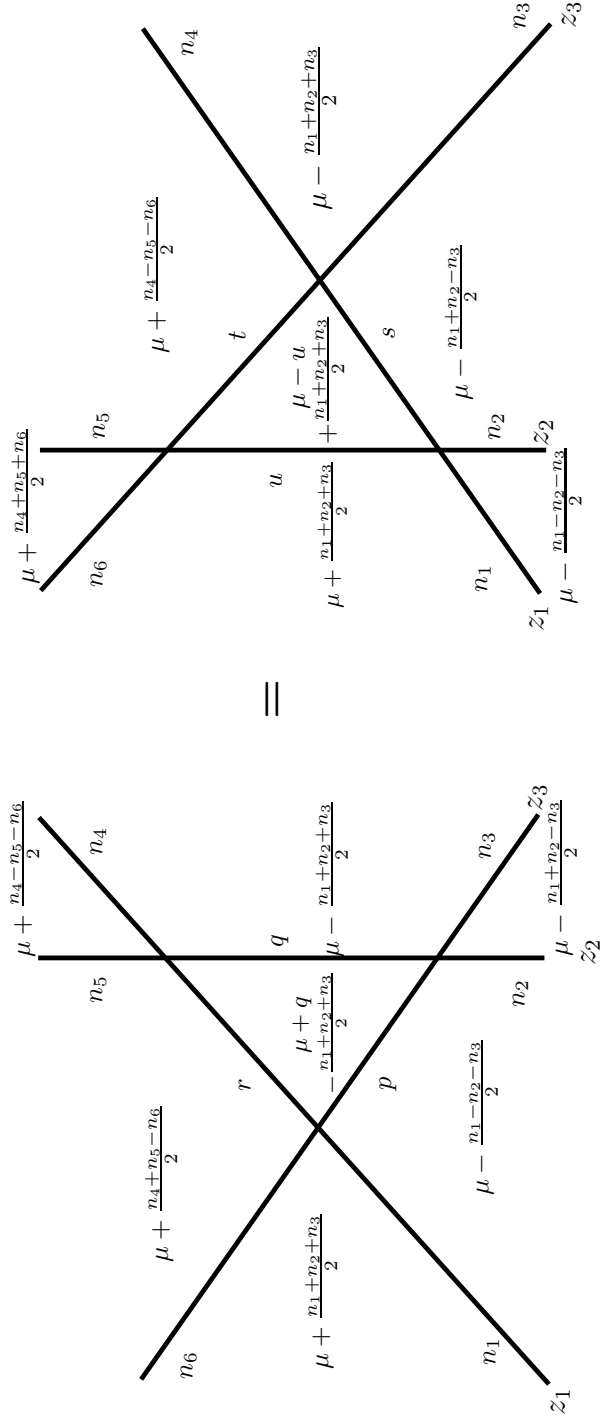}}
\caption{The dynamical Yang-Baxter equation.  
Both line segments and bulk regions are labeled.
A line segment is  labeled by a basis vector in a representation of $G$ attached to that line, and a bulk  region is labeled
by a weight of $G$.  The picture is drawn here for $G=SU(2)$, so that a basis vector is determined by its integer weight.
 Two-dimensional $BF$ theory has all the ingredients
to draw such a picture (without spectral parameters) but there is no reason to expect this equivalence
to hold.  When we go to four dimensions, the lines
are labeled by spectral parameters  in the familiar way (as shown here) and the usual arguments show an equivalence
between the two pictures.}

\label{dYB}
\end{figure}
\end{landscape}

Now we consider the case of two-dimensional $BF$ theory with a nonabelian gauge group.  First let us take the gauge group to be $SU(2)$.
As long as the field $b$ is nonzero, the gauge symmetry is spontaneously broken from $SU(2)$ to $U(1)$.  As a result, apart from possible subtleties
when $b$ vanishes, we can aim for a description of $BF$ theory of $SU(2)$ in terms of $BF$ theory of $U(1)$.
It is actually possible (see section 2.3 of \cite{BT}) to compute rather precisely the effective $U(1)$ action that arises from integrating out
the charged modes.\footnote{The relevant procedure also has an analog in three-dimensional Chern-Simons theory on a Seifert manifold \cite{BTtwo}.}
We omit this, as the results are not very essential for us, as explained in section \ref{abred}.  (However, we describe some aspects of a precise
quantum treatment in section \ref{critique}.) What is important for us is that a line operator
of the $SU(2)$ theory can be written as a sum of line operators of the effective $U(1)$ theory.  For Wilson line operators, this decomposition is
fairly evident.  
  A Wilson operator associated to an irreducible representation $\rho$ of $SU(2)$ decomposes in the $U(1)$ description
as a sum of line operators  corresponding to the weight spaces of $\rho$.  
Let us denote a Wilson operator for the representation $\rho$ of $SU(2)$ as $W_\rho$ and one for the charge $n$ representation of $U(1)$ as $W'_n$.
 Then, for example, if $\rho$ is the two-dimensional 
representation of $SU(2)$,  it decomposes under $U(1)$ as the sum of one-dimensional weight spaces of charges $1$ and $-1$, so the corresponding
formula for Wilson operators is $W_\rho=W'_1+W'_{-1}$.

Now we can reconsider \fig \ref{bfcross}(b), with a slightly different interpretation.  We consider the crossing of two Wilson operators of $SU(2)$
in some representations $\rho$ and $\rho'$.   On either of the two Wilson operators, on either side of the crossing, we can replace $W_\rho$ or
$W_{\rho'}$ by a sum of $U(1)$ Wilson operators, labeled by the weight spaces in $\rho$ or $\rho'$.  Thus the integers $n_1,\dots,n_4$ in \fig
 \ref{bfcross}(b) are now not arbitrary but denote weights of $\rho$ or $\rho'$.    In passing through the point at which the two Wilson
 operators cross, the charges obey $n_1+n_2=n_3+n_4$ as before, because the effective description has $U(1)$ symmetry.

The case of a compact nonabelian gauge group $G$ of any rank $r$ is similar. Away from special values of $b$, the $G$
gauge symmetry is broken to the maximal torus $T$.  The charges of an irreducible  representation of $T$ are now an $r$-plet of integers
$\vec n$.  Likewise $b'=\kappa b$ becomes an $r$-plet $\vec b'$.  Its jumping in crossing a Wilson operator is $\vec b'\to \vec b'+\vec n$,
just as before.  One detail is slightly different.
For $r>1$, the weight spaces of an irreducible representation may have multiplicity greater
than 1.  (For example, the adjoint representation of $SU(3)$ has six weight spaces each of multiplicity 1 and one of multiplicity 2.)   Accordingly,
when we write $W_\rho$ as a sum of Wilson operators of an effective $T$ gauge theory, some charges may appear more than once.  So in general
away from crossings, an effective Wilson operator of the low energy theory carries some additional labels in addition to its ``charges'' $\vec n$.

Finally we come to the main point.
In actually calculating a path integral appropriate to the situation of \fig \ref{bfcross}(b),
 we would run into a factor associated to the crossing.  This factor would depend on the charges that label the
incoming and outgoing lines in the effective abelian theory 
and on the parameter $\mu$ that determines the labels of the bulk regions in the figure.  We can denote this
factor as a generalized $R$-matrix element $R_{n_1n_2}^{n_3n_4}(\mu)$.    (For simplicity, in writing the $R$-matrix in detail as a matrix,
 we take $G=SU(2)$ so that each line is simply labeled by an integer.)

Having defined the generalized $R$-matrix of two-dimensional $BF$ theory, 
one can ask if it obeys a generalized Yang-Baxter equation, now with bulk labels as well as
labels for line segments between crossings (\fig \ref{dYB}).
However, there is no reason to expect this.  In a purely two-dimensional theory, one has to pass through a singularity
to interpolate between the left and right hand sides of the figure.  As far as we know, the generalized Yang-Baxter equation is not
satisfied in this situation.

Hopefully the reader can anticipate what comes next.  We consider not a purely two-dimensional $BF$ theory, but an effective abelian $BF$ theory
in two dimensions that arises as in the discussion of eqn.\ (\ref{seld}) from a four-dimensional theory on $\R^2\times E$ with complex gauge group $G$
and with $\zeta=0$.  A Wilson operator is now labeled by a representation $\rho$ of $G$ (or more generally by a representation of
a quantum deformation of $\g[[z]]$)
and also by a spectral parameter $z\in E$.  Between crossings, a Wilson operator is further labeled by a basis vector of $\rho$.
The bulk parameter $b'=\kappa b$ jumps in crossing a Wilson operator of the low energy theory.
Now, we associate to a crossing a generalized $R$-matrix element $R_{n_1n_2}^{n_3n_4}(z_1-z_2;\mu)$.   The difference from before as that as long
as $z_1\not= z_2$, there is no singularity associated to a crossing.
Therefore the two sides of \fig \ref{dYB} (with spectral parameters now included) are equivalent.  The equivalence is known
as the dynamical Yang-Baxter equation.

\subsection{Quantum Treatment}\label{critique}

We have treated the parameter $b'$ classically, starting with purely two-dimensional $BF$ theory.  Actually, the locally constant value of $b'$
has a natural meaning in the quantum theory.

Let us ask what are the quantum states when $BF$ theory -- to begin with for $G=U(1)$ -- is quantized on a circle $S$.  A physical state is a 
gauge-invariant function of the connection $A$ restricted to $S$.  Such a function is $\Psi_n=\exp(\i n \oint_S A)$ for some integer $n$.
On the other hand, $b'$ is the momentum conjugate to $A$, so (for $p$ a point in $S$), $b'(p)$ can be identified with
$-\i \delta/\delta A(p)$.  Explicitly acting with this on $\Psi_n$, we find that $b'(p)\Psi_n=n\Psi_n$, for any $p$, independent of $p$.

Quantum mechanically, the constant value of $b'(p)$ should be interpreted as an eigenvalue of this operator, and (for $U(1)$) the eigenvalues are
integers, as we have just seen.    We can also understand in this language why crossing a Wilson line operator has the effect of shifting
$b'$ by an integer.  For example, we can regard the Wilson line operator $W_m=\exp(\i m\oint_S A)$ as an operator that acts on physical
states on $S$.  Since $W_m\Psi_n=\Psi_{n+m}$, acting with $W_m$ shifts the value of $b'$ by $m$.  

Now let us repeat this analysis for a compact but possibly nonabelian gauge group $G$.  Of course,
the analog of $\Psi_n$ is what we might call $\Psi_\rho$, the trace of the holonomy around $S$ in the representation $\rho$:
\be\label{zorb}\Psi_\rho =\Tr_\rho \,U,~~U=P\exp\left(\oint_S A\right)\;. \ee
However, we would prefer to express this in a language that is better suited for the reduction to an effective abelian description.  For this,
we observe that a gauge-invariant function of $A$ is precisely a function of the holonomy $U$ that is invariant under conjugation.   In other words,
gauge-invariant functions of $A$ are functions on $G/G$, where $G$ acts on itself by conjugation.

The quotient $G/G$ is the same as $T/\W$, the quotient of the maximal torus $T$ by the Weyl group $\W$.  This might lead one to expect
that gauge-invariants functions of $A$ would correspond to Weyl-invariant functions on $T$, but actually they correspond in a natural way
to Weyl {\it anti}-invariant functions.  We say that a function on $T$ is Weyl anti-invariant if it is odd under each of the elementary reflections
that generate $\W$.  The association of a representation $\rho$ of $G$ with a Weyl anti-invariant function on $T$ is given by the Weyl character
formula; the Weyl anti-invariant function corresponding to $\rho$ is the numerator of the usual Weyl character formula for $\rho$.  Let us just
explain what this means for $SU(2)$.  A maximal torus of $SU(2)$ is the $U(1)$ subgroup $\mathrm{diag}(e^{\i\theta},e^{-\i\theta})$.  The character
of the $n$-dimensional representation $\rho_n$ of $SU(2)$ is
\be\label{worb} e^{\i(n-1)\theta}+e^{\i(n-3)\theta}+\dots + e^{-\i(n-1)\theta}=\frac{\sin( n\theta)}{\sin\theta}\;. \ee
The Weyl anti-invariant function corresponding to $\rho_n$ is the numerator, or 
\be\label{porb}\sin(n\theta)=\frac{1}{2\i}\left(e^{\i n\theta}-e^{-\i n\theta}\right)\;. \ee

The functions $\sin(n\theta)$, $n=1,2,3,\dots$ are a basis for the Hilbert space of Weyl anti-invariant functions on $T$.  This is the Hilbert space
of $BF$ theory of $SU(2)$, quantized on a circle.  In this description, we can conveniently see the effective $U(1)$ description of $SU(2)$ $BF$ theory.
In this effective $U(1)$ description, the holonomy is $e^{\i\theta}$ and $b'$ is the canonical momentum $b'=-\i\partial/\partial\theta$.  
We see that $\sin(n\theta)$ is not an eigenstate of $b'$ but rather is a linear combination of eigenstates $e^{\i n\theta}$ and $e^{-\i n\theta}$ with
eigenvalues $\pm n$.  These values are Weyl conjugate, since the Weyl group of $SU(2)$ is $\Z_2$, acting as $-1$ on the Lie algebra $\mathfrak t$
of $T$.  The implication is that $b'$ should not really be regarded as an integer (or a real
number, as in the classical description) but as a Weyl orbit of nonzero integers.  It is fairly natural to pick from each Weyl orbit $n,-n$ the positive
representative, and if we do this then the values of $b'$ in the effective abelian description are positive integers.   For any compact simple
$G$, the analog is that the values of $b'$ are dominant weights in the interior of a positive Weyl chamber.

Now let us see what happens to the value of $b'$ in crossing a Wilson operator, say the operator $W_m$ associated to the $m$-dimensional
representation $\rho_m$.  The character of $\rho_m$ is $F_m(\theta)=e^{\i(m-1)\theta}+\dots + e^{-\i (m-1)\theta}.$   Crossing $W_m$ has the
effect of multiplying the quantum state by $F_m(\theta)$.  We have
\be\label{incov}F_m(\theta)\sin(n\theta)=\sum_{j=-m+1,-m+3,\dots, m-1}\sin((n+j)\theta)\;,  \ee
and if $m\leq n$, then $n+j$ is always positive.  
This means that in crossing $W_m$, $b'$ can jump by $j$ for any $j=-m+1,-m+3,\dots,m-1$, that is,  any weight of the representation $\rho_m$.
This is the result that was claimed in section \ref{bf}.  For $m>n$, it is possible for $n+j$ to be nonpositive, and some terms on the right hand
side of eqn.\ (\ref{incov}) vanish or cancel.  This leads to some modification of the formalism when $b'$ is not large.

These subtleties do not really affect the discussion in section \ref{bf} very much.  In that discussion, $b'$ was treated as a generic real number,
but it would not have been much different to regard $b'$ as a generic positive integer, where here ``generic'' is equivalent to ``sufficiently large.''
Thus, in studying any concrete collection of Wilson operators, associated with representations of dimensions $m_1,\dots, m_s$, the
reasoning in section \ref{bf} is valid if $b'$ is sufficiently large compared to those dimensions.  

This quantum treatment of purely two-dimensional $BF$ theory, however, highlights what is missing in our understanding of the four-dimensional
theory.  We certainly do not have available a quantum treatment that would identify definite allowed values of $b'$ in the
theory obtained by compactification from four dimensions.  
In fact, since the four-dimensional theory is a somewhat formal construction with a holomorphic action, it is not clear to what extent one should
expect to have such a quantum treatment.

Neither -- at least at first sight -- do we wish to integrate over the possible values of $b'$, as a natural cycle for such an integral does not present itself.
In section \ref{abred}, we already mentioned several possible ways to deal with this issue.  Here we just elaborate on one possibility.
In purely two-dimensional $BF$ theory on a two-manifold $\Sigma$, it is fairly natural to pick a point $p\in\Sigma$, not in the support
of any Wilson operator, and specify the value of $b'$ there.  The values of $b'$ elsewhere would then be determined by the crossing rules of
section \ref{bf}.  It is tempting to believe that a similar constraint should be imposed in the four-dimensional theory, to spare ourselves
from having to sum or integrate over $b'$. Moreover, a constraint setting $b'$ to a generic value at some particular point $p$,
when supplemented with the crossing rules, will ensure that $b'$ never takes a value at which the automorphism group is enhanced
and the effective abelian description breaks down. However, we have not seriously tried to study a quantum field theory operator that would impose
this constraint.


\section*{Acknowledgments}
The authors would like to thank Davide Gaiotto, Robbert Dijkgraaf, Michio Jimbo, Cumrun Vafa, and especially Jacques H.~H.~Perk for discussions.
K.~C.~ is supported by the NSERC Discovery Grant program and by the Perimeter Institute for Theoretical Physics. Research at Perimeter Institute is supported by the Government of Canada through Industry Canada and by the Province of Ontario through the Ministry of Research and Innovation. 
E.~W.~ is partially supported by National Science Foundation grant NSF Grant PHY-1606531.
M.~Y.~ is partially supported by WPI program (MEXT, Japan), by JSPS Program for Advancing Strategic International Networks to Accelerate the Circulation of Talented Researchers, by JSPS KAKENHI Grant No.\ 15K17634, and by JSPS-NRF Joint Research Project.  


\appendix

\section{\texorpdfstring{Rational $R$-matrix for $SO_N$}{Rational R-matrix for SO(N)}}\label{app.R_SO}

In this Appendix we discuss the rational R-matrix for  the
fundamental representation of $G=SO_N$, extending the similar discussion for 
fundamental and anti-fundamental representations for $G=GL_N$ (or $SL_N$) in section \ref{elementary}.
While the result in itself is known since the old work of \cite{Zamolodchikov*2},
in the framework of this paper it is an illuminating exercise to check the consistency between unitarity, crossing and the framing anomaly.

Let us consider the R-matrix $R(z_1-z_2): V\otimes V \to V\otimes V$
for the fundamental representation $V$ of $G=SO_N$.   Since this representation
is equivalent to its own conjugate, a Wilson line in this representation carries no orientation.

As we discussed in the main text, the rational R-matrix has $G$ as a symmetry.
Consequently, the R-matrix is constrained to be of the form
\begin{align}
R_{ij}^{i' j'}(z)=  \delta_{i}^{i'} \delta_j^{j'} E(z) +
\delta_i^{j'} \delta_j^{i'}F(z)  +  \delta_{ij} \delta^{i'j'} G(z) \;,
\label{R_SON}
\end{align}
with three unknown functions $E(z), F(z)$ and $G(z)$. 
In this literature this is also written as 
\begin{align}
R_{ij}^{i' j'}(z)=  E(z) I+F(z) P + G(z) Q \;,
\label{RPQ}
\end{align}
where $P$ is a permutation, and $Q=^t\!\!P$ its transposition.
Compared to the case of $GL_N$ in 
eqn.\ (\ref{zelf}), we here have one extra structure  $\delta_{ij} \delta^{i'j'}$ consistent with $G$-symmetry.

The Yang-Baxter equation constrains only the ratios of the three functions:
$X(z)=F(z)/E(z), Y(z)=G(z)/E(z)$. As we can see from \fig \ref{fig.SO_YBE},
 the resulting constraints takes precisely the same form
as in \eqref{welf} and \eqref{nelf}, where the role of $U(z)$ and $W(z)$ are played by 
$X(z)$ and $Y(z)$ respectively:
\begin{align}
\begin{split}
X(z_1-z_3)X(z_2-z_3)+X(z_1-z_2)X(z_1-z_3) &= X(z_1-z_2)X(z_2-z_3) \;  ,\\
Y(z_1-z_3)X(z_2-z_3)=Y(z_1-z_2)Y(z_1-z_3) &+ Y(z_1-z_2)X(z_2-z_3)  \;.
\end{split}
\label{SO_XY}
\end{align}
By the similar reasoning as before 
we obtain
\begin{align}
X(z)=\frac{\hbar}{z}\; ,  \quad
Y(z)=-\frac{\hbar}{z-c\hbar }\;,
\end{align}
where $c$ is some undetermined constant.

\begin{figure}[htbp]
\centering\includegraphics[scale=0.43]{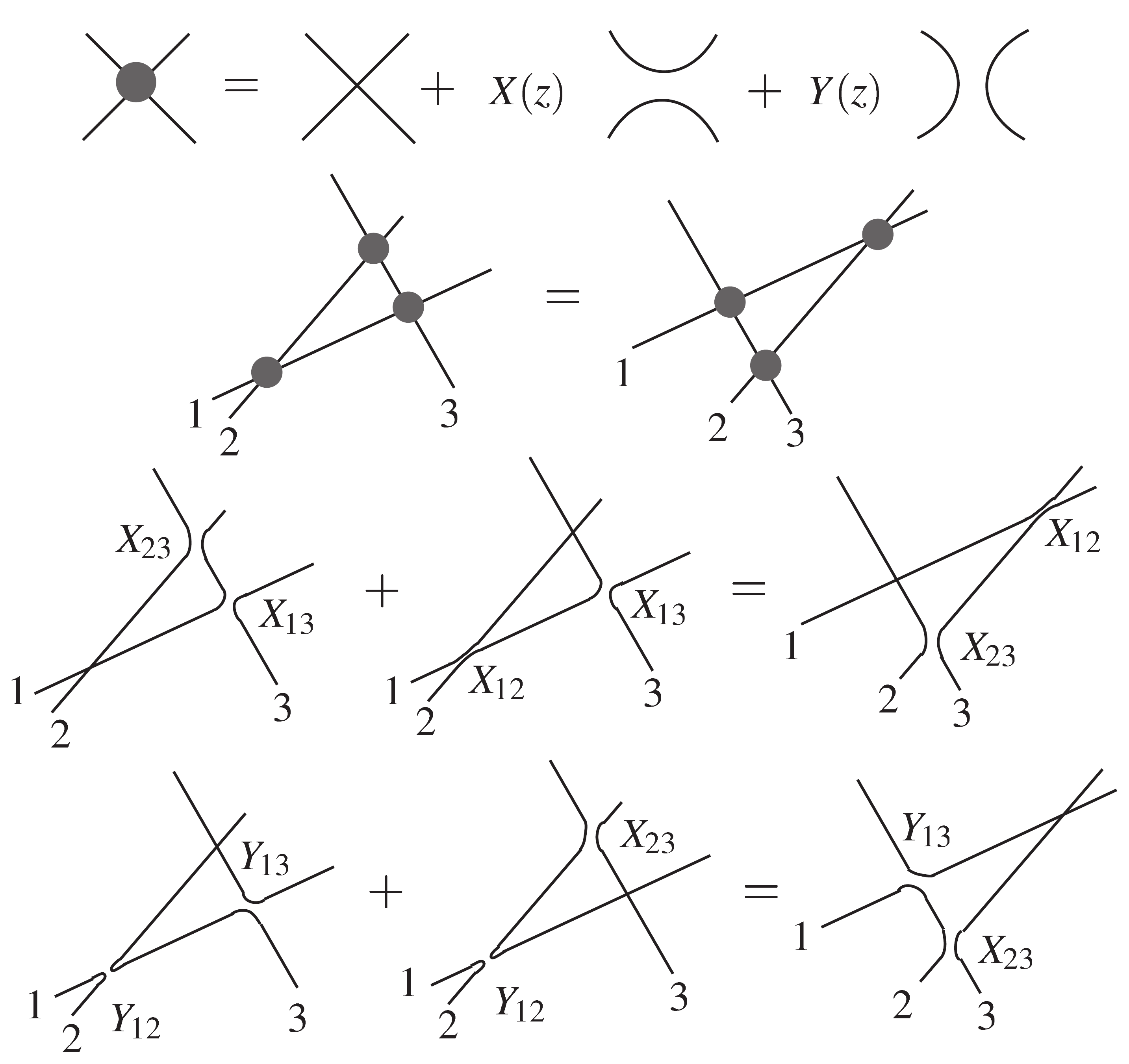}
\caption{\small{Graphical representation for the Yang-Baxter equation for the $SO_N$ $R$-matrix
given in \eqn \eqref{R_SON}. Here we used the shorthand notation $X_{ij}=X(z_i-z_j)$ and $Y_{ij}=Y(z_i-z_j)$, 
with $z_i$ associated with the $i$-th line. The two equations here give eqns.\ \eqref{SO_XY}.}}
\label{fig.SO_YBE}
\end{figure}

The undetermined constant $c$ is constrained by unitarity (recall \eqn \eqref{zunitarity}), which gives the relation
(see \fig \ref{fig.SO_YBE}; note the difference from eqn. \eqref{lefot}):
\begin{align}
N Y(z) Y(-z) +Y(z)+Y(-z)+ X(z) Y(-z) +  X(-z) Y(z) =0 .
\end{align}
This leads to
\begin{align}
X(z)=\frac{\hbar}{z} \;,  \quad
Y(z)=-\frac{\hbar}{z+\frac{N-2}{2}\hbar }\;.
\end{align}

\begin{figure}[htbp]
\centering\includegraphics[scale=0.4]{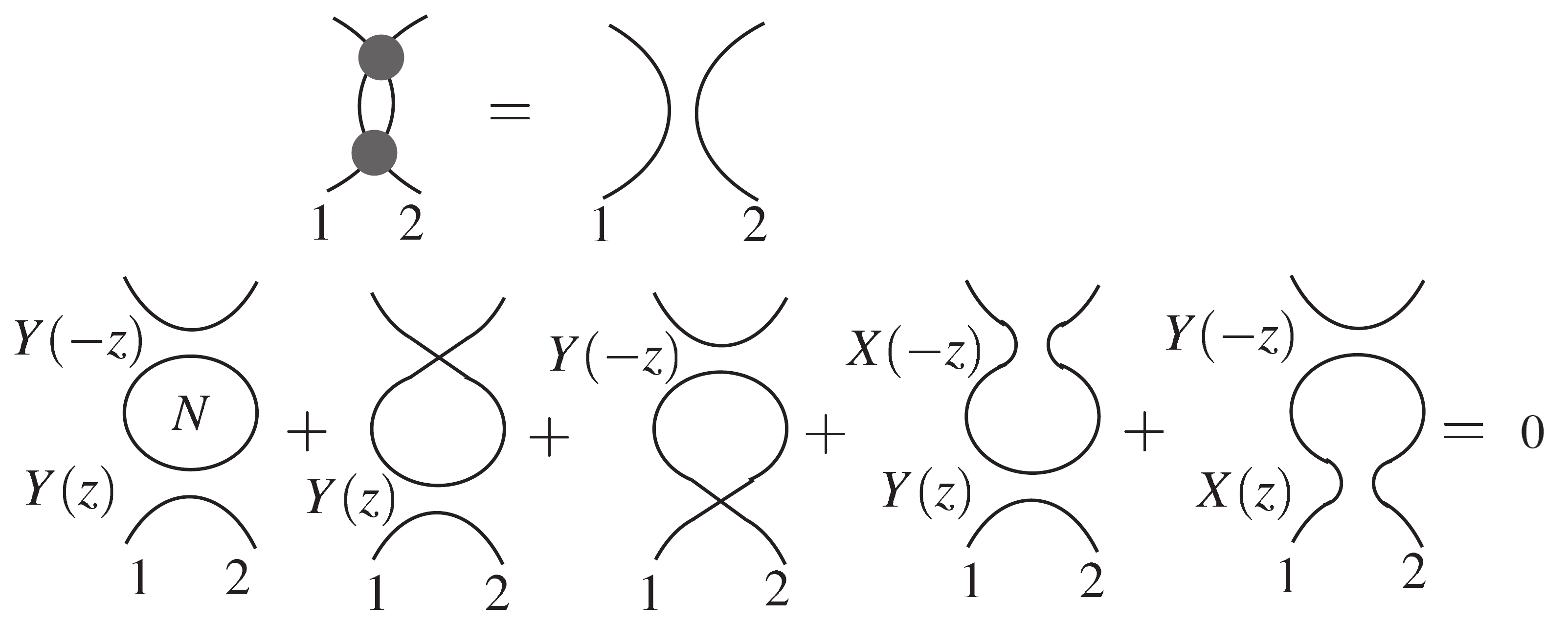}
\caption{\small{Graphical representation for the unitarity relation for the $SO_N$ $R$-matrix
given in \eqn \eqref{R_SON}, with $z=z_1-z_2$. The factor of $N$ comes from the 
color factor for the bubble in the first term of the equation.}}
\label{fig.SO_unitarity}
\end{figure}

These two functions satisfy
\begin{align}
X\left(z-\hbar \frac{N-2}{2}\right)=Y(-z) \;,
\end{align}
which is consistent with the crossing and framing anomaly, since
$N-2$ is the 
 dual Coxeter number of $SO_N$.

We can also try to determine the overall factor $E(z)$.
The diagonal component of the unitarity relation gives
\begin{align}
E(z) E(-z)+ F(z) F(-z) = E(z) E(-z) (1+ X(z) X(-z))=1 \;,
\end{align}
or equivalently 
\begin{align}
E(z) E(-z)=\frac{z^2}{z^2-\hbar^2} \;.
\label{Ez_1}
\end{align}
This solution can easily be solved, for example, by
\begin{align}
E(z)=\frac{z}{z \pm \hbar}\;.
\end{align}
However, this is actually not the solution we want, since 
we also want to impose the condition that this factor does not spoil the unitarity relation:
\begin{align}
E\left(z-\hbar \frac{N-2}{2}\right)=E(-z) \;,\quad {\rm or} \quad
E\left(-\hbar \frac{N-2}{2}-z\right)=E(z)\;.
\label{Ez_2}
\end{align}

The minimal solutions for \eqref{Ez_1} and \eqref{Ez_2} are
\begin{align}
E(z)=\frac{ Q_{\pm \hbar, -\hbar(N-2)/2}(z) }{ Q_{0, -\hbar(N-2)/2}(z)}\;,
\label{minimal_E}
\end{align}
where the function $Q_{\alpha, \beta}(z)$, 
satisfying the functional identities
\begin{align}
Q_{\alpha, \beta}(z) Q_{\alpha, \beta} (-z)=\frac{1}{\frac{(z+\alpha)}{2\beta}\frac{(-z+\alpha)}{2\beta}} \;,
\quad
Q_{\alpha, \beta}(z)=Q_{\alpha, \beta}(\beta-z)\;,
\end{align}
is written as
\begin{align}
&Q_{\alpha, \beta}(z)=
\frac
{ \Gamma\left( \frac{z+\alpha}{2\beta} \right)  \Gamma\left( \frac{-z+\alpha+\beta}{2\beta} \right)}
{ \Gamma\left( \frac{z+\alpha+\beta}{2\beta} \right) \Gamma\left( \frac{-z+\alpha+2\beta}{2\beta} \right)}\;.
\end{align}
Note that both expressions in \eqref{minimal_E}, with either plus or minus sign, are minimal.
In the literature, one is associated with the S-matrix for the $O(N)$ sigma model,
while another to that for the Gross-Neveu model \cite{Zamolodchikov*2}.

To study the behavior at $\hbar\sim 0$, let us first note that 
\begin{align}
&Q_{\gamma \delta \hbar, \delta \hbar}(z)=
\frac
{ \Gamma\left( \frac{z}{2\hbar \delta}+\frac{1}{2} \right)  \Gamma\left( - \frac{z}{2\hbar \delta}+\frac{1+\gamma}{2}  \right)}
{ \Gamma\left(   \frac{z}{2\hbar \delta}+\frac{1+\gamma}{2}\right) \Gamma\left(  - \frac{z}{2\hbar \delta}+\frac{2+\gamma}{2}  \right)}\;.
\end{align}
From the Stirling's formula
\begin{align}
\log \Gamma(z+c)\sim z \log z-z+c \log z-\frac{1}{2} \log z +\frac{1}{2} \log 2 \pi +\frac{c^2-c}{z}+\mathcal{O}\left(\frac{1}{z^2}\right)
\end{align}
($c$ being constant), we obtain 
\begin{align}
&Q_{\gamma \delta \hbar, \delta \hbar}(z)
\sim e^{-\log \frac{z}{2\hbar \delta}-\hbar \frac{(\gamma^2-2\gamma-1)\delta}{2z} +\mathcal{O}\left(\left(\frac{\hbar}{z}\right)^2\right)}\;,
\end{align}
and hence we learn that the overall factor has a perturbative expansion in powers of $\hbar/z$,
starting with identity:
\begin{align}
E(z)=\frac{ Q_{\pm \hbar, -\hbar(N-2)/2}(z) }{ Q_{0, -\hbar(N-2)/2}(z)}
=
\begin{cases}
1 - \frac{N^2-8}{4(N-2)} \frac{\hbar}{z}+\mathcal{O}\left(\left(\frac{\hbar}{z}\right)^2\right) & \textrm{(plus sign)}\;, \\
1 - \frac{N^2-8N+8}{4(N-2)}\frac{\hbar}{z}+\mathcal{O}\left(\left(\frac{\hbar}{z}\right)^2\right) & \textrm{(minus sign)}\; .
\end{cases}
\end{align}

In perturbation theory, these formulas appear satisfactory, but nonperturbatively, one would like to understand the meaning of the poles 
of the function $E(z)$.  
As remarked in the introduction, the D4-NS5 system is likely to provide a nonperturbative framework in which such questions could be addressed, but
  we will not pursue that in the present paper.

We can repeat a similar exercise for the gauge group $G=Sp_{2N}$.
We again find that the constraint from the Yang-Baxter equation and unitarity are consistent
with crossing symmetry and the framing anomaly, with the dual Coxeter number given by $N+1$.

\section{\texorpdfstring{Examples of Trigonometric and Elliptic $R$-matrix}{Examples of R-matrix}}\label{app.R-matrix}

In section \eqref{elementary} and appendix \ref{app.R_SO}, we discussed several examples of rational $R$-matrices.
In this appendix, we present examples of trigonometric and elliptic $R$-matrices known in the literature,
for the simplest case of the fundamental representation of $G=SL_2$. We verify that their classical limit
reproduces the classical $r$-matrices discussed in the sections \ref{specializing_sl2} and \ref{specializing}.
The $R$-matrices below are quasi-classical, and are normalized to be $R_{\hbar=0}=I$, which is the canonical normalization in gauge theory.  If expanded in perturbation 
theory, they lead to $r$-matrices that we computed in sections \ref{trigonometric} and \ref{elliptic}.

As discussed in appendix \ref{app.R_SO} the complete understanding of the overall factor of the $R$-matrix goes beyond the 
perturbative analysis of this paper. Expressions below should be understood modulo an overall scalar factor ambiguities for the $R$-matrix, and hence modulo the shift by identity matrices for the classical $r$-matrix.

\subsection{Rational}

As discussed in the main text, there are three types of quasi-classical $R$-matrix known in the literature: rational, trigonometric and elliptic. 
Let us begin with the rational $R$-matrix, which we already discussed in section \ref{elementary}:
\begin{align}
\label{rational_rmatrix}
R_{\hbar}^{\rm rational}(z)
=\left(z+\frac{\hbar}{2}\right) I + \hbar \mathop{c} 
=
zI +  \hbar P
\;,
\end{align}
where $z\in \mathbb{C}$, $c=(\vec{\sigma}\cdot \vec{\sigma})/2$ is the quadratic Casimir \eqref{eq.color} for $\mathfrak{g}=\mathfrak{sl}_2$, and $P=\mathop{c}+I/2$ is the permutation operator of the two spins.

This $R$-matrix has manifest $SL_2$ symmetry. 
Choosing a basis, the $R$-matrix reads
\begin{align}
\begin{split}
R_{\hbar}^{\rm rational}(z)
=
\bordermatrix{
&|\uparrow \uparrow \rangle &|\uparrow \downarrow \rangle &|\downarrow \uparrow \rangle& |\downarrow \downarrow \rangle\\
|\uparrow \uparrow \rangle &z+\hbar  & & & \\
|\uparrow \downarrow \rangle &  & z & \hbar& \\
|\downarrow \uparrow \rangle &  & \hbar& z& \\
|\downarrow \downarrow \rangle&  & & &z+\hbar
} \;.
\end{split}
\end{align} 
Note that up to normalization only the ratio of 
$z$ and $\hbar$ enters into this solution, as expected on dimensional grounds.

\subsection{Trigonometric Solution}

The trigonometric solution is
\begin{align}
R^{\rm trig}_{\hbar}(z)= 
\left(
\begin{array}{cccc}
a(z, \hbar) & & & \\
  & b(z, \hbar)&c(z, \hbar)& \\
  &c(z, \hbar)& b(z,  \hbar)& \\
& & &a(z, \hbar)\\ 
\end{array}
\right)\;,
\end{align}
where
\begin{align}
\begin{split}
a(z, \hbar)&=
\sinh\left(w+\hbar\right)\;,
 \\
b(z, \hbar)&
=\sinh\left(w \right) \;,
\\
c(z, \hbar)&
= \sinh(\hbar) \;,
\end{split}
\end{align}
with $z=e^w\in \mathbb{C}^{\times}$ and $q:=e^{\hbar}$.

In terms of the quasi-classical parameter $\hbar$ and the cylindrical coordinate $w$, this $R$-matrix is written in terms of sine functions, as the name 
``trigonometric'' suggests. 
When we impose unitarity, this $R$-matrix should be multiplied by a suitable overall normalization factor.

The six-vertex model allows for an integrable deformation, corresponding to 
an inclusion of the ``electric field'' $h, v$,  in vertical and horizontal directions:
\begin{align}
\label{6v_w_field} 
R^{\rm trig}_{\hbar, \theta}(z)= 
\left(
\begin{array}{cccc}
a(z, \hbar) e^{h+v} & & & \\
  & b(z, \hbar)  e^{-h+v} &c(z, \hbar) & \\
  &c(z, \hbar) & b(z, \hbar) e^{h-v}& \\
& & &a(z, \hbar) e^{-h-v}\\ 
\end{array}
\right)\;.
\end{align}
The parameters $h,v$ correspond to $s_1,s_2$ in section \ref{specializing_sl2}: $h_1=\i s_1, v=-\i s_2$.  

When this $R$-matrix is used to construct a model in lattice statistical mechanics in the usual way, its six non-zero matrix elements become the Boltzmann weights for
six allowed configurations at a vertex.  The  model  accordingly is generally called the six-vertex model.   
Among the six matrix elements of the $R$-matrix, two combinations are inessential; one parameter is the overall normalization of the $R$-matrix and one can be removed
by conjugation.
This leaves four parameters,
which we have here denoted $z$, $\hbar$ and $h, v$.    The one-dimensional quantum spin chain associated with this model is the XXZ spin chain.

\subsection{Elliptic Solution}
The elliptic $R$-matrix has a new parameter, namely the modulus of the torus $\tau$:
\begin{align}
R^{\rm elliptic}_{\hbar, \tau}(z)= 
e^{i \pi \hbar F(z, \tau)}
\left(
\begin{array}{cccc}
a(z, \tau, \hbar) & & & d(z, \tau, \hbar)\\
  & b(z, \tau, \hbar)&c(z, \tau, \hbar)& \\
  &c(z, \tau, \hbar)& b(z, \tau, \hbar)& \\
 d(z, \tau, \hbar) & & &a(z, \tau, \hbar)\\ 
\end{array}
\right)\;,
\end{align}
where we defined
\begin{align}
\begin{split}
a(z, \tau, \hbar)&=
\frac{\vartheta_0 (\pi \hbar| 2\tau)}  {\vartheta_0 (0| 2\tau)} \frac{ \vartheta_1( \pi(z+\hbar)| 2\tau)}  {\vartheta_1(\pi z| 2\tau)}\;, \\
b(z, \tau, \hbar)&=
\frac{\vartheta_0 (\pi \hbar| 2\tau)}{\vartheta_0 (0| 2\tau) } \frac{ \vartheta_0(\pi(z+\hbar)| 2\tau)}{ \vartheta_0(\pi z| 2\tau)}\;,\\
c(z, \tau, \hbar)&=
\frac{\vartheta_1 (\pi \hbar| 2\tau) }{\vartheta_0 (0| 2\tau)}  \frac{\vartheta_0(\pi(z+\hbar)| 2\tau) } { \vartheta_1(\pi z| 2\tau)}\;, \\
d(z, \tau, \hbar)&=\frac{\vartheta_1 (\pi \hbar| 2\tau)}{\vartheta_0 (0| 2\tau)}  \frac{ \vartheta_1(\pi(z+\hbar)| 2\tau) } {\vartheta_0(\pi z| 2\tau)}.
\end{split}
\end{align}
Here $\vartheta_0(z|\tau)$ and $\vartheta_1(z|\tau)$ are Jacobi theta functions
\begin{align}
\begin{split}
\vartheta_0(z|\tau)&=\prod_{n=1}^{\infty} (1-2e^{(2n-1)\pi i \tau} \cos 2 z +e^{(4n-2)\pi i \tau} )
(1-e^{2n\pi i \tau}) \;,\\
\vartheta_1(z|\tau)&=2 e^{\pi i \tau/2} \sin u \prod_{n=1}^{\infty} (1- 2e^{2n\pi i \tau} \cos 2 z +e^{4n\pi i \tau} )
(1-e^{2n\pi i \tau}) \;,
\end{split}
\end{align}
and $F(z, \tau)$ is a normalization factor on which we will comment later.
Note also that, for a better match with the main text, we have used the Jacobi theta function with elliptic modulus $2\tau$ (not $\tau$); care is need for comparison with the literature.

This $R$-matrix has eight non-zero matrix elements and  represents the  eight-vertex model solved by Baxter \cite{Baxter:1971cr}.
The associated one-dimensional quantum spin chain is the XYZ spin chain.

The periodicity of the Jacobi theta functions are given by
\begin{align}
\label{theta_periodicity}
\begin{split}
&\vartheta_1(\pi (z+1) |2\tau)=-\vartheta_1(\pi z|2\tau)\;,
 \quad \vartheta_1\left( \pi \left(z+  \tau\right)|2\tau\right)=i e^{-\frac{i \pi \tau}{2}} e^{-i \pi z}\vartheta_0(\pi z|2\tau) \;,\\
&\vartheta_0(\pi (z+1)|2\tau )=\vartheta_0(\pi z|2\tau)\;,
 \quad \vartheta_0\left(\pi \left(z+  \tau\right)|2\tau\right)=i e^{- \frac{i \pi \tau}{2}} e^{-i \pi z}\vartheta_1(\pi z|2\tau).
\end{split}
\end{align}
This means if we choose the normalization factor $F(z, \hbar)$ to be 
a function on the two-torus satisfying 
\begin{align}
F(z+1, \tau)=F(z, \tau)\;, \quad
F(z+\tau, \tau)=F(z, \tau)+1\;,
\end{align}
then all the entries of the $R$-matrix are periodic under the shift $z\to z+1, z\to z+\tau$, 
up to a possible overall sign factor; for such $F$ we can for example choose
\begin{align}
F(z, \tau)=\frac{i}{\pi} \log \vartheta_0\left(\pi \left(z-\frac{\tau-1}{2}\right)\big|2\tau \right)\; .
\end{align}
Note also that the normalization of the $R$-matrix here is chosen so that $a(z, \tau)=b(z, \tau)=1$ when $\hbar=0$.

The classical $r$-matrix is then (note $\vartheta_1(0)=0$)
\begin{align}
r^{\rm elliptic}_{\tau}(z, \tau)=
\sum_{k=1}^3 w_k(z)\, \sigma_k \otimes \sigma_k
\end{align}
with
\begin{align}
\begin{split}
w_1(z, \tau)&
=\pi\frac{\vartheta_1' (0|2\tau)}{\vartheta_0 (0|2\tau) } \left[\frac{ \vartheta_0(\pi z|2\tau)} {\vartheta_1(\pi z|2\tau)}
+\frac{\vartheta_1(\pi z|2\tau)} { \vartheta_0(\pi z|2\tau)}  \right]\;,\\
w_2(z, \tau)&
=\pi\frac{\vartheta_1' (0|2\tau)}{\vartheta_0 (0|2\tau) } \left[\frac{ \vartheta_0(\pi z|2\tau)} {\vartheta_1(\pi z|2\tau)}
-\frac{\vartheta_1(\pi z|2\tau)} { \vartheta_0(\pi z|2\tau)}  \right]\;,\\
w_3(z, \tau)&
=\partial_z \log \frac{\vartheta_1(\pi z|2\tau)}{\vartheta_0(\pi z|2\tau)}\;.
\end{split}
\end{align}
From (\ref{theta_periodicity}) we find that\footnote{In particular, we have from (\ref{theta_periodicity})
\begin{align} 
\frac{\vartheta_1(\pi (z+1)|2\tau )}{\vartheta_0(\pi (z+1) |2\tau)}=-\frac{\vartheta_1(\pi z |2\tau)}{\vartheta_0(\pi z|2\tau )} ,\quad \frac{\vartheta_1\left(\pi (z+\tau|2\tau) \right)}{\vartheta_0\left(\pi (z+\tau|2\tau) \right)}
=\frac{\vartheta_0\left(\pi (z|2\tau) \right)}{\vartheta_1\left(\pi (z|2\tau) \right)}.
\end{align}
}
\begin{align}
\begin{split}
&w_1(z+1)=- w_1(z) \;, \quad w_1\left(z+\tau\right)= w_1(z)\;,\\
&w_2(z+1)=- w_2(z) \;, \quad w_2\left(z+\tau\right)=- w_2(z)\;,\\
&w_3(z+1)= w_3(z) \;, \quad w_3\left(z+\tau\right)= -w_3(z)\;.
\end{split}
\end{align}
This is consistent with eqn. \eqref{merifo}.

\subsection{Relations}

Among the three solutions for $G=SL_2$, 
the elliptic solution is the most general, and the other two can be obtained by limits
(together with suitable overall rescaling):
\begin{align}
R^{\rm elliptic}_{\hbar, \tau}(z) \xrightarrow[]{\tau\to i\infty}  R^{\rm trig}_{\hbar}(z) \xrightarrow[]{\hbar \to 0} R^{\rm rational}(z/\hbar)\;.
\end{align}
However, the trigonometric $R$-matrix has a generalization  with a non-zero ``field'' turned on; this generalization 
cannot be obtained from a reduction of the eight-vertex model.
In section \ref{specializing}, we explained this fact from an analysis of  the four-dimensional gauge theory on $\R^2\times \mathbb{C}^\vee$.

\section{Anomalies to Quantizing Wilson lines}\label{app.anomaly_Wilson}

In this section we will perform the cohomology calculation referenced in section \ref{section_2loop}.

The main result we will show is the following.
\begin{theorem} 
Let $\mf{g}$ be a simple Lie algebra with no Abelian factors which is not $\mf{sl}_2$.  Let $\wedge^2_0 \mf{g}$ denote the kernel of the Lie bracket map from $\wedge^2 \mf{g} \to \mf{g}$.  Let $V$ be a representation of $\mf{g}$.  Then
\begin{enumerate} 
\item If there are no $G$-invariant maps from $\wedge^2_0 \mf{g} \to \op{End}(V)$ then
\begin{equation} 
H^2(\g[[z]], \op{End}(V)) = 0 \;.  
 \end{equation}
 \item Let $H^2_{(k)}(\g[[z]], \op{End}(V))$ denote the part of the cohomology of weight $k$ under the scaling of $z$ (this group appears in the study  of $k$-loop anomalies).  Then, there is an isomorphism
 \begin{equation} 
 H^2_{(2)}(\g[[z]], \op{End}(V)) \iso \op{Hom}_G ( \wedge^2_0 \g, \op{End}(V))   \;,
  \end{equation}
  where on the right hand side we have the space of $G$-invariant linear operators from $\wedge^2_0 \g$ to $\op{End}(V)$. 
 \end{enumerate}
 If $\mf{g} = \mf{sl}_2$, then $H^2_{(k)}(\g[[z]], \op{End}(V)) = 0$ unless $k = 3$, and the dimension of $H^2_{(3)}(\g[[z]], \op{End}(V))$ is the number of copies of the $5$-dimensional irreducible representation of $\mf{sl}_2$ in $\op{End}(V)$.  
 \end{theorem}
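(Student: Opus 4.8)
The plan is to reduce the entire statement to a single computation of the Lie algebra homology $H_\bullet(z\g[[z]],\C)$ with trivial coefficients, regarded as a graded $\g$-module, and then to feed in the strong Macdonald machinery of \cite{FGT} (packaged as Proposition \ref{proposition_cohomology}) to control its degree-two part. First I would record three elementary reductions. Since $G$ is reductive, every class in $H^\bullet(\g[[z]],\op{End}(V))$ has a $G$-invariant representative, so I may work with the $G$-invariant subcomplex. Using $\oplus_{i+j=k}H^i(\g)\otimes H^j((\g[[z]]\mid\g),\op{End}(V))\iso H^k(\g[[z]],\op{End}(V))$ together with $H^1(\g)=H^2(\g)=0$ and $H^0(\g)=\C$, relative and absolute cohomology agree through degree $2$, so it suffices to compute $H^2((\g[[z]]\mid\g),\op{End}(V))$. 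In the relative complex all arguments lie in $z\g[[z]]$, which acts trivially on the evaluation module $M:=\op{End}(V)$ (the action factors through $\g[[z]]\to\g$, $z\mapsto 0$); hence every module term in the Chevalley--Eilenberg differential drops out, the complex becomes $\op{Hom}_G(C_\bullet(z\g[[z]],\C),M)$ with its purely bracket-dual differential, and by exactness of $\op{Hom}_G(-,M)$ on semisimple modules,
\[
H^2(\g[[z]],\op{End}(V))\iso \op{Hom}_G\!\big(H_2(z\g[[z]],\C),\op{End}(V)\big),
\]
with the grading by powers of $z$ matching the loop-counting weight.

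Next I would carry out the one explicit computation, which already yields part (2) and needs nothing from \cite{FGT}. At weight $2$ one has $(z\g[[z]])_{(2)}=\g z^2$ and $\big(\wedge^2 z\g[[z]]\big)_{(2)}=\wedge^2(\g z)$, while the weight-two part of $\wedge^3 z\g[[z]]$ vanishes for degree reasons (three factors of weight $\ge 1$ cannot sum to $2$). Hence the weight-two part of $H_2$ is $\ker(\text{bracket}\colon\wedge^2\g\to\g)=\wedge^2_0\g$, sitting in $z$-weight $2$. Applying $\op{Hom}_G(-,M)$, the coboundaries in weight two are exactly the $G$-maps $\wedge^2\g\to M$ factoring through the bracket, and the $G$-splitting $\wedge^2\g=\g\oplus\wedge^2_0\g$ gives $H^2_{(2)}(\g[[z]],\op{End}(V))\iso\op{Hom}_G(\wedge^2_0\g,\op{End}(V))$.

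The main obstacle is controlling $H_2(z\g[[z]],\C)$ in all weights, which is precisely Proposition \ref{proposition_cohomology}. Note that $H_2$ is a $\g$-subquotient of $\wedge^2(z\g[[z]])$, each graded piece of which is a copy of $\g\otimes\g$ (unequal powers) or of $\wedge^2\g\subset\g\otimes\g$ (equal powers); so every irreducible constituent of $H_2$ already occurs in $\g\otimes\g=\Sym^2\g\oplus\g\oplus\wedge^2_0\g$. Translating the proposition through the identity above, the assertion that $\op{Hom}_G(H_2,\wedge^2_0\g)\to\op{Hom}_G(H_2,\g\otimes\g)$ is an isomorphism is equivalent to $\op{Hom}_G(H_2,\Sym^2\g)=\op{Hom}_G(H_2,\g)=0$, i.e.\ no constituent of $H_2$ lies in $\Sym^2\g$ or in the adjoint. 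Combined with the previous sentence, every constituent of $H_2$ lies in $\wedge^2_0\g$, so part (1) is immediate: if $\wedge^2_0\g$ and $\op{End}(V)$ share no irreducible, neither do $H_2$ and $\op{End}(V)$, whence $\op{Hom}_G(H_2,\op{End}(V))=0$ and $H^2$ vanishes. The genuine difficulty is Proposition \ref{proposition_cohomology} itself: establishing it amounts to the Hodge-theoretic identification of harmonic cochains for $\g[[z]]$ in \cite{FGT}, used to show that in cohomological degree two the adjoint and symmetric-square constituents simply do not appear. This is the step I expect to be hardest, since the naive complexes grow rapidly in each weight and only the harmonic-theory input makes the cancellations manifest.

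Finally, the excluded case $\g=\mf{sl}_2$ runs through the same reduction but uses the $\mf{sl}_2$ analog of Proposition \ref{proposition_cohomology}: here $\wedge^2_0\mf{sl}_2=0$, so the weight-two group vanishes, while $H^2(\g[[z]],\g\otimes\g)\iso H^2(\g[[z]],\Sym^2\g)$ is one-dimensional and of weight $3$. Through the displayed identity this says that, in cohomological degree two, $H_2(z\,\mf{sl}_2[[z]],\C)$ consists of a single copy of the five-dimensional irreducible $\Sym^2_0\mf{sl}_2$ placed in $z$-weight $3$ and nothing in other weights. Applying $\op{Hom}_G(-,\op{End}(V))$ then gives $H^2_{(k)}=0$ for $k\ne 3$ and $\dim H^2_{(3)}(\g[[z]],\op{End}(V))$ equal to the multiplicity of the five-dimensional representation in $\op{End}(V)$, as claimed.
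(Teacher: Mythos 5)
Your universal-coefficients reduction is correct, and it is organized genuinely differently from the paper's proof. The observation that $z\g[[z]]$ acts trivially on the evaluation module $\op{End}(V)$, so that the relative complex becomes $\op{Hom}_{\g}(\wedge^\bullet(z\g[[z]]),\op{End}(V))$ with the purely bracket-dual differential, and hence by complete reducibility (weight by weight) $H^2(\g[[z]],\op{End}(V))\iso \op{Hom}_G\big(H_2(z\g[[z]],\C),\op{End}(V)\big)$, is not stated anywhere in the paper: appendix \ref{app.anomaly_Wilson} instead computes relative cohomology with one coefficient module at a time ($\C$, $\g$, $\Sym^2\g$, $\wedge^2\g$) and assembles the answer at the end. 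Your identity unifies parts (1), (2) and the $\mf{sl}_2$ case, and your weight-two computation of part (2) --- no relative $3$-cochains of weight $2$, coboundaries exactly the $G$-maps factoring through the bracket --- is the same short argument the paper uses for $H^2_{(2)}(\g[z],\wedge^2\g)$, applied directly to $\op{End}(V)$.

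The gap is where you lean on Proposition \ref{proposition_cohomology}. That proposition is not an available prior input: in the paper it is itself proved in appendix \ref{app.anomaly_Wilson}, by the very chain of computations that constitutes the proof of the theorem you were asked to prove, so assuming it means skipping the mathematical core. Moreover, describing that core as ``the Hodge-theoretic identification of harmonic cochains in \cite{FGT}'' understates what is required. The results of \cite{FGT} give the cohomology of $\g[z]$ with coefficients in $\C$, in the coadjoint module $\g[z^{-1}]$, and in $\Sym^2\g[z^{-1}]$; they say nothing directly about the evaluation modules $\g$ and $\Sym^2\g$. To get $H^2(\g[z],\g)=0$ the paper runs the long exact sequence coming from $0\to\g\to\g[z^{-1}]\xto{z}\g[z^{-1}]\to 0$; to get $H^2_{(k)}(\g[z],\Sym^2\g)=0$ for $k\neq 3$ it needs two further exact sequences built from the kernels $M=\ker(z)$ and $M_0=\ker(z^2)$ inside $\Sym^2\g[z^{-1}]$, an explicit analysis of the $z$-action on $\wedge^2(z^{-1}\C[z^{-1}])$, and --- because $(\Sym^3\mf{sl}_n)^G\neq 0$ breaks two of those steps --- a separate outer-automorphism argument for $\mf{sl}_n$, $n>2$. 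This homological-algebra work is precisely the content your proposal leaves blank. Your reduction would be a genuine improvement as a wrapper around that computation, since it repackages all the coefficient-module calculations as a single statement about the $\g$-module $H_2(z\g[[z]],\C)$; but as it stands it is a reduction of the theorem to the hard lemma, not a proof of the theorem.
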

The proof will take a number of steps, which we will write as separate propositions.   The first step is the following. 
\begin{proposition}
\label{proposition_wilson_line_quantization}
	Let $\mf{g}$ be a simple Lie algebra with no Abelian factors. Let $V$ be a representation of $\mf{g}$ with the feature that the image of any $G$-invariant linear map $\mf{g}^{\otimes 2} \to \op{End}(V)$ consists of copies of the trivial and adjoint representations of $\mf{g}$. Then,
\begin{equation}
H^2(\g[[z]], \op{End}(V)) = 0 \;.
\end{equation}
\end{proposition}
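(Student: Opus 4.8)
The plan is to compute the Chevalley--Eilenberg cohomology directly, reducing the coefficient problem to two universal current-algebra vanishing statements. First I would exploit that $\op{End}(V)$ is a module over $\g[[z]]$ only through the augmentation $\g[[z]]\to\g$ (evaluation at $z=0$), so that the ideal $z\g[[z]]$ acts trivially on it; hence $H^q(z\g[[z]],\op{End}(V))=H^q(z\g[[z]],\C)\otimes\op{End}(V)$ as $\g$-modules. Applying the Hochschild--Serre spectral sequence to the extension $0\to z\g[[z]]\to\g[[z]]\to\g\to0$ gives $E_2^{p,q}=H^p(\g,\,H^q(z\g[[z]],\C)\otimes\op{End}(V))$. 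Since $\g$ is simple, Whitehead's lemmas give $H^1(\g,-)=H^2(\g,-)=0$ on every finite-dimensional module (applied weight-by-weight along the grading by $z$-degree, whose graded pieces are finite-dimensional), so $E_2^{1,1}=E_2^{2,0}=0$ and the only surviving contribution in total degree $2$ is $E_2^{0,2}=(H^2(z\g[[z]],\C)\otimes\op{End}(V))^{\g}$. Thus $H^2(\g[[z]],\op{End}(V))$ is a subquotient of $(H^2(z\g[[z]],\C)\otimes\op{End}(V))^{\g}$.

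Next I would bring in the hypothesis on $V$. Each graded piece of $z\g[[z]]$ is a copy of the adjoint representation, so every irreducible $\g$-constituent of $H^2(z\g[[z]],\C)$ already occurs as a constituent of $\g\otimes\g$. Consequently the invariant space above can be nonzero only if $H^2(z\g[[z]],\C)$ and $\op{End}(V)$ share a common irreducible constituent; by the assumed restriction on the image of $G$-invariant maps $\g^{\otimes 2}\to\op{End}(V)$, such a shared constituent must be either the trivial or the adjoint representation. Hence it suffices to show that $H^2(z\g[[z]],\C)$ contains neither the trivial nor the adjoint representation of $\g$.

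For the trivial representation I would rerun the spectral sequence with $\C$-coefficients: its multiplicity in $H^2(z\g[[z]],\C)$ is $E_2^{0,2}=H^2(z\g[[z]],\C)^{\g}$, and the only differential acting on it is $d_3$ into $E_3^{3,0}=H^3(\g,\C)$. Because all differentials respect the $z$-grading, while $H^2(z\g[[z]],\C)$ lives in weights $\ge 2$ and $H^3(\g,\C)$ in weight $0$, this $d_3$ vanishes; combined with $H^2(\g[[z]],\C)=0$ (the holomorphic current algebra has no nontrivial central extension, since the Kac--Moody cocycle is a residue that vanishes on $\g[[z]]$), this forces $H^2(z\g[[z]],\C)^{\g}=0$. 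An entirely parallel computation with adjoint coefficients identifies the multiplicity of the adjoint in $H^2(z\g[[z]],\C)$ with $\dim H^2(\g[[z]],\g)$: here the relevant differential lands in $H^3(\g,\C)\otimes\g^{\g}=0$ because $\g^{\g}=0$, so no correction term survives and the identification is exact.

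The main obstacle is therefore the single current-algebra statement $H^2(\g[[z]],\g)=0$ for simple $\g$ (including $\mathfrak{sl}_2$, where $\wedge^2_0\g=0$ and this is the expected behavior). This is a genuine computation rather than a formal manipulation, and it is exactly the kind of input supplied by the cohomological results of \cite{FGT} invoked in the appendix; conceptually it is the all-weights generalization of the fact, established for weight $2$ in the Lemma of section \ref{anomwi}, that any two-cocycle factoring through the commutator map $\wedge^2\g\to\g$ is a coboundary. Once this vanishing is in hand, the steps above assemble into $H^2(\g[[z]],\op{End}(V))=0$, completing the proof.
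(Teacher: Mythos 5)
Your formal reductions are all correct, and they follow a genuinely different route from the paper's: where the paper argues directly at the cochain level that every $G$-invariant $2$-cochain valued in $\op{End}(V)$ automatically takes values in the subrepresentation $R\subset \op{End}(V)$ spanned by trivial and adjoint constituents (so that the problem immediately becomes $H^2(\g[z],\C)=0$ and $H^2(\g[z],\g)=0$), you run the Hochschild--Serre spectral sequence for $z\g[[z]]\subset\g[[z]]$, invoke Whitehead's lemmas weight-by-weight, and then rerun the sequence with $\C$ and adjoint coefficients to convert the problem into the absence of trivial and adjoint constituents in $H^2(z\g[[z]],\C)$. Both routes converge on exactly the same two vanishing statements, and your spectral-sequence bookkeeping (including the weight argument killing $d_3$ into $H^3(\g,\C)$, and the identification of the adjoint multiplicity with $\dim H^2(\g[[z]],\g)$) is sound.

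The gap is that you never prove the statement you yourself identify as the main obstacle: $H^2(\g[[z]],\g)=0$, with $\g$ the evaluation module at $z=0$. Saying this is ``exactly the kind of input supplied by \cite{FGT}'' does not close it, because \cite{FGT} does not compute this group: their Theorem B gives the cohomology of $\g[z]$ with coefficients in $\C$, in the coadjoint module $\g[z^{-1}]$, and in symmetric powers thereof --- not in the finite-dimensional evaluation module $\g=\g[z]/z\g[z]$. Extracting the vanishing you need from their results is precisely the content of the paper's proof: one introduces the short exact sequence of $\g[z]$-modules
\begin{equation}
0\to\g\to\g[z^{-1}]\xrightarrow{\;z\;}\g[z^{-1}]\to 0\;,
\end{equation}
uses $H^2(\g[z],\g[z^{-1}])=0$ (which does follow from \cite{FGT}, since the answer there is a tensor product of $H^\ast(G)$, concentrated in degrees $0$ and $\geq 3$, with classes in degrees $0$ and $1$), and then checks from the explicit description $H^1(\g[z],\g[z^{-1}])\iso z^{-1}\C[z^{-1}]$ that multiplication by $z$ acts surjectively on $H^1$, so that $H^2(\g[z],\g)=\op{coker}\bigl(z:H^1\to H^1\bigr)=0$. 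Choosing this auxiliary module and exact sequence is the real idea of the proposition, and it is absent from your proposal; without it (or a substitute argument) the proof is incomplete. A lesser instance of the same issue appears in your treatment of $H^2(\g[z],\C)=0$: noting that the Kac--Moody cocycle is a residue that vanishes on polynomials only rules out the standard central extension, not all invariant $2$-cocycles, though here the remaining check is an easy finite computation in each weight (or can be cited from Theorem B of \cite{FGT} directly, as the paper does).
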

The vanishing of this cohomology group implies that there are no anomalies to quantizing the Wilson line in the representation $V$. 

\begin{proof}
In this calculation, we will use the Lie algebra $\g[z]$ of polynomials with coefficients in $\g$, instead of the Lie algebra $\g[[z]]$ of series with coefficients in $\g$. As long as we focus on cohomology classes of a fixed weight under the $\C^\times$ action which scales $z$, the cohomology groups are the same in each case. This is because a cochain of weight $k$ under this $\C^\times$ action involves $z^m$ for $m \le k$.

Let $R \subset \op{End}(V)$ be the largest sub-representation which contains all the representations occurring in the decomposition of $\mf{g} \otimes \mf{g}$ into irreducible components.  There is a natural injective map 
\begin{equation}
C^\ast(\mf{g}[z], R) \to C^\ast(\mf{g}[z], \op{End}(V)) \;. 
\end{equation}
It is a standard fact of Lie algebra cohomology that we find the same cohomology groups if we use $G$-invariant cocycles.  It follows from the definition of $R$ that, on $G$-invariant cocycles, this map is an isomorphism in degrees $\le 2$. From this we see that this map must be an isomorphism on degree $2$.  

This reduces the problem to showing that $H^2(\g[z], R) = 0$.  The assumptions we state in the proposition are that $R$ is a direct sum of copies of the adjoint and trivial representations.  We therefore need to show that $H^2(\g[z],\C) = 0$ and that $H^2(\g[z], \g) = 0$.

Since we are working in low cohomological degrees, both of these statements can be proved by hand with a direct computation.  However, it is more convenient to invoke a general result of Fishel, Grojnowski, and Teleman \cite{FGT}, who calculate the Lie algebra cohomology of $\mf{g}[z]$ with various coefficients. 

Their results (Theorem B, p.\ 6 of \cite{FGT}) tell us that $H^2(\g[z],\C) = 0$.  We need to show that $H^2(\g[z],\g) = 0$ as well. This follows a little indirectly from their analysis, as we will now explain. 

We write $ \mf{g}[z^{-1}]$ for the space of $\mf{g}$-valued functions of $z$ of the form $$\alpha_{a_0} t_{a_0} + z^{-1} \alpha_{a_1}t_{a_1} + \dots .$$ These are meromorphic functions on $\C$, which have a  finite order pole at $0$ and no other poles, and which are regular at infinity.  There is a natural residue pairing
\begin{equation}
\ip{t_a f(z), t_b g(z^{-1}) } = 2 \delta_{ab} \frac{1}{2 \pi i} \oint f(z) g(z^{-1}) z^{-1} \d z 
\end{equation}
between $\mf{g}[z^{-1}]$ and $\mf{g}[z]$.  

The space $\mf{g}[z^{-1}]$ has the structure of module for $\mf{g}[z]$ whereby 
\begin{equation}
(t_a z^{k}) \cdot (t_b z^{-l}) = f^c_{ab} t_c z^{k-l} \delta_{k-l \le 0} \;.
\end{equation}
This identifies $ \mf{g}[z^{-1}]$ with the dual representation to the adjoint representation of $\mf{g}[z]$, that is, with the coadjoint representation.  (Because these are infinite dimensional modules, one has to take care with the meaning of ``dual'': we mean the restricted dual, where we take the direct sum over the weight spaces of the $\C^\times$ action which scales $z$.  This use of restricted duals and the corresponding restricted cohomology elides the difference between $\mf{g}[z]$ and $\g[[z]]$, so we will use the algebra $\g[z]$ in the course of the proof). 

Fishel, Grojnowski and Teleman show (Theorem B, p.\ 6 of \cite{FGT}) that the cohomology of $\mf{g}[z]$ with coefficients in the coadjoint representation $\mf{g}[z^{-1}]$ is a graded tensor product of the cohomology of the compact group $G$ with a graded vector space consists of $ \C[z^{-1}]$ in degree $0$ and $z^{-1} \C[z^{-1}]$ in degree $1$.  We have written these cohomology groups in a way compatible with the $\C^\times$  which scales $z$.  In particular, since the cohomology of the group $G$ lives in degree $0$ and degrees $\ge 3$, we find that there is no second cohomology of $\mf{g}[z]$ with coefficients in $\g[z^{-1}]$. 

To describe the cocycles for these cohomology classes, it is more convenient to use cohomology relative to the subalgebra $\g \subset \g[z]$. A relative $k$-cocycle is given by a $G$-invariant linear map
\begin{equation}
\wedge^k (z \g[z]) \to \g[z^{-1}]\;.
\end{equation}
The absolute cohomology groups are the tensor product of the relative cohomology groups with the cohomology of the group $G$.  Therefore they coincide in degrees less than $3$.

The relative $1$-cocycle associated to an element $z^{-k} \in z^{-1} \C[z^{-1}]$ is a linear map
\begin{align}
\begin{split}
	z \mf{g}[z]\quad &\to \quad  \mf{g}[z^{-1}] \\
\vin \quad \quad& \quad \qquad \vin \\
t_a f(z) \quad&\mapsto t_a \pi_{-} \left( z^{-k} f(z)\right) \;,
\end{split}
\end{align}
	where $\pi_-$ indicates the operation of projection of an element of $\g[z,z^{-1}]$ onto the non-positive powers of $z$. 

There is a short exact sequence of $\mf{g}[z]$-representations
\begin{equation}
0 \to \g \to   \g[z^{-1}] \xto{z}  \g[z^{-1}] \to 0\;,
\end{equation}
where  the second map is given  by multiplying by $z$, using the convention that we only retain negative powers of $z$. 

This short exact sequence leads to a long exact sequence on cohomology
\begin{equation}
\dots \to H^1(\mf{g}[z], \g) \to H^1(\mf{g}[z],  \g[z^{-1}]) \to H^1(\g[z],  \g[z^{-1}]) \to H^2(\g[z], \g) \to 0 \;,
\end{equation}
where we use the fact that $H^2(\g[z],  \g[z^{-1}]) = 0$.

This leads to an exact sequence
\begin{equation}
\dots \to H^1(\mf{g}[z], \g) \to z^{-1} \C[z^{-1}] \xto{z} z^{-1} \C[z^{-1}] \to H^2(\g[z], \g) \to 0 \;.
\end{equation}
From this we conclude that $H^2(\g[z],\g) = 0$, as desired. 
\end{proof}

For any representation $R$ of $\g$, let $H^i_{(k)}(\g[z],R)$ denote the cohomology group of weight $-k$ under the $\C^\times$ action which scales $z$.  This is the cohomology group that will play a role in quantization at $k$ loops.

Next, we will prove a more difficult cohomology vanishing result, also based on the results of \cite{FGT}. 
\begin{proposition}
For any simple Lie algebra with no Abelian factors which is not $\mf{sl}_n$ with $n>2$, we have
\begin{equation}
H^2_{(k)}(\g[z], \Sym^2 \g) = 0 \text{ unless } k = 3\;.
\end{equation}
Further, there is an exact sequence
\begin{equation}
0 \to H^2_{(3)} (\g[z], \Sym^2 \g) \to \C \to H^2_{(2)} (\g[z], \wedge^2 \g) \to H^3_{(3)} (\g[z], \Sym^2 \g) \to 0\;. 
\end{equation}
\label{proposition_symmetric_cohomology}
\end{proposition}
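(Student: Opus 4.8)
The plan is to follow the template of the proof of Proposition \ref{proposition_wilson_line_quantization}: replace $\g[[z]]$ by $\g[z]$ (legitimate in each fixed weight under the $\C^\times$ action scaling $z$), pass to $G$-invariant relative cochains, and compute everything from the Fishel--Grojnowski--Teleman description \cite{FGT} of $H^*(\g[z],-)$ with coefficients in symmetric and exterior powers of the coadjoint module $W:=\g[z^{-1}]$. The engine is the short exact sequence of $\g[z]$-modules $0\to\g\to W\xto{z}W\to 0$ used previously, where $z$ is the weight-lowering (shift) operator with kernel the weight-$0$ part $\g$. Since multiplication by $z$ extends to a derivation of $\Sym^\bullet W$ and of $\wedge^\bullet W$ commuting with the diagonal $\g[z]$-action, this input sequence induces, for $\Sym^2$ and for $\wedge^2$, three-step filtrations of $\Sym^2 W$ and $\wedge^2 W$ whose associated graded pieces are
\[
\operatorname{gr}\Sym^2 W = \Sym^2\g \;\oplus\; (\g\otimes W)\;\oplus\;\Sym^2 W,\qquad
\operatorname{gr}\wedge^2 W = \wedge^2\g\;\oplus\;(\g\otimes W)\;\oplus\;\wedge^2 W,
\]
using $W/\g\iso W$. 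Each filtration thus unpacks into two short exact sequences, e.g. $0\to\Sym^2\g\to F^1\to\g\otimes W\to 0$ and $0\to F^1\to\Sym^2 W\to\Sym^2 W\to 0$, with the finite-dimensional bottom piece $\Sym^2\g$ (resp.\ $\wedge^2\g$) sitting in weight $0$.

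First I would import from \cite{FGT} the cohomology of $\g[z]$ with coefficients in $W$, $\Sym^2 W$, $\wedge^2 W$ and $\g\otimes W$, each written compatibly with the $z$-grading, and assemble the two long exact sequences coming from the filtrations above. The crucial structural observation is that the $\Sym^2$ and $\wedge^2$ filtrations share the identical middle graded piece $\g\otimes W$; this common term is the pivot that couples $H^*(\g[z],\Sym^2\g)$ to $H^*(\g[z],\wedge^2\g)$. Tracking the $\C^\times$-weight through the connecting maps (each application of $z$ raises weight by one) explains the weight mismatch in the statement: the $\Sym^2\g$ contributions land in weight $3$ while the $\wedge^2\g$ contributions land in weight $2$. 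Feeding in the vanishing results already available---$H^2(\g[z],\C)=0$ and $H^2(\g[z],\g)=0$, both established in the proof of Proposition \ref{proposition_wilson_line_quantization}---one shows that in all weights $k\neq 3$ the $\Sym^2\g$-cohomology groups that could contribute are forced to zero, giving the first assertion $H^2_{(k)}(\g[z],\Sym^2\g)=0$ for $k\neq 3$. The surviving weight-$3$ terms, together with the weight-$2$ exterior contribution and the connecting maps, then telescope into the asserted four-term exact sequence, with the middle $\C$ produced by the only nonvanishing low-degree piece of the coadjoint cohomology, namely the $H^3(G)\iso\C$ factor in $H^*(\g[z],W)$ (equivalently, the basic invariant three-cocycle attached to the Killing form), restricted to the relevant weight.

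The hypothesis excluding $\mf{sl}_n$ with $n>2$ enters exactly where one must know the $G$-invariant maps among $\g$, $\Sym^2\g$, $\wedge^2\g$ and $\wedge^3\g$: for such $\g$ there is an additional symmetric cubic invariant (the $d$-symbol, a map $\Sym^3\g\to\C$, equivalently an extra invariant in $\g^{\otimes 3}$), which enlarges the space of possible cocycles $\wedge^3\g\to\Sym^2\g$ and $\g\otimes\g\to\Sym^2\g$ and thereby spoils the clean identification of the connecting maps; the case $\mf{sl}_2$ is genuinely different (the relevant cohomology sits in weight $3$ and is governed by the $\mathbf{5}$ of $\mf{sl}_2$) and is handled separately as in the theorem. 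The main obstacle I anticipate is precisely the FGT step: one must extract the cohomology of $\g[z]$ with coefficients in the infinite-dimensional modules $\Sym^2 W$, $\wedge^2 W$ and $\g\otimes W$ in low degree and low weight, and then pin down the connecting homomorphisms canonically enough to verify exactness in the middle and to identify the middle term as $\C$ rather than merely as an anonymous one-dimensional space. Controlling these maps---rather than the bookkeeping of weights, which is routine---is where the real work lies.
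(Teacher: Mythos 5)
Your proposal has the right scaffolding (the FGT input, shift-operator exact sequences, weight bookkeeping, and the correct diagnosis of where the non-$\mf{sl}_n$ hypothesis enters), but the device you build the argument on does not go through, and the paper's key idea is absent. The filtration of $\Sym^2 W$, $W=\g[z^{-1}]$, by the submodule $\g\subset W$ is legitimate and your graded pieces are correct; the problem is that running its long exact sequences requires $H^*(\g[z],\g\otimes W)$, and on the exterior side $H^*(\g[z],\wedge^2 W)$, and \emph{neither is supplied by FGT}. The strong Macdonald theorem as used in the paper gives cohomology with coefficients in $\Sym^k$ of the coadjoint module only (it is the cohomology of the super Lie algebra $\g[z,s]$); $\wedge^2 W$ and $\g\otimes W$ (with $\g$ the evaluation module) are not of that form. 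Moreover, the natural way to attack $\g\otimes W$ by the same shift trick --- tensoring $0\to\g\to W\xto{z}W\to 0$ with $W$, or resolving the evaluation module $\g$ --- lands you back at coefficients in $\g\otimes\g$, which is precisely what you are trying to compute, so the route is circular. Finally, the ``common middle piece'' $\g\otimes W$ of the two filtrations does not by itself couple $H^*(\g[z],\Sym^2\g)$ to $H^*(\g[z],\wedge^2\g)$: the two filtrations sit in different ambient modules and their long exact sequences never interact, so no map between the two cohomologies is produced this way.

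The actual coupling mechanism --- the missing idea --- is that the kernel $M$ of the derivation $z$ on $\Sym^2 W$ is \emph{not} $\Sym^2\g$ but an infinite ``staircase'' module which, as a $\g$-representation, alternates $\Sym^2\g\oplus z^{-1}\wedge^2\g\oplus z^{-2}\Sym^2\g\oplus\cdots$; this is where symmetric and antisymmetric squares of $\g$ get mixed. Cutting down to $M_0=\ker z\cap\ker z^2$ yields a module with the finite extension structure $0\to\Sym^2\g\to M_0\to z^{-1}\wedge^2\g\to 0$, whose cohomology is computable from the FGT answer for $\Sym^2 W$ alone, via $0\to M\to\Sym^2 W\xto{z}\Sym^2 W\to 0$ and $0\to M_0\to M\xto{z^2}M\to 0$: one finds $H^0(\g[z],M_0)=\C$, $H^1(\g[z],M_0)=\C\cdot z^{-2}$, $H^2(\g[z],M_0)=\C\cdot z^{-3}$, $H^3(\g[z],M_0)=0$. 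The long exact sequence of $0\to\Sym^2\g\to M_0\to z^{-1}\wedge^2\g\to 0$, combined with $H^1(\g[z],\Sym^2\g)=0$ (this is exactly where $\mf{sl}_n$, $n>2$, is excluded) and $H^1(\g[z],z^{-1}\wedge^2\g)=\C\cdot z^{-2}$, then \emph{is} the asserted four-term sequence, and it confines $H^2(\g[z],\Sym^2\g)$ inside $H^2(\g[z],M_0)=\C\cdot z^{-3}$, forcing the weight-$3$ statement. Note also that your identification of the middle $\C$ is wrong: it is the weight-$3$ class $z^{-1}\wedge z^{-2}\in\wedge^2\bigl(z^{-1}\C[z^{-1}]\bigr)=H^2(\g[z],\Sym^2 W)$, a product of two degree-one FGT generators; the $H^3(G)$ Killing-form class lives in degree three with trivial coefficients and plays no role here.
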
 
Note that this proposition applies to $\mf{sl}_2$.
\begin{proof}
The proof again uses the results of Fishel, Grojnowski and Teleman \cite{FGT}. Let us consider the $\g[z]$-module $\Sym^2  \g[z^{-1}]$. The results of \cite{FGT} show that
\begin{align}
\begin{split}
H^2(\g[z], \Sym^2  \g[z^{-1}] ) &= \wedge^2 z^{-1} \C[z^{-1}] \;,\\
H^1(\g[z], \Sym^2  \g[z^{-1}] ) &= (\Sym^3 \g)^G \otimes z^{-1} \C[z^{-1}] \;,\\  
H^0(\g[z], \Sym^2 \g[z^{-1}] &= \C[z^{-1}]\;.  
\end{split}
\end{align} 
We have written this isomorphism in a way compatible with the $\C^\times$-action which scales $z$.  Note that, unless $\g$ is $\mf{sl}_n$ for $n > 2$, $(\Sym^3 \g)^G = 0$. Since we restrict to Lie algebras which are not of this form, we have $H^1(\g[z], \Sym^2 \g[z^{-1}]) = 0$.  

Define operators 
\begin{align}
\begin{split}
z^n : \Sym^2  \g [z^{-1}] & \mapsto \Sym^2  \g [z^{-1}] \\
z^n \left( (z^{-k} t_a) (z^{-l} t_b) \right) &= (z^{n-k} t_a)(z^{-l} t_b) \delta_{k-n\ge 0} + (z^{-k} t_a)(z^{n-l} t_b) \delta_{l-n \ge  0}.
\end{split}
\end{align}
for $n \ge 1$.  Note that the operator $z$ is surjective.  We let $M$ denote the kernel of $z$.   The space $M$ is spanned by the elements
\begin{align}
\begin{split}
&A_{ab} (t_a ) (t_b z^{-2k}) - A_{ab} (t_a z^{-1}) (t_b z^{1-2k}) \pm \dots  
\\ 
&\qquad + (-1)^{k+1} A_{ab} (t_a z^{1-k}) (t_b z^{-k-1})  + \tfrac{1}{2} (-1)^{k} A_{ab} (t_a z^{-k}) (t_b z^{-k})\;,
\end{split}
\end{align}
where $A_{ab}$ is a symmetric tensor, and the elements
\begin{align}
\begin{split}
&B_{ab} (t_a ) (t_b z^{-2k-1}) - B_{ab} (t_a z^{-1}) (t_b z^{-2k}) \pm \dots  
\\ 
&\qquad
+ (-1)^{k+1} B_{ab} (t_a z^{1-k}) (t_b z^{-k-2})  +  (-1)^{k} B_{ab} (t_a z^{-k}) (t_b z^{-k-1})\;,
\end{split}
\end{align}
where $B_{ab}$ is antisymmetric.

Thus, as a representation of $\g$, the module $M$ decomposes as 
\begin{equation}
M = \Sym^2 \g \oplus z^{-1} \wedge^2 \g \oplus z^{-2} \Sym^2 \g \dots \;,
\end{equation}
where the powers of $z$ indicate the $\C^\times$-weights. 

As a $\g[z]$-module, the structure is more complicated: elements in $z \g$ moves us between summands of different weight. 

Now, let us identify the cohomology of $\g[z]$ with coefficients in $M$.  Because we have a short exact sequence
\begin{equation}
0 \to M \to \Sym^2 ( \g[z^{-1}]) \xto{z} \Sym^2 ( \g[z^{-1}]) \to 0 \;,
\end{equation} 
we get a long exact sequence in cohomology
\begin{align}
\begin{split}
0 \to H^0(\g[z],M) &\to  \C[z^{-1}] \xto{z}  \C[z^{-1}] \\
   \to H^1(\g[z],M) & \to 0 \to 0\\ 
 \to H^2(\g[z],M) &\to \wedge^2 (z^{-1}  \C[z^{-1}]) \xto{z} \wedge^2 (z^{-1}  \C[z^{-1}])\\
  & \to H^3(\g[z],M) \to 0 \;.   
  \end{split}
\end{align}
The endomorphism $z$ of $\wedge^2 (z^{-1} \C[z^{-1}])$ is defined by
\begin{equation}
z( z^{-k} \wedge z^{-l} ) = z^{1-k} \wedge z^{-l} \delta_{k-1\ge 1} + z^{-k} \wedge z^{1-l} \delta_{l-1 \ge 1}\;.  
\end{equation}
The kernel of the operator $z$ is spanned by the elements
\begin{equation}
z^{-1} \wedge z^{-2k}  - z^{-2} \wedge z^{1-2k} \pm \dots + (-1)^{k} z^{-k}\wedge z^{-1-k} 
\end{equation}
for $k \ge 1$. Thus, $H^2(\g[z],M)$ is spanned by these elements, in weights $3,5,\dots$.  To sum up, we find 
\begin{align}
\begin{split}
H^0(\g[z],M) &= \C\;, \\
H^1(\g[z],M) &= 0\;,\\  
H^2(\g[z],M) &= \C \cdot z^{-3} \oplus \C \cdot z^{-5} \oplus \dots \;,\\
H^3(\g[z], M) &= 0\;. 
  \end{split}
\end{align}
where the powers of $z$ indicate the weights under the $\C^\times$-action. 

Let $M_0 \subset M$ be the kernel of the operator $z^2$.  This operator is surjective on $H^2(M)$, with kernel $\C$ in weight $-3$, and surjective on $M$, but acts by zero on $H^0(M)$ and $H^1(M)$.  We have a long exact sequence
\begin{align}
\begin{split}
0 \to H^0(\g[z], M_0) & \to \C \xto{0} \C \\
 \to H^1(\g[z],M_0) & \to 0 \to 0 \\ 
 \to H^2(\g[z], M_0) & \to z^{-3} \C[z^{-2}] \xto{z^2} z^{-3} \C[z^{-2}] \\
 \to H^3(\g[z], M_0) & \to 0 \;.  
\end{split}
\end{align}
 This sequence allows us to calculate the cohomology groups of $H^\ast(\g[z],M_0)$.  Before we state the answer, we should note that the boundary maps in this long exact sequence shift the weight under the $\C^\times$-action by $-2$.  This is because of the appearance of the operator $z^2$, of weight $2$, in the short exact sequence of modules leading to this long exact sequence of cohomology  groups.  

 From this exact sequence we see that 
\begin{align}
\begin{split}
 H^0(\g[z],M_0) &= \C \;,\\
 H^1(\g[z], M_0) &= \C \cdot z^{-2}\; ,\\ 
 H^2(\g[z],M_0) &= \C \cdot z^{-3}\; . \\ 
\end{split}
\end{align}
Since $M_0$ is the intersection of the kernel of $z$ and the kernel of $z^2$ in $\Sym^2 \g[z^{-1}]$, one can show that there is an exact sequence
\begin{equation}
0 \to \Sym^2 \g \to M_0 \to z^{-1} \wedge^2 \g \to 0 \;,
\end{equation}  
from which we derive a long exact sequence
\begin{align}
\begin{split}
0 \to H^0(\g[z], \Sym^2 \g) &\to H^0(\g[z], M_0) \to H^0(\g[z], z^{-1} \wedge^2 \g) \\
\to H^1(\g[z], \Sym^2 \g) & \to H^1(\g[z], M_0) \to H^1 (\g[z], z^{-1} \wedge^2 \g)\\
\to H^2( \g[z], \Sym^2 \g) & \to H^2(\g[z], M_0) \to H^2(\g[z], z^{-1} \wedge^2 \g) \to H^3(\g[z], \Sym^2 \g) \to 0 \;.
\end{split}
\end{align}
Now, $H^0(\g[z], \wedge^2 \g) = 0$ because the are no $G$-invariant elements in $\wedge^2 \g$ for any simple Lie algebra $\g$.  Further $H^1(\g[z], \Sym^2 \g)$ is zero because there are no copies of the adjoint representation in $\Sym^2 \g$ (recall that we are assuming that $\g$ is not $\mf{sl}_n$ for $n>2$). 

We also have $H^1(\g[z], z^{-1} \wedge^2 \g) = z^{-2} \C$, because there is one copy of the adjoint in $\wedge^2 \g$.

From this, we find that we have an exact sequence
\begin{align}
\begin{split}
0  & \to \C \cdot z^{-2}   \to  \C \cdot z^{-2}   \xto{\delta} H^2( \g[z], \Sym^2 \g) \\
& \to H^2(\g[z], M_0) \to H^2(\g[z], z^{-1} \wedge^2 \g) \to H^3(\g[z], \Sym^2 \g) \to 0 \;.
\end{split}
\end{align}
The boundary map 
\begin{equation}
\delta:  \C \cdot z^{-2} \to H^2(\g[z], \Sym^2 \g) 
\end{equation}
must  be zero. 

We conclude that $H^2(\g[z], \Sym^2 \g)$ is a subspace of 
\begin{equation}
H^2(\g[z], M_0) = \C \cdot z^{-3}. 
\end{equation}
In particular, $H^2(\g[z], \Sym^2 \g)$ is only non-zero in weight $3$. 

\end{proof}

Next, let us perform the same analysis in the case that $\mf{g}$ is $\mf{sl}_n$.
\begin{proposition}
Suppose that $\mf{g} = \mf{sl}_n$ for $n > 2$. Then, 
\begin{equation}
H^2_{(k)}(\mf{sl}_n[z], \Sym^2 \mf{sl}_n) = 0 \text{ unless } k = 3\;.
\end{equation}
 As before, there is an exact sequence
\begin{equation}
0 \to H^2_{(3)} (\mf{sl}_n[z], \Sym^2 \mf{sl}_n) \to \C \to H^2_{(2)} (\mf{sl}_n[z], \wedge^2 \mf{sl}_n) \to H^3_{(3)} (\mf{sl}_n[z], \Sym^2 \mf{sl}_n) \to 0\;. 
\end{equation}

\end{proposition}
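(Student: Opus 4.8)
The plan is to run the argument in exact parallel with Proposition~\ref{proposition_symmetric_cohomology}, whose proof used the hypothesis $\mf{g}\neq\mf{sl}_n$ in only one place: the vanishing of the cubic invariant $(\Sym^3\mf{g})^G$. For $\mf{g}=\mf{sl}_n$ with $n>2$ one has instead $(\Sym^3\mf{g})^G\cong\C$, the symmetrized trace, so the sole input from \cite{FGT} that changes is the first cohomology of the symmetric square of the coadjoint module,
\begin{equation}
H^1(\mf{g}[z],\Sym^2\mf{g}[z^{-1}])=(\Sym^3\mf{g})^G\otimes z^{-1}\C[z^{-1}]\cong z^{-1}\C[z^{-1}],
\end{equation}
which is now nonzero; the groups $H^0$ and $H^2$ of $\Sym^2\mf{g}[z^{-1}]$ are unchanged. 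First I would re-derive the cohomology of $M=\ker(z\mid\Sym^2\mf{g}[z^{-1}])$ from the same short exact sequence $0\to M\to\Sym^2\mf{g}[z^{-1}]\xrightarrow{z}\Sym^2\mf{g}[z^{-1}]\to 0$. Since multiplication by $z$ is surjective on both $H^0$ and $H^1$ of $\Sym^2\mf{g}[z^{-1}]$, with one-dimensional kernel on $H^1$ concentrated in weight $-1$, the long exact sequence now gives $H^1(\mf{g}[z],M)\cong\C$ in weight $-1$ (the one genuinely new class), while $H^2(\mf{g}[z],M)$ and $H^3(\mf{g}[z],M)$ come out exactly as before, namely $\C z^{-3}\oplus\C z^{-5}\oplus\cdots$ and $0$.

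Next I would feed this into the two remaining long exact sequences used before: the one from $0\to M_0\to M\xrightarrow{z^2}M\to 0$ with $M_0=\ker(z^2\mid M)$, and then the one from $0\to\Sym^2\mf{g}\to M_0\to z^{-1}\wedge^2\mf{g}\to 0$. The new weight-$(-1)$ class in $H^1(M)$, on which $z^2$ acts by zero, propagates through the connecting map (which lowers weight by $2$) into $H^2(\mf{g}[z],M_0)$ in weight $-3$, so $H^2_{(3)}(\mf{g}[z],M_0)$ gains one summand relative to the $\mf{g}\neq\mf{sl}_n$ case. The point to check is that this class is confined to weight $-3$: because $z^2$ annihilates $H^0(M)$ and $H^1(M)$ and is surjective on $H^2(M)$ away from its bottom weight, each $H^i(M_0)$ is supported only in weights $0,-1,-2,-3$, and $H^2(M_0)$ in weight $-3$ alone. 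Combined with $H^1_{(k)}(\mf{g}[z],\Sym^2\mf{g})=0$ for $k\neq1$ and $H^1(\mf{g}[z],z^{-1}\wedge^2\mf{g})\cong\C$ in weight $-2$, the final long exact sequence sandwiches $H^2_{(k)}(\mf{g}[z],\Sym^2\mf{g})$ between vanishing groups for every $k\neq 3$, which is the first assertion; for $k=3$ it reduces to the four-term sequence relating $H^2_{(3)}(\Sym^2\mf{g})$, $H^2_{(3)}(M_0)$, $H^2_{(2)}(\wedge^2\mf{g})$ and $H^3_{(3)}(\Sym^2\mf{g})$.

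The hard part will be pinning down the middle term of that weight-$3$ sequence. The naive bookkeeping above produces a \emph{two}-dimensional $H^2_{(3)}(\mf{g}[z],M_0)$ --- one copy inherited from the $\mf{g}\neq\mf{sl}_n$ computation and one from the new $(\Sym^3\mf{g})^G$ class --- whereas the asserted sequence has a single $\C$. Reconciling these requires identifying the extra $(\Sym^3\mf{g})^G$ summand with the image of $H^2_{(3)}(\mf{g}[z],\Sym^2\mf{g})$; equivalently, showing that for $\mf{sl}_n$ the cubic invariant forces $H^2_{(3)}(\mf{g}[z],\Sym^2\mf{g})$ to be nonzero and to inject into that first summand, so that after passing to the quotient the map into $H^2_{(2)}(\wedge^2\mf{g})$ is one-dimensional. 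I would settle this by writing down the weight-$3$ cocycle built from the $d$-symbol explicitly, verifying that its class in $H^2(\mf{g}[z],\Sym^2\mf{g})$ is nonzero, and computing its image under the inclusion $\Sym^2\mf{g}\hookrightarrow M_0$. This explicit identification, rather than any further homological algebra, is the genuinely delicate step; everything else is a weight-graded transcription of the proof of Proposition~\ref{proposition_symmetric_cohomology}.
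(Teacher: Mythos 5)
Your bookkeeping through the long exact sequences is essentially correct: with the new FGT contribution $(\Sym^3\mf{sl}_n)^G\otimes z^{-1}\C[z^{-1}]$ to $H^1(\mf{g}[z],\Sym^2\mf{g}[z^{-1}])$, one does find that $H^1(\mf{g}[z],M)$ becomes one-dimensional, that $H^2(\mf{g}[z],M)$ and $H^3(\mf{g}[z],M)$ are unchanged, and that $H^2_{(3)}(\mf{g}[z],M_0)$ becomes two-dimensional. The genuine gap is in your proposed resolution of the final step, which asserts the wrong thing: you want to show that the extra summand of $H^2_{(3)}(M_0)$ is the image of a \emph{nonzero} class in $H^2_{(3)}(\mf{sl}_n[z],\Sym^2\mf{sl}_n)$ represented by an explicit $d$-symbol cocycle. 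In fact that class is exact, and $H^2_{(3)}(\mf{sl}_n[z],\Sym^2\mf{sl}_n)=0$ for $n>2$. The reason is the outer automorphism $A\mapsto -A^T$ of $\mf{sl}_n$: the cubic invariant, and hence any cocycle built linearly from it, is odd under this automorphism, whereas every invariant tensor in $\mf{sl}_n^{\otimes 4}$ (symmetric in the two ``incoming'' slots) that is odd under the automorphism is proportional to $\op{Tr}(A_1A_2[B,C])+\op{Tr}(A_2A_1[B,C])$ (double-trace tensors are even), and this factors through the copy of the adjoint inside $\Sym^2\mf{sl}_n$; since $H^2(\mf{sl}_n[z],\mf{sl}_n)=0$, every odd class vanishes. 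By exactness, the extra (odd) class in $H^2_{(3)}(M_0)$ therefore does not lift to $H^2(\Sym^2\mf{sl}_n)$ at all --- it instead maps injectively into $H^2_{(3)}(z^{-1}\wedge^2\mf{sl}_n)=H^2_{(2)}(\wedge^2\mf{sl}_n)$. Note also that your reconciliation could not have produced the stated proposition even formally: if $H^2_{(3)}(\Sym^2\mf{sl}_n)\cong\C$ injected onto the extra summand of $\C^2=H^2_{(3)}(M_0)$ (and injectivity is forced, since $H^1_{(3)}(z^{-1}\wedge^2\mf{sl}_n)=0$), exactness would give $0\to\C\to H^2_{(2)}(\wedge^2\mf{sl}_n)\to H^3_{(3)}(\Sym^2\mf{sl}_n)\to 0$, while the proposition with a one-dimensional first term forces $H^2_{(2)}(\wedge^2\mf{sl}_n)\cong H^3_{(3)}(\Sym^2\mf{sl}_n)$; the two conclusions differ by one dimension, so your scenario contradicts the statement you are trying to prove. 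It is also inconsistent with the main theorem of the appendix, which holds for $\mf{sl}_n$, $n>2$, and requires the $\Sym^2$ part of the cohomology to vanish (the paper singles out only $\mf{sl}_2$ as the exception).

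The repair is to let the $\Z/2$ symmetry do the work from the start, which is the paper's route: decompose all cohomology into even and odd parts under the outer automorphism. Both places where the hypothesis $\mf{g}\neq\mf{sl}_n$ entered the earlier proof --- the FGT class coming from $(\Sym^3\mf{g})^G$, and the copy of the adjoint in $\Sym^2\mf{g}$ that makes $H^1(\mf{g}[z],\Sym^2\mf{g})$ nonzero --- involve only odd tensors, so on the even part the earlier proof applies verbatim and yields the proposition there; on the odd part, the trace-enumeration argument above shows every odd two-cocycle factors through the adjoint and hence dies. With this decomposition your ``weight-graded transcription'' never meets a two-dimensional middle term, because the even part of $H^2_{(3)}(M_0)$ is exactly the single $\C$ appearing in the statement.
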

\begin{proof}
The cohomology groups $H_{(k)}^\ast(\mf{sl}_n[z], \Sym^2 \mf{sl}_n)$ are acted on by  the outer automorphism group $\Z/2$ of $\mf{sl}_n$ (which is the automorphism group of the Dynkin diagram).  We let $H^\ast_{(k), \textrm{even}}$ and $H^\ast_{(k), \textrm{odd}}$ refer to the cohomology groups which are even and odd under this action. We will first prove that the cohomology groups which are even under the outer automorphism satisfy the statement of the proposition, and then show that the groups which are odd under the outer automorphism are zero.

Note that in the proof of proposition \ref{proposition_symmetric_cohomology} we used the fact that $\mf{g}$ is not $\mf{sl}_n$ with $n>2$  in two places.  Firstly, we used the fact that Lie algebras which are not of this type do not have an invariant element in $\Sym^3 \mf{g}$.  An invariant element in $\Sym^3 \g$ contributes, according to the results of \cite{FGT}, an element of $H^1(\g[z], \Sym^2 (\g[z^{-1}])$.  However, because the invariant element in $\Sym^3 \mf{sl}_n$ is odd under the outer automorphism, the corresponding element of $H^1(\mf{sl}_n[z], \Sym^2 (\mf{sl}_n[z^{-1}]))$ is also odd, and so does not contribute when we analyze the cohomology which is even under the outer automorphism.

The other place where we used the assumption that $\g$ is not $\mf{sl}_n$ with $n > 2$ was when we asserted that $H^1(\g[z], \Sym^2 \g)$ must be zero.  This is essentially the same point, as any element of $H^1(\g[z], \Sym^2 \g)$ must come from an invariant element in $\g \otimes \Sym^2 \g$, and the only invariant element (when $\g = \mf{sl}_n$, $n > 2$) is totally symmetric.   Because this element is odd under the outer automorphism of $\mf{sl}_n$, we find again that it can not contribute to the cohomology which is even under this automorphism. 

Therefore, when we restrict to the cohomology which is even under the outer automorphims, proposition \ref{proposition_symmetric_cohomology} holds with the same proof. 

We need to calculate the cohomology which is odd under the outer automorphism.  We are interested in $H^2_{\rm odd}(\mf{sl}_n[z], \Sym^2 \mf{sl}_n)$. Any odd two-cocycle can be represented as a sum of $\mf{sl}_n$-invariant linear operators
\begin{equation} 
(z^k \mf{sl}_n) \otimes (z^l \mf{sl}_n) \to \Sym^2 \mf{sl}_n \;,
 \end{equation}
Any such linear operator can be thought of as an invariant tensor in $\mf{sl}_n^{\otimes 4}$, which is invariant under the permutation of the first two factors.  If $A_1,A_2,B,C$ denote elements of $\mf{sl}_n$ then there are three such tensors, invariant under permutation of $A_1$ and $A_2$:
\begin{align*} 
&\op{Tr} (A_1 A_2 B C ) + \op{Tr}(A_2 A_1 B C ) \;,\\
&\op{Tr} (A_1 A_2 C B ) + \op{Tr}(A_2 A_1 C B ) \;,\\
&\op{Tr} (A_1 B A_2 C  ) + \op{Tr}(A_2 B  A_1 C  ) \;.
 \end{align*}
The only linear combination of these three tensors which is odd under the outer automorphism is
\begin{equation} 
 \op{Tr} (A_1 A_2 [B, C] ) + \op{Tr}(A_2 A_1 [B, C] ) 
\end{equation}
 (recall that the outer automorphism sends $A \in \mf{sl}_n$ to $-A^T$).

This invariant tensor can be viewed as a linear operator 
\begin{equation} 
(z^k \mf{sl}_n) \otimes (z^l \mf{sl}_n) \to \mf{sl}_n \subset \Sym^2 \mf{sl}_n \;.
 \end{equation}
We have just shown that the natural map
 \begin{equation} 
	 H^2(\mf{sl}_n[z], \mf{sl}_n) \to H^2_{\rm odd}(\mf{sl}_n[z], \Sym^2 \mf{sl}_n)  
  \end{equation}
  is surjective.  However, we already know that $H^2(\mf{sl}_n[z], \mf{sl}_n) = 0$, so we conclude that $H^2_{\rm odd}(\mf{sl}_n[z], \Sym^2 \mf{sl}_n) = 0$ also. 
\end{proof}

The next result we need is the following.
\begin{proposition}
Let $\wedge^2_0 \g \subset \wedge^2 \g$ denote the kernel of the Lie bracket map $\wedge^2 \g \to \g$.  Then, 
\begin{equation}
H^2_{(2)}(\g[z], \wedge^2 \g) = \op{Hom}^G( \wedge^2_0\g, \wedge^2_0 \g)\;, 
\end{equation}
where on the right hand side we have the space of $G$-invariant maps from $\wedge^2_0 \g$ to itself. 

Further, the map
\begin{equation}
\C = H^2_{(3)}(\g[z], M_0) \to H^2_{(2)}(\g[z], \wedge^2 \g) 
\end{equation} 
is surjective for $\g \neq \mf{sl}_2$ (where we are using the notation $M_0$ from the proof of proposition \ref{proposition_symmetric_cohomology}).  
\end{proposition}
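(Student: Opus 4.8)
The plan is to treat the two assertions separately, identifying $H^2_{(2)}(\g[z],\wedge^2\g)$ first by a direct cochain computation and then deducing the surjectivity from the exact sequence already established in Proposition~\ref{proposition_symmetric_cohomology}.

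For the first statement I would pass to relative Lie algebra cochains of $\g[z]$ with respect to $\g$, which compute the same cohomology in degrees $\le 2$ since $H^1(\g)=H^2(\g)=0$ for simple $\g$. In weight $2$ the relative complex is concentrated in degrees $1$ and $2$: the only weight-$2$ elements of $z\g[z]$ and of $\wedge^2(z\g[z])$ are $z^2\g$ and $\wedge^2(z\g)\cong\wedge^2\g$, while $\wedge^3(z\g[z])$ has nothing in weight $2$. Hence the complex reads
\begin{equation}
\op{Hom}^G(\g,\wedge^2\g)\xrightarrow{\ d\ }\op{Hom}^G(\wedge^2\g,\wedge^2\g)\longrightarrow 0 .
\end{equation}
Evaluating the Chevalley--Eilenberg differential on a weight-$2$ one-cochain $\psi$, the action terms drop out on weight grounds and one is left with $d\psi=-\bar\psi\circ[\,\cdot\,,\cdot\,]$, i.e. precomposition of $\psi$ (viewed as $\bar\psi:\g\to\wedge^2\g$) with the bracket $\wedge^2\g\to\g$. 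Decomposing $\wedge^2\g=\g\oplus\wedge^2_0\g$ and using that the adjoint occurs in $\wedge^2\g$ with multiplicity one (so $\op{Hom}^G(\g,\wedge^2_0\g)=\op{Hom}^G(\wedge^2_0\g,\g)=0$), one gets $\op{Hom}^G(\wedge^2\g,\wedge^2\g)=\C\oplus\op{Hom}^G(\wedge^2_0\g,\wedge^2_0\g)$ with the $\C$ being $\op{Hom}^G(\g,\g)$, and $\op{im}(d)$ is exactly this $\C$ summand (the projection onto the adjoint component). Taking the cokernel yields $H^2_{(2)}(\g[z],\wedge^2\g)=\op{Hom}^G(\wedge^2_0\g,\wedge^2_0\g)$.

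For the surjectivity statement I would invoke the four-term exact sequence of Proposition~\ref{proposition_symmetric_cohomology} (and its $\mf{sl}_n$ counterpart),
\begin{equation}
0\to H^2_{(3)}(\g[z],\Sym^2\g)\to \C\xrightarrow{\ \alpha\ } H^2_{(2)}(\g[z],\wedge^2\g)\to H^3_{(3)}(\g[z],\Sym^2\g)\to 0,
\end{equation}
whose middle $\C$ is $H^2_{(3)}(\g[z],M_0)$ and whose map $\alpha$ is the one in question. Exactness reduces the claim to $\alpha\ne 0$ (i.e. $H^2_{(3)}(\Sym^2\g)=0$) together with $\op{coker}(\alpha)=0$ (i.e. $H^3_{(3)}(\Sym^2\g)=0$). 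When $\wedge^2_0\g$ is irreducible---which holds for every simple $\g$ other than $\mf{sl}_2$ and $\mf{sl}_n\ (n\ge3)$; e.g. for $\mf{so}_N$ and $\mf{sp}_{2N}$ as recorded in Appendix~\ref{app.R_SO}, and for the exceptionals via the tables of \cite{McKay}---the computation above gives $H^2_{(2)}(\wedge^2\g)=\C$, so it suffices to show $\alpha\ne0$: an injective map between one-dimensional spaces is an isomorphism, which forces $H^3_{(3)}(\Sym^2\g)=0$. I would establish $\alpha\ne0$ by representing the generator of $H^2_{(3)}(\g[z],M_0)$ by the explicit relative $2$-cocycle coming from the class $z^{-1}\wedge z^{-2}\in H^2(\g[z],M)$ used in the proof of Proposition~\ref{proposition_symmetric_cohomology}, pushing it through the projection $M_0\to z^{-1}\wedge^2\g$, and checking that the resulting weight-$2$ cocycle represents the identity endomorphism of $\wedge^2_0\g$ rather than a coboundary.

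The main obstacle is the type-$A$ case $\g=\mf{sl}_n$ with $n\ge3$, where $\wedge^2_0\g$ is reducible: it splits into two irreducible pieces exchanged by the outer automorphism (for instance $\wedge^2_0\mf{sl}_3=\mathbf{10}\oplus\overline{\mathbf{10}}$), so $\op{Hom}^G(\wedge^2_0\g,\wedge^2_0\g)$ is two-dimensional and a map out of $\C$ cannot fill it. The resolution---exactly as in the $\mf{sl}_n$ analogue of Proposition~\ref{proposition_symmetric_cohomology}---is to exploit that the outer automorphism $\Z/2$ is a symmetry of the entire construction. The generator of $H^2_{(3)}(M_0)$ is \emph{even}, since it descends from the trivial-representation data $\wedge^2(z^{-1}\C[z^{-1}])$ computed by \cite{FGT}; hence $\alpha$ surjects onto the even part of $H^2_{(2)}(\wedge^2\g)$, which is one-dimensional and spanned by the identity of $\wedge^2_0\g$. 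Because any anomaly produced by Feynman diagrams must respect this symmetry of the theory, it is precisely this even part---the class of the identity in $\op{Hom}^G(\wedge^2_0\g,\wedge^2_0\g)$---that is relevant, and with that understood the surjectivity holds for all $\g\ne\mf{sl}_2$. Making the outer-automorphism bookkeeping fit together uniformly with the explicit cocycle chase is the step I expect to be most delicate.
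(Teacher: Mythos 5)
Your computation of $H^2_{(2)}(\g[z],\wedge^2\g)$ is the paper's own argument: cochains relative to $\g$, the observation that in weight $2$ the relative complex sits only in degrees $1$ and $2$, the differential acting by precomposition with the bracket, and the identification of the exact cochains with the adjoint projection, leaving $\op{Hom}^G(\wedge^2_0\g,\wedge^2_0\g)$. For the second claim you also take the paper's route — the four-term exact sequence of Proposition \ref{proposition_symmetric_cohomology} together with non-vanishing of the map out of $\C=H^2_{(3)}(\g[z],M_0)$, verified against the explicit cocycles of \cite{FGT} — and, like the paper, you leave that final cocycle evaluation as an appeal to \cite{FGT} rather than a worked computation.

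Where you genuinely diverge is the type-$A$ discussion, and there you are more careful than the source. The paper's proof establishes only that the map $\C=H^2_{(3)}(\g[z],M_0)\to H^2_{(2)}(\g[z],\wedge^2\g)$ is \emph{non-zero}; it never reconciles this with the word ``surjective'' in the statement, and for $\g=\mf{sl}_n$, $n\ge 3$, literal surjectivity is impossible, since $\wedge^2_0\g$ splits into two non-isomorphic irreducibles exchanged by the outer automorphism and the target $\op{Hom}^G(\wedge^2_0\g,\wedge^2_0\g)$ is two-dimensional — exactly the tension you identified. Your repair, that the \cite{FGT} classes are even under the outer automorphism so the image is precisely the invariant line $\C\cdot\mathrm{id}_{\wedge^2_0\g}$, is correct and is the honest form of the statement. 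Two points would sharpen your write-up: (i) what the rest of the paper actually uses is injectivity, i.e.\ $H^2_{(3)}(\g[z],\Sym^2\g)=\ker = 0$, which already follows from non-vanishing by exactness, so the discrepancy you found does not propagate into Proposition \ref{proposition_cohomology} or the two-loop anomaly count; (ii) surjectivity onto the full target (versus the even part) only controls the cokernel term $H^3_{(3)}(\g[z],\Sym^2\g)$, which your refinement then forces to be one-dimensional, not zero, for $\mf{sl}_n$ with $n\ge 3$.
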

\begin{proof}
	We can calculate $H^2_{(2)}(\g[z], \wedge^2 \g)$ using Lie algebra cohomology relative to $\g$.  A basis for the relative $2$-cochains is the space of $G$-invariant maps $\wedge^2 (z \g) \to \wedge^2 \g$.    Every such relative two-cochain is closed, because there are no relative three-cochains of weight $2$. Elements of the form
\begin{equation}
f^a{}_{bc} (z t_a \wedge z t_b) \otimes R^a_{ef} (t_e \wedge t_f)
\end{equation}
are exact, where $R^a_{ef}$ defines some $G$-invariant map from $\g$ to $\wedge^2 \g$.  

For all simple Lie algebras, there are no copies of the adjoint representation in $\wedge^2_0(\g)$. This argument tells us that the space of closed, but not exact, $2$-cochains of weight $-2$ is the space of $G$-invariant maps $\wedge^2 \g \to \wedge^2_0 \g$.   

For $G \neq SL_2$, this space is non-trivial. The final thing we need to check is that for $G \neq SL_2$, the map 
\begin{equation}
\C = H^2_{(3)}(\g[z], M_0) \to H^2_{(2)}(\g[z], \wedge^2 \g) 
\end{equation}
is non-zero. This follows from the explicit description of the cocycles of $\g[z]$ with values in $\Sym^\ast \g[z^{-1}]$ given in \cite{FGT}.  
\end{proof}
The following corollary sums up what we have learned about the cohomology groups controlling the obstructions to quantizing a Wilson line.
\begin{corollary}
For a representation $V$ of a simple Lie algebra $\g\neq \mf{sl}_2$, consider the group $H^2_{(k)}(\g[z], \op{End}(V))$ which contains possible anomalies to quantizing the Wilson line associated to $V$ at $k$ loops. Then, $H^2_{(k)}(\g[z], \op{End}(V)) = 0$ unless there exists a non-trivial $G$-invariant map from $\wedge^2_0 \g$ to $\op{End}(V)$. 
\end{corollary}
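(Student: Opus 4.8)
The plan is to prove the contrapositive: assuming there is no nontrivial $G$-invariant map $\wedge^2_0\g\to\op{End}(V)$, I will show that $H^2(\g[[z]],\op{End}(V))=0$, which immediately gives $H^2_{(k)}(\g[z],\op{End}(V))=0$ for every weight $k$, since this second cohomology vanishes precisely when all of its weight components under the $\C^\times$ action scaling $z$ vanish (and, as noted in the appendix, in each fixed weight the cohomologies of $\g[z]$ and $\g[[z]]$ agree). Throughout I would represent classes by $G$-invariant cocycles, which is legitimate for simple $G$ as recalled in section \ref{anomwi}; this is what lets me decompose $\op{End}(V)$ into $G$-irreducible summands and handle the cohomology one constituent at a time.

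First I would carry out the reduction already used in Proposition \ref{proposition_wilson_line_quantization}. Let $R\subseteq\op{End}(V)$ be the largest subrepresentation all of whose irreducible constituents occur in $\g\otimes\g$. A $G$-invariant Chevalley--Eilenberg cochain of degree $i\le 2$ valued in $\op{End}(V)$ is a $G$-map out of $\wedge^i\g[[z]]$, whose image is built from constituents of $\g^{\otimes i}$ and hence lands in $R$; thus $C^\ast(\g[[z]],R)\to C^\ast(\g[[z]],\op{End}(V))$ is an isomorphism on invariant cochains in degrees $\le 2$ and therefore induces an isomorphism $H^2(\g[[z]],R)\iso H^2(\g[[z]],\op{End}(V))$.

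Next I would use the hypothesis to pin down the constituents of $R$. Since no irreducible component of $\wedge^2_0\g$ maps nontrivially into $\op{End}(V)$, none of them occurs in $\op{End}(V)$, hence none occurs in $R$. Every constituent $W$ of $R$ occurs in $\g\otimes\g=\Sym^2\g\oplus\g\oplus\wedge^2_0\g$ but not in $\wedge^2_0\g$; as the only constituent of $\wedge^2\g$ outside $\wedge^2_0\g$ is the adjoint (the bracket being the unique invariant $\wedge^2\g\to\g$), $W$ is either the adjoint or a constituent of $\Sym^2\g$. Now I invoke Proposition \ref{proposition_cohomology}: the statement that the inclusion $H^2(\g[[z]],\wedge^2_0\g)\to H^2(\g[[z]],\g\otimes\g)$ is an isomorphism is precisely the vanishing of the $\Sym^2\g$- and adjoint-summand contributions, i.e. $H^2(\g[[z]],\Sym^2\g)=0$ and $H^2(\g[[z]],\g)=0$. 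Decomposing each of these into irreducibles gives $H^2(\g[[z]],W)=0$ for every constituent $W$ occurring in $R$, whence $H^2(\g[[z]],R)=\bigoplus_W H^2(\g[[z]],W)=0$ and finally $H^2(\g[[z]],\op{End}(V))=0$, as required. (The trivial representation, which sits inside $\Sym^2\g$ via the Killing form, is covered by the same vanishing.)

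The conceptual work has all been front-loaded into Proposition \ref{proposition_cohomology}, so within the corollary itself the only point needing genuine care — and thus the main obstacle — is the reduction step: justifying that passing to the finite-type piece $R$ neither loses nor creates degree-two classes, and that the cohomology of $R$ really splits along its $G$-irreducible decomposition. Both rest on computing with $G$-invariant cocycles and on the fact that invariant cochains in degrees $\le 2$ factor through $\g\otimes\g$; once these are in hand the argument is a formal bookkeeping of which constituents survive, with the hypothesis entering only to excise the $\wedge^2_0\g$ summands. The restriction $\g\neq\mf{sl}_2$ is inherited from Proposition \ref{proposition_cohomology} and should simply be carried along.
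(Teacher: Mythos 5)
Your proof is correct and takes essentially the same route as the paper's: represent classes by $G$-invariant cocycles, decompose the coefficients according to $\g \otimes \g = \Sym^2 \g \oplus \g \oplus \wedge^2_0 \g$, invoke the already-established vanishing of $H^2$ with coefficients in $\Sym^2 \g$ and $\g$, and let the hypothesis eliminate the $\wedge^2_0 \g$ contribution. The only cosmetic differences are that you argue the contrapositive, scaffold the reduction through the subrepresentation $R$ of Proposition \ref{proposition_wilson_line_quantization}, and cite the packaged Proposition \ref{proposition_cohomology} where the paper's proof appeals directly to the individual vanishing results.
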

\begin{proof}
Two-cocycles representing classes in $H^2(\g[z], \op{End}(V))$ are given by $G$-invariant linear maps $\wedge^2 z \g[z] \to \op{End}(V)$.  Any such $G$-invariant map must factor through some copies of $\Sym^2 \g$, $\g$, or $\wedge^2_0 \g$. We have seen that the cohomology with coefficients in $\Sym^2 \g$ and $\g$ is zero. So, assuming there are no non-trivial maps from $\wedge^2_0 \g \to \op{End}(V)$, the cohomology groups are all zero. 
\end{proof}

In fact, one can strengthen this result with some further work. For instance, one can show that $H^2_{(k)}(\g[z], \wedge^2_0 \g) = 0$ unless $k = 2,4,6$. We have already described the rank of the cohomology group when $k = 2$.  Computing explicitly the rank when $k = 4,6$ is a bit involved, and since we do not ultimately need to know the answer, we have not included these calculations.  

\section{\texorpdfstring{Derivation of Eqn.\ \eqref{milp}}{Derivation of Equation \ref{milp}}}\label{apptwo}

In the Feynman diagram of \fig \ref{Diagram3},
we have three vertices, two vertices $u, v$ on the Wilson line $K$
\begin{align}
\left(u, \frac{f u^2}{2}, 0,0\right) \;, \quad
\left(v, \frac{f v^2}{2}, 0, 0\right) \;,
\end{align}
and one vertex $w$ in the bulk, 
which we parametrize as $(x,y, z, \bar{z})$.
At the vertex $u$ of the Wilson line, we have the line element
\begin{align}
\int_{K} A ds= \left(A_x(u)  +  A_y(u) f u \right) du \;.
\end{align}
In the following, to save space 
we sometimes write $A_x(u):=A_x(x(u), y(u), 0, 0)$.
The similar expression applies to the vertex $v$.

The diagram has a symmetry factor of $1/2$, which we can incorporate by restricting the integration region to 
be $u<v$.

Now we connect the vertices by a propagator, to the vertex $w$.
We are interested in linear order in $f$.

One possibility is to take $A_x$ from both, and then obtain a factor of $f$ from the expansion of the propagator.  However, this contributes an expression of the form
\begin{align}
 \pm \langle A_x(u) A_y(w)   \rangle \langle A_x(v)  A_{\bar{z}}(w)  \rangle
A_x(w) 
\end{align}
(or with $u$ and $v$ exchanged), and since we have $A_x(w)$ this does not contribute to $\Lambda$.

Another possibility is to take one $A_x$  and one $A_y$ from $u$ and $v$:
\begin{align}
(fv) A_x(u) A_y(v)  \quad
\textrm{or}  \quad
(fu) A_y(u) A_x(v) \;.
\label{Axy}
\end{align}

When we consider the propagator, there are three different possibilities for \eqref{Axy}:
\begin{align}\label{AAAAA}
\begin{split}
(f(v-u))  \langle A_x(u) A_y(w)   \rangle \langle A_y(v)  A_{\bar{z}}(w)  \rangle
A_x(w)\;, \\
(f(v-u))  
\langle A_x(u) A_{\bar{z}}(w)   \rangle \langle A_y(v)  A_x(w)  \rangle
A_y(w)\;, \\
(f(v-u)) \langle A_x(u) A_y(w)   \rangle \langle A_y(v)  A_x(w)  \rangle
A_{\bar{z}}(w) \;.
\end{split}
\end{align}

Using the expression for the propagator \eqref{eq.propagator},
the expressions \eqref{AAAAA} can be simplified, and when combined with
differential forms we have
\begin{align}
\begin{split}
& \frac{1}{(2\pi)^2}
(f(v-u)) \frac{(2\bar{z})  (x-u)}{d(u,w)^4 d(v,w)^4}
 A_x(w) 
\cdot dz(du dy)(dv d\bar{z}) dx
 \;, \\
& 
\frac{1}{(2\pi)^2} 
(f(v-u)) \frac{ (y-\frac{fu^2}{2})  (2\bar{z})}{d(u,w)^4 d(v,w)^4}
A_y(w)
\cdot dz(du d\bar{z})(dv dx) dy
\;, \\
&
\frac{1}{(2\pi)^2}
(f(v-u))   \frac{(-1)(2\bar{z})  (2\bar{z})}{d(u,w)^4 d(v,w)^4}
A_{\bar{z}}(w)
\cdot dz(du dy)(dv dx) d\bar{z}
 \;,
\end{split}
\end{align}
with distance $d(u,w)$ defined by
\begin{align}
d(u,w)^2=(x-u)^2+\left(y-\frac{fu^2}{2}\right)^2+|z|^2  \;.
\end{align}
We also need to supplement these expressions by a factor $\hbar\, \sh^\vee$,
where $\sh^\vee$ is the color factor explained in the text and $\hbar$ is the loop counting parameter.
In the leading order in $f$, we have $d(u,w)^2\simeq (x-u)^2+y^2+|z|^2$,
and hence we obtain
\begin{align}
\begin{split}
\label{before}
-\frac{1}{(2\pi)^2}\frac{ \ii}{2\pi} \hbar\, \sh^\vee
& \int_{u<v} \d u \d v \int_{\mathbb{R}^4} \d x \d y \d z \d\bar{z} 
\,
 (f(v-u)) \\
&\qquad\times
\frac{2\bar{z}  (x-u) A_x(w) +2 y \bar{z} A_y(w) +4 \bar{z}^2 A_{\bar{z}}(w) }{\left((x-u)^2+y^2+|z|^2 \right)^2 \left((x-v)^2+y^2+|z|^2 \right)^2}\;, 
\end{split}
\end{align}
The equation above can be expressed in the form \eqref{ffo}
if we define
\begin{align}
\label{after}
\begin{split}
\Theta_0&=-\frac{1}{(2\pi)^2}\frac{ \ii}{2\pi} \hbar\, \sh^\vee
\int_{u<v} \d u \d v 
\,
f(v-u)
\\
& \quad \times \frac{ 2(\bar{z})  (x-u) (\d y \wedge \d z \wedge \d\bar{z} ) - 2y \bar{z}  (\d x\wedge \d z \wedge \d\bar{z} ) +4\bar{z}^2 (\d x \wedge \d y \wedge \d z ) }{\left((x-u)^2+y^2+|z|^2 \right)^2 \left((x-v)^2+y^2+|z|^2 \right)^2}\;.
\end{split}
\end{align}
In the notation of the main text, we have $\Theta= z \Theta_0$ and $\Theta=\d x\wedge \Lambda+\Lambda'$ to obtain
\begin{align}
\begin{split}
\Lambda&= -\frac{1}{(2\pi)^2}\frac{2f \ii}{2\pi} \hbar\, \sh^\vee \\
&\times\int_{u<v} \d u \d v \,
 (v-u)
\frac{ -y z \bar{z}  (\d z \wedge \d\bar{z} ) +2 z\bar{z}^2  (\d y \wedge \d z )}{\left((x-u)^2+y^2+|z|^2 \right)^2 \left((x-v)^2+y^2+|z|^2 \right)^2} \;.
\end{split}
\end{align}
We obtain \eqref{milp} after doing the $u$ and $v$ integrals:
\begin{align}
&\int_{-\infty \le u<v \le \infty} \d u \d v \,
(v-u)
\frac{1}{\left(u^2+a^2 \right)^2 \left(v^2+a^2 \right)^2}=\frac{1}{a^5}\frac{3 \pi}{8} \;.
\end{align}

\section{\texorpdfstring{Evaluation of Eqn.\ \eqref{13form}}{Evaluation of Equation \ref{13form}}}\label{app.two-loop_computation}

In this appendix we present details on the evaluation of the integral \eqref{13form}.

Let us first evaluate
the part relevant for the $v_1$ integral inside the expression \eqref{13form}:
\begin{align}
\label{v1_int}
 \int_{v_1} 
P(v_0, v_1)  \wedge \d z_1 (z_1) \wedge P(v_1, v_2)  \;,
\end{align}
where we have introduced, temporarily, a vertex labelled $v_0$with coordinates $v_0 = (x_0,y_0,z_0,\zbar_0)$. Later we will set these coordinates to  zero.

From the explicit expression for the propagator in \eqn \eqref{eq.propagator} we have
\begin{align}
\begin{split}
P(v_i, v_j)
	&=\frac{1}{2\pi} 
	\frac{  x_{ij} \d y_{ij}\wedge \d\bar{z}_{ij}  - y_{ij} \d x_{ij}\wedge \d \zbar_{ij} +  2 \zbar_{ij}\d x_{ij}\wedge \d y_{ij}} {d(v_i, v_j)^4}  \;,
\end{split}
\end{align}
where the distance $d(v_i, v_j)$ between two points $v_i, v_j$ is given by
$d(v_i, v_j)^2=x_{ij}^2+y_{ij}^2+|z_{ij}|^2$, and $x_{ij}=x_i-x_j, z_{ij}=z_i-z_j$, etc.

We can calculate that
\begin{align}
\begin{split} 
	& P(v_0,v_1)\wedge P(v_1,v_2)=
	\frac{1}{(2 \pi)^2} \d x_1 \d y_1 \d \zbar_1    \\
	&\quad \frac{ \left( x_{01} y_{12}\d \zbar_2-2 x_{01}\zbar_{12} \d y_2 +2 y_{01}\zbar_{12}\d x_2 - y_{01}x_{12}\d \zbar_2 + 2 \zbar_{01}x_{12}\d y_2 - 2 \zbar_{01}y_{12} \d x_2  \right) }{ d(v_0,v_1)^{4}d(v_1,v_2)^{4}} \;. 
\end{split}
\end{align}
where we have dropped all terms involving $\d x_0$, $\d y_0$, $\d \zbar_0$ because we will not be integrating over the vertex $v_0$.  	 
From this we see that
\begin{multline} 
	P(v_0,v_1)\wedge z_1 \d z_1  P(v_1,v_2)= 
	\frac{1}{(2 \pi)^2} 
	\d x_1 \d y_1 \d z_1 \d \zbar_1 
	\left[\partial_{\zbar_0} \frac{1}{d(v_0,v_1)^{2}d(v_1,v_2)^{4}}\right]
	 \\ 
	\left( x_{01} y_{12}\d \zbar_2  -2 x_{01}\zbar_{12} \d y_2 +2 y_{01}\zbar_{12}\d x_2 - y_{01}x_{12}\d \zbar_2 + 2 \zbar_{01}x_{12}\d y_2 - 2 \zbar_{01}y_{12} \d x_2  \right) \;. 
\end{multline}
after we set the $v_0$ coordinates to zero.

Moving the position of the derivative in $\zbar_0$, we obtain
\begin{multline} 
	P(v_0,v_1)\wedge z_1 \d z_1  P(v_1,v_2)=
	 \frac{1}{(2  \pi)^2} 
	\d x_1 \d y_1 \d z_1 \d \zbar_1 
	 \\ \times \partial_{\zbar_0}  \left[\frac{\left( x_{01} y_{12}\d \zbar_2  -2 x_{01}\zbar_{12} \d y_2 +2 y_{01}\zbar_{12}\d x_2 - y_{01}x_{12}\d \zbar_2 + 2 \zbar_{01}x_{12}\d y_2 - 2 \zbar_{01}y_{12} \d x_2  \right)}{d(v_0,v_1)^{2}d(v_1,v_2)^{4}}\right]
	 \\
 \frac{1}{(2 \pi^2)^2} \d x_1 \d y_1 \d z_1 \d \zbar_1 
 \frac{\left(  2 x_{12}\d y_2 - 2 y_{12} \d x_2  \right)}{d(v_0,v_1)^{2}d(v_1,v_2)^{4}}
	 \;. 
\end{multline}
By integration by parts, the integral over $x_1,y_1,z_1,\zbar_1$ of all the terms on the first two lines vanishes.  For example, 
\begin{align*} 
	&\int_{v_1} (x_{01}y_{12}- y_{01}x_{12}) \frac{1}{d(v_0,v_1)^{2}}\frac{1}{d(v_1,v_2)^{4}}\\
	&= 
	\tfrac{1}{2}\int_{v_1} (- x_{01}) \frac{1}{d(v_0,v_1)^{2}} \left(\partial_{y_1} \frac{1}{d(v_1,v_2)^{2}} \right)
	+ \tfrac{1}{2}\int_{v_1}y_{01} \frac{1}{d(v_0,v_1)^{2}} \left(\partial_{x_1}\frac{1}{d(v_1,v_2)^{2}} \right)\\
	& = \tfrac{1}{2}\int_{v_1} x_{01}\left(\partial_{y_1}\frac{1}{d(v_0,v_1)^{2}}\right)\frac{1}{d(v_1,v_2)^{2}}
	 - \tfrac{1}{2}\int_{v_1}y_{01} \left( \partial_{x_1} \frac{1}{d(v_0,v_1)^{2}}\right) \frac{1}{d(v_1,v_2)^{2}} \\
	&= \int_{v_1}(x_{01}y_{01}- x_{01}y_{01}) \frac{1}{d(v_0,v_1)^{4}} \frac{1}{d(v_1,v_2)^{2}}=0 \;. 
\end{align*}
We are left with the integral
\begin{align}
	- \frac{1}{(2 \pi^2)^2} 
	\int_{x_1,y_1,z_1}\d x_1 \d y_1 \d z_1 \d \zbar_1 \frac{\left(2 x_{12}\d y_2 - 2 y_{12}\d x_2  \right)}{ d(v_0,v_1)^{2}d(v_1,v_2)^{4}} \;.  
\end{align}

We can evaluate this integral with the help of \eqn \eqref{Feyn_param}.
Choosing $\alpha=2, \beta=1$ and using $\Gamma(3)/(\Gamma(2)\Gamma(1))=2$, we obtain
\begin{align}
- \frac{2}{(2 \pi)^2} 
 \int_0^1 \d t \, t\int \d x_1 \d y_1 \d z_1 \d\bar{z}_1
 \frac{\left(2 x_{12}\d y_2 - 2 y_{12}\d x_2  \right)}{(\star)^3} ,
\end{align}
where
\begin{align}
\begin{split}
(\star)&=(1-t) \left[ x_1^2+y_1^2+|z_1|^2\right]+t \left[x_{12}^2+y_{12}^2+|z_{12}|^2 \right]\\
&=(x_1-tx_2)^2+(y_1-ty_2)^2+|z_1-t z_2|^2+t(1-t)  \left[x_{2}^2+y_{2}^2+|z_{2}|^2 \right] \;.
\end{split}
\end{align}
After shifting integration variables we obtain
\begin{align}
\begin{split}
&- \frac{2}{(2 \pi)^2} 
 \int_0^1 \d t \, t \int \d x_1 \d y_1 \d z_1 \d\bar{z}_1
 \frac{\left(2 (x_{1}-(1-t)x_2) \d y_2 - 2 (y_{1}-(1-t) y_{2}) \d x_2  \right) }{\left(x_1^2+y_1^2+|z_1|^2+t(1-t)  \left[x_{2}^2+y_{2}^2+|z_{2}|^2 \right]\right)^3} \\
 &
 = \frac{4}{(2 \pi)^2} 
 \int_0^1 \d t \, t(1-t)  \int \d x_1 \d y_1 \d z_1 \d\bar{z}_1
 \frac{\left(x_2 \d y_2 -  y_{2} \d x_2  \right) }{\left(x_1^2+y_1^2+|z_1|^2+t(1-t)  \left[x_{2}^2+y_{2}^2+|z_{2}|^2 \right]\right)^3} \;,
 \end{split}
\end{align}
where we dropped the pieces odd in $x_1$ or $y_1$, and hence do not contribute to the integral.
After evaluating the integral over $x_1, y_1$ and then over $z_1, \bar{z}_1$, the integral of $t$ becomes trivial, and we obtain 
(note that $\d z \d\bar{z}=-2 \ii r\d r \d\theta$ for the polar coordinate $z=r e^{i\theta}$) 
\begin{align}
\begin{split}
& \frac{4}{(2 \pi)^2}  \frac{\pi}{2} \left(x_2 \d y_2 -  y_{2} \d x_2  \right)
 \int_0^1 \d t \, t(1-t)  \int\d z_1 \d\bar{z}_1
 \frac{1 }{\left(|z_1|^2+t(1-t)  \left[x_{2}^2+y_{2}^2+|z_{2}|^2 \right]\right)^2}\\
& = \frac{4}{(2 \pi)^2} \frac{\pi}{2}(-2 \ii)  \pi
 \left(x_2 \d y_2 -  y_{2} \d x_2  \right)
 \frac{1 }{\left( x_{2}^2+y_{2}^2+|z_{2}|^2 \right)}
  \;.
\end{split}
\end{align}
We therefore find
\begin{align} 
\label{v1_ans}
	\int_{v_1}  P(v_0,v_1) \wedge z_1  \d z_1\wedge  P(v_1,v_2)  & =  
	\frac{1 }{\ii } \frac{(x_2 \d y_2-y_2 \d x_2)}{ d (v_0,v_2)^{2}} \;. 	
\end{align}
Similarly, the part of \eqref{13form}  relevant for the $v_3$ integral is 
\begin{equation} 
\label{v3_ans}
	\int_{v_3}  P(v_2,v_3) \wedge z_3  \d z_3 \wedge P(v_3,p_3) =  
	-\frac{1 }{\ii } \frac{((x_2-\eps) \d y_2 - y_2 \d x_2)}{   d(v_0,v_2)^{2}} \;. 
\end{equation}

We can now come back to the evaluation of \eqref{13form}.
Using the results \eqref{v1_ans} and \eqref{v3_ans}, 
we obtain
\begin{align}
	2 \left(\tfrac{\ii}{2\pi}\right)^3
	\frac{1 }{2\pi (\ii)(-\ii)} \int_{p=0}^{\epsilon}  \int_{x, y, z, \bar{z}}
\frac{
(x \d y-y \d x)  (\d z)(y \d\bar{z} \d p) ((x-\epsilon) \d y-y \d x) 
}
{d(0,v)^2 d(p,v)^4 d(\epsilon,v)^2} \;,
\end{align}
where we dropped the index $2$ from $v_2=(x_2, y_2, z_2, \bar{z}_2)$ 
to simplify the expressions.
This gives 
\begin{align}
\label{sum1}
	2 \frac{\ii }{(2\pi )^4 } 
	  \int_{p=0}^{\epsilon}  \int_{x, y, z, \bar{z}}
\frac{
\epsilon y^2  \d x \d y \d z \d\bar{z} \d p
} 
{(x^2+y^2+|z|^2) ((x-p)^2+y^2+|z|^2)^2 ((x-\epsilon)^2+y^2+|z|^2)} \;.
\end{align}

We need to take into account two more diagrams in \fig \ref{figure_anomaly2}.
It turns out that the evaluation is rather similar, with the only different exchange of the role of the 
points $0, p$ and $\epsilon$ on the Wilson line.
This means in addition to \eqref{sum1} we have two extra contributions
\begin{align}
\begin{split}
  	2 \frac{\ii }{(2\pi )^4 }  \int_{p=0}^{\epsilon}  \int_{x, y, z, \bar{z}}
\frac{
\epsilon y^2  \d x \d y \d z \d\bar{z} \d p
}
{(x^2+y^2+|z|^2)^2 ((x-p)^2+y^2+|z|^2) ((x-\epsilon)^2+y^2+|z|^2)} \;,\\
	2 \frac{\ii }{(2\pi )^4 }  \int_{p=0}^{\epsilon}  \int_{x, y, z, \bar{z}}
\frac{
\epsilon y^2  \d x \d y \d z \d\bar{z} \d p
}
{(x^2+y^2+|z|^2) ((x-p)^2+y^2+|z|^2) ((x-\epsilon)^2+y^2+|z|^2)^2} \;.
\end{split}
\end{align}
Summing all the three contributions,  we obtain
\begin{align}
- \frac{\ii }{(2\pi )^4 }   \int_{p=0}^{\epsilon}  \int_{x, y, z, \bar{z}}
(\epsilon y) \frac{\partial }{\partial y}
\frac{
  \d x \d y \d z \d\bar{z} \d p
}
{(x^2+y^2+|z|^2) ((x-p)^2+y^2+|z|^2) ((x-\epsilon)^2+y^2+|z|^2)} \;,
\end{align}
which after integrating by parts gives
\begin{align} 
	\frac{\ii }{(2\pi )^4 }   \int_{p=0}^{\epsilon}  \int_{x, y, z, \bar{z}}
\epsilon
\frac{
  \d x \d y \d z \d\bar{z} \d p
}
{(x^2+y^2+|z|^2) ((x-p)^2+y^2+|z|^2) ((x-\epsilon)^2+y^2+|z|^2)}  \;.
\end{align}
After scaling the integration variables by $\epsilon$ (assuming $\epsilon>0$) 
$\epsilon$ depends drops out, as expected:
\begin{align}
\frac{\ii }{(2\pi )^4 }     \int_{p=0}^{1}  \int_{x, y, z, \bar{z}}
\frac{
  \d x \d y \d z \d\bar{z} \d p
}
{(x^2+y^2+|z|^2) ((x-p)^2+y^2+|z|^2) ((x-1)^2+y^2+|z|^2)}  \;.
\end{align}
In angular coordinates in $(y, z, \bar{z})$-plane, the volume form $\d y \d z \d \zbar$ is $-8 \pi \ii r^2 \d r \d \Omega_{S^2}$ where $\d \Omega_{S^2}$ is the volume form on the two-sphere of volume $1$.  Integrating over the two-sphere we get 
\begin{align}	
	\frac{8 \pi }{(2\pi )^4 }   \int_{p=0}^{1} \d p  \int_{x} \d x
\int \d r r^2
\frac{
1
}
{(x^2+r^2) ((x-p)^2+r^2) ((x-1)^2+r^2)} \;.
\end{align}
After the $r$ integral we obtain
\begin{align} 
\frac{8 \pi  }{(2\pi )^4 }  
\frac{\pi}{2}
 \int_{p=0}^{1} \d p  \int_{x} \d x
\frac{1}{
(|x|+|x-p|)(|x|+|x-1|)(|x-p|+|x-1|)
} \;.
\end{align}
We can evaluate this integral by dividing into four cases $x<0, 0<x<p, p<x<1, 1<x$. These integrals are interchanged by the change of coordinates $x \mapsto 1-x$, $p \mapsto 1-p$, so that the only two independent integrals are the cases when $p < x < 1$ and $1 < x$.  We have
\begin{align} 
\int_{0 < p < x < 1} \frac{1}{(2x - p)(1-p)} &= \frac{\pi^2}{8} \;,\\
\int_{0 < p < 1 < x} \frac{1}{(2x - p)(2x - 1)(2x - 1 - p)} &= \frac{\pi^2}{24} \;.
 \end{align}
 Therefore we obtain
\begin{align}    
\frac{8 \pi  }{(2\pi )^4 }  
\frac{\pi}{2}
2\left( \frac{\pi^2}{8} +  \frac{\pi^2}{24}\right)
=\frac{1}{12} \;.
\end{align}
Including the the factors of $\hbar^2$ need for the two-loop diagram, 
we reproduce
the numerical factor of \eqref{13form_result}.

\clearpage

\bibliographystyle{unsrt}

\end{document}